\crefname{section}{Sec.}{Secs.}
\Crefname{section}{Section}{Sections}
\definecolor{red}{rgb}{0.8,0.0,0.0}
\definecolor{green}{rgb}{0.0,0.6,0.0}
\definecolor{darkblue}{rgb}{0.0,0.1,0.7}
\definecolor{brown}{rgb}{0.6,0.1,0.0}
\definecolor{gray}{rgb}{0.6,0.6,0.6}
\definecolor{darkgreen}{rgb}{0.0, 0.545098, 0.0}
\definecolor{verydarkgreen}{rgb}{0.0, 0.4, 0.0}
\definecolor{veryverydarkgreen}{rgb}{0.0, 0.3, 0.0}
\definecolor{purple}{rgb}{0.5,0.0,0.5}
\definecolor{applegreen}{rgb}{0.55, 0.71, 0.0}
\definecolor{babypink} {rgb}{0.64, 0.44, 0.44}
\definecolor{orange}{rgb}{0.9,0.4,0.0}
\definecolor{darkorange}{rgb}{0.8,0.35,0.0}
\definecolor{dkgreen}{rgb}{0.0,0.4,0.0}
\newcommand{\str}[1]{}    
\newtheorem{theorem}{Theorem}[section]
\newtheorem{lemma}[theorem]{Lemma}
\DeclareMathOperator{\Tr}{Tr}
\newcommand{\pr}{\ensuremath{\text{pr}}} 
\newcommand{\Vud}{\ensuremath{|V_{ud}|}}
\newcommand{\Vus}{\ensuremath{|V_{us}|}}
\newcommand{\Vub}{\ensuremath{|V_{ub}|}}
\newcommand{\Vcd}{\ensuremath{|V_{cd}|}}
\newcommand{\Vcs}{\ensuremath{|V_{cs}|}}
\newcommand{\Vcb}{\ensuremath{|V_{cb}|}}
\newcommand{\Vcx}{\ensuremath{|V_{cx}|}}  
\newcommand{\bra}[1]{\langle #1 |}
\newcommand{\ket}[1]{| #1 \rangle}
\newcommand{\avg}[1]{\left\langle #1 \right\rangle}
\newcommand{\vacuum}{\emptyset}
\newcommand{\abs}[1]{\left| #1 \right|} 
\newcommand{\order}[1]{\mathcal{O}(#1)} 
\newcommand{\matrixel}[3]{\left< #1 \vphantom{#2#3} \right| #2 \left| #3 \vphantom{#1#2} \right>} 
\DeclareMathOperator{\arccosh}{arcCosh}
\DeclareMathOperator{\PV}{PV}
\DeclareMathOperator{\PAV}{PAV}
\DeclareMathOperator{\sgn}{sgn}
\DeclareMathOperator{\diag}{diag}
\newcommand{\GeV}{~\ensuremath{\text{GeV}}}   
\newcommand{\MeV}{~\ensuremath{\text{MeV}}}  
\newcommand{\fm}{~\ensuremath{\text{fm}}} 
\newcommand{\Dpi}{\ensuremath{D\to\pi}}
\newcommand{\DK}{\ensuremath{D\to K}}
\newcommand{\DsK}{\ensuremath{D_s\to K}}
\newcommand{\RDpi}{\ensuremath{R_{\mu/e}^{\Dpi}}}
\newcommand{\RDK}{\ensuremath{R_{\mu/e}^{\DK}}}
\newcommand{\RDsK}{\ensuremath{R_{\mu/e}^{\DsK}}}
\newcommand{\VcdCombined}{\ensuremath{0.2238(11)^{\rm Expt}(15)^{\rm QCD}(04)^{\rm EW}(02)^{\rm SIB}[22]^{\rm QED}}}
\newcommand{\VcdFinal}{\VcdCombined}
\newcommand{\VcsCombined}{\ensuremath{0.9589(23)^{\rm Expt}(40)^{\rm QCD}(15)^{\rm EW}(05)^{\rm SIB}[95]^{\rm QED}}}
\newcommand{\VcsFinal}{\VcsCombined}
\newcommand{\VcdElectronicDsK}{\ensuremath{0.258(15)^{\rm Expt}(01)^{\rm QCD}[03]^{\rm QED}}}
\newcommand{\VcdFinalDsK}{\VcdElectronicDsK}
\newcommand{\RDpiFinal}{\ensuremath{0.98671(17)^{\rm QCD}[500]^{\rm QED}}}
\newcommand{\RDKFinal}{\ensuremath{0.97606(16)^{\rm QCD}[500]^{\rm QED}}}
\newcommand{\RDsKFinal}{\ensuremath{0.98099(10)^{\rm QCD}[500]^{\rm QED}}}
\newcommand{\azaf}{Department of Physics, University of Arizona, Tucson, Arizona 85721, USA}
\newcommand{\coloaf}{Department of Physics, University of Colorado, Boulder, Colorado 80309, USA}
\newcommand{\fnalaf}{Theory Division, Fermi National Accelerator Laboratory, Batavia, Illinois, 60510, USA}
\newcommand{\iuaf}{Department of Physics, Indiana University, Bloomington, Indiana 47405, USA}
\newcommand{\mitaf}{Center for Theoretical Physics, Massachusetts Institute of Technology, \\ Cambridge, MA 02139, USA}
\newcommand{\msuaf}{Department of Computational Mathematics, Science and Engineering, and Department of Physics and Astronomy, Michigan State University, East Lansing, Michigan 48824, USA}
\newcommand{\ucsbaf}{Department of Physics, University of California, Santa Barbara, California 93106, USA}
\newcommand{\ugraf}{CAFPE and Departamento de Física Teórica y del Cosmos, \\ Universidad de Granada, E-18071 Granada, Spain}
\newcommand{\uiucaf}{Department of Physics, University of Illinois, Urbana, Illinois, 61801, USA}
\newcommand{\icasuuiaf}{Illinois Center for Advanced Studies of the Universe, University of Illinois, \\ Urbana, Illinois, 61801, USA}
\newcommand{\unizar}{Departmento de Física Teórica, Universidad de Zaragoza, \\ 50009 Zaragoza, Spain}
\newcommand{\utahaf}{Department of Physics and Astronomy, University of Utah, \\ Salt Lake City, Utah 84112, USA}
\begin{document}
\preprint{MIT-CTP/5513, FERMILAB-PUB-22-943-T}
\title{\texorpdfstring{\boldmath$D$}{D}-meson semileptonic decays to pseudoscalars from \\ four-flavor lattice QCD}

\author{Alexei~Bazavov}\affiliation{\msuaf}
\author{Carleton~DeTar}\affiliation{\utahaf}
\author{Aida~X.~El-Khadra}\affiliation{\uiucaf}\affiliation{\icasuuiaf}
\author{Elvira~Gámiz}\affiliation{\ugraf}
\author{Zechariah~Gelzer}\affiliation{\uiucaf}
\author{Steven~Gottlieb}\affiliation{\iuaf}
\author{William~I.~Jay}\email{willjay@mit.edu}\affiliation{\mitaf}
\author{Hwancheol~Jeong}\affiliation{\iuaf}
\author{Andreas~S.~Kronfeld}\affiliation{\fnalaf}
\author{Ruizi~Li}\affiliation{\iuaf}
\author{Andrew~T.~Lytle}\affiliation{\uiucaf}\affiliation{\icasuuiaf}
\author{Paul~B.~Mackenzie}\affiliation{\fnalaf}
\author{Ethan~T.~Neil}\affiliation{\coloaf}
\author{Thomas~Primer}\affiliation{\azaf}
\author{James~N.~Simone}\affiliation{\fnalaf}
\author{Robert~L.~Sugar}\affiliation{\ucsbaf}
\author{Doug~Toussaint}\affiliation{\azaf}
\author{Ruth~S.~\surname{Van~de~Water}}\affiliation{\fnalaf}
\author{Alejandro~Vaquero}\affiliation{\utahaf}\affiliation{\unizar}


\collaboration{Fermilab Lattice and MILC Collaborations}
\noaffiliation

\date{\today}

\begin{abstract}
We present lattice-QCD calculations of the hadronic form factors for the semileptonic decays $D\to\pi\ell\nu$, $D\to K\ell\nu$, and $D_s\to K\ell\nu$.
Our calculation uses the highly improved staggered quark (HISQ) action for all valence and sea quarks and includes $N_f=2+1+1$ MILC ensembles with lattice spacings ranging from $a\approx0.12\fm$ down to $0.042\fm$. 
At most lattice spacings, an ensemble with physical-mass light quarks is included. 
The HISQ action allows all the quarks to be treated with the same relativistic light-quark action, allowing for nonperturbative renormalization using partial conservation of the vector current.
We combine our results with experimental measurements of the differential decay rates to determine $\Vcd^{\Dpi}=\VcdFinal$~and $\Vcs^{\DK}=\VcsFinal$. 
This result for $\Vcd$ is the most precise to date, with a lattice-QCD error that is, for the first time for the semileptonic extraction, at the same level as the experimental error.
Using recent measurements from BES~III, we also give the first-ever determination of $\Vcd^{\DsK}=\VcdFinalDsK$ from $\DsK l\nu$.
Our results also furnish new Standard Model calculations of the lepton flavor universality ratios
$\RDpi=\RDpiFinal$,
$\RDK=\RDKFinal$, and
$\RDsK=\RDsKFinal$,
which are consistent within $2\sigma$ with experimental measurements.
Our extractions of $\Vcd$ and $\Vcs$, when combined with a value for $\Vcb$, provide the most precise test of second-row CKM unitarity, finding agreement with unitarity at the level of one standard deviation.
\end{abstract}

\maketitle

\tableofcontents

\section{Introduction}

Historically, measurements in quark-flavor physics have a strong precedent of anticipating the direct discovery of new particles.
To name one instance, consider the charm quark, decays of which are the subject of this paper.
Its existence was conjectured on the basis of symmetry~\cite{Bjorken:1964gz, Glashow:1970gm}, and its mass was predicted to explain the rates of strangeness-changing neutral-current processes~\cite{Glashow:1970gm, Gaillard:1974mw}.
The discovery of the $J/\psi$~\cite{E598:1974sol, SLAC-SP-017:1974ind} was then immediately interpreted as charmonium~\cite{Appelquist:1974zd, DeRujula:1974rkb, Appelquist:1974yr, Eichten:1974af}.
Another example is the measurement in 1987 of large mixing in neutral $B$~mesons by the ARGUS Collaboration~\cite{ARGUS:1987xtv}, which suggested the unusually large mass for the top quark (see, e.g., Ref.~\cite{Marciano:1989xd}), eight years before its direct observation at the Tevatron in 1995~\cite{CDF:1994vkk, CDF:1995wbb, D0:1995jca}.
In light of several anomalies in measurements of $B$-meson decays and tension in several tests of the Standard Model (SM) flavor
structure~\cite{Artuso:2022ijh, Blanke:2022deg}, one can speculate that this area of particle physics is again pointing toward something new.
To illuminate the situation, it is timely to improve the theoretical ingredients in confronting experiment with the Standard Model for other quark-flavor processes.
In this paper, we report on lattice-QCD calculations relevant to the second row of the Cabibbo-Kobayashi-Maskawa (CKM) matrix, enabling stringent tests of second-row CKM unitarity.

Within the Standard Model (SM), charged-current flavor-changing processes are described by the CKM matrix
\begin{equation}
    V_{\rm CKM} = 
    \begin{pmatrix}
    V_{ud}  & V_{us}    & V_{ub}\\
    V_{cd}  & V_{cs}    & V_{cb}\\
    V_{td}  & V_{ts}    & V_{tb}
    \end{pmatrix},
\end{equation}
which describes the mismatch between the propagating mass eigenstates and the flavor eigenstates which participate in the weak interaction.
By construction, the CKM matrix is unitary, so each row and column should have unit norm.
Deviations from this expectation can arise if $V_{\rm CKM}$ is a $3\times3$ submatrix in an extended flavor sector or if non-SM processes contribute to measured decay and mixing rates.
It is important to test the CKM paradigm using independent determinations from multiple processes, for example, comparing leptonic and semileptonic decays with the same flavor charge.
Improved precision for the individual matrix elements leads directly to more stringent tests of the Standard Model.
Any statistically significant deviation from the predictions of CKM-unitarity would constitute evidence for new physics beyond the Standard Model.

The strongest test of unitarity comes from the first row, where the matrix elements are determined most precisely, with the exception of \Vub, which plays a negligible role in the first row unitarity relation at the current level of precision. 
Either taking the most precise value of $\Vud$ that comes from superallowed $\beta$ decays~\cite{Hardy:2020qwl}~\footnote{Recent calculations of the universal electroweak radiative corrections relevant for superallowed $\beta$ decays in Refs.~\cite{Seng:2018yzq, Seng:2018qru, Czarnecki:2019mwq, Seng:2020wjq, Shiells:2020fqp} found larger values than those estimated before, shifting the central value of $|V_{ud}|$ and increasing the tension with unitarity. In addition, further, previously unaccounted, nuclear-structure uncertainties in the inner radiative correction have considerably increased the error for earlier determinations~\cite{Seng:2018qru, Gorchtein:2018fxl}.}
and $\Vus$ as extracted from semileptonic $K_{\ell 3}\equiv K\to\pi\ell\nu$ decays, or using only inputs from kaon and pion decays ({\it i.e.}, $\Vus$ from semileptonic decays and $\Vus/\Vud$ from the ratio of leptonic decays, $K_{\ell 2}\equiv K\to\ell\nu$ over $\pi_{\ell 2}\equiv\pi\to\ell\nu$~\cite{Marciano:2004uf}), 
the combination $\Vud^2 + \Vus^2 + \Vub^2$ is in tension with unitarity at the $3\sigma$ level~\cite{Workman:2022ynf}. There is also a $\sim 3\sigma$ tension between the semileptonic and the leptonic determinations of $\Vus$~\cite{Workman:2022ynf}, where the leptonic determination uses $\Vud$ from superallowed decays as an external input.  
In those tests, the relevant QCD nonperturbative inputs for semileptonic and leptonic decays, the form factor $f_+^{K\pi}(0)$~\cite{Carrasco:2016kpy, FermilabLattice:2018zqv, RBCUKQCD:2015joy, Ishikawa:2022otj} and the ratio of decay constants $f_K/f_\pi$~\cite{Dowdall:2013rya, Carrasco:2014poa, Bazavov:2017lyh, Miller:2020xhy, Miller:2020xhy, Durr:2016ulb, QCDSF-UKQCD:2016rau}, respectively, are calculated using lattice QCD with uncertainties that have reached the $\sim0.18$\% level~\cite{Aoki:2021kgd}. Experimental data for the decay widths of $K_{\ell 3}$ and $K_{\ell 2}/\pi_{\ell 2}$ decays are similarly precise~\cite{Cirigliano:2022yyo,Moulson:2017ive}, leaving electromagnetic corrections as an important source of uncertainty in the extraction of the corresponding CKM matrix elements. Pioneering work addressing the calculation of structure-dependent QED corrections both for pion and kaon leptonic decays using lattice techniques~\cite{Giusti:2017dwk, DiCarlo:2019thl} and kaon semileptonic decays~\cite{Seng:2021boy, Seng:2021wcf, Seng:2022wcw} including lattice calculations of the $\gamma W$-box contribution, have been recently performed, opening the door to an important reduction of the electromagnetic uncertainty.

Similarly precise tests for the CKM matrix elements in the second row have been limited both by theory and experimental uncertainties.
On the theory side, the error for the decay constants $f_{D}$ and $f_{D_s}$ (roughly $0.35$--$0.2\%$~\cite{Aoki:2021kgd}) are now subleading in the extraction of $\Vcd$ and $\Vcs$, respectively, from leptonic decays thanks to the progress made by lattice calculations in the last years~\cite{Bazavov:2017lyh,Carrasco:2014poa}.
However, the situation is very different for semileptonic extractions of those CKM matrix elements.
Since the decay rates are not suppressed by the lepton mass, experimental measurements are more precise.
For leptonic decays, the HFLAV world averages for 
$f_{D_s}\Vcs$ and $f_D\Vcd$ have fractional errors of roughly $1\%$ and $2\%$, respectively~\cite{HFLAV:2022pwe}.
The corresponding semileptonic decay-rate measurements are roughly a factor of two more precise in each case, with the fractional errors in $f_+^{D\to K}(0)\Vcs$ and
$f_+^{D\to \pi}(0)\Vcd$ around $0.5\%$ and $1\%$, respectively~\cite{HFLAV:2022pwe}. 
Lattice-QCD calculations of semileptonic form factors (including both normalization and shape), while more complex than for decay constants for leptonic decays, have a long history in lattice QCD~\cite{FermilabLattice:2004ncd,Becirevic:2007cr,DiVita:2010mlb,Na:2010uf, Na:2011mc,Koponen:2011ev,Bailey:2012sa,Koponen:2012di,Koponen:2013tua,LATTICE-FERMILAB:2015wnj,FermilabLattice:2017eea,Kaneko:2017xgg,Lubicz:2017syv,Lubicz:2018rfs}.
Now, however, the current experimental errors and the forthcoming improvements by BES~III motivate further reducing the lattice-QCD errors to the level of experimental precision.

In this work, we leverage the same theoretical tools that were successfully employed in the calculation of decay constants and the $K_{\ell 3}$ form factor \cite{FermilabLattice:2014tsy,Bazavov:2017lyh,FermilabLattice:2018zqv}: the same highly improved relativistic lattice actions and gauge-field ensembles with physical quark masses and small lattice spacings.
In particular, we compute the hadronic form factors for the semileptonic decays $D\to\pi\ell\nu$, $D\to K\ell\nu$, and $D_s\to K\ell\nu$ in lattice QCD, with the goal of obtaining percent-level determinations of $\Vcd$ and $\Vcs$ when combined with experimental data. Our values for $\Vcd$ and $\Vcs$ provide a stringent test of unitarity and their precision allows a commensurate comparison with leptonic determinations.
As a key aspect of our analysis, we report the correlations between the hadronic form factors in the different decay channels as well as between the final values for $\Vcd$ and $\Vcs$ (see \cref{sec:phenomenology}).
Preliminary results for the present calculation of the form factors have been presented in Refs.~\cite{FermilabLattice:2019ycs,FermilabLattice:2021bxu}.
We note that the HPQCD collaboration has recently presented a precise lattice-QCD calculation of the form factors for $\DK$ decay~\cite{Chakraborty:2021qav,Parrott:2022rgu}, with a quoted lattice-QCD uncertainty close to the experimental one in the extraction of $\Vcs$ ~\cite{Chakraborty:2021qav}.
On the other hand, this paper yields the first percent-level determination of $\Vcd$ and enables the first stringent test of second-row CKM unitarity from semileptonic $D$-meson decays.

With the hadronic form factors for a given decay in hand, it is straightforward to construct the lepton flavor universality (LFU) ratios $R_{\mu/e}$, which are defined as the ratio of the branching fractions to muon versus electron final states; see \cref{sec:lfu}.
These ratios are expected to be close but not identically equal to unity in the SM, with differences coming from lepton-mass, isospin-breaking, and QED effects. 
Lattice QCD calculations offer a theoretically clean method for determining the SM prediction to high precision (up to QED corrections), contributing to stringent LFU tests in those channels.

The rest of this article is organized as follows.
\Cref{sec:definitions} reviews the definitions and formalism for relating experimentally measured decay rates to the hadronic form factors we calculate.
\Cref{sec:simulation} gives details related to the lattice-QCD simulation.
\Cref{sec:correlator_analysis} reports the statistical analysis of Euclidean correlation functions which yields renormalized form factors.
\Cref{sec:chiral_ctm} describes the final chiral-continuum fit, which interpolates the form factors to the physical hadron masses and extrapolates to the continuum limit.
\Cref{sec:syst_errors} analyzes the uncertainties in our calculation and summarizes the complete statistical and systematic error budget for the form factors.
\Cref{sec:phenomenology} discusses applications to phenomenology, including determinations of the CKM matrix elements and the LFU ratios in each channel.
Finally, \cref{sec:conclusions} gives some concluding remarks.
Four appendices provide additional technical information.
\Cref{sec:stagg} contains useful formulae appearing in the statistical analysis of staggered correlation functions.
\Cref{ssec:HQETErrors} presents useful information about staggered fermions and heavy quark effective theory when the bare lattice quark mass is large.
\Cref{sec:shrinkage} describes linear and nonlinear shrinkage techniques for correlation and covariance matrix, the latter of which is a novel aspect of the correlator analysis presented in this work. 
\Cref{app:extras} provides supporting details and figures regarding various fits, which exceed the scope of the main text but illustrate the robustness of our analysis.

\section{Definitions}
\label{sec:definitions}

The differential decay rate for the semileptonic decay $H\to L\ell \nu$ of a heavy pseudoscalar meson $H \in \{D, D_s\}$ to a light pseudoscalar meson $L \in \{K,\pi\}$ is given by 
\begin{align}
\begin{split}
    \frac{d\Gamma}{dq^2}
    =&\frac{G_F^2}{24\pi^3}
    \eta_{\rm EW}^2 \Vcx^2
    (1 - \epsilon)^2
    (1 + \delta_{\rm EM}) \times\\
    &\left[
    \abs{\bm{p}}^3 \left(1 + \frac{\epsilon}{2}\right)\abs{f_+(q^2)}^2
    + \abs{\bm{p}} M_H^2 \left( 1 - \frac{M_L^2}{M_H^2}\right)^2
    \frac{3\epsilon}{8} \abs{f_0(q^2)}^2
    \right],
    \label{eq:dGammadq2}
\end{split}
\end{align}
where $\epsilon = m_\ell^2/q^2$ (with $m_\ell$ the lepton mass),\footnote{
In our notation, $m_\ell$ with a cursive subscript always refers to the lepton mass in the decay $H\to L\ell \nu$.
The light-quark mass is denoted $m_l$.
}
$q$ is the momentum transfer, $M_H$ and $M_L$ are the masses of the heavy initial and light final mesons, and $\bm{p}$ is the three-momentum of the final-state meson in the rest frame of the initial hadron.
Short-distance electroweak corrections to $G_F$ are contained in
$\eta_{\rm EW} = 1 + (\alpha_\text{QED}/\pi) \ln (M_Z/\mu)|_{\mu=M_D} = 1.009(2)$~\cite{Sirlin:1981ie},
where the error is an estimate of the scale uncertainty from a factor-of-two variation around $\mu = M_D$.\footnote{%
Physically, the scale dependence of $\eta_{\rm EW}$ should cancel against that of the structure-dependent electromagnetic corrections which, though calculable in principle, have never been computed for these decays.
Computing these corrections exceeds the scope of the present work.
}
Long-distance and structure-dependent electromagnetic corrections are described by $\delta_{\rm EM}$.\footnote{Systematic uncertainties from neglected electromagnetic corrections and strong isospin breaking are discussed in \cref{ssec:sib_qed}.}

The form factors $f_+(q^2)$ and $f_0(q^2)$ encapsulate the nonperturbative hadronic structure of the decay.
They arise in the usual way from the Lorentz-covariant decomposition of the relevant transition matrix elements,
\begin{align}
\bra{L} \mathcal{V}^\mu \ket{H}
	&\equiv \sqrt{2 M_H} \left[ v^\mu f_\parallel(q^2) + p_\perp^\mu f_\perp(q^2) \right], \label{eq:f||f} \\
	&\equiv f_+(q^2) \left( k^\mu + p^\mu - \frac{M_H^2 - M_L^2}{q^2} q^\mu \right) + f_0(q^2) \frac{M_H^2 - M_L^2}{q^2}q^\mu ,
    \label{eq:vector_decomposition}\\
\bra{L} \mathcal{S} \ket{H}
	&= \frac{M_L^2 - M_H^2}{m_h - m_x} f_0(q^2). \label{eq:scalar_decomposition}
\end{align}
In these expressions, $k^\mu$, and $p^\mu$ refer to the four-momentum of the heavy initial and light final mesons; $m_h$ and $m_x$ refer to the masses of the heavy and light quarks in the current; $v^\mu = k^\mu / M_H$ is the four-velocity of the heavy meson; $p_\perp^\mu = p^\mu - (p \cdot v) v^\mu$ is the component of the final-state hadron's momentum orthogonal to $v$; and $q^\mu = k^\mu - p^\mu$ is the momentum transfer.
The same form factor $f_0$ appears in Eqs.~\eqref{eq:vector_decomposition} owing to partial conservation of the vector current (PCVC), namely $\partial_\mu \mathcal{V}^\mu = (m_h - m_x)\mathcal{S}$ as an operator identity.

In lattice gauge theory, we introduce bilinears of lattice fermion fields---$J=V^0$, $V^i$, and $S$---and associated matching factors $Z_J$, such that $Z_JJ$ and the corresponding $\mathcal{J}$ have the same matrix elements (up to controlled uncertainties).
In this notation, and in the rest frame of the decaying meson, the relations between form factors and matrix elements \pagebreak take
the following forms:
\begin{align}
f_\parallel(q^2)	&= Z_{V^0} \frac{\matrixel{L}{V^0}{H}}{\sqrt{2 M_H}}, \label{eq:f_parallel}\\
f_\perp(q^2)		&= Z_{V^i} \frac{1}{p^i} \frac{\matrixel{L}{V^i}{H}}{\sqrt{2 M_H}}, \label{eq:f_perp}\\
f_0(q^2)			&= Z_m Z_S \frac{m_h-m_l}{M_H^2 - M_L^2} \matrixel{L}{S}{H}. \label{eq:f_0}
\end{align}
[No sum is implied in \cref{eq:f_perp}.]
Using the preceding equations, the vector form factor is given by a linear combination of $f_\perp(q^2)$ and $f_0(q^2)$,
\begin{align}
    f_+(q^2) &= \left( \frac{M_H - E_L}{\sqrt{2 M_H}} \right)
        \left( 1 - \frac{E_L^2 - M_L^2}{(M_H - E_L)^2}\right) f_\perp(q^2)
        + \left( \frac{M_H^2 - M_L^2}{M_H - E_L}\right) \frac{f_0(q^2)}{2 M_H},
    \label{eq:fplus_perp_0}
\end{align}
which will be useful below.

In momentum space, PCVC implies the following condition for the lattice currents:\footnote{%
In Minkowski space, the basic momentum-space operator relation reads
$i q_\mu \avg{\mathcal{V}^\mu(q)} = (m_h-m_l)\avg{\mathcal{S}}$.
\Cref{eq:PCVC}, in which all terms come with a positive sign and without factors of $i$, amounts to a definition of the sign convention for Wick rotation and the phase convention for the lattice currents.}
\begin{align}
    Z_{V^0}(M_H - E_L) \matrixel{L}{V^0}{H} + Z_{V^i} \bm{q}\cdot \matrixel{L}{\bm{V}}{H} = Z_m Z_S (m_h - m_x) \matrixel{L}{S}{H}.
    \label{eq:PCVC}
\end{align}
which can be used to extract the renormalization factors for the temporal and spatial components of the vector current, $Z_{V^0}$ and $Z_{V^i}$~\cite{Na:2010uf}, as explained in detail in \cref{sec:renormalization}. 
With the present treatment of all valence quarks with the highly improved staggered quark (HISQ) action~\cite{Follana:2006rc}, the local scalar density enjoys absolute normalization, $Z_m Z_S=1$~\cite{Karsten:1980wd,Smit:1987zh}.
Furthermore, PCVC allows one to express any single matrix element in terms of the other two involved in the relation in \cref{eq:PCVC}, for example,
\begin{align}
    f_+^{\rm alt}(q^2) &= \frac{1}{\sqrt{2 M_H}} \left[ f_\parallel(q^2) + (M_H - E_L) f_\perp(q^2) \right],
    \label{eq:fplus_parallel_perp}\\
    f_0^{\rm alt}(q^2) &= \frac{\sqrt{2 M_H}}{M_H^2 - M_L^2} \left[ (M_H - E_L) f_\parallel(q^2) + (E_L^2 - M_L^2) f_\perp(q^2) \right]
    \label{eq:f0_parallel_perp},
\end{align}
with $f_\parallel$ and $f_\perp$ computed using \cref{eq:f_parallel,eq:f_perp}.\footnote{
Another expression for $f_+$ in terms of $f_0$ and $f_\parallel$ exists but involves a delicate numerical cancellation near $q^2_{\rm max}$. For this reason it is excluded from the subsequent discussion.}
These alternative constructions will be used to check for systematic errors in our analysis; see \cref{ssec:fplus_f0}.

\section{Simulation details \label{sec:simulation}}

\begin{table}[tp]
\centering
\caption{
	A summary of the lattice spacings, lattice spatial ($N_s$) and temporal ($N_t$) sizes, valence quark masses, intermediate scale-setting parameters, number of source times and configurations 
	$N_{\rm src} \times N_{\rm configs}$, source-sink separations $T/a$,
	and approximate Goldstone (pseudoscalar taste) pion masses used in our calculation.
	The text describes the ensembles' sea and valence masses in more detail.
	The gauge ensembles were generated by the MILC collaboration
	~\cite{MILC:2010pul,MILC:2012znn,Bazavov:2017lyh}.
	The values for the gradient-flow scale $w_0/a$ have been calculated previously~\cite{MILC:2015tqx,Brown:2018jtv}. 
	The simulation program is described in detail in Ref.~\cite{MILC:2012znn} and was later extended to smaller lattice spacings ($a\approx 0.042\fm$), as described in Ref.~\cite{Bazavov:2017lyh}.
	The number of source times $N_{\rm src}$ refers to the number of loose-solve source times employed in the truncated solver method; on each configuration one corresponding fine solve is used.
	Values for the sea- and valence-quark masses are given in \cref{table:bare_quark_masses}.
\label{table:ensembles}
}
\begin{tabular}{l  c  c  l  c  c  c c}
\hline \hline
$\approx a$ &  $N_s^3 \times N_t$ &   $m_l$    &   $m_h / m_c$ & $w_0/a$ & $N_{\rm src} \times N_{\rm configs}$ & $T/a$ & $\approx M_{\pi, P}$ \\
   ~[fm] & & & & & & & [MeV] \\
\hline
0.12    & $48^3 \times 64$   & physical    & $0.9, 1.0, 1.4$           & 1.4168(10) & $32 \times 1352$ & \{12, 13, 14, 16, 17\} & 135\\
0.088   & $64^3 \times 96$   & physical    & $0.9, 1.0, 1.5, 2.0$      & 1.9470(13) & $24 \times 980$ & \{16, 17, 19, 22, 25\} & 130\\
0.088   & $48^3 \times 96$   & $m_s/10$ & $0.9, 1.0, 1.5, 2.0$ & 1.9299(12) & $24 \times 697$ & \{16, 19, 22, 25\} & 224\\
0.057   & $96^3 \times 192$  & physical    & $0.9, 1.0, 1.1, 2.2$      & 3.0119(19) & $32 \times 877$ & \{25, 28, 30, 34, 37\} & 134\\
0.057   & $64^3 \times 144$  & $m_s/10$ & $0.9, 1.0, 2.0$      & 2.9478(31) & $36 \times 916$ & \{23, 30, 34, 37\} & 231\\
0.057   & $48^3 \times 144$  & $m_s/5$ & $0.9, 1.0, 2.0$      & 2.8956(33) & $36 \times 823$ & \{23, 30, 34, 37\} & 325\\
0.042   & $64^3 \times 192$  & $m_s/5$ & $0.9, 1.0, 2.0$      & 3.9222(29) & $24 \times 1008$ & \{34, 39, 45, 52\} & 308\\
\hline \hline
\end{tabular}
\end{table}

\begin{table}
    \centering
    \caption{Sea- and valence-quark masses in lattice units for the ensembles used in this calculation.
    The first two columns specify the ensemble by the approximate lattice spacing and the ratio of light- and strange-quark masses.
    The next three columns give the sea-quark masses.
    The final three columns contain the valence-quark masses.}
    \label{table:bare_quark_masses}
    \begin{tabular}{cc@{\hspace{0.75em}}lll@{\hspace{0.75em}}lll}
\hline\hline
$\approx a$ & $m_l/m_s$ & $(am_l)^{\rm sea}$ & $(am_s)^{\rm sea}$ & $(am_c)^{\rm sea}$ & $(am_l)^{\rm valence}$ & $(am_s)^{\rm valence}$ & $(am_h)^{\rm valence}$ \\
\hline
0.12        & 1/27      & 0.001907           & 0.05252            & 0.6382             & $(am_l)^{\rm sea}$ & $(am_s)^{\rm sea}$ &\{0.5744, 0.6382, 0.8935\} \\[2pt]
0.088       & 1/27      & 0.0012             & 0.0363             & 0.432              & $(am_l)^{\rm sea}$ & $(am_s)^{\rm sea}$ &\{0.389, 0.432, 0.648, 0.864\} \\
0.088       & 1/10      & 0.00363            & 0.0363             & 0.43               & $(am_l)^{\rm sea}$ & $(am_s)^{\rm sea}$ &\{0.389, 0.432, 0.648, 0.864\} \\[2pt]
0.057       & 1/27      & 0.0008             & 0.022              & 0.26               & $(am_l)^{\rm sea}$ & $(am_s)^{\rm sea}$ &\{0.257, 0.286, 0.572\} \\
0.057       & 1/10      & 0.0024             & 0.024              & 0.286              & $(am_l)^{\rm sea}$ & $(am_s)^{\rm sea}$ &\{0.257, 0.286, 0.572\} \\
0.057       & 1/5       & 0.0048             & 0.024              & 0.286              & $(am_l)^{\rm sea}$ & $(am_s)^{\rm sea}$ &\{0.257, 0.286, 0.572\} \\[2pt]
0.042       & 1/5       & 0.00316            & 0.0158             & 0.188              & 0.00311            & 0.01555            &\{0.164, 0.1827, 0.365\} \\
\hline\hline
\end{tabular}
\end{table}

\begin{figure}[bp]
    \centering
    \includegraphics[width=0.49\textwidth]{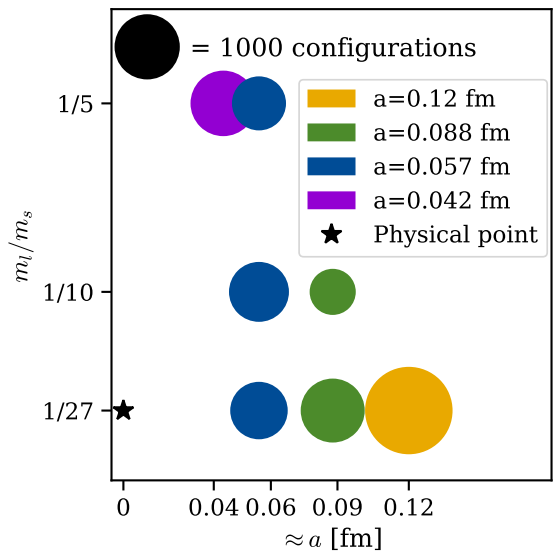}
    \caption{
        Summary of the lattice spacings and light-quark valence masses used in the present calculation.
        The sizes of the colored circles are proportional to the number of configurations in each ensemble. 
        Quantitative details are given in \cref{table:ensembles}.
    }
    \label{fig:ensembles}
\end{figure}

Our calculation uses ensembles generated by the MILC Collaboration using 
a one-loop Symanzik improved gauge action and $N_f=2+1+1$ flavors of dynamical sea quarks with the HISQ action~\cite{MILC:2010pul,MILC:2012znn,Bazavov:2017lyh}\footnote{
We have adopted a policy for sharing collaboration-generated gauge configuration files with highly-improved staggered sea quarks.
The policy, along with a list of lattices that are shared without restriction as well as bibliographic guidance for citations, can be found on our GitHub page linked \href{https://github.com/milc-qcd/sharing/wiki/LatticeSharing}{\textbf{here}}.
}.
\Cref{table:ensembles} and \cref{fig:ensembles} summarize the ensembles used in this work.
Lattice spacings range from $a\approx 0.12$~fm down to $a\approx 0.042$~fm.
An ensemble with physical-mass light quarks appears for all lattice spacings but $a\approx 0.042\fm$.
For the finer lattice spacings, we also include ensembles with heavier-than-physical light quarks with
$m_l \approx m_ s/10$ and
$m_l \approx m_s/5$.

The masses of the valence light and strange quarks generally match those in the sea.
In all ensembles the charm and strange quarks in the sea have (close to) physical masses.
The heavy valence quarks used in this study range from around nine-tenths to around twice the physical charm mass.
The precise values for the sea- and valence-quark masses are given in \cref{table:bare_quark_masses}.

Although the primary targets of this work are the dimensionless form factors $f_0$ and $f_+$, many intermediate quantities (e.g., $f_\parallel$ and $f_\perp$ and masses) are dimensionful.
Throughout this work, the scale is set on each ensemble using previously calculated values for the gradient-flow scale $w_0/a$~\cite{MILC:2015tqx,Brown:2018jtv}, also listed in \cref{table:ensembles}.
Details of the intermediate scale-setting scheme in the chiral-continuum analysis are discussed below in \cref{ssec:chiral_ctm_fits}.

\section{Correlator Analysis \label{sec:correlator_analysis}}

\subsection{Definitions}

\begin{figure}[b]
    \centering
    \includegraphics[width=0.49\textwidth]{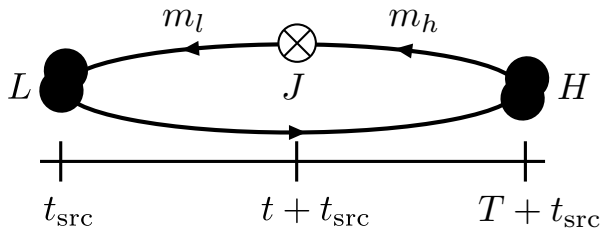}
    \caption{Schematic picture of the three-point functions defined in \cref{eq:c3_S,eq:c3_V0,eq:c3_Vi}.
    The final-state hadron $L\in\{\pi, K\}$ is created with momentum $\bm{p}$ at the time $t_{\rm src}$.
    An external current $J$ is inserted at time $t+t_{\rm src}$.
    The initial-state hadron $H\in\{D,D_s\}$ is destroyed at rest at time $T+t_{\rm src}$.
    \label{fig:schematic_3pt}
    }
\end{figure}

To access the matrix elements $\matrixel{L}{S}{H}$, $\matrixel{L}{V^0}{H}$, and $\matrixel{L}{V^i}{H}$,
we compute the following two- and three-point correlation functions:
\begin{align}
C_{H}^{P}(t)
    &= \sum_{\bm{x},\bm{y}}
    \left\langle
        P_{H}  (t_{\rm src},   \bm{x})
        P_{H}  (t+t_{\rm src}, \bm{y})
    \right\rangle, \label{eq:c2_heavy_P}\\
C_{H}^{A^0}(t)
    &= \sum_{\bm{x},\bm{y}}
    \left\langle
        A^0_{H}(t_{\rm src},   \bm{x})
        A^0_{H}(t+t_{\rm src}, \bm{y})
    \right\rangle, \label{eq:c2_heavy_A0}\\
C_{L}^{P}(t, \bm{p})
    &= \sum_{\bm{x},\bm{y}}
    e^{i\bm{p}\cdot (\bm{x}-\bm{y})}
    \left\langle
        P_{L}  (t_{\rm src},   \bm{x})
        P_{L}  (t+t_{\rm src}, \bm{y})
    \right\rangle, \label{eq:c2_light_P}\\
C_{H \to L }^{S}(t,T,\bm{p})
    &= \sum_{\bm{x},\bm{y},\bm{z}}
    e^{i\bm{p}\cdot(\bm{x}-\bm{y})}
    \left\langle
        P_{L}  (t_{\rm src},   \bm{x})
           S   (t+t_{\rm src}, \bm{y})
        P_{H}  (T+t_{\rm src}, \bm{z})
    \right\rangle, \label{eq:c3_S} \\
C_{H \to L}^{\bm{V}}(t,T,\bm{p})
    &= \sum_{\bm{x},\bm{y},\bm{z}}
    e^{i\bm{p}\cdot(\bm{x}-\bm{y})}
    \left\langle
        P_{L}  (t_{\rm src},   \bm{x})
        \bm{V} (t+t_{\rm src}, \bm{y})
        P_{H}  (T+t_{\rm src}, \bm{z})
    \right\rangle, \label{eq:c3_Vi} \\
C_{H \to L}^{V^0}(t,T,\bm{p})
    &= \sum_{\bm{x},\bm{y},\bm{z}}
    e^{i\bm{p}\cdot(\bm{x}-\bm{y})}
    \left\langle
        P_{L}  (t_{\rm src},   \bm{x})
        V^0    (t+t_{\rm src}, \bm{y})
        A^0_{H}(T+t_{\rm src}, \bm{z})
    \right\rangle, \label{eq:c3_V0}
\end{align}
where the labels denote the initial heavy hadron $H\in\{D, D_s\}$ and the final light hadron $L\in \{\pi,K\}$.
The schematic structure of the three-point correlators in \cref{eq:c3_S,eq:c3_Vi,eq:c3_V0} is depicted in \cref{fig:schematic_3pt} and the spin-taste structure of the operators in our simulations specified in \cref{table:spin_taste}.
The operators used for the scalar current $S$ and temporal vector current $V^0$ are local, but the spatial vector current $\bm{V}$ is the taste-singlet one-link operator.
The tastes of the meson creation and annihilation operators are chosen so that the correlation functions are overall taste singlets.
For three-point functions involving $S$ and $\bm{V}$, \cref{eq:c3_S,eq:c3_Vi}, it therefore suffices to use local pseudoscalar operators $P$, corresponding to Goldstone pseudoscalar mesons, at the source and sink.
For three-point functions involving $V^0$, \cref{eq:c3_V0}, we use the local axial vector operator $A^0$, corresponding to a non-Goldstone pseudoscalar meson, at either the source or the sink.
Our choice in \cref{eq:c3_V0} is to use $A^0$ for the initial-state hadrons ($D$ and $D_s$) and $P$ for the final-state hadrons ($\pi$ and $K$).
To reduce statistical noise, APE smearing~\cite{APE:1987ehd} is applied to the gauge field appearing in the one-link vector current, with 20 iterations and staple weight $0.05$.

\begin{table}
\caption{The spin-taste structure of the staggered operators used in this work.
    Pseudoscalar mesons are created and annihilated using $P$ ($\pi$, $K$, $D$, and $D_s$) and $A^0$ ($D$ and $D_s$).
    Transitions between these states are induced by the currents $S$, $V^0$, and $V^i$.
    The operator $A^0$ is necessary to conserve taste in \cref{eq:c3_V0}.}
    \label{table:spin_taste}
\begin{tabular}{cccccc}
\hline \hline
    Operator    &   Spin $\otimes$ Taste        &   Locality\\
    \hline
    $P$         &   $\gamma^5 \otimes \xi_5$    &   Local\\
    $A^0$       &   $\gamma^0 \gamma^5 \otimes \xi_0 \xi_5$  & Local\\
    $S$         &   $1 \otimes 1$   & Local\\
    $V^0$       &   $\gamma^0 \otimes \xi_0$    &   Local\\
    $V^i$       &   $\gamma^i \otimes 1$    &   One-link \\
\hline \hline
\end{tabular}
\end{table}

In \cref{eq:c2_heavy_P,eq:c2_heavy_A0,eq:c2_light_P,eq:c3_S,eq:c3_V0,eq:c3_Vi}, we work in the rest frame of the
heavy initial hadron $H$ and compute the recoiling light hadron $L$ with eight different lattice momenta $\bm{p} = 2\pi \bm{n}/N_s a$, where $N_s$ is the spatial extent of the lattice,
and $\bm{n}$ is $(0,0,0)$, $(1,0,0)$, $(1,1,0)$, $(2,0,0)$, $(2,1,0)$, $(3,0,0)$, $(2,2,2)$, or $(4,0,0)$.
For each choice of heavy-quark mass in \cref{table:ensembles} and momentum above, we compute the three-point function for several different source-sink separations $T$, given in \cref{table:ensembles}.
The final-state light-quark and spectator-quark propagators are computed using random corner-wall sources~\cite{MILC:2004qnl}.
The heavy-quark propagators are computed sequentially from the end of the spectator-quark propagator at time $T+t_{\rm src}$ as shown in \cref{fig:schematic_3pt}.
For the light- and strange-quark propagators, we employ the truncated solver method~\cite{Bali:2009hu,Alexandrou:2012zz}, using a single fine solve together with 24 to 36 loose solves on each configuration (see \cref{table:ensembles} for details).
To reduce autocorrelation in Monte Carlo time, the source locations for the fine and loose solves are precessed in Euclidean time from one configuration to the next.
    
As usual, states with both positive and negative parities contribute to the staggered correlation functions.
For the operators considered here, the negative-parity states decay smoothly with Euclidean time, while the positive-parity states oscillate while decaying in Euclidean time.
The spectral decompositions of \cref{eq:c2_heavy_P,eq:c2_heavy_A0,eq:c2_light_P,eq:c3_S,eq:c3_V0,eq:c3_Vi} take the following forms:
\begin{align}
    C_{L}^\mathcal{O}(t, \bm{p})
    &= \sum_{n=0} (-1)^{n(t+1)} \frac{\left|\matrixel{\vacuum}{\mathcal{O}_{L}}{n}\right|^2}{2E_L^{(n)}(\bm{p})}
    \left( e^{-E_L^{(n)}(\bm{p})t} + e^{-E_L^{(n)}(\bm{p})(N_t-t)} \right), \label{eq:2pt_spectral_decomp_final}\\
    C_{H}^\mathcal{O}(t)
    &= \sum_{m=0} (-1)^{m(t+1)} \frac{\left|\matrixel{\vacuum}{\mathcal{O}_{H}}{m}\right|^2}{2M_H^{(m)}}
    \left( e^{-M_H^{(m)}t} + e^{-M_H^{(m)}(N_t-t)} \right), \label{eq:2pt_spectral_decomp_initial}\\    
    \begin{split}
    C_{H \to L}^J(t,T,\bm{p})
    &= \sum_{m,n}
        (-1)^{n(t+1)} (-1)^{m(T-t-1)}
        \frac{
            \matrixel{\vacuum}{\mathcal{O}_L}{n}
            \matrixel{n}{J}{m}            
            \matrixel{m}{\mathcal{O}_H}{\vacuum}}
            {4 E_L^{(n)}(\bm{p}) M_H^{(m)}}\\
        & \phantom{000}\times
        \left(e^{-E_L^{(n)}(\bm{p})t} + e^{-E_L^{(n)}(\bm{p})(N_t - t)} \right)
        \left(e^{-M_H^{(m)}(T-t)} + e^{-M_H^{(m)}(N_t-T+t)}\right)
        \label{eq:3pt_spectral_decomp},
    \end{split}
\end{align}
where $\mathcal{O} \in \{P, A^0\}$ is the appropriate interpolating operator and $\ket{\vacuum}$ denotes the QCD vacuum state.
In the final line, the ground-state term contains the transition matrix elements, $\left.\matrixel{n}{J}{m}\right|_{n=m=0} \equiv \matrixel{L}{J}{H}$, from which one can extract the desired form factors via \cref{eq:f_parallel,eq:f_perp,eq:f_0}.

For the sake of visualization, certain ratios of correlation functions prove useful:
\begin{align}
R_{\parallel}(t, T, \bm{p}) &=
     \frac{ \bar{C}_{H \to L}^{V^0}(t,T, \bm{p}) \sqrt{2 E_L}}{ \sqrt{\bar{C}_{L}^{A^0}(t, \bm{p}) \bar{C}_{H}^P(T-t) e^{-E_L t} e^{-M_H (T-t)}}},
    \label{eq:ratio_v4}\\
R_{\perp}(t, T, \bm{p})     &= \frac{\sqrt{2 E_L}}{p^i}
    \frac{ \bar{C}_{H \to L}^{V^i}(t,T, \bm{p})}{ \sqrt{\bar{C}_{L}^{P}(t, \bm{p}) \bar{C}_{H}^P(T-t) e^{-E_L t} e^{-M_H (T-t)}}},
    \label{eq:ratio_vi}\\
R_0(t, T, \bm{p}) &= 2 \sqrt{M_H E_L} \left(\frac{m_h - m_l}{M_H^2 - M_L^2}\right)
    \frac{ \bar{C}_{H \to L}^S(t,T, \bm{p})}{ \sqrt{\bar{C}_{L}^P(t, \bm{p}) \bar{C}_{H}^P(T-t) e^{-E_L t} e^{-M_H (T-t)}}},
    \label{eq:ratio_s}
\end{align}
where the bars (e.g., $\bar{C}^P_{L}$) denote the time-slice-averaged correlators defined in \cref{eq:c2_avg,eq:c3_avg}.
Up to discretization effects (and renormalization), these ratios asymptotically approach the form factors at large Euclidean times:
\begin{align}
    R_\parallel(t, T, \bm{p}) &\stackrel{0 \ll t \ll T}{\longrightarrow} Z_{V^0}^{-1} f_\parallel(\bm{p}),\\
    R_\perp(t, T, \bm{p})     &\stackrel{0 \ll t \ll T}{\longrightarrow} Z_{V^i}^{-1} f_\perp(\bm{p}),\\
    R_0(t, T, \bm{p})         &\stackrel{0 \ll t \ll T}{\longrightarrow} f_0(\bm{p}).
\end{align}

The subsequent analysis of statistical and systematic uncertainties was conducted in a blinded fashion.
More precisely, all of our three-point correlation functions
were multiplied by a blinding factor $X \in (0.95, 1.05)$, which was chosen randomly and held fixed across all ensembles, momenta, currents, and heavy-quark masses in the three-point functions.
The blinded results were carried all the way through the phenomenological applications described in \cref{sec:phenomenology}.
The blinding factor was removed only after the estimate of systematic errors was complete and the analysis was frozen.

\subsection{Statistical analysis}

\begin{figure}
    \centering
    \includegraphics[width=1.0\textwidth]{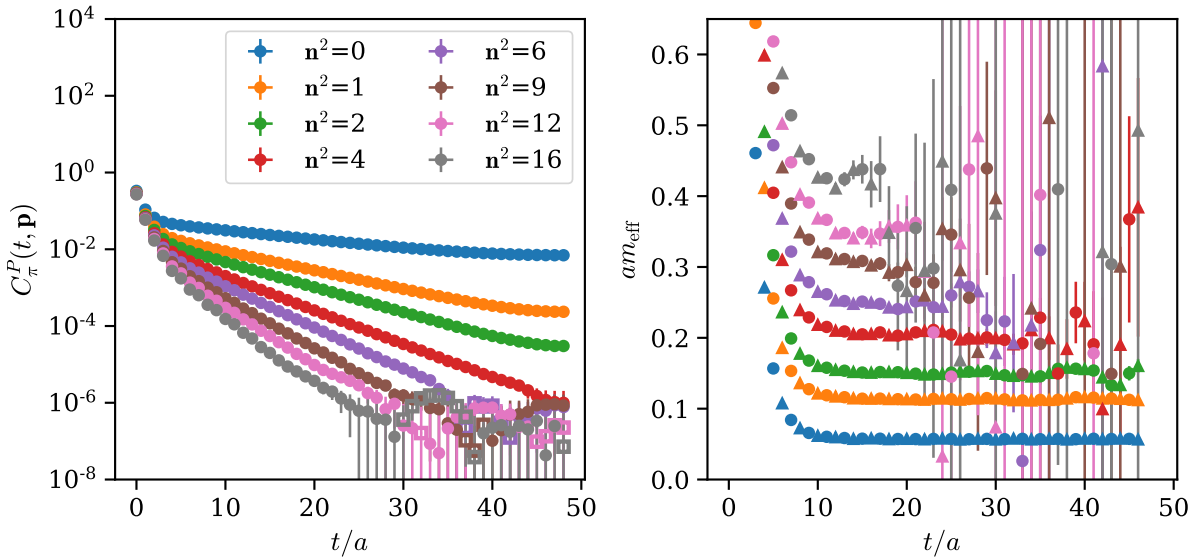}
    \caption{Pion two-point correlation functions $C_{\pi}^P(t,\bm{p}=2\pi\bm{n}/N_sa)$ and effective masses on the physical-mass $a\approx 0.088\fm$ ensemble.
    To reduce the visual impact of opposite parity states, the effective mass is computed separately for even and odd times using \cref{eq:effective_mass} and plotted using circles and triangles, respectively.
    After folding the data around the midpoint of the lattice, the correlator is defined for $t/a \in [0, N_t/2] = [0, 48]$.
    Because of the form of \cref{eq:effective_mass} involves offsets by 2, the effective mass is defined on times
    $t/a \in [2, N_t/2-2] = [2, 46]$.
    }
    \label{fig:pion_2pt_functions}
\end{figure}

\begin{figure}
    \centering
    \includegraphics[width=1.0\textwidth]{Figures/TwoPoint/kaon_2pt_functions.pdf}
    \caption{Kaon two-point correlation functions $C_{K}^P(t,\bm{p}=2\pi\bm{n}/N_sa)$ and effective masses on the physical-mass $a\approx 0.088\fm$ ensemble. 
    To reduce the visual impact of opposite parity states, the effective mass is computed separately for even and odd times using \cref{eq:effective_mass} and plotted using circles and triangles, respectively.
    After the data is folded around the midpoint of the lattice, the correlator is defined for $t/a \in [0, N_t/2] = [0, 48]$.
    Because of the form of \cref{eq:effective_mass} involves offsets by 2, the effective mass is defined on times
    $t/a \in [2, N_t/2-2] = [2, 46]$.}
    \label{fig:kaon_2pt_functions}
\end{figure}

\begin{figure}
    \centering
    \includegraphics[width=1.0\textwidth]{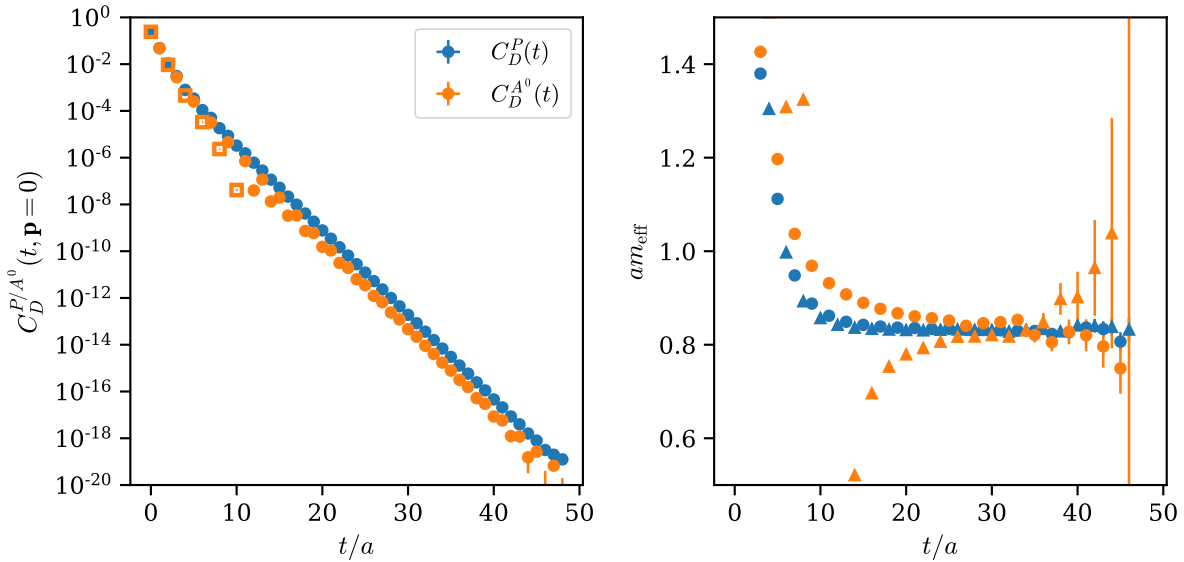}
    \caption{$D$-meson two-point correlation functions $C_D^{P/A^0}(t,\bm{p})$ at $\bm{p=0}$ and effective masses with the heavy-quark mass near its physical value on the physical-mass $a\approx 0.088\fm$ ensemble.
    The corresponding results for the $D_s$ meson are very similar. To reduce the visual impact of opposite parity states, the effective mass is computed separately for even and odd times using \cref{eq:effective_mass} and plotted using circles and triangles, respectively.
    Open symbols indicate values of the correlation function that are negative and, for ease of visualization, have been multiplied by $-1$.
    These negative points are responsible for the behavior of the odd-site effective mass for $C_D^{A^0}$ at early times.
    After folding the data around the midpoint of the lattice, the correlator is defined for $t/a \in [0, N_t/2] = [0, 48]$.
    Because the form of \cref{eq:effective_mass} involves offsets by 2, the effective mass is defined on times
    $t/a \in [2, N_t/2-2] = [2, 46]$.}
    \label{fig:d_2pt_functions}
\end{figure}

The statistical analysis consists of two stages.
First, two-point functions are analyzed in isolation.
Second, the two- and three-point functions are analyzed together to extract the form factors.
Several features are common to the fits in both stages.
To avoid possible contamination from autocorrelation in Monte Carlo time, the data are binned by 10 configurations prior to fitting. 
The amount of binning was chosen by looking for stability and saturation of errors in the fit results for the masses and form factors.

Our analysis employs standard Bayesian fits, which can be described generally as least-squares regression to a model function $f(\bm{a})$ with parameters $\bm{a}$ for some data set $D$.
The likelihood function $\pr(D|\mathbf{a}) \propto \exp \left[-\frac{1}{2} \chi^2\right]$ is written in terms of the augmented chi-squared function $\chi^2=\chi^2_{\rm data} + \chi^2_{\rm prior}$, with
\begin{align}
\chi^2_{\rm data} &= (\bar{y} - f(\bm{a}))^T \Sigma^{-1} (\bar{y} - f(\bm{a})), \label{eq:chi2}\\
\chi_{\rm prior}^2 &= (\bm{a} - \tilde{\bm{a}})^T \tilde{\Sigma}^{-1} (\bm{a} - \tilde{\bm{a}}) \label{eq:chi2_prior},
\end{align}
where $\bar{y}$ is a vector with the data means, $\Sigma$ is the covariance matrix, $\tilde{\bm{a}}$ is a vector with the prior values, and $\tilde{\Sigma}$ is the prior covariance matrix.
These expressions are standard \cite{Lepage:2001ym,Morningstar:2001je,Jay:2020jkz}.
In the present analysis, $\bar{y}$ and $\Sigma$ correspond to the measured means and covariance matrices of the correlation functions.
The function $f(\bm{a})$ corresponds to the spectral decomposition of  \cref{eq:2pt_spectral_decomp_final,eq:2pt_spectral_decomp_initial,eq:3pt_spectral_decomp}, with the energies and matrix elements serving as the parameters $\bm{a}$.
The choices for the priors $\tilde{\bm{a}}$ are discussed below.
Instead of using the usual binned-sample covariance matrix in \cref{eq:chi2}, we used an improved estimator $\hat{S}_n$ employing nonlinear shrinkage, which corrects for finite-sample-size effects~\cite{Ledoit:2018}; for technical details, see \cref{sec:shrinkage}.\footnote{%
To avoid possible confusion, we emphasize our correlator fits use \emph{nonlinear} shrinkage.
The chiral-continuum fits described below use linear shrinkage, since it combines data from different ensembles, each with a different statistical size;
see the discussion in \cref{ssec:chiral_ctm_fits}.}
The general procedure is as follows.
First, we compute the binned variances $\sigma$.
Second, we compute the correlation matrix $C_n$ using the full (unbinned) data.
Third, we compute the shrinkage estimator $\hat{C}_n$, taking the effective sample size to be the ratio of the total configurations to the bin size.
Finally, we construct $\hat{S}_n = \diag(\sigma)\hat{C}_n\diag(\sigma)$.
Apart from the usage of shrinkage, a similar procedure has been employed in the past, e.g., in Ref.~\cite{Bazavov:2017lyh}. 
In all cases, statistical uncertainties in the fit parameters are determined via bootstrap resampling with $500$ draws.
For fits on bootstrap-resampled pseudoensembles, the covariance matrix is held fixed to the binned-sample covariance matrix with shrinkage ($\hat{S}_n$) for the full ensemble~\cite{Toussaint:2008ke}.

As mentioned above, the statistical analysis begins with two-point functions.
\Cref{fig:pion_2pt_functions,fig:kaon_2pt_functions,fig:d_2pt_functions} display
representative two-point functions and effective masses for the pion, the kaon and the $D$~meson, respectively, on the physical-mass $a\approx 0.12\fm$ ensemble with a heavy-quark mass near its physical value for the $D$~meson. 
For the correlation functions themselves, dramatic oscillations from opposite-parity states are present only for the heavy mesons (see \cref{fig:d_2pt_functions}).
When plotted in the usual way, oscillations are visible in all the effective masses aside from the zero-momentum pion. 
To reduce the distraction of opposite-parity states and bring out the approach to the ground state, the effective mass is constructed separately on even and odd time slices using 
\begin{equation}
    am_{\rm eff}(t) \equiv \frac{1}{2}\arccosh \left[ (C(t+2)+ C(t-2))/2C(t) \right].
    \label{eq:effective_mass}
\end{equation}
In the effective mass plots, the triangle and circle markers correspond to the even and odd time slices, respectively.
As expected, the statistical noise grows exponentially for correlators with nonzero momentum.
High-momentum correlators therefore become noisy at large times, especially those considered here with $\bm{n}=(2,2,2)$ or $(4,0,0)$.
Even so, clear plateaus spanning several time slices are typically present in the effective mass at each momentum. 
For $D$~mesons, contributions from excited states are visibly larger when the interpolating operator $A_0$ is used. 
This observation informs certain analysis choices below. 
The behavior of two-point functions is similar to the ones shown in \cref{fig:pion_2pt_functions,fig:kaon_2pt_functions,fig:d_2pt_functions} for other masses and lattice spacings.

\begin{table}
\caption{Preferred analysis settings for fits of two-point functions to \cref{eq:2pt_spectral_decomp_final} and \cref{eq:2pt_spectral_decomp_initial}.
    The same settings are applied uniformly across all ensembles.
    The larger $t_{\rm min}$ cut for $C_H^{A_0}(t)$ is taken to avoid the excited-state contributions visible in \cref{fig:d_2pt_functions}.
    }
\label{table:analysis_choices}
\begin{tabular}{ l c c l l}
    \hline\hline
    Correlator      &   $N_{\rm decay} + N_{\rm osc}$ &   $ t_{\rm min}$ [fm]   & $t_{\rm max}$ cut\\
    \hline
    $C_{\pi}^P(t, \bm{p}=\bm{0})$      &   3   +   0   & $\approx 0.5$  & Noise $\leq 30\%$\\
    $C_{K}^P(t, \bm{p}=\bm{0})$        &   3   +   1   & $\approx 0.5$  & Noise $\leq 30\%$\\
    $C_{L}^P(t, \bm{p}\neq\bm{0})$ &   3   +   1   & $\approx 0.5$  & Noise $\leq 30\%$\\
    $C_H^P(t)$                   &   3   +   2   & $\approx 0.5$  & Noise $\leq 30\%$\\
    $C_H^{A_0}(t)$               &   3   +   2   & $\approx 0.75$--$0.85$  & Noise $\leq 30\%$\\
    \hline\hline
\end{tabular}
\end{table}

\begin{table}
\caption{
Priors used for the energies in fitting two-point functions to the spectral decomposition. All energy values are in $\MeV$.
At each lattice spacing, the values are converted to lattice units. Internally, the actual fit parameters are (the logarithm of) energy differences~\cite{Lepage:2001ym}.
For instance, the splitting between the first and second excited states for the pion is $1700(400) - 1300(400) = 400(566) \MeV$.
Priors for the amplitudes are discussed in the main text.
For ensembles with heavier-than-physical pions, the central values for the priors for $E_\pi$ and $E_K$ are increased using the tree-level expectation from chiral perturbation theory for the 
quark-mass dependence of the hadron mass (see \cref{ssec:chiral_ctm_function} below).
}
\label{table:priors_2pt_energies}
\begin{tabular}{l @{\hspace{12pt}} cc @{\hspace{12pt}} cc @{\hspace{12pt}} cc @{\hspace{12pt}} cc}
\hline \hline
$n$ &   $E_\pi$   &   $E_\pi^{\rm osc}$   &   $E_K$   &   $E_K^{\rm osc}$   &   $E_D$   &   $E_D^{\rm osc}$   &   $E_{D_s}$   &   $E_{D_s}^{\rm osc}$   \\
\hline
0   &   $135(50)$   &   $500(300)$  &   $498(100)$  &   $800(300)$  &   $1865(200)$ &   $2300(700)$ & $1968(200)$   & $2317(200)$ \\
1   &   $1300(400)$ &   ---         &   $1460(400)$ &   ---         &   $2565(700)$ &   $3000(700)$ & $2300(400)$   & $2713(400)$ \\
2   &   $1700(400)$ &   ---         &   $1860(400)$ &   ---         &   $3200(700)$ &   ---         & $2700(400)$   & ---         \\
\hline \hline
\end{tabular}
\end{table}

Each two-point correlator is fit to the corresponding spectral decomposition, \cref{eq:2pt_spectral_decomp_final} or \cref{eq:2pt_spectral_decomp_initial}, using the choices in \cref{table:analysis_choices}.
We have verified that our results are stable under reasonable variations of these choices, such as including more states or changing the value of $t_{\rm min}$.
The preferred number of states is roughly the minimal number required to achieve statistically significant fits
(with, say, $\chi^2/{\rm DOF} \lesssim 1$ or $p \gtrsim 0.1$ for goodness of fit\footnote{%
In this work, we use the augmented $\chi^2$ when quoting reduced $\chi^2/{\rm DOF}$.
Throughout the analysis, judgements about fit quality are insensitive to the precise definition used, and indistinguishable results are obtained for other reasonable definitions, e.g., the alternative quality-of-fit metrics defined in Appendix B of Ref.~\cite{FermilabLattice:2016ipl}.}),
while the cuts on $t_{\rm min}$ and $t_{\rm max}$ are designed to retain as much of the data as possible without undue contamination from excited states at early times or statistical noise at late times. 
The choices for the number of states and $t_{\rm min}$ are broadly similar to Fermilab-MILC work on decay constants~\cite{Bazavov:2017lyh}.
The main difference from Ref.~\cite{Bazavov:2017lyh} is that the present analysis includes more states for the pion and kaon,
e.g., $3+1$ versus $1+1$, in order to include data from shorter Euclidean times, which is advantageous for the subsequent analysis with
three-point functions.
\Cref{table:priors_2pt_energies} summarizes the priors used for the energies in fitting the two-point functions.
For the amplitudes, we choose broad priors in lattice units: $0.50(20)$ for decaying states and $0.1(1.0)$ for the oscillating states.

As an example, \cref{fig:pion_2pt_stability} shows the stability of the ground-state mass  extracted from $C_\pi^P(t,\bm{p}=\bm{0})$ on the physical-mass $a\approx 0.12\fm$ ensemble using fits with different choices for the number of states and $t_{\rm min}/a$.
Consistency with expectations from the effective mass is demonstrated in \cref{fig:pion_2pt_E_meff}.
Similar studies inform the other choices in \cref{table:analysis_choices}.

\begin{figure}
    \centering
    \includegraphics[width=1.0\textwidth]{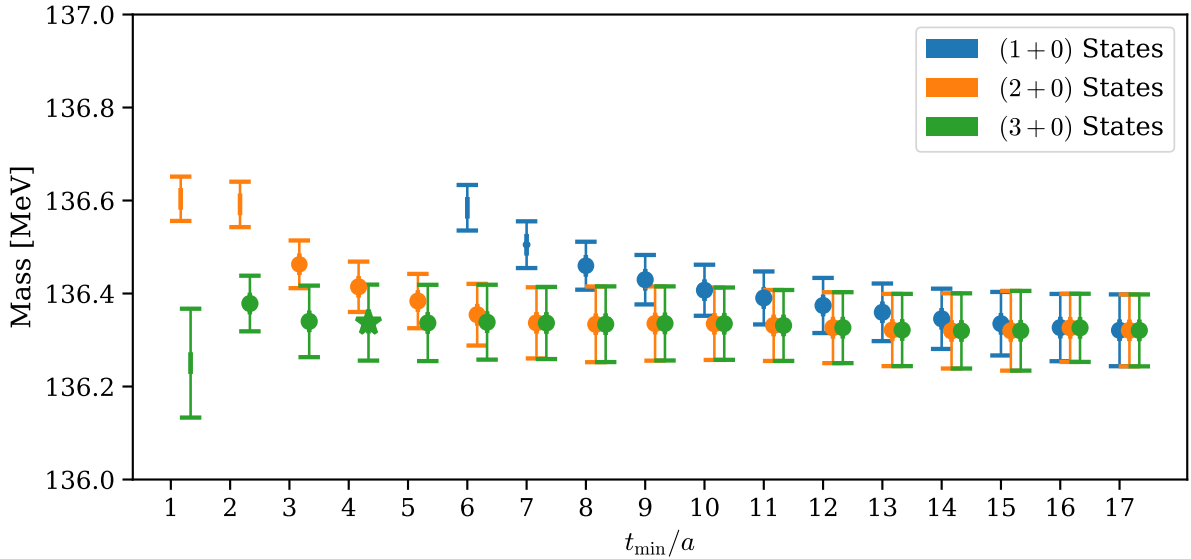}
    \caption{Stability of the ground-state mass in multi-exponential fits to $C_\pi^P(t,\bm{p}=\bm{0})$ on the physical-mass $a\approx 0.12\fm$ ensemble.
    The different colors show the posterior values for the ground-state mass using different numbers of states.
    The blue points indicating 1-state fits are aligned with the value of $t_{\rm min}/a$ on the horizontal axis.
    The corresponding results for 2- and 3-state fits at the same $t_{\rm min}/a$ are offset slightly to the right.
    The size of the markers is proportional to the $p$-value of the fit.
    As described in \cref{table:analysis_choices}, the preferred fit uses 3 states and $t_{\rm min}/a =4$ and is indicated by the green star.
    \label{fig:pion_2pt_stability}
    }
\end{figure}
\begin{figure}
    \centering
    \includegraphics[width=0.5\textwidth]{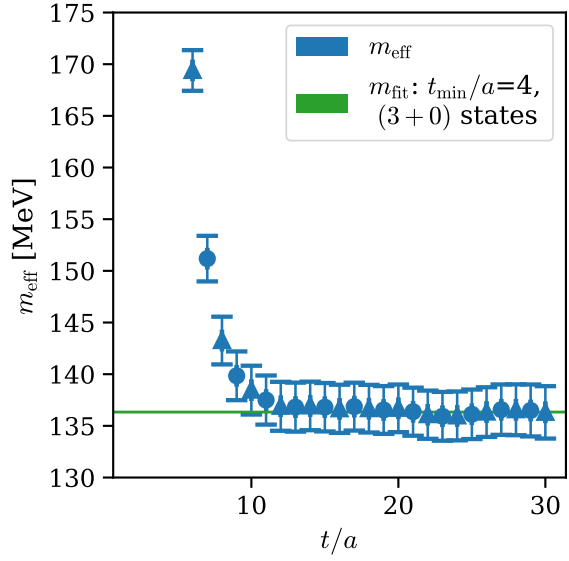}
    \caption{
    Comparison of the ground-state mass from the preferred fit and the effective mass for $C_\pi^P(t,\bm{p}=\bm{0})$ on the physical-mass $a\approx 0.12\fm$ ensemble.
    To reduce the visual impact of opposite parity states, the effective mass is computed separately for even and odd times using \cref{eq:effective_mass} and plotted using circles and triangles, respectively.}
    \label{fig:pion_2pt_E_meff}
\end{figure}

\begin{figure}
    \centering
    \includegraphics[width=0.49\textwidth]{Figures/TwoPoint/pion_2pt_dispersion.pdf}
    \includegraphics[width=0.49\textwidth]{Figures/TwoPoint/pion_2pt_overlap.pdf}
    \caption{
    Left: The pion dispersion relation.
    Within statistical uncertainties, $E^2 = m^2+\bm{p}^2$.
    Right: The behavior of the overlap factor $\matrixel{\vacuum}{P^{(\pi)}}{\pi}$, normalized by the value at zero momentum.
    Within statistical uncertainties, the overlap factor is constant.
    }
    \label{fig:pion_dispersion_tests}
\end{figure}

\begin{figure}
    \centering
    \includegraphics[width=0.49\textwidth]{Figures/TwoPoint/kaon_2pt_dispersion.pdf}
    \includegraphics[width=0.49\textwidth]{Figures/TwoPoint/kaon_2pt_overlap.pdf}
    \caption{
    Left: The kaon dispersion relation.
    Within statistical uncertainties, $E^2 = m^2+\bm{p}^2$.
    Right: The behavior of the overlap factor $\matrixel{\vacuum}{P^{(\pi)}}{\pi}$, normalized by the value at zero momentum.
    Within statistical uncertainties, the overlap factor is constant.
    \label{fig:kaon_dispersion_tests}
    }
\end{figure}

The second stage of the analysis combines data from two-point and three-point functions to extract $f_0$, $f_\parallel$, and $f_\perp$.
The basic procedure consists of simultaneous correlated fits to 
the spectral decompositions, \cref{eq:2pt_spectral_decomp_final,eq:2pt_spectral_decomp_initial,eq:3pt_spectral_decomp}, for a particular value of the heavy-quark mass and the current $J$ using the choices for the numbers of states and the fit ranges in \cref{table:analysis_choices}.
For instance, a simultaneous correlated fit to 
$C_{D}^{P}(t)$,
$C_{\pi}^{P}(t, \bm{p})$, 
and $C_{\Dpi}^{S}(t,T,\bm{p})$ furnishes  $\matrixel{\pi}{S}{D}$.
For consistency between the two-point and three-point functions, the fit window for the three-point functions is taken to be $t\in [t_{\rm min}^{\rm src}, T-t_{\rm min}^{\rm snk}]$, where $t_{\rm min}^{\rm src}$ and $t_{\rm min}^{\rm snk}$ are the values of $t_{\rm min}$ associated with the source and sink operators, which in general differ.
For example, when the $A_0$ sink operator is used $t_{\rm min}^{\rm src} < t_{\rm min}^{\rm snk}$; see \cref{table:analysis_choices}.
The Bayesian priors used in these fits incorporate knowledge about the ground-state energies and overlap factors coming from the two-point fits.
Let $M_{\rm 2pt}\pm \delta M_{\rm 2pt}$ denote the posterior value of the ground-state energy emerging from a fit to \cref{eq:2pt_spectral_decomp_final} or \cref{eq:2pt_spectral_decomp_initial}, and let
$E_{\rm 2pt}(\bm{p}^2)\equiv \sqrt{M_{\rm 2pt}^2 + \bm{p}^2}$ denote the value of the energy obtained by boosting the central value.
Similarly, let $A_{\rm 2pt} \pm \delta A_{\rm 2pt}$ denote the posterior value of the ground-state amplitude from the same fit.\footnote{
At large times, a generic two-point function is $C(t) = A_{\rm 2pt}^2 \left(e^{M_{\rm 2pt}t} + e^{M_{\rm 2pt}(T-t)}\right) + \cdots$, so the amplitude $A_{\rm 2pt}$ contains the momentum-dependent relativistic normalization of states in the denominator.}
For the joint fits to the two- and three-point functions at zero momentum, the central values for the amplitude and energy priors are taken to match the two-point posterior central values ($M_{\rm 2pt}$ and $A_{\rm 2pt}$), while the prior widths are taken to be ten times the posterior widths
($10\times \delta M_{\rm 2pt}$ and $10 \times \delta A_{\rm 2pt}$).
For nonzero momentum, the prior central values are obtained by boosting the corresponding ground-state results assuming the continuum relativistic dispersion relation; the fractional prior widths are taken to match the expected size of discretization effects, e.g.,
$\delta E_{\rm 2pt} / E_{\rm 2pt} = \order{\alpha_s a^2\bm{p}^2}$.
\Cref{table:priors_2pt3pt} summarizes the choices of these priors.
For the excited states, the priors for the energy differences are the same as those in \cref{table:priors_2pt_energies}, and the priors for the amplitudes are as above.

For generic transition matrix elements, a prior of $V_{nm} = 0.1(10)$ in lattice units is used, where
\begin{equation}
V_{nm}\equiv
    \frac{\matrixel{\vacuum}{\mathcal{O}_L}{n}
    \matrixel{n}{J}{m}            
    \matrixel{m}{\mathcal{O}_H}{\vacuum}}
    {4 E_L^{(n)}(\bm{p}) M_H^{(m)}}.
\end{equation}
For the special case of the ground state, the central value of $V_{00}$ is estimated from the plateau in ratios following \cref{eq:plateau} below,
and the width is taken to be 50\%.

\begin{table}
\caption{
Summary of how priors for simultaneous fits to two- and three-point functions incorporate information from the two-point
fits.
Values for $\alpha_s$ are given in \cref{table:alpha}.
}
\label{table:priors_2pt3pt}
\begin{tabular}{cc c }
\hline\hline
Momentum    & Quantity & Prior value \\
\hline
$\bm{p}^2=0$ & Energy & $M_{\rm 2pt} \pm 10\times \delta M_{\rm 2pt}$\\
     & Amplitude & $A_{\rm 2pt} \pm 10\times \delta A_{\rm 2pt}$\\
\hline
$\bm{p}^2>0$ & Energy & $E_{\rm 2pt}\times\left(1 \pm \alpha_s a^2 \bm{p}^2\right)$ \\
    & Amplitude & $A_{\rm 2pt} \sqrt{M_{\rm 2pt}/E_{\rm 2pt}} \times \left( 1 \pm \alpha_s a^2 \bm{p}^2 \right)$ \\
\hline\hline
\end{tabular}
\end{table}


\begin{figure}
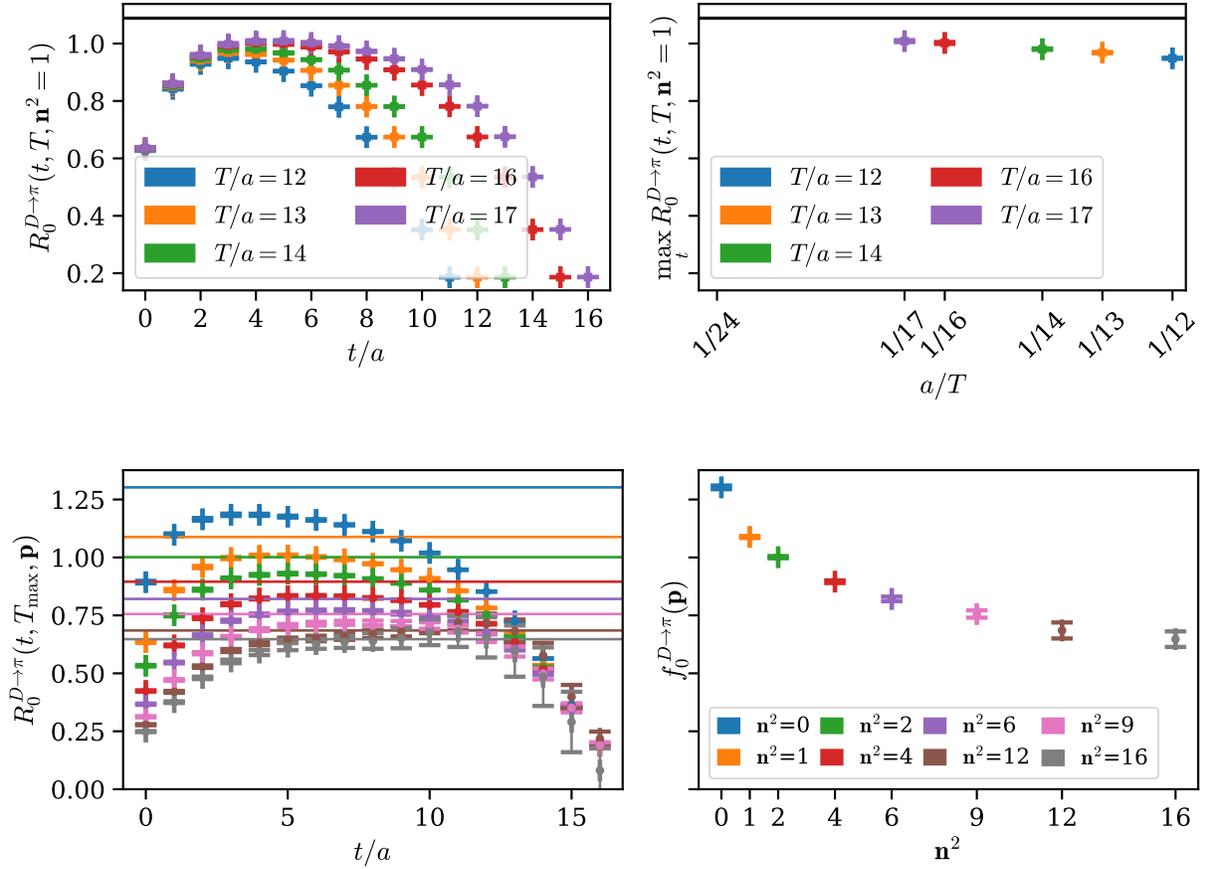

    \centering
    \includegraphics[width=1.0\textwidth]{Figures/Rbar/d2pi_S_rbar_vs_tsnk.pdf}
    \includegraphics[width=1.0\textwidth]{Figures/Rbar/d2pi_S_rbar_vs_momentum.pdf}
    \caption{
    Visual tests involving the ratio $R_{0}^{D\to\pi}$ in \cref{eq:ratio_s} to form factors coming from the spectral decomposition.
    Data and results are taken from the physical-mass $0.12\fm$ ensemble with the charm-quark mass near its physical value.
    \textbf{(Top)}
    The approach to the asymptotic plateau region for the ratio $R_0^{\Dpi}$ at fixed momentum as a function of the source-sink separation.
    The right panel shows the maximum point from each color set of points on the left, $\max_t R_0^{\Dpi}(t, T)$.
    The horizontal black line in the top panels shows the form factor's posterior value, taken from the joint fit to the spectral decomposition.
    \textbf{(Bottom)}
    The form factor's momentum dependence.
    The left panel shows the ratio $R_0^{\Dpi}(t, T_{\rm max}, \bm{p})$ at fixed source-sink separation, with each color corresponding to a different momentum.
    Horizontal lines show the central value for the form factors coming from the fits (including all source-sink separations $T$).
    The right panel shows the smooth momentum dependence of $f_0^{\Dpi}(q^2)$.
    }
    \label{fig:d2pi_rbar}
\end{figure}

Once statistically acceptable fits (e.g., $\chi^2/{\rm DOF} \lesssim 1$ or $p \gtrsim 0.1$) are obtained, a variety of visualizations give confidence that the bare form factors have been extracted reliably.
For instance, the fits must reproduce the data visually with reasonable uncertainties and give results for the ground-state masses and overlap factors that agree with the initial analysis of two-point functions in isolation.
As the priors in \cref{table:priors_2pt3pt} suggest, energies are expected to satisfy the continuum dispersion relation, $E^2 = (M^2 + \bm{p}^2) (1 + \order{\alpha_s a^2 \bm{p}^2})$, and overlap factors are expected to be constant, since only point-like interpolators were used for the source and sink operators. 
\Cref{fig:pion_dispersion_tests,fig:kaon_dispersion_tests} demonstrate that both conditions are well satisfied.
The blue points correspond to the posterior (``best-fit'') results, while the dashed lines show the size of the priors for $\bm{p}^2 > 0$, as defined in \cref{table:priors_2pt3pt}.
As the figure shows, the posteriors typically are much narrower than the priors.
We have verified that statistically consistent results, with similar statistical precision, are obtained if the priors are relaxed by inflating the width by a factor of ten.
\Cref{fig:d2pi_rbar} shows representative results for joint fits for $\Dpi$.
The top rows show the approach to the asymptotic ($T/a\to\infty$) plateau region.
In the top left panel, data are plotted at fixed momentum $\bm{p}=2\pi(1,0,0)/N_sa$, with each color corresponding to a different source-sink separation $T$.
The top right panel shows the approach to the asymptotic plateau versus $T/a$, with each point corresponding to the maximum point in the curves on the top left: $\max_t R_0^{\Dpi}(t, T, \bm{p}=2\pi(1,0,0)/N_sa)$.
The horizontal black line in the top panels shows the form factor's posterior value, taken from the joint fit to the spectral decomposition.
The bottom panels shows the form factor's momentum dependence.
In the bottom left panel, the data correspond to the ratio $R_0^{\Dpi}(t, T_{\rm max}, \bm{p})$, with each color corresponding to a different momentum.
In each case, only the largest source-sink separation $T_{\rm max}$ is displayed.
Horizontal lines denote the posterior central values for the form factors, coming from fits including all source-sink separations $T$.
The bottom right panel shows the smooth momentum dependence of $f_0^{\Dpi}(q^2)$.
Additional details, along with similar figures for the decays $\DK$ and $\DsK$ are given in \cref{app:correlator_fits}.

\subsection{Nonperturbative renormalization}
\label{sec:renormalization}

Bare matrix elements are renormalized nonperturbatively by imposing the PCVC relation, \cref{eq:PCVC}.
\Cref{fig:bare_ward_identity} shows the matrix elements entering this expression, before and after renormalization, for the physical-mass $a\approx 0.12\fm$ ensemble with the charm-quark mass near is physical value.
The black points show the quantity
$(m_c-m_q)\matrixel{L}{S}{H} - q^\mu \matrixel{L}{V^\mu}{H}$ with $L \in \{\pi,K\}$, $H\in\{D,D_s\}$, and $m_q \in \{m_l, m_s\}$.
The fact that the open black circles differ slightly from zero gives a visual indication that the renormalization factors $Z_{V^0}$ and $Z_{V^i}$ are necessary to satisfy PCVC.
The closed black squares, statistically consistent with zero, show the precision with which the PCVC relation is satisfied after renormalization.

\begin{figure}
    \centering
    \includegraphics[width=\textwidth]{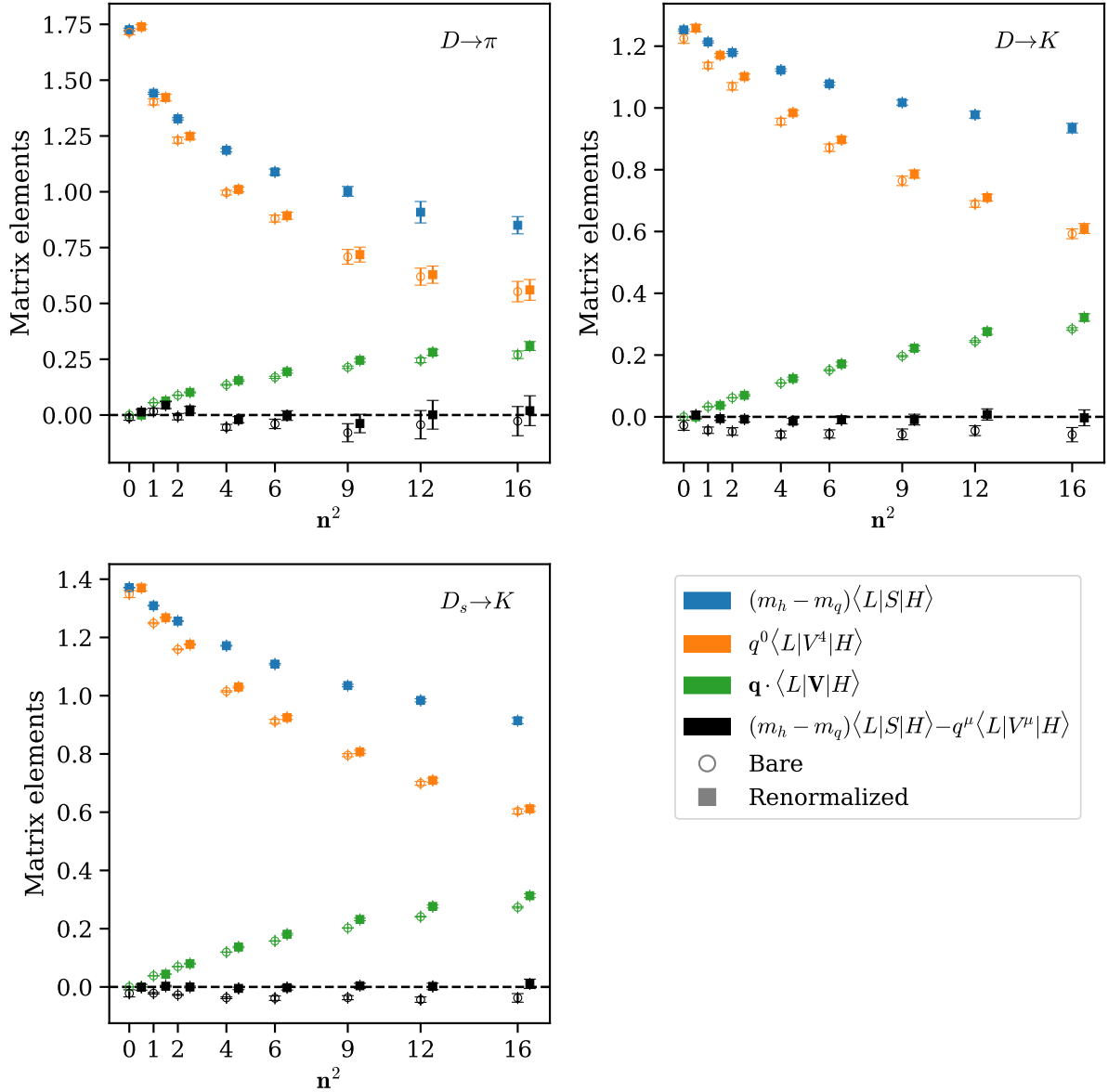}
    \caption{Matrix elements (in arbitrary units) entering the PCVC relation, \cref{eq:PCVC}, before and after renormalization for the physical-mass $a\approx0.12\fm$ ensemble with the charm-quark mass near its physical value, for the decays $D\to\pi$ (top left), $D\to K$ (top right), and $\DsK$ (bottom)
    The open circles denote bare quantities, while the filled squares are renormalized.}
    \label{fig:bare_ward_identity}
\end{figure}

In principle, much freedom exists for extracting the vector-current renormalization factors.
The present analysis fits the bare matrix elements as a function of momentum to \cref{eq:PCVC} for each ensemble and choice of $m_h$, treating $Z_{V^0}$ and $Z_{V^i}$ as free parameters.
Recall $Z_m Z_S=1$ for the local staggered scalar current.
When constructing the renormalized matrix elements, correlations between the bare matrix elements and $Z_{V^0}$ and $Z_{V^i}$ are incorporated via the bootstrap resampling discussed above.

\begin{figure}
    \centering
    \includegraphics[width=1.0\textwidth]{Figures/Renormalization/renormalization_factors_ZV0.pdf}
    \caption{Vector-current renormalization factors $Z_{V^0}$, in all cases with the charm-quark mass near its physical value.
    \textbf{(Left)}
    Results for the quark-level transition $c\to l$ appearing in the decays $D\to \pi$ and $D_s \to K$.
    As discussed in the text, the data are taken from fits to the $D_s\to K$ data.
    \textbf{(Right)}
    Results for the quark-level transition $c\to s$ appearing in the decay $D\to K$.
    The light gray points show the renormalization factors computed by HPQCD on the same ensembles, using the same local $V^0$ current but slightly different valence masses~\cite{Chakraborty:2021qav}.
    The two sets of values agree at $1$--$2\sigma$.}
    \label{fig:renormalization_V0}
\end{figure}

\begin{figure}
    \centering
    \includegraphics[width=1.0\textwidth]{Figures/Renormalization/renormalization_factors_ZVi.pdf}
    \caption{Vector-current renormalization factors $Z_{V^i}$, in all cases with the charm-quark mass near its physical value.
    \textbf{(Left)}
    Results for the quark-level transition $c\to l$ appearing in the decays $D\to \pi$ and $D_s \to K$.
    As discussed in the text, the data are taken from fits to the $D_s\to K$ data.
    \textbf{(Right)}
    Results for the quark-level transition $c\to s$ appearing in the decay $D\to K$.}
    \label{fig:renormalization_Vi}
\end{figure}

\Cref{fig:renormalization_V0,fig:renormalization_Vi} show the results for renormalization factors for the temporal and spatial components of the vector current, respectively, in all cases for the data with the charm-quark mass near its physical value.
The transition $c\to l$ appears in both $D\to \pi$ and $D_s\to K$ decays, differing only by the spectator quark.
The data for the latter decay are statistically more precise, which in turns yields more precise values for the renormalization factors of the $\bar l c$ currents. 
We thus use the renormalization factors extracted from the $D_s\to K$ data to renormalize both $D_s\to K$ and $D\to\pi$ matrix elements.

At a given lattice spacing, uncertainties both in the bare matrix elements (coming from the correlator fits) and in the renormalization factors contribute to the total error budget for the form factors.
The relative importance of the renormalization error depends both on the form factor ($f_\parallel$ or $f_\perp$) and the momentum.
For instance, for the $\Dpi$ decay on the physical-mass $a\approx 0.12\fm$ ensemble, the renormalization error in $f_\parallel$ from $Z_{V^0}$ is $\lesssim 0.1\%$.
For the same decay and lattice spacing, the renormalization error in $f_\perp$ from $Z_{V^i}$ is around $1\%$. For comparison, the individual statistical errors in both $f_\parallel$ and $f_\perp$ (neglecting the renormalization error) range from around $1\%$ at low momentum to around $8\%$ at large momentum. 
These observations are consistent with the expectation that renormalization with PCVC should enable sub-percent determinations of form factors.

\section{Chiral-continuum analysis}
\label{sec:chiral_ctm}

This section describes our chiral-continuum analysis, yielding results for $f_+(q^2)$ and $f_0(q^2)$ at physical quark mass and in the continuum limit.
\Cref{ssec:chiral_ctm_function} describes the fit function used in the analysis and its connection to effective field theory (EFT).
\Cref{ssec:chiral_ctm_fits} presents the results of the fits and describes our definition of the physical point in isospin-symmetric QCD.
\Cref{ssec:fplus_f0} presents a cross check on our results by constructing $f_+$ and $f_0$ in different ways.
\Cref{ssec:z-expansion} re-expresses our results in a compact form using the model-independent $z$~expansion.
\Cref{ssec:spectator_dependence} considers the spectator dependence of the form factors by comparing our results for $\Dpi$ and $\DsK$.
Finally, \cref{ssec:us-vs-them} compares our form factors with published results in the literature.

\subsection{Description of the chiral-continuum fit function \label{ssec:chiral_ctm_function}}

Together, the bare matrix elements and renormalization factors calculated in \cref{sec:correlator_analysis} furnish
the form factors $f_\parallel$, $f_\perp$, and $f_0$ at four different lattice spacings, three different pion masses, and several values of the heavy-quark mass.
These results are extrapolated to the continuum limit and interpolated to the physical point using guidance from effective field theory.

We treat the light-quark mass dependence of the form factors $f_\parallel$ and $f_\perp$ with SU(2) heavy-meson rooted staggered chiral-perturbation theory~\citep{Aubin:2005aq,Aubin:2007mc}.
Following earlier work~\cite{Bailey:2015dka,FermilabLattice:2019ikx}, we use the version for a hard final-state hadron~\citep{Flynn:2008tg,Bijnens:2010ws,Bijnens:2010jg}, hereafter referred to as ``hard SU(2) $\chi$PT.''
We include the complete set of chiral logarithms and analytic corrections through next-to-leading order (NLO) in the chiral expansion.
To account for truncation errors, we also include all analytic terms consistent with the power-counting scheme of Ref.~\citep{Aubin:2005aq,Aubin:2007mc} through next-to-next-leading order (NNLO).
These choices amount to considering the following functional form for $P \in \{\parallel, \perp, 0, +\}$:
\begin{equation}
\begin{split}
    w_0^{d_P} f_{P}(E)
    = \frac{c_0}{w_0\left(E + \Delta_{xy, P}\right)}
    \times \Big[
    1 &+ \delta f_{\rm logs} + c_l \chi_l + c_s \chi_s + c_H \chi_H + c_E \chi_E \\
    &+ c_{l^2} \chi_l^2 + c_{ls} \chi_l \chi_s + c_{s^2} \chi_s^2\\
    &+ c_{lH} \chi_l \chi_H + c_{lE} \chi_l \chi_E + c_{sH} \chi_s \chi_H + c_{sE} \chi_s \chi_E\\
    & + c_{H^2} \chi_H^2 + c_{HE} \chi_H \chi_E + c_{E^2} \chi_E^2\\
    &+ \delta f_{\rm artifacts}^{(a^2 + h^2)}
    \Big],
\end{split}\label{eq:chipt_hard_su2}
\end{equation}
where the exponent $d_P\in\{1/2$, $-1/2$, $0$, $0\}$ for $P \in\{\parallel, \perp, 0, +\}$ (respectively) and $c_0$ is a dimensionless constant. 
Although, in principle, the function describing the chiral logarithms, $\delta f_{\rm logs}$, depends on the form factor, in hard SU(2) $\chi$PT it is the same for all $P \in \{\parallel, \perp, 0, +\}$, as discussed below.

The leading pole factor (in terms of the final-state hadron's energy $E$) arises from the exchange of a virtual $W$ boson, which couples to an excited meson $D_x^*$ composed of $c$ and the final-state quark $x$, contributing a factor proportional to
\begin{equation}
    \frac{1}{E + \Delta_{xy, P}} = \frac{2M_{D_y}}{M_{D^*_x}^2-q^2}.
    \label{eq:pole}
\end{equation}
The intrinsic angular momentum and parity of the $D_x^*$ are those of the virtual $W$ boson, which is $J^P=1^-$ for $f_+$ and  $J^P=0^+$ for $f_0$.
According to the leading-order expectations of the heavy-quark expansion~\cite{Burdman:1993es},
the same pole arises pairwise for $f_\perp$ as $f_+$, and similarly for the pair $f_\parallel$ and~$f_0$ (cf.~\cref{table:pole_locations}). 
\Cref{eq:pole} implies that location of the pole in the energy can be written as
\begin{equation}
    \Delta_{xy, P} = \frac{M_{D^*_x(J^P)}^2 - M_{D_y}^2 - M_{L}^2}{2 M_{D_y}},
\end{equation}
where $y$ is the spectator quark and $L \in \{\pi, K\}$ is the final-state hadron.
Values for the $\Delta_{xy, P}$ are collected in \cref{table:pole_locations} for the decays of interest.

\begin{table}
    \centering
    \caption{Approximate pole locations $\Delta_{xy,P}$ appearing in the decays $\Dpi$, $\DsK$, and $\DK$.}
    \label{table:pole_locations}
    \begin{tabular}{cc cccccc}
    \hline\hline
    Decay     & $c\to x$ & $J^P$ & $D_x^*(J^P)$ & $D_y$ & $L$ & $\Delta_{xy,P}$           & 
    $ (\Delta_{xy,P})^{\rm PDG}$ \\
    \hline
    $\Dpi$ & $c\to l$ & $1^-$ & $D^*$           & $D$   & $\pi$    & $\Delta_{ll,+/\perp}$     & $140\MeV$ \\
           &          & $0^+$ & $D^*_0(2300)$   & $D$   & $\pi$    & $\Delta_{ll,0/\parallel}$ & $480\MeV$ \\
    \hline
    $\DsK$& $c\to l$ & $1^-$ & $D^*$            & $D_s$ & $\pi$    & $\Delta_{ls,+/\perp}$     & $-25\MeV$ \\
          &          & $0^+$ & $D^*_0(2300)$    & $D_s$ & $\pi$    & $\Delta_{ls,0/\parallel}$ & $300\MeV$ \\
    \hline
    $\DK$ & $c\to s$ & $1^-$ & $D_s^*$          & $D$   & $K$      & $\Delta_{sl,+/\perp}$     & $200\MeV$ \\
          &          & $0^+$ & $D_{s0}^*(2317)$ & $D$   & $K$      & $\Delta_{sl,0/\parallel}$ & $440\MeV$ \\
    \hline\hline
\end{tabular}
\end{table}

The $\chi_n$ are dimensionless expansion parameters defined according to
\begin{align}
    \chi_l &= \frac{(M_\pi^{\rm sim})^2}{8 \pi^2 f_{\pi}^2}, \label{eq:chil}\\
    \chi_s &= \frac{(M_K^{\rm sim})^2- (M_K^{\rm PDG})^2}{8 \pi^2 f_{\pi}^2},
    \label{eq:chis}\\
    \chi_E &= \frac{\sqrt{2} E}{4 \pi f_{\pi}}, \label{eq:chiE}\\
    \chi_H &= \frac{\Lambda_{\rm HQET}}{M_{H_{(s)}}^{\rm sim}} - \frac{\Lambda_{\rm HQET}}{M_{D_{(s)}}^{\rm PDG}}
    \label{eq:chiH},
\end{align}
where $f_{\pi}$ is the physical pion decay constant and $\Lambda_{\rm HQET}$
is the scale of heavy-quark effective theory.
As in Ref.~\cite{Bazavov:2017lyh}, we take $\Lambda_{\rm HQET}=800\MeV$.
The parameters $\chi_l$ and $\chi_E$ describe the analytic dependence on the light-quark mass $m_l$ (via the leading-order expression $M_\pi^2 = 2 \mu m_l$) and the final-state hadron energy $E$, respectively. 
Their normalization is such that, according to the typical $\chi$PT power counting, the corresponding coefficients in the fit function $c_l$ and $c_E$ are expected to be of order~1.
The parameter $\chi_H$ describes the heavy quark mass mistuning through the difference between the simulated heavy meson mass $M_{H_{(s)}}^{\rm sim}$ and the physical $M_{D^0}$ or $M_{D_s}$ from Ref.~\cite{Workman:2022ynf}. This term allows for a simultaneous description of results across several different heavy quark masses. 
Finally, $\chi_s$ describes the strange-quark mass mistuning.\footnote{Since the expansion parameters $\chi_l$ and $\chi_s$ are written in terms of the simulated hadron masses, they implicitly accommodate mistuning between the masses of the sea and valence quarks.
As shown in \cref{table:bare_quark_masses}, this feature is only relevant for the finest ensemble, where values for $m_l$ and $m_s$ differ by a small amount ($\approx 1\%$) between the sea and valence quarks.
}

In hard SU(2) $\chi$PT, the chiral logarithms for $f_\parallel$ and $f_\perp$ in \cref{eq:chipt_hard_su2} have the common form~\cite{Bailey:2015dka,FermilabLattice:2019ikx},
\begin{equation}
\begin{split}
    \delta f_{\rm logs}^\text{SU(2)}
    &=\frac{1}{(4\pi f_\pi)^2} \left(
    - \frac{1}{16}\sum_\xi \mathcal{I}_1(M_{\pi,\xi})
    + \frac{1}{4} \mathcal{I}_1(M_{\pi,I}) 
    + \mathcal{I}_1(M_{\pi,V}) - \mathcal{I}_1(M_{\eta, V}) + 
    [V \to A]
    \right)\\
    &\phantom{=}\times
    \begin{cases}
    (1+3g^2), \quad \Dpi \\
    3g^2,   \phantom{(1+{})} \quad \DK  \\
    1,       \phantom{({}+3g^2)} \quad \DsK
    \end{cases}\hspace{-1em},
\end{split}
\label{eq:chiral_logs}
\end{equation}
where $\mathcal{I}_1(M) \equiv M^2 \ln(M^2/\Lambda^2) + 4I_1^{\rm FV}(M, ML)$, with $I_1^{\rm FV}(M, ML)$ being a calculable finite-volume correction to the chiral logarithm which vanishes exponentially for large volumes; see \cref{ssec:finite_volume} below.
Hard SU(2) $\chi$PT enjoys the further simplification that nonanalytic self-energy corrections vanish for all three decays considered here.
These expressions [\cref{eq:chiral_logs} and the self energies] were originally derived for a non-staggered heavy quark, but because the heavy-quark taste is conserved in all-staggered $\chi$PT, they hold in the present case too~\cite{Bernard:2013qwa}.
In heavy-meson $\chi$PT, compact expressions are available for $f_\perp$ and $f_\parallel$~\cite{Becirevic:2002sc,Becirevic:2003ad}, while the corresponding results for $f_+$ and $f_0$ follow as linear combinations.
Because of their simple connection to heavy-meson $\chi$PT, previous lattice calculations have historically worked primarily in terms of $f_\parallel$ and $f_\perp$.
However, since the chiral logarithms have the same functional form for $f_\parallel$ and $f_\perp$ in hard SU(2) $\chi$PT [see \cref{eq:chiral_logs}], the same functional form also describes the chiral logarithms for $f_0$ and $f_+$.
In other words, \cref{eq:chipt_hard_su2} may be used directly for all four form factors, with a $1^-$ pole for $f_{+,\perp}$ or a $0^+$ pole for $f_{0,\parallel}$. 

Following Ref.~\citep{Aubin:2007mc}, the arguments of the chiral logarithms involve the masses of mesons with different tastes $\xi\in\{I,P,V,A,T\}$, that can be expressed as
\begin{align}
    M_{\pi,\xi}^2 &= M_{uu,\xi}^2 = M_{dd,\xi}^2, \\
    M_{ij,\xi}^2 &= \mu (m_i + m_j) + \Delta_\xi, \label{eq:staggered_GMOR}\\
    M_{\eta,V(A)}^2 &= M_{uu,V(A)}^2 + \frac{1}{2} \delta'_{V(A)}, \\
    \bar{\Delta} &= \frac{1}{16} \sum_\xi \Delta_\xi.
\end{align}
The low-energy constant $\mu$ and the taste splittings $\Delta_\xi$ have been tabulated for these ensembles in Ref.~\cite{FermilabLattice:2018zqv}. 
At NLO in the chiral expansion, the taste splittings $\Delta_\xi$ and the hairpin parameters $\delta'_{V,A}$ both scale like $\alpha_s^2 a^2$, so their ratio remains approximately constant as the lattice spacing changes.
We follow Ref.~\citep{Bazavov:2017lyh} and take
$\delta'_A / \bar{\Delta} = -0.88(09)$ and
$\delta'_V / \bar{\Delta} = +0.46(23)$.

Chiral logs described above include the dominant discretization effects coming from the taste-symmetry breaking of staggered fermions at NLO in the chiral expansion.
We also remove the leading-order (tree-level) heavy-quark discretization effects in the form factors prior to fitting by applying a multiplicative normalization factor $Z^{\rm HQET, LO}_{hx}$, described in \cref{ssec:HQETErrors}.
Because of the tree-level improvement of the HISQ action, the remaining discretization effects are expected to arise at order $\alpha_s (a\Lambda)^2$ or $\alpha_s(a m_h)^2$, where $\Lambda$ is the scale of generic discretization effects.
They are thus expected to be well described by an expansion in terms of the parameters $x_{a^2}$ and $x_h$:
\begin{align}
    x_{a^2} &= \frac{a^2 \bar{\Delta}}{8 \pi^2 w_0^2f_\pi^2}, \label{eq:xa2}\\
    x_h &= \frac{2}{\pi}am_h.\label{eq:xh}
\end{align}
The quantity $x_{a^2}$ gives a dimensionless measure of order $\alpha_s (a\Lambda)^2$ discretization corrections, while $x_h$ is the natural expansion parameter for heavy-quark discretization effects.
The HISQ action was designed specifically to control lattice artifacts for charm physics, and the leading heavy-quark corrections are suppressed both by $\alpha_s$ and the velocity $v\approx \sqrt{1/10}$ of the charm quark within the heavy hadron.
Our preferred model thus takes the following simple \emph{Ansatz} for the discretization effects
\begin{align}
    \delta f_{\rm artifacts}^{(a^2 + h^2)}    &= c_{a^2} x_{a^2} + \alpha_s v c_{h^2} x_{h}^2. \label{eq:artifacts_a2h2}
\end{align}
Values for $\alpha_s$ are given in \cref{table:alpha}.

To check for truncation effects with high-order discretization effects, we also consider variations:
\begin{align}
\delta f_{\rm artifacts}^{(a^2 + h^2 + a^4)}    &= c_{a^2} x_{a^2} + \alpha_s v c_{h^2} x_{h}^2 + c_{a^4} x_{a^2}^2, \label{eq:artifacts_a2h2a4}\\
\delta f_{\rm artifacts}^{(a^2 + h^2 + h^4)}    &= c_{a^2} x_{a^2} + \alpha_s v c_{h^2} x_{h}^2 + v c_{h^4} x_{h}^4. 
\label{eq:xh4}
\end{align}
With the HISQ action, discretization effects of order $x_h^4$ are suppressed by $v$ (at the tree level) or by~$\alpha_s$.
These suppression factors are numerically similar enough that the last term in \cref{eq:xh4} tests both.

\begin{table}
\caption{Values for the strong coupling constant, which are based on the continuum value of $\alpha_s(5\GeV, N_f=4)$ from Ref.~\cite{Chakraborty:2014aca}.
Continuum perturbation theory is used to convert to the $\alpha_V$ scheme and to run to the scale $2/a$~\cite{Komijani:2018}.
}
\label{table:alpha}
\begin{tabular}{lc}
\hline \hline
$\approx a$ [fm]    & $\alpha_s(2/a)$\\
\hline
0.15    & 0.3509\\
0.12    & 0.3091\\
0.088   & 0.2646\\
0.06    & 0.2236\\
0.042   & 0.2036\\
\hline \hline
\end{tabular}
\end{table}

\subsection{Chiral-continuum fits \label{ssec:chiral_ctm_fits}}

We perform correlated fits, using the methodology described around \cref{eq:chi2,eq:chi2_prior}, to \cref{eq:chipt_hard_su2} with $\delta f_{\rm artifacts}^{(a^2 + h^2)}$  in \cref{eq:artifacts_a2h2}
for each of the above form factors for $\Dpi$, $\DK$, and $\DsK$ including all of the ensembles and heavy-quark masses described in \cref{table:ensembles}.

In our preferred fits, the input data for the form factors $f_{\perp / \parallel / 0}$ are defined using a single matrix element each via 
\cref{eq:f_parallel,eq:f_perp,eq:f_0}. 
The chiral-continuum fit results using the alternative constructions $f^{\rm alt}_{\rm P}$ of \cref{eq:fplus_parallel_perp,eq:f0_parallel_perp} are considered below in the analysis of systematic effects.  

\begin{table}
    \caption{Summary of the priors used in the chiral-continuum fits to \cref{eq:chipt_hard_su2}.
    Values for $(\Delta_{xy,P})^{\rm PDG}$ in the different decays are given in \cref{table:pole_locations}.
    }
    \label{table:priors_chip}
    \begin{tabular}{cc}
    \hline\hline
    Parameter   & Value \\
    \hline
    $c_0$     & $1 \pm10$ \\
    $c_n$     & $0 \pm 1$ \\
    $g$       & $0.5 \pm 0.2$ \\
    $\Delta_{xy,P}$ & $(\Delta_{xy,P})^{\rm PDG} \pm 200 \MeV$ \\
    \hline\hline
    \end{tabular}
\end{table}

The free parameters varied in the fits are the coefficients $c_n$, the coupling $g$, and the mass splittings $\Delta_{xy,\rm P}$.
The Bayesian priors for these parameters are given in \cref{table:priors_chip}.
The leading coefficient $c_0$ is well determined by the data, so the preferred analysis uses a broad prior (the results are insensitive to the central value, and any reasonable variation gives indistinguishable results). 
The chiral-continuum fit function is based on power-counting arguments from effective field theory, according to which the coefficients $c_n$ are expected to be of order unity.
The preferred analysis therefore uses priors of $0 \pm 1$ for the parameters $c_n$.
The dimensionless (reduced) ``$DD^*\pi$'' coupling appearing as a coefficient of the chiral logarithms is expected to be $g \approx 0.5$, both from experimental measurement~\cite{CLEO:2001sxb,BaBar:2013zgp,BaBar:2013thi} and previous lattice-QCD calculations~\cite{Detmold:2011bp,Detmold:2012ge,Can:2012tx,Becirevic:2012pf,Flynn:2015xna,Bernardoni:2014kla,FermilabLattice:2015mwy}.
For compatibility with these results, our fits take a prior of $0.5\pm 0.2$.
Because a broad width is used for $\Delta_{xy,\rm P}^*$, and since the fits are insensitive to the precise value, the priors do not distinguish between the $J^P=0^+$ and $1^-$ states.
The other inputs to the fits are the measured initial- and final-state hadron masses on each ensemble, the staggered parameters described in the previous section, and the pion decay constant (which is held fixed to its physical value in \cref{table:physical_point_inputs}).

A few words are in order regarding our choice of intermediate scale setting using $w_0/a$ and its role in the chiral-continuum fit function.
The dimensionless expansion parameters $\chi_l$ and $\chi_E$ contain factors of the pion decay constant in the denominator.
Using the continuum values for $f_\pi$ and $w_0$ in \cref{table:physical_point_inputs}, we express the denominator as a dimensionless number.
For the numerator, we use the measured values of $a M_\pi$ and $w_0/a$ to construct the dimensionless product.
In other words, on each ensemble $\chi_l$ is constructed as
\begin{align}
    \chi_l
    \rightarrow
    \left(\frac{(w_0/a)^{\rm sim}}{w_0}\right)^2\frac{(a M_\pi^{\rm sim})^2}{8 \pi^2 (f_\pi^{\rm PDG})^2},
\end{align}
and similarly for $\chi_E$ and $\chi_H$.
In the continuum limit and at the physical point, the 
$w_0$ dependence cancels in all the analytic terms.
Since $f_+$ and $f_0$ are dimensionless, the only residual scale-setting dependence enters through the pole term, where the energy $E$ (in physical units) must be converted to $w_0$ units.

On each ensemble, the input data for the form factors and meson masses and energies are correlated using the results of the bootstrap fits from \cref{sec:correlator_analysis}.
The resulting correlation matrices tend to be near-singular, with small, poorly determined eigenvalues posing a difficult challenge for the fits.
For a given form factor, the dominant source of these eigenvalues is highly correlated data at nearby heavy valence masses (e.g., $0.9\,m_c$, $1.0\, m_c$, and $1.5\,m_c$).
A common solution to this problem is SVD cuts, which have recently been used in another lattice-QCD analysis of $\DK$ form factors~\cite{Chakraborty:2021qav} and which are summarized lucidly in Ref.~\cite{Dowdall:2019bea}.
Another solution is shrinkage of the eigenvalue spectrum, as described in \cref{sec:shrinkage}.

In the analysis of correlation functions in \cref{sec:correlator_analysis}, we could use nonlinear shrinkage, which has the desirable feature of not involving any tunable parameters.
As described in \cref{sec:shrinkage}, however, the amount of shrinkage applied to the eigenvalue spectrum is controlled by the concentration ratio (the ratio of the number of random variables to the number of independent statistical samples).
Since the chiral-continuum extrapolation combines data from different ensembles, there is no clear-cut concentration ratio.
For the chiral-continuum fits, we therefore employ linear shrinkage, which entails a parameter $\lambda$.
We find that $\lambda=0.1$ is large enough to regulate the small eigenvalues (thus giving good fits) without discarding correlations unnecessarily.
As with SVD cuts~\cite{Dowdall:2019bea}, linear shrinkage improves the quality of fits and tends to increase the uncertainty in the posterior values.
The quantitative effect of linear shrinkage is discussed alongside other systematic effects in \cref{ssec:fit_stability}.
A qualitative comparison of nonlinear shrinkage, linear shrinkage, and SVD cuts is given in \cref{sec:shrinkage}.

These fits deliver the form factors in the continuum limit and at the physical point.
The continuum limit of \cref{eq:chipt_hard_su2} is defined by setting $\delta f_{\rm artifacts}^{(a^2+h^2)}$ equal to zero and setting the taste splittings to zero in $\delta f_{\rm P, logs}$. 
The physical point is defined by setting the input meson masses equal to their physical values, given in \cref{table:physical_point_inputs}.
By construction, all quantities involving $\chi_H$ also vanish identically at the physical mass of the decaying heavy meson.
Our simulations and chiral analysis are both done in the isospin limit ({\it i.e.}, with a pair of degenerate quarks with mass $m_l = (m_u + m_d)/2$), so the final results for the form factors correspond to QCD in the isospin limit.
The physical meson masses in \cref{table:physical_point_inputs} are chosen accordingly, following the prescription in Ref.~\cite{Aoki:2021kgd}.  The systematic uncertainty with the isospin-symmetric approximation is discussed below in \cref{ssec:sib_qed}.


\begin{figure}
    \centering
    \includegraphics[width=1.0\textwidth]{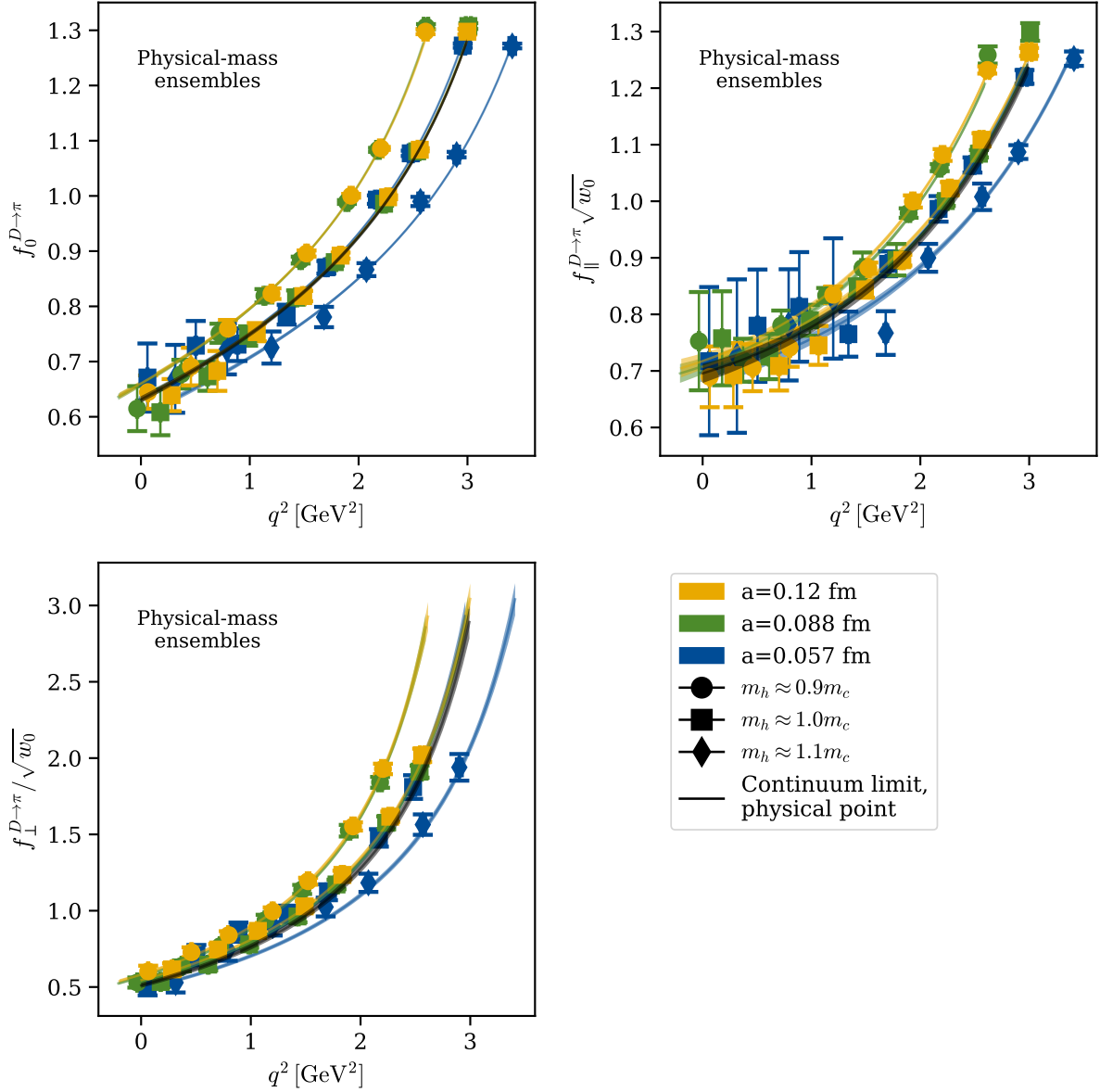}
    \caption{
    The result of the chiral-continuum fit for the $\Dpi$ form factors
    constructed using \cref{eq:f_perp,eq:f_parallel,eq:f_0}
    in units of the gradient-flow scale $w_0$.
    For visual clarity, only the physical-mass ensembles with heavy valence masses $m_h/m_c\in\{0.9, 1.0, 1.1\}$ are shown, although all ensembles and heavy valence masses in \cref{table:ensembles} are included in the fit. Different colors label different lattice spacings and different shapes correspond to the different values of the heavy-quark mass.
    Points with $m_h/m_c\approx 1.1$ were only simulated on the $a\approx 0.06\fm$ ensemble.
    }
    \label{fig:d2pi_data_with_fit} 
\end{figure}

The results for the $\Dpi$ form factors are shown in \cref{fig:d2pi_data_with_fit}.
To avoid plotting many overlapping data and curves, the figures restrict to the three ensembles with physical-mass pions and heavy valence masses with $m_h/m_c\in\{0.9, 1.0, 1.1\}$.
In all cases, the nearly coincident data around the physical charm mass ($m_h \approx m_c$) suggest a mild dependence on the lattice spacing.
The black band denotes the result in the continuum limit and at the physical point.
The results for $\DK$ and $\DsK$ are quite similar and given in \cref{app:chipt_fits}.
\Cref{table:chiral_fit_quality} summarizes the fit quality for the preferred fits.

\begin{table}
\caption{
External inputs used to define the physical point in isospin-symmetric QCD using \cref{eq:chipt_hard_su2}.
As described in the text, the experimentally measured values of the heavy mesons are also used implicitly as inputs in \cref{eq:chipt_hard_su2}.}
\label{table:physical_point_inputs}
\begin{tabular}{l  l  l}
\hline \hline
Quantity    & Value                     & Reference \\
\hline
$f_\pi$     & $130.2(8) \MeV$           & Ref.~\cite{Aoki:2021kgd}\\
$M_{\pi^0}$ & $134.9768(5) \MeV$        & Ref.~\cite{ParticleDataGroup:2020ssz}\\
$M_{K^0}$   & $497.611(13) \MeV$        & Ref.~\cite{ParticleDataGroup:2020ssz}\\
$w_0$       & $0.17177(67) \fm$          & Ref.~\cite{Aoki:2021kgd}\\
\hline
$M_{D^0}$   & $1864.83(05) \MeV$        & Ref.~\cite{ParticleDataGroup:2020ssz}\\
$M_{D_s}$   & $1968.34(07) \MeV$        & Ref.~\cite{ParticleDataGroup:2020ssz}\\
\hline \hline
\end{tabular}

\end{table}

\begin{table}
\caption{
Summary of the reduced $\chi^2$ values and associated degrees of freedom (in brackets) for the preferred fits to \cref{eq:chipt_hard_su2} for all decays and form factors.}
\label{table:chiral_fit_quality}
\begin{tabular}{l  r  r  r}
\hline \hline
                & $\Dpi$    & $\DK$ & $\DsK$ \\
\hline
$f_0$         &  0.91 [126] &  0.48 [128] &  1.31 [134] \\
$f_\parallel$ &  0.59 [112] &  0.41 [123] &  0.88 [128] \\
$f_\perp$     &  0.64 [110] &  0.32 [111] &  0.66 [113] \\
$f_+$         &  0.59 [106] &  0.29 [109] &  0.60 [111] \\
\hline \hline
\end{tabular}
\end{table}

\subsection{Alternative constructions of \texorpdfstring{$f_+$}{f+} and \texorpdfstring{$f_0$}{f0}}
\label{ssec:fplus_f0}

Our default construction for $f_0$ is given by \cref{eq:f_0} and obtained in the preceding section. In an analogous way we construct $f_+$ from the continuum-limit results for $f_0$ and $f_\perp$ in the preceding section following \cref{eq:fplus_perp_0}.
As discussed in \cref{sec:definitions}, the PCVC relation in \cref{eq:PCVC} provides the alternative constructions given in \cref{eq:fplus_parallel_perp,eq:f0_parallel_perp}.
Additional freedom exists in whether the linear combinations in \cref{eq:fplus_parallel_perp,eq:fplus_perp_0,eq:f0_parallel_perp} are taken before or after the chiral-continuum limit. 
A comparison of the different constructions is given in \cref{fig:different_constructions} for $\Dpi$ and $\DsK$ ($\DK$ is similar), where excellent stability is observed throughout the kinematic range.
In the legend, the notation $CL$ specifies whether the continuum limit is taken before or after computing the linear combination [$CL(f_\perp) + CL(f_0)$ versus $CL(f_\perp + f_0)$, respectively].
In all cases, $f_\parallel$ and $f_\perp$ are directly related to vector matrix elements via \cref{eq:f_parallel,eq:f_perp}.
Because the results are statistically consistent, our preferred analysis takes the results with the best statistical precision (our default analysis).
We take $f_0$ from \cref{eq:f_0}. 
We construct $f_+$ via \cref{eq:fplus_perp_0}, using results for $f_\perp$ and $f_0$ given by \cref{eq:f_perp,eq:f_0}, each separately extrapolated to the continuum limit.


\begin{figure}
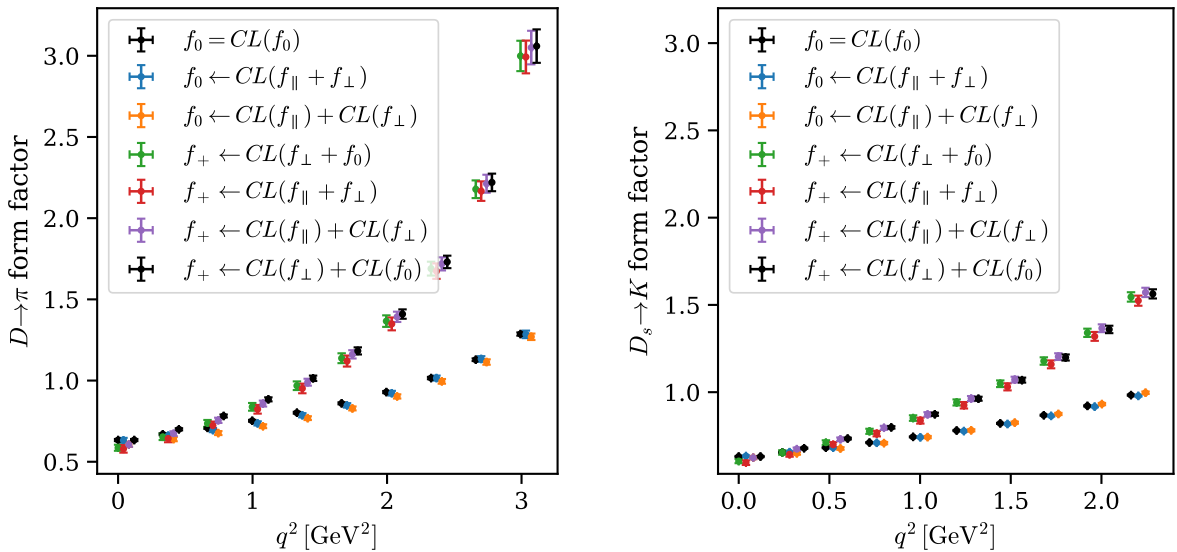

    \centering
    \includegraphics[width=0.49\textwidth]{Figures/Stability/d2pi_q2_constructions.pdf}
    \includegraphics[width=0.49\textwidth]{Figures/Stability/ds2k_q2_constructions.pdf}
\caption{
    Comparison of the form factors coming from different continuum-limit constructions for the decays $\Dpi$, and $\DsK$.
    Points are offset horizontally for readability; the same value of $q^2$ is used in each grouping.
    The notation used in the legend is explained in the main text.
    The black points denote the preferred results with the best statistical precision.
    Similar agreement was also found for $\DK$.
    \label{fig:different_constructions}
    }
\end{figure}

\subsection{Model-independent \texorpdfstring{$z$}{z}~expansion}
\label{ssec:z-expansion}

The previous section gave results for $f_+(q^2)$ and $f_0(q^2)$ in the continuum limit and at the physical point.
To facilitate comparison with experimental measurements and other theoretical calculations, it is convenient to re-express our results using the $z$~expansion.
To start, consider the decays $\Dpi$ and $\DK$.
The $z$~expansion leverages the known analytic structure of the form factors in the complex $q^2$-plane to write the form factors as a rapidly convergent expansion in the variable $z(q^2, t_0)$,
\begin{align}
    z(q^2, t_0) = \frac{\sqrt{t_+ - q^2} - \sqrt{t_+ - t_0}}{\sqrt{t_+ - q^2} + \sqrt{t_+ - t_0}},
    \label{eq:z}
\end{align}
where $t_+ = (M_D + M_L)^2$ denotes the start of the multiparticle cut, $L\in\{\pi, K\}$, and $t_0\in [0, t_+]$ can be chosen for convenience.
This map sends the branch cut onto the unit circle, $|z(q^2, t_0)|=1$,
and the rest of the first Riemann sheet onto the open unit disk, $|z(q^2, t_0)|<1$.
Note that
\begin{align}
    z(t_+, t_0) &= -1, \\
    z(t_0, t_0) &= 0, \\
    z(-\infty, t_0) &= +1.
\end{align}
Further, \cref{eq:z} maps the physical region for semileptonic decay onto an interval on the real axis.
Similar considerations apply for the decay $\DsK$, except that the multiparticle cut begins at $t_+=(M_D+M_\pi)^2$ [and \emph{not} at $(M_{D_s}+M_K)^2$].
Below, we take $t_0=0$, so $q^2\in[0,q^2_{\rm max}]$ is mapped to $z\in[0,-z_{\rm max}]$. Because $-z_{\rm max}\approx0.332$, $0.190$, and $0.192$, for \Dpi, \DK, and \DsK, respectively, one expects a series expansion in $z$ to converge within our precision in roughly four or fewer terms. 

The form factors can be expressed in $z$ in various ways~\cite{Boyd:1994tt, Bourrely:2008za}.
We follow Bourrely, Caprini, and Lellouch~\citep{Bourrely:2008za} as,
\begin{align}
    f_0(z) &=
        \frac{1}{1-q^2(z)/M^2_{0^+}}
        \sum_{n=0}^{M-1} b_m z^m,
    \label{eq:z_f0}\\
    f_+(z) &=
        \frac{1}{1-q^2(z)/M^2_{1^-}}
        \sum_{n=0}^{N-1} a_n \left( z^n - \frac{n}{N} (-1)^{n-N} z^N\right).
    \label{eq:z_fplus}
\end{align}
In these expressions, $M_{J^P}$ refers to a possible sub-threshold ($M_{J^P}^2 < t_+$) pole, which requires explicit removal.
For the scalar or vector form factors, the pole corresponds to any sub-threshold particle with quantum numbers $J^P = 0^+$ or $1^-$, respectively, corresponding to the helicity of the virtual $W$ boson.
Such poles are present for the decays $\DK$ and $\DsK$ with $J^P=1^-$.
No sub-threshold poles are present for $\Dpi$, but the fits are more stable if the nearby poles are nevertheless included, as shown previously for $\Dpi$~\cite{Lubicz:2017syv}.

\begin{table}
\caption{
Pole masses and cut positions used in \cref{eq:z_f0,eq:z_fplus}.
The closest pole and the start of the cut are the same for both $\Dpi$ and $\DsK$,
since they both involve the same $c\to d$ quark-level transition.
\label{table:zexp_poles}
}
\begin{tabular}{ccclcl}
\hline \hline
Decay   & $\sqrt{t_+}$ &  pole      & $J^P=1^-$  & pole & $J^P=0^+$ \\
\hline
$\DK$   & $M_D+M_K$ & $D_s^*$   & $2112.2(4) \MeV$  & $D_{s0}^*$ & $2317.8(5) \MeV$ \\
$\Dpi$, $\DsK$  & $M_D+M_\pi$ & $D^*$     & $2006.85(05) \MeV$ & $D_0^*$ & $2300(15) \MeV$ \\
\hline \hline
\end{tabular}
\end{table}

Since the input data from the continuum results in \cref{ssec:fplus_f0} spans the full kinematic range of the decay, the $z$~expansion amounts to a convenient change of variables. 
To carry out this procedure, we evaluate each form factor at four evenly spaced points spread throughout the physical $q^2$-region:  $[0.1, 0.37, 0.63, 0.9] \times q^2_{\rm max}$.
We then perform a joint correlated fit of these synthetic data to \cref{eq:z_f0,eq:z_fplus}, imposing the kinematic constraint $f_+(0)=f_0(0)$ by taking a common coefficient for $n=0$: $a_0 \equiv b_0$.
The pole masses entering \cref{eq:z_f0,eq:z_fplus} are given in \cref{table:zexp_poles}.
\Cref{table:z_results} reports the correlated posterior values for $a_n$ and $b_m$ emerging from the preferred fits for the three decays analyzed.
The preferred fits have $N=M=4$ terms for all three decays.
As shown in \cref{fig:zexpansion_stability}, the posteriors for the coefficients stabilize with these choices.
In all cases, statistical uncertainties in the fit parameters are determined via bootstrap resampling with $500$ draws. 
These bootstrap fits also furnish estimates of the $21\times 21$ correlation matrix associated with the full set of form factors ($f_+$ and $f_0$ for all three decays).
The block-diagonal correlations for each decay are also given in \cref{table:z_results}, while the full correlation matrix is given in the supplementary material.
The results for $f_+(q^2)$ and $f_0(q^2)$ coming directly from the chiral-continuum fits (before applying the $z$~expansion) are compared with those from the $z$~expansion in \cref{fig:zexpansion_final}.

An alternative, common form of the $z$~expansion uses~\cite{Boyd:1994tt}
\begin{align}
f_+(q^2) &= \frac{1}{P(q^2) \phi(q^2)}\sum_{n=0}^{N-1} a_n z^n \label{eq:zexpansion_expt},
\end{align}
with $P(q^2)=1$ for $\Dpi$ or $z(q^2, M_{D_s^*}^2)$ for $\DK$ and outer function $\phi(q^2)$ given by
\begin{align}
\phi(q^2) &= \sqrt{\frac{\pi}{3}}m_c
    \left( \frac{z(q^2,0)}{-q^2} \right)^{5/2}
    \left( \frac{z(q^2,t_0)}{t_0-q^2} \right)^{-1/2}
    \left( \frac{z(q^2,t_-)}{t_--q^2} \right)^{-3/4}
    \left( \frac{t_+ - q^2}{(t_+ - -t_0)^{1/4}} \right),
\end{align}
where $t_0 = t_+(1-\sqrt{1-t_-/t_+})$ and $m_c = 1.25 \GeV$.
For comparison with the experimental determination of the shapes in \cref{ssec:CKM} below, we use \cref{eq:zexpansion_expt} together with the refitting procedure described in Ref.~\cite{Chakraborty:2021qav}.

\begin{figure}
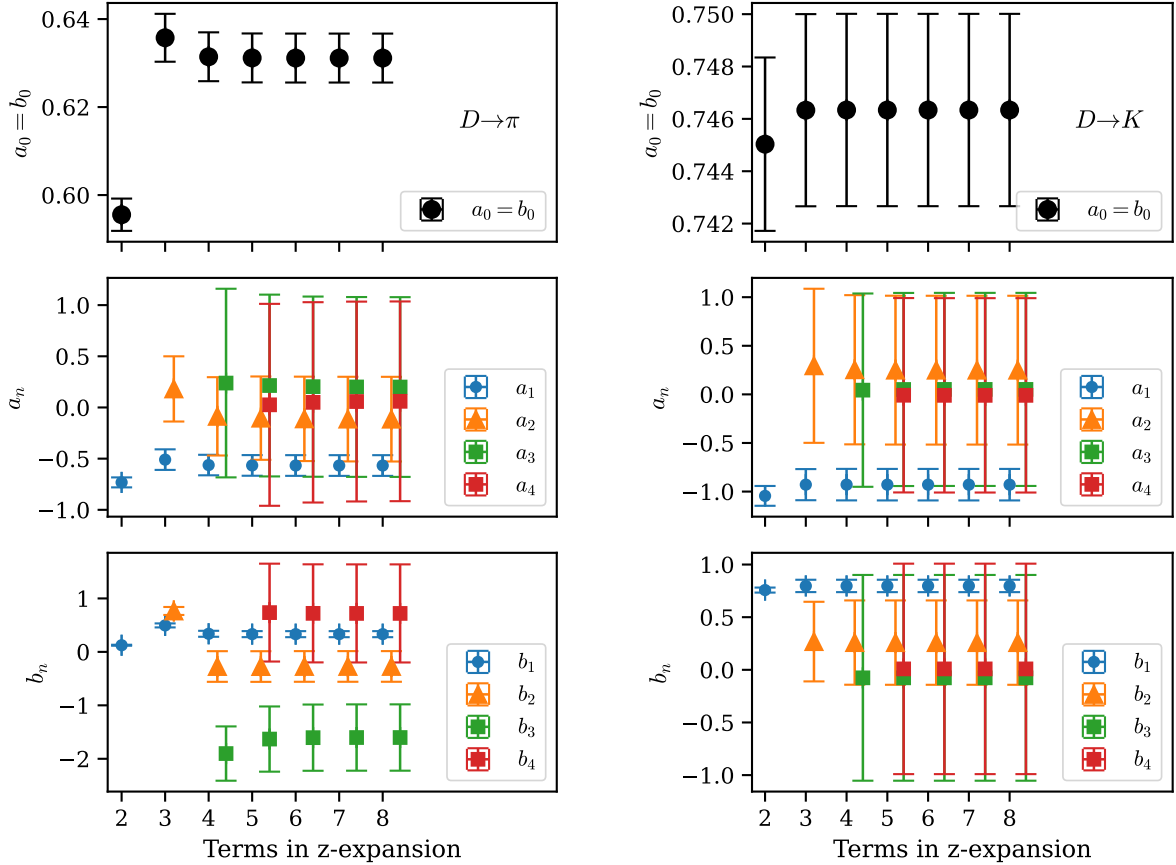

    \centering
    \includegraphics[width=0.49\textwidth]{Figures/Zexpansion/d2pi_zexpansion_stability.pdf}
    \includegraphics[width=0.49\textwidth]{Figures/Zexpansion/d2k_zexpansion_stability.pdf}
    \caption{
    Stability analysis for the fit parameters appearing in the $z$~ expansion for the decays $\Dpi$ and $\DK$.
    The preferred fit uses $N=M=4$ terms (i.e., up to and including $a_3$ and $b_3$), at which point the coefficients' central values and errors have stabilized and higher-order terms are expected to contribute negligibly at the current level of precision.
    Results for $\DsK$ are qualitatively similar.
    }
    \label{fig:zexpansion_stability}
\end{figure}


\begin{table}
    \caption{
    Correlated posterior values for $a_n$ and $b_m$ for the coefficients of the $z$~expansion for the decays $\Dpi$, $\DK$, and $\DsK$.
    The simultaneous fit to \cref{eq:z_fplus,eq:z_f0} constrains $a_0\equiv b_0$. 
    The pole masses used in the fits are given in \cref{table:zexp_poles}.
    The full correlation matrix is given in the supplementary material.
    The supplementary material also contains a script, \texttt{reconstuct.py}, which shows an example of how to read the $z$-expansion coefficients and recreate our final results for the form factors as function of the momentum transfer, correctly including the full correlation matrix. 
    \label{table:z_results}
    }
    \begin{tabular}{cccccccc}
$\Dpi$ & $a_0 \equiv b_0$ & $a_1$ & $a_2$ & $a_3$ & $b_1$ & $b_2$ & $b_3$\\
 & 0.6300(51) & -0.610(99) & -0.20(30) & 0.30(19) & 0.330(51) & -0.31(25) & -1.90(39)\\
\hline
 & 1.0000 & 0.5670 & 0.5189 & -0.2018 & 0.7547 & 0.3473 & 0.0861\\
 &  & 1.0000 & 0.8912 & -0.2826 & 0.5148 & 0.2529 & 0.0747\\
 &  &  & 1.0000 & -0.1482 & 0.5082 & 0.2782 & 0.1162\\
 &  &  &  & 1.0000 & -0.1728 & -0.0496 & 0.0354\\
 &  &  &  &  & 1.0000 & 0.8277 & 0.6066\\
 &  &  &  &  &  & 1.0000 & 0.9442\\
 &  &  &  &  &  &  & 1.0000\\
\end{tabular}

    \begin{tabular}{cccccccc}
$\DK$ & $a_0 \equiv b_0$ & $a_1$ & $a_2$ & $a_3$ & $b_1$ & $b_2$ & $b_3$\\
 & 0.7452(31) & -0.948(97) & 0.14(40) & 0.07(12) & 0.776(62) & 0.14(34) & 0.03(13)\\
\hline
 & 1.0000 & -0.0332 & 0.0747 & -0.0201 & 0.7753 & 0.4920 & -0.0189\\
 &  & 1.0000 & 0.3272 & -0.1586 & -0.0909 & -0.1090 & 0.0420\\
 &  &  & 1.0000 & -0.7543 & 0.2071 & 0.2565 & 0.1457\\
 &  &  &  & 1.0000 & -0.0594 & -0.1119 & -0.2259\\
 &  &  &  &  & 1.0000 & 0.9087 & 0.1012\\
 &  &  &  &  &  & 1.0000 & 0.2126\\
 &  &  &  &  &  &  & 1.0000\\
\end{tabular}

    \begin{tabular}{cccccccc}
$\DsK$ & $a_0 \equiv b_0$ & $a_1$ & $a_2$ & $a_3$ & $b_1$ & $b_2$ & $b_3$\\
 & 0.6307(20) & -0.562(65) & -0.19(20) & 0.33(29) & 0.347(27) & 0.44(18) & -0.21(43)\\
\hline
 & 1.0000 & 0.1825 & 0.2612 & -0.0266 & 0.8467 & 0.5197 & 0.0973\\
 &  & 1.0000 & 0.9274 & -0.2432 & 0.1899 & 0.1915 & 0.1180\\
 &  &  & 1.0000 & -0.0514 & 0.3243 & 0.3065 & 0.1764\\
 &  &  &  & 1.0000 & -0.0551 & -0.1580 & -0.2098\\
 &  &  &  &  & 1.0000 & 0.8344 & 0.4260\\
 &  &  &  &  &  & 1.0000 & 0.8442\\
 &  &  &  &  &  &  & 1.0000\\
\end{tabular}

\end{table}

\begin{figure}
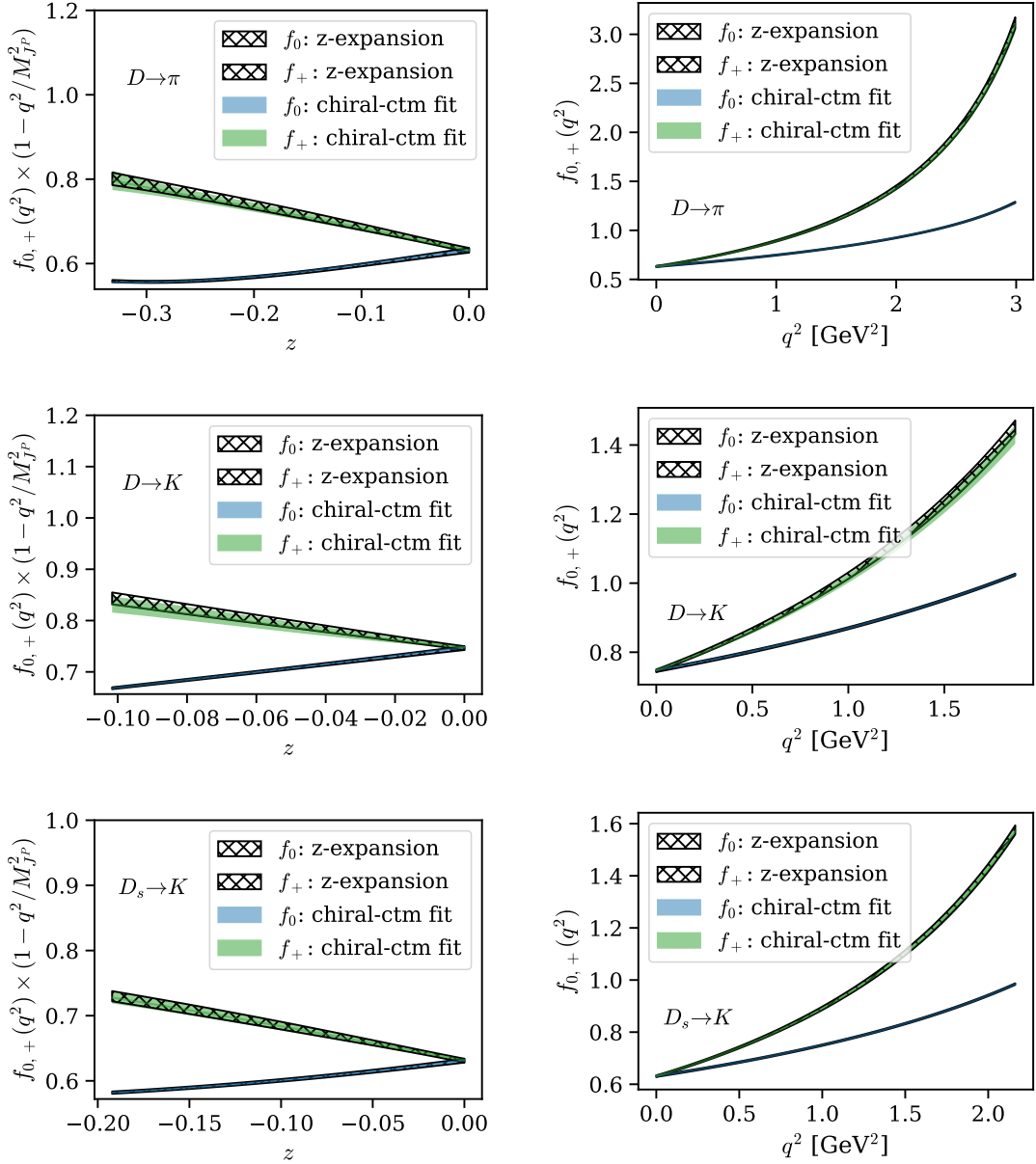

    \centering
    \includegraphics[width=0.45\textwidth]{Figures/Zexpansion/d2pi_zexpansion_zspace.pdf}
    \includegraphics[width=0.45\textwidth]{Figures/Zexpansion/d2pi_zexpansion_q2space.pdf}
    \includegraphics[width=0.45\textwidth]{Figures/Zexpansion/d2k_zexpansion_zspace.pdf}
    \includegraphics[width=0.45\textwidth]{Figures/Zexpansion/d2k_zexpansion_q2space.pdf}
    \includegraphics[width=0.45\textwidth]{Figures/Zexpansion/ds2k_zexpansion_zspace.pdf}
    \includegraphics[width=0.45\textwidth]{Figures/Zexpansion/ds2k_zexpansion_q2space.pdf}
    \caption{
    Final results for $f_+$ and $f_0$ for the decays $\Dpi$, $\DK$, and $\DsK$
    in the continuum limit and at the physical point before and after fitting to $z$~expansion.
    The solid curves show the results after the chiral-continuum fit, while the hatched curves show the result of the $z$~expansion.
    The left column shows the product $(1-q^2/M_{J^P}^2)f_{0,+}$ as a function of $z$, while the right column shows the form factors versus~$q^2$.}
    \label{fig:zexpansion_final}
\end{figure}

\begin{table}
    \centering
    \caption{
    Final results for $f_+(0)=f_0(0)$, $f_+(q^2_{\rm max})$, and $f_0(q^2_{\rm max})$ for the decays $\Dpi$, $\DK$, and $\DsK$, together with comparisons with existing $N_f=2+1+1$ results in the literature from HPQCD~\cite{Chakraborty:2021qav,Parrott:2022rgu} and ETMC~\cite{Lubicz:2017syv}.
    The results of the present work, denoted ``Fermilab-MILC'', are all given at the physical point and in the continuum limit in isospin-symmetric QCD.
    Included in these results are all systematic errors discussed in \cref{sec:syst_errors} and summarized in \cref{table:final_error_budget}.
    Not included are additional systematic uncertainties associated with QED, isospin, and electroweak corrections (these effects are estimated in \cref{ssec:sib_qed}).
    The different groups use slightly different conventions to define the isospin-symmetric point.
    Shifts from these differences are expected to be small.
    \Cref{fig:sib} suggests that the largest differences, perhaps amounting to a few percent, will be present near $q^2_{\rm max}$.}
    \label{table:ff_final_endpoint_results}
    \begin{tabular}{lllll}
\hline
process & collaboration &   $f_0(0)$ & $f_0(q^2_{\rm max})$ & $f_+(q^2_{\rm max})$ \\
\hline
   $\Dpi$ &     FNAL/MILC & 0.6300(51) &           1.2783(61) &            3.119(57) \\
   $\Dpi$ &       ETMC 17 &  0.612(35) &            1.134(49) &            2.130(96) \\
    \hline
    $\DK$ &     FNAL/MILC & 0.7452(31) &           1.0240(21) &            1.451(17) \\
    $\DK$ &      HPQCD 22 & 0.7441(40) &           1.0136(36) &            1.462(16) \\
    $\DK$ &      HPQCD 21 & 0.7380(40) &           1.0158(41) &            1.465(20) \\
    $\DK$ &       ETMC 17 &  0.765(31) &            0.979(19) &            1.336(54) \\
   \hline
    $\DsK$ &     FNAL/MILC & 0.6307(20) &           0.9843(18) &            1.576(13) \\
\hline
\end{tabular}

\end{table}

\subsection{Spectator dependence\label{ssec:spectator_dependence}}

From the hadronic perspective, the decay channels $\Dpi$ and $\DsK$ are quite similar, differing only by the mass of the valence spectator quark.
As illustrated in \cref{fig:spectator_comparison}, we find that the vector and scalar form factors for these two transitions agree with with each other at the level of $\lesssim 2\%$ throughout the full kinematic range of the $\DsK$ decay.
The first experimental measurement of the decay $\DsK$ by BES~III~\cite{BESIII:2018xre} confirms this picture within experimental uncertainties while old, unpublished results by the HPQCD collaboration~\cite{Koponen:2012di,Koponen:2013tua} are also consistent with our findings.

\begin{figure}
    \centering
    \includegraphics[width=0.5\textwidth]{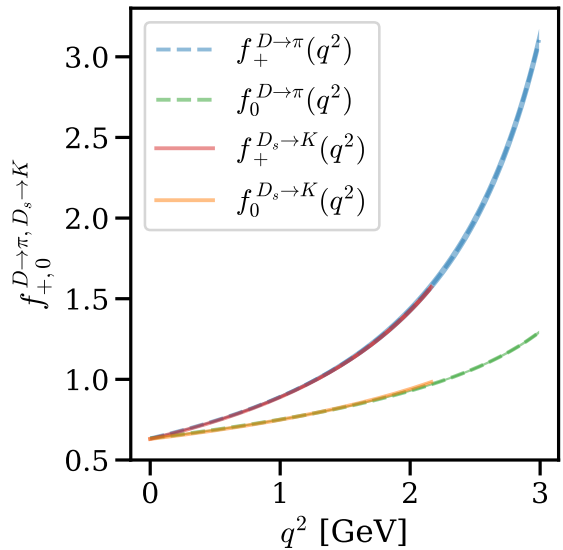}
    \caption{
    Comparison of the vector and scalar form factors between the decays $\Dpi$ and $\DsK$, which differ only by the mass of the valence spectator quark.
    The form factors agree at the level of $\lesssim 2\%$ throughout the full kinematic range of the $\DsK$ decay.
    The long dashed lines extending to $q^2\approx 3 \GeV^2$ correspond to $\Dpi$, while the shorter solid lines correspond to $\DsK$.}
    \label{fig:spectator_comparison} 
\end{figure}

\subsection{Comparison with existing results in the literature}
\label{ssec:us-vs-them}

The form factors under consideration have been computed previously using lattice QCD with $N_f=2+1+1$ flavors of dynamical fermions by ETMC~\cite{Lubicz:2017syv,Lubicz:2018rfs} (for both $\Dpi$ and $\DK$) and by HPQCD (for $\DK$)~\cite{Chakraborty:2021qav,Parrott:2022rgu}. 
The more recent HPQCD calculation~\cite{Parrott:2022rgu} includes the same set of $D\to K$ correlators as the earlier one~\cite{Chakraborty:2021qav}, but they are analyzed together with tensor-current three-point functions, data for heavier-than-charm quark masses, and $D_s\to\eta_s\ell\nu$ form factor data~\cite{Parrott:2020vbe}.
Both the correlator fits and the description of the heavy-quark-mass dependence and discretization effects are thus different.
Our $\Dpi$ results for the form factors and the semimuonic differential decay rate are compared with those of ETMC in \cref{fig:d2pi_lit_compare}.
At large $q^2$, our results for the form factors are significantly larger than those in Ref.~\cite{Lubicz:2017syv}.
Due to phase-space suppression, the difference is less visibly pronounced in the differential decay rate $d\Gamma/dq^2$.
In the low $q^2$ region, which is most relevant for extractions of $\Vcd$, good agreement is observed at the level of $\approx 1\sigma$.
Similarly, our $\DK$ results are compared with those of ETMC and HPQCD in \cref{fig:d2k_lit_compare}.
Mild tension, at the level of $\approx 2\sigma$, is observed between our results and ETMC.
Good agreement with HPQCD is observed throughout the kinematic range.
Our results for $f_+(0)=f_0(0)$, $f_0(q^2_{\rm max})$, and $f_+(q^2_{\rm max})$ are summarized in \cref{table:ff_final_endpoint_results} alongside the published results of Refs.~\cite{Lubicz:2017syv,Lubicz:2018rfs,Chakraborty:2021qav,Parrott:2022rgu}.

\begin{figure}
    \centering
    \includegraphics[width=0.49\textwidth]{Figures/LiteratureComparison/d2pi_form_factors.pdf}
    \includegraphics[width=0.49\textwidth]{Figures/LiteratureComparison/d2pi_rate.pdf}
\caption{
    Comparison of our results for the $\Dpi$ form factors and semimuonic differential decay rate $(d\Gamma/dq^2)(24\pi^3/G_F^2)/\Vcd^2$ with published results from ETMC~\cite{Lubicz:2017syv}.
    No QED or electroweak corrections [cf.\ $\eta_{\rm EW}$ in \cref{eq:dGammadq2}] or errors have been included.
    To account for differences in defining the physical isospin-symmetric point, the errors in our curves have been inflated with an estimate of SIB effects; see \cref{ssec:sib_qed} below.}
    \label{fig:d2pi_lit_compare}
\end{figure}

\begin{figure}
    \centering
    \includegraphics[width=0.49\textwidth]{Figures/LiteratureComparison/d2k_form_factors.pdf}
    \includegraphics[width=0.49\textwidth]{Figures/LiteratureComparison/d2k_rate.pdf}
\caption{
    Comparison of our results for the $\DK$ form factors and semimunoic differential decay rate $(d\Gamma/dq^2)(24\pi^3/G_F^2)/\Vcs^2$ with published results from ETMC~\cite{Lubicz:2017syv} and HPQCD~\cite{Chakraborty:2021qav}.
    No QED or electroweak corrections (cf.\ $\eta_{\rm EW}$ in \cref{eq:dGammadq2}) or errors have been included.
    To account for differences in defining the physical isospin-symmetric point, the errors in our curves have been inflated with an estimate of SIB effects; see \cref{ssec:sib_qed} below.}
    \label{fig:d2k_lit_compare}
\end{figure}

\section{Systematic Error Analysis\label{sec:syst_errors}}

The fits to the $z$~expansion described in \cref{ssec:z-expansion} and given in \cref{table:z_results} provide our final results for the pure-QCD form factors at the physical point in isospin-symmetric QCD.
In this section, we examine and quantify the  various statistical and systematic uncertainties contributing to the calculations.
The complete final error budget is summarized in \cref{table:final_error_budget} for $f_+(0)=f_0(0)$, $f_+(q^2_{\rm max})$, and $f_0(q^2_{\rm max})$ for all decay modes.
As discussed in \cref{ssec:finite_volume,ssec:topology}, the very small corrections for the leading finite-volume shifts ($\lesssim 0.01\%$) and the effect of nonequilibrated topological charge (relevant for $a\approx 0.042\fm$ only) have been applied to the form factors prior to fitting and thus are not included as separate errors.
Systematic errors associated with isospin breaking effects and QED corrections, which are external to our calculation in isospin-symmetric QCD but necessary for comparison with experimental results, are discussed in \cref{ssec:sib_qed}.

\begin{table}
\caption{
Complete statistical and systematic error budget for the vector and scalar form factors at $q^2=0$ and $q^2_{\rm max}$ for the decays $\Dpi$, $\DK$, and $\DsK$.
All values are given in percent.
The breakdown of the chiral-continuum fit errors is discussed in \cref{ssec:stat_error_budget}.
Corrections for finite-volume and topological-charge effects, discussed in \cref{ssec:finite_volume,ssec:topology}, are applied prior to the chiral-continuum fit and are negligibly small ($<0.01\%$).
Experimental uncertainties on the meson masses are also negligible at our current level of precision.}
\label{table:final_error_budget}

\begin{tabular}{l | ccc | ccc | ccc}
\hline \hline
Decay   & $\Dpi$ & & & $\DK$ & & & $\DsK$ & & \\
\hline
Source  & $f_+(0)$ & $f_+(q^2_{\rm max})$ & $f_0(q^2_{\rm max})$   
        & $f_+(0)$ & $f_+(q^2_{\rm max})$ & $f_0(q^2_{\rm max})$   
        & $f_+(0)$ & $f_+(q^2_{\rm max})$ & $f_0(q^2_{\rm max})$\\
\hline
Statistics $f_\perp$ & 0.21 & 1.46 & 0.01 & 0.07 & 0.95 & 0.01 & 0.07 & 0.73 & 0.02\\
Statistics $f_0$ & 0.70 & 0.39 & 0.40 & 0.39 & 0.36 & 0.22 & 0.29 & 0.18 & 0.12\\
Ctm. $w_0$ & 0.31 & 0.09 & 0.15 & 0.24 & 0.14 & 0.17 & 0.32 & 0.27 & 0.27\\
$\chi$EFT $f_\perp$ & 0.12 & 0.46 & 0.01 & 0.03 & 0.50 & 0.00 & 0.03 & 0.39 & 0.01\\
$\chi$EFT $f_0$ & 0.24 & 0.13 & 0.13 & 0.11 & 0.14 & 0.05 & 0.05 & 0.03 & 0.02\\
Discr. $f_\perp$ & 0.08 & 0.73 & 0.00 & 0.06 & 0.41 & 0.01 & 0.03 & 0.41 & 0.01\\
Discr. $f_0$ & 0.05 & 0.02 & 0.07 & 0.02 & 0.02 & 0.02 & 0.00 & 0.00 & 0.01\\
$f_\pi^{\rm PDG}$ & 0.16 & 0.10 & 0.13 & 0.06 & 0.04 & 0.03 & 0.12 & 0.04 & 0.09\\
\hline
Total error & 0.87 & 1.84 & 0.48 & 0.49 & 1.29 & 0.28 & 0.46 & 0.99 & 0.30\\
\hline \hline
\end{tabular}
\end{table}

\subsection{Chiral-continuum fits: stability analysis \label{ssec:fit_stability}}

The results in \cref{sec:chiral_ctm} are the product of several choices.
In this section, we examine the stability of the results under reasonable variations to these choices for the fiducial point $q^2=0$.
First, the model for the EFT is varied.
The staggered chiral logarithms are replaced with their continuum counterparts, setting the known taste splittings to zero by hand.
Another alternative is simply dropping the chiral logarithms $\delta f_{\rm P, logs}$ in \cref{eq:chipt_hard_su2}.
This variation is reasonable, since the ensembles with physical-mass pions reduce the approach to the physical point from an extrapolation to an interpolation.
The final EFT variation consists of augmenting the analytic terms in \cref{eq:chipt_hard_su2} to include all the N${}^3$LO terms ({\it i.e.}, terms cubic in the $\chi_\ell$, $\chi_H$, and $\chi_E$).
Second, we consider variations to the model for discretization effects as given in \cref{eq:artifacts_a2h2a4,eq:xh4}.
Third, the widths of our Bayesian priors are increased, and the fits are rerun.
In one variation, the widths of the priors for the coefficients of the leading-order analytic terms ($c_l$, $c_E$, and $c_H$) are increased by a factor of ten.
In another variation, the widths of all the priors are increased by a factor of two.
Fourth, the choice of the linear shrinkage parameter is tested by fits varying it by a factor of 2 from its fiducial value ($\lambda=0.1$).
Finally, the choice of data used in the fits is varied, rerunning after dropping the coarsest ensemble ($a\approx 0.12\fm$) and after dropping the finest ensemble ($a\approx 0.042\fm$).

As \cref{fig:d2pi_stability} shows, for $\Dpi$, that all variations are statistically consistent with the preferred fit at the level of one standard deviation. 
Stability plots for $\DK$ and $\DsK$ are similar and given in \cref{fig:d2k_stability,fig:ds2k_stability} in \cref{app:chipt_fits}.


\begin{figure}
    \centering
    \includegraphics[width=0.75\textwidth]{Figures/Stability/d2pi_stability.pdf}
    \caption{Stability of the $\Dpi$ form factors $f_{\perp/\parallel/0}$ at $q^2=0$
    under variations to the EFT model, the model for discretization effects, 
    to the choice of data included in the fit, and other analysis choices as described in the main body.
    The central values have been normalized by the central value of preferred fit in green.
    All variations are statistically consistent with the preferred fit, highlighted by the green band in each panel.
    The statistical significance of the fits is indicated by the marker size, with larger points denoting better fits.}
    \label{fig:d2pi_stability}
\end{figure}

The discussion in \cref{ssec:fplus_f0} demonstrates good agreement for the physical form factors constructed in different ways, while the discussion above shows that alternative discretization models, as well as continuum-$\chi$PT fit functions (without taste splittings in the chiral logarithms), give consistent results.

\subsection{Chiral-continuum fits: error breakdown}
\label{ssec:stat_error_budget}

The form-factor results coming out of the chiral-continuum fits contain several sources of uncertainty:
statistical errors in the form factor on each ensemble (the correlated uncertainty from the bare form factors and renormalization constants),
scale-setting errors coming from the continuum value of $w_0$, choices in the fit function and chiral interpolation, discretization effects, and errors in the input parameters (physical meson masses and $f_\pi$ in \cref{table:physical_point_inputs}).
The different sources of error are entangled in the total fit uncertainty; in particular, the fit function, chiral interpolation and discretization errors are rather difficult to separate unambiguously.
Nevertheless, an estimate of each error can be obtained using the package \texttt{gvar}~\cite{gvar:2022} following the methodology described in Ref.~\cite{Bouchard:2014ypa}.
The discretization error is defined to be the error coming from the parametric uncertainty in $\delta f_{\rm artifacts}^{(a^2+h^2)}$ from $c_{a^2}$ and $c_{h^2}$.
The combined uncertainty from all other fit parameters in \cref{eq:chipt_hard_su2} is defined to be the error in the fit function and chiral interpolation.
This error includes the uncertainty from the  $D D^*\pi$ coupling, $g$, which turns out to have a small influence on the final results.
The experimentally measured values of the meson masses also contribute negligibly to the total error.

Numerical results for the error breakdown are shown in \cref{table:final_error_budget} for $q^2=0$ and $q^2_{\rm max}$, and
\cref{fig:d2pi_final_error_budget,fig:d2k_final_error_budget,fig:ds2k_final_error_budget} show the error budgets through the full kinematic range for $\Dpi$, $\DK$, and $\DsK$, respectively, after fits to the $z$~expansion.
The colored curves sum in quadrature to give the total error in black.
Not shown are contributions from uncertainties less than $0.01\%$; this includes  the experimental values for the input meson masses. 
Since the lattice data span the full kinematic range in $q^2$, errors from the $z$~expansion are also negligible.

Several important qualitative features are evident in the error budgets.
For all three decays, $f_+$ has the largest errors near $q^2_{\rm max}$, since this kinematic region involves an extrapolation $(\bm{p}\to \bm{0})$.
Second, because the $z$-expansion analysis uses a correlated joint fit to $f_0$ and $f_+$, the final errors in each case include contributions from statistical uncertainties in both $f_0$ and $f_+$.
Third, because the $f_\perp$ term vanishes in \cref{eq:fplus_perp_0} at $q^2=0$, the contributions from statistical errors in $f_\perp$ decrease for small $q^2$.
Fourth, although the form factors are dimensionless, the scale-setting uncertainty is significant and tends to decrease for large $q^2$.
At the physical point, the scale-setting uncertainties vanish identically for the chiral logarithms and analytic terms.
The full uncertainty comes from the leading-order term in \cref{eq:chipt_hard_su2}: since the posterior values for $c_0$ and $\Delta_{xy,\rm P}$ are both implicitly in intermediate units of $w_0$, so must the energy be.
The associated scale-setting uncertainty thus decreases when the energy is small.

The error budgets for $\DK$ and $\DsK$ are qualitatively similar as shown in \cref{fig:d2k_final_error_budget,fig:ds2k_final_error_budget}. 
Over the whole kinematic range, statistics is the dominant source of error for all three channels, except for $f_0^{\DsK}$ near $q^2_{\rm max}$, where the scale-setting uncertainty dominates.

\begin{figure}
    \centering
    \includegraphics[width=1.0\textwidth]{Figures/D2pi/d2pi_final_error_budget.pdf}
\caption{
    Final error budget for the form factors $f^{\Dpi}_+$ and $f^{\Dpi}_0$ after the fit to the $z$~expansion.
    Contributions less than $0.01\%$ are not shown.
    \label{fig:d2pi_final_error_budget}
    }
\end{figure}

\subsection{Finite-volume corrections}
\label{ssec:finite_volume}

In principle, the finite volume of our simulations is a systematic effect influencing the results for the form factors.
Within chiral perturbation theory, the leading corrections amount to replacing loop integrals by discrete sums~\cite{Arndt:2004bg,Laiho:2005ue}.
The basic infinite-volume loop integral appearing in the present analysis is
\begin{equation}
    i \mu^\epsilon \int \frac{d^{4-\epsilon}q}{(2\pi)^{4-\epsilon}} \frac{1}{q^2-M^2} = \frac{1}{16\pi^2} I_1(M),
\end{equation}
with $I_1(M) = M^2 \ln(M^2/\Lambda^2)$ as in \cref{eq:chiral_logs}.
In a finite volume, this integral becomes the discrete sum
\begin{equation}
    \mathcal{I}_1(M) =
    \frac{1}{L^3} \sum_{\bm{q}} \int \frac{dq^0}{2\pi}\frac{1}{q^2 - m^2 + i \epsilon}
    \equiv I_1(m) + I_1^{\rm FV}(m),
\end{equation}
where $I_1^{\rm FV}(M)$ is the finite-volume correction that vanishes exponentially for large volumes.
The correction has the explicit form
\begin{equation}
    I_1^{\rm FV}(M) =
    \frac{1}{4\pi^2}M^2 \sum_{|\bm{n}|\neq 0} \frac{K_1(nML)}{nML},
\end{equation}
with the sum running over all nonzero lattice vectors $\bm{n}\in \mathbb{Z}^3$ in the finite volume, and where $K_1$ is a modified Bessel function of the second kind. 
As described in \cref{sec:chiral_ctm}, the effect of this correction has already been included explicitly in our fits to \cref{eq:chipt_hard_su2}.
To quantify the overall size of the finite-volume effect, it is useful to compute the dimensionless ratio:
\begin{equation}
    \frac{I_1^{\rm FV}(m)}{I_1(m)} = 
     \frac{4}{\ln\left(M^2/\Lambda^2\right)}
    \sum_{|\bm{n}|\neq 0} \frac{K_1(nML)}{nML}.
\end{equation}
As shown in \cref{table:finite_volume}, the finite-volume corrections amount to $\lesssim 2\%$ shifts in $I_1(m)$.
In the chiral-continuum fits to \cref{eq:chipt_hard_su2}, the overall contribution from the chiral logarithms enter at the level of a few percent.
The total size of finite-volume corrections to the form factors may be estimated to be at the few permyriad level, $\order{0.01}\%$.
Since the leading correction to the chiral logarithm has already been included in our fits to \cref{eq:chipt_hard_su2}, and since the effect is so small, we do not include any additional error for residual finite-volume effects in our final systematic error budget. 

\begin{table}
\caption{Finite-volume corrections to the chiral logarithm $I_1(M_\pi)$ for the ensembles given in \cref{table:ensembles}.}
\label{table:finite_volume}
\begin{tabular}{c c c c c}
\hline\hline
 $\approx a$ [fm] & $m_l/m_s$ & $L/a$ &     $M_\pi L$ &  $I_{\rm FV}(M_\pi)/I_1(M_\pi)$ [\%] \\
\hline
0.120 &        1/27 & 48 & 3.9 & 1.32 \\
0.088 &        1/10 & 48 & 4.7 & 0.65 \\
0.088 &        1/27 & 64 & 3.7 & 2.06 \\
0.057 &         1/5 & 48 & 4.5 & 1.31 \\
0.057 &        1/10 & 64 & 4.3 & 1.25 \\
0.057 &        1/27 & 96 & 3.7 & 1.91 \\
0.042 &         1/5 & 64 & 4.3 & 1.70 \\
\hline\hline
\end{tabular}
\end{table}

\subsection{Nonequilibrated topological charge}
\label{ssec:topology}

Efficiently sampling regions with different topological charges $Q$ in lattice-QCD simulations becomes slow in standard algorithms, which use a continuous updating procedure for the gauge fields.
Brower et al.~\cite{Brower:2003yx} realized that chiral perturbation theory can be used to study the $Q$-dependence of observables, and they showed how to extract physical results from numerical data at fixed topology.
Their calculations confirmed the theoretical expectation that, due to  locality and cluster decomposition, the effects from fixed topology should be suppressed for large volumes.
Subsequent calculations by Bernard and Toussaint~\cite{Bernard:2017npd} extended these ideas to heavy-light decay constants and meson masses in the context of heavy-meson chiral-perturbation theory.
The analysis was extended to light form factors in Ref.~\cite{FermilabLattice:2018zqv}. 

Following those works, we account for the effect of the difference between the correct $\avg{Q^2}$ and the simulation $\avg{Q^2}_{\rm sample}$ in the extraction of heavy-light form factors by applying a correction factor $\Delta_Q f_{\rm P}$, independent on $q^2$, valid for all form factors considered in this work, and given by
\begin{align}
    \Delta_Q f_{\rm P}
    &\equiv f_{\rm P, corrected} - f_{\rm P, sample},\\
    &= -\frac{1}{2 \chi_T V}
    \left.\frac{\partial^2f_{\rm P}}{\partial \theta^2}\right|_{\theta=0}
    \left( 1 - \frac{\avg{Q^2}_{\rm sample}}{\chi_T V} \right),
\end{align}
with $f_{\rm P,sample}$ the simulation value of a given form factor at any value of $q^2$, $\theta$ the vacuum angle, $\chi_T$ the topological susceptibility, and $V$ the four-dimensional lattice volume. The second derivative of the form factors with respect to the vacuum angle is obtained using LO heavy-light $\chi$PT with $\theta\ne 0$
 \begin{align}   
    \left.\frac{\partial^2f_{\rm P}}{\partial \theta^2}\right|_{\theta=0}
    &=
    -\frac{1}{4}\left(
    \frac{m_l m_s}{m_x (m_l + 2 m_s)}
    \right)^2
    \left.f_{\rm P}(\theta)\right|_{\theta=0}.
\end{align}
where $m_{l,s}$ are the light and strange quark masses respectively, and $m_x$ is the mass of the spectator quark in the transition, {\it i.e.}, $m_l$ for $D\to\pi(K)\ell\nu$, $m_s$ for $D_s\to K\ell\nu$.
The value of $\avg{Q^2}_{\rm sample}$ is understood to be the measured value from the simulation.
For the chiral susceptibility, we take the prediction from leading-order staggered $\chi$PT~\cite{Billeter:2004wx},
\begin{equation}
    \chi_T = \frac{1}{4}f_\pi^2 M^2
\end{equation}
where $1/M^2 = 2/M_{ll,I}^2 + 1/M_{ss,I}^2$ involves the taste-singlet non-Goldstone states.
At leading order, the only change from the familiar result~\cite{Leutwyler:1992yt} is the replacement of $M_\pi^2$ by $M^2$.
The masses of the taste-singlet mesons are calculated using \cref{eq:staggered_GMOR}.
Of the ensembles considered in this work (cf.\ \cref{table:ensembles}), 
the effects of nonequilibrated topological charge are relevant only for the finest ensemble ($a\approx 0.042 \fm$), for which $\avg{Q^2}_{\rm sample}=27.59$~\cite{Bernard:2017npd}.
The resulting corrections, $(\Delta_Q f_{\rm P})/f_{\rm P}\lesssim 0.0003$, are applied to the form factor data on the $a\approx 0.042\fm$ ensemble prior to the chiral-continuum fit in \cref{sec:chiral_ctm}.
Having accounted for the effect explicitly, and given the smallness of the correction, no further systematic error is assigned for nonequilibrated topological charge. 

\section{Phenomenology}
\label{sec:phenomenology}

The analysis of the preceding sections yields the semileptonic form factors for \Dpi, \DK, and \DsK\ in the idealized case of isospin-symmetric QCD.
For phenomenological applications, we have to consider the effects of strong isospin and QED, and then combine corrected results  with experimental data.
In this section, we first (\cref{ssec:expt}) explore several options for combining the experimental results with the lattice-QCD form factors and next (\cref{ssec:sib_qed}) estimate QED and strong isopin-breaking effects.
We are then in a position to determine via \cref{eq:dGammadq2} the CKM matrix elements \Vcd\ and \Vcs\ with a full error budget (\cref{ssec:CKM}) and to carry out tests of CKM unitarity (\cref{ssec:unitarity}).
We also compute the Standard Model predictions for the LFU ratios $R_{\mu/e}$.

\subsection{Experimental measurements}
\label{ssec:expt}

The differential decay rates $d\Gamma/dq^2$ for semileptonic decays of $D_{(s)}$~mesons to pseudoscalar light mesons have been measured by FOCUS (shape only)~\cite{FOCUS:2004meh}, Belle~\cite{Belle:2006idb}, BaBar~\cite{BaBar:2007zgf, BaBar:2014xzf}, CLEO~\cite{CLEO:2009svp}, and BES~III~\cite{BESIII:2015tql, BESIII:2016gbw, BESIII:2017ylw, BESIII:2018xre, BESIII:2018nzb}.
\Cref{table:expt_summary} summarizes the published measurements according to decay channel. 
\begin{table}
\caption{Summary of published measurements of semileptonic decays of $D$~mesons to pseudoscalar light mesons.
FOCUS 2005~\cite{FOCUS:2004meh} obtained shape information only and is omitted.
\label{table:expt_summary}}
\begin{tabular}{c  c  c}
\hline \hline
Decay                           & Measurements  &   Notes\\
\hline
$D^0 \to \pi^- e^+ \nu$       & BaBar 2015~\cite{BaBar:2014xzf}&\\
                                & Belle 2006~\cite{Belle:2006idb}& $e^+$ and $\mu^+$ averaged\\
                                & BES~III 2015~\cite{BESIII:2015tql}&\\
                                & CLEO 2009~\cite{CLEO:2009svp}&\\
$D^+ \to \pi^0 e^+ \nu$       & BES~III 2017~\cite{BESIII:2017ylw}&\\
                                & CLEO 2009~\cite{CLEO:2009svp}&\\
\hline
$D^0 \to \pi^- \mu^+ \nu$   & Belle 2006~\cite{Belle:2006idb}& $e^+$ and $\mu^+$ averaged\\
                                & BES~III 2018~\cite{BESIII:2018nzb}&\\
$D^+ \to \pi^0 \mu^+ \nu$   & BES~III 2018~\cite{BESIII:2018nzb}&\\
\hline
$D^0 \to K^-e^+\nu$           & BaBar 2007~\cite{BaBar:2007zgf}&\\
                                & Belle 2006~\cite{Belle:2006idb}&$e^+$ and $\mu^+$ averaged\\
                                & BES~III 2015~\cite{BESIII:2015tql}&\\
                                & CLEO 2009~\cite{CLEO:2009svp}&\\
$D^+ \to \bar{K}^0 e^+ \nu$   & BES~III 2017~\cite{BESIII:2017ylw}& \\
                                & CLEO 2009~\cite{CLEO:2009svp}&\\
\hline                                
$D^0 \to K^-\mu^+\nu$         & Belle 2006~\cite{Belle:2006idb}&$e^+$ and $\mu^+$ averaged\\
                                & BES~III 2019~\cite{BESIII:2018ccy}&\\
$D^+ \to \bar{K}^0 \mu^+ \nu$ & BES~III 2016~\cite{BESIII:2016gbw}& total rate only\\
\hline
$D_s^+ \to K^0 e^+ \nu$       & BES~III 2019~\cite{BESIII:2018xre}& \\
\hline \hline
\end{tabular}
\end{table}
Due to the experimental challenge of reconstructing muons in the final state, more measurements exist for the electron channels.
The only published data available for the semimuonic final states are from BES~III, which measured the rates for $\Dpi\mu\nu$~\cite{BESIII:2018nzb} and $\DK\mu\nu$~\cite{BESIII:2016gbw,BESIII:2018ccy}. 
Although Belle measured both the semielectronic and semimuonic final states~\cite{Belle:2006idb}, numerical values for the rate were not reported; instead, only values for the product $\Vcx f_+^{D\to\pi/K}(q^2)$ averaged over the lepton final state are available~\cite{Rong:2014hea, Fang:2014sqa}, without any correlation information.
Since both experimental data and lattice-QCD form factors have now reached a level of precision where the effects of the scalar form factor (which are proportional to $m_\ell^2$) are no longer negligible, as discussed below, we exclude Belle data from our subsequent analysis.

Besides the experimental difficulties associated with semimuonic final states, the extraction of the CKM matrix elements \Vcd\ and \Vcs\ poses an additional complication.
Contributions to the differential decay rate from the scalar form factor enter \cref{eq:dGammadq2} with a factor of $m_\ell^2$.
As \cref{fig:scalar_vector_breakdown} shows, the scalar form factor is negligible for semielectronic final states everywhere except the lowest $q^2$ bin, where its contribution is roughly $1\%$.
The situation for semimuonic final states is entirely different, where contributions from $f_0$ are roughly $(m_\mu/m_e)^2 \approx 10^5$ times larger and, thus, contribute at the few-percent level throughout the full kinematic range. 
Many extractions of $\Vcd$ and $\Vcs$ have neglected the contributions of the scalar form factor. But, with errors of $\lesssim 1\%$ both from experiment and from the results of this paper, determinations of $\Vcd$ and $\Vcs$ require the inclusion of both terms in \cref{eq:dGammadq2}.

\begin{figure}
    \centering
    \includegraphics[width=1.0\textwidth]{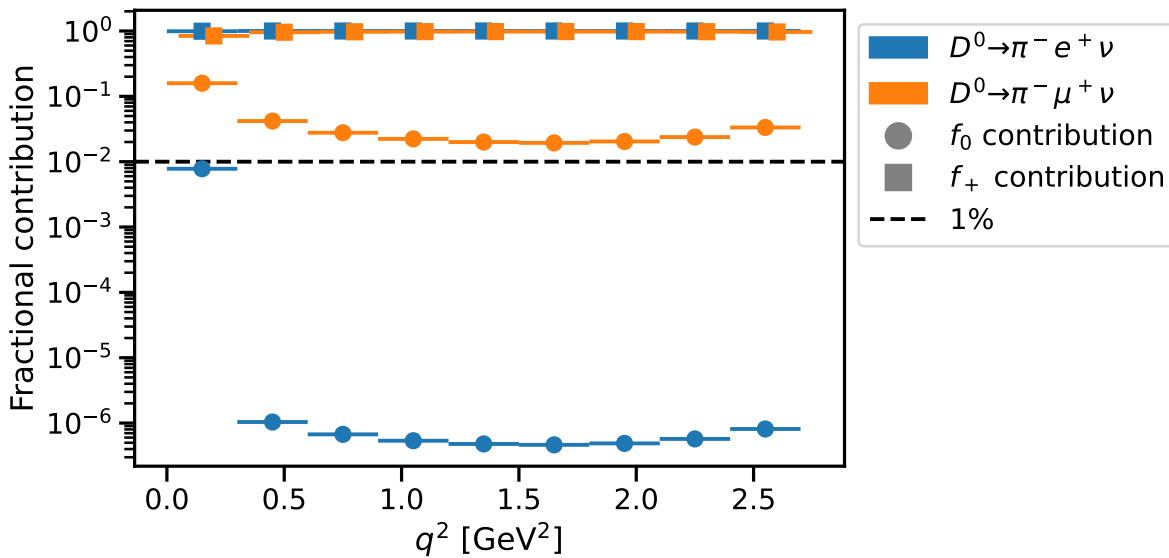}
    \caption{
    Fractional contributions from the scalar and vector form factors to the differential decay rate $d\Gamma/dq^2$ for $D^0\to\pi^-\ell\nu$ for electronic and muonic final states.
    Scalar contributions enter \cref{eq:dGammadq2} with a factor of $m_\ell^2$.
    In muonic decays, in general, scalar contributions are thus a factor of $(m_\mu/m_e)^2 \approx 10^5$ larger than in the corresponding electronic decays. 
    \label{fig:scalar_vector_breakdown}
    }
\end{figure}

\subsection{Systematic uncertainty from strong isospin effects and QED}
\label{ssec:sib_qed}

The form-factor results reported in \cref{table:z_results} are computed in isospin-symmetric QCD, {\it i.e.}, in simulations with degenerate light quarks of mass $m_l = (m_u + m_d)/2$ in the sea and valence sectors.
This theory is slightly different from nature, which includes corrections from electromagnetic effects and strong isospin breaking (SIB).
An estimate of these neglected effects is necessary before combination with experimental data.

Consider first SIB.
Isospin violation in the sea may be ignored at the current level of precision.
Because the matrix elements yielding the form factors are symmetric under exchange of the up and down sea quarks ($m_u \leftrightarrow m_d$), the leading contributions from SIB in the sea are of order $(m_d-m_u)^2$.
This behavior appears in the $\chi$PT~\cite{Aubin:2007mc} expressions, showing that sea SIB is smaller than the NNLO terms in the chiral expansion~\cite{Bazavov:2017lyh}.
To estimate the valence correction, we evaluate the form factors with a different definition of the physical point, replacing the masses of the neutral initial and final hadrons that define the physical point, see \cref{table:physical_point_inputs}, with their charged counterparts, and then computing the fractional shift 
$1 - (f_{+,0}^{\rm neutral}/f_{+,0}^{\rm charged})$
as a function of $q^2$.
To account for this systematic effect, we increase our errors on the form factors by 
$\pm(1 - f_{+,0}^{\rm neutral}/f_{+,0}^{\rm charged})$,
leaving the central value unchanged.
The systematic error profiles are shown as functions of $q^2$ in \cref{fig:sib}. 
Although this treatment of SIB does not distinguish between SIB in the sea and valence sectors, it is conservative insofar as both sea and valence effects contribute the variation with the hadron masses.
Guidance from EFT calculations or dedicated simulations with $m_u \neq m_d$ would be useful to help quantify this effect more precisely. 
Due to phase-space suppression at large $q^2$, isospin effects will turn out to be a small (and sometimes neglible) contribution to the systematic error budgets for quantities of phenomenological interest.

\begin{figure}
    \centering
    \includegraphics[width=1.0\textwidth]{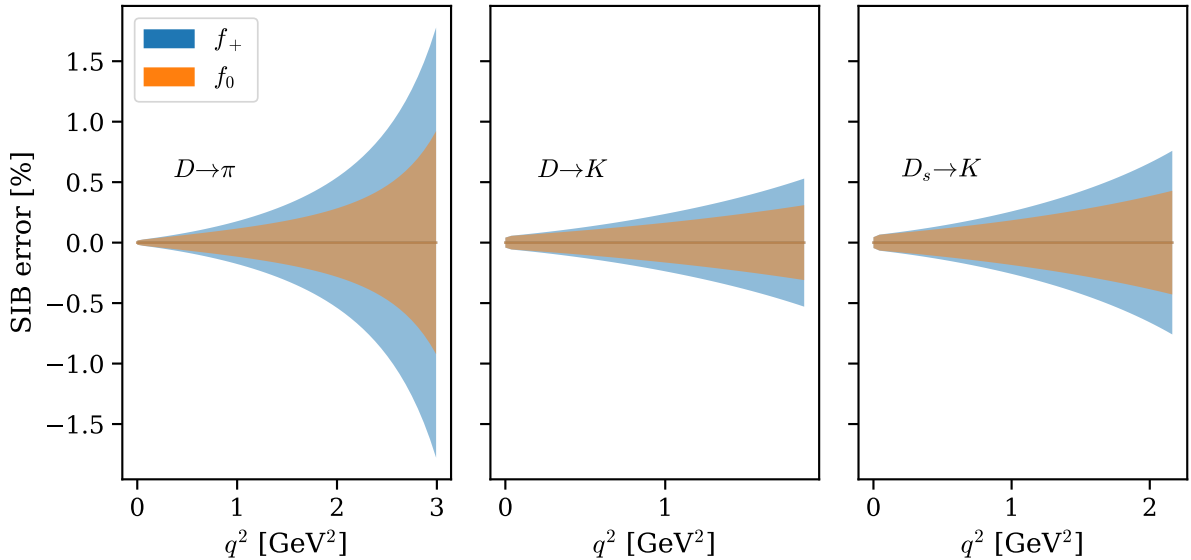}
    \caption{
    Systematic error envelopes $1 - f_{+,0}^{\rm neutral}/f_{+,0}^{\rm charged}$ estimating the effects of isospin breaking from the variation in form factors using an alternative definition of the physical point with the charged initial and final hadron masses (cf.\ \cref{table:physical_point_inputs}). 
    The total errors on the theoretical prediction for the form factor are increased, leaving the central value unchanged.
    Although the systematic uncertainty from SIB increases with $q^2$, its effect on $\Vcx$ and $R_{\mu/e}$ ends up being small due to phase-space suppression.
    \label{fig:sib}
    }
\end{figure}

Some effects of QED are taken into account in the experimental measurements.
For instance, final-state radiation tends to degrade the momentum resolution, which can lead to mis-measurement of the positron momentum if background radiative events (e.g., $D^0\to\pi^-e^+\nu \gamma$) are not handled correctly.
Experimental groups correct for this effect using the Monte Carlo tool PHOTOS~\cite{Golonka:2005pn,Barberio:1993qi}. See Refs.~\cite{BaBar:2007zgf,CLEO:2009svp} for a discussion.

The long-distance electromagnetic corrections to the semileptonic decays themselves ($\delta_{\rm EM}$ in \cref{eq:dGammadq2}) have not been calculated for the decays $D_{(s)}\to K/\pi\ell\nu$.
However, the analogous corrections to the decay amplitudes for $K\to\pi\ell\nu$ have been computed in the framework of $\chi$PT~\cite{Cirigliano:2008wn,Cirigliano:2011ny} and more recently in a hybrid framework combining $\chi$PT and Sirlin's representation of SM radiative corrections~\cite{Seng:2021boy,Seng:2021wcf,Seng:2022wcw}.
The more recent calculations confirm the older results but with smaller final uncertainties.
The overall picture, substantiated by \cref{table:delta_EM}, is that final states with a charged hadron (e.g., $\pi^-e^+$) tend to have shifts of $\delta_{\rm EM} \approx 1$--$1.5\%$, while the shifts for final states with a neutral hadron (e.g., $\pi^0 e^+$) are roughly a factor of 3--4 smaller. 
Differences between the decays with an $e^+$ or a $\mu^+$ in the final state are around an order of magnitude smaller. Since, as mentioned above, no similar calculations exist for the decays at hand, we are unable to apply a concrete correction $\delta_{\rm EM}$.
Instead, using the results for $K\to\pi\ell\nu$ as a rough guide, we include an additional systematic uncertainty.
For extractions of the CKM matrix elements, we add a conservative error of $\pm 1\%$ to the final value $\Vcd$ or $\Vcs$.
In all cases, the uncertainty is inflated without shifting the central values.

\begin{table}
\caption{
Long-distance electromagnetic corrections for the $K_{\ell 3}$ decay amplitude, taken from Ref.~\cite{Cirigliano:2008wn,Cirigliano:2011ny,Seng:2021boy,Seng:2021wcf,Seng:2022wcw}.
Since the shifts are computed for the amplitude, the factor of two is necessary for use with the decay rate.
Entries correspond to $\frac{1}{2}\delta_{\rm EM}$ in~\%.
\label{table:delta_EM}
}
\begin{tabular}{l c c}
\hline\hline
Decay & Cirigliano et al.~\cite{Cirigliano:2008wn,Cirigliano:2011ny} & Seng et al.~\cite{Seng:2021boy,Seng:2021wcf,Seng:2022wcw} \\
\hline
$K^0 \to \pi^- e^+ \nu$       &   $0.50 \pm 0.11$     &   $0.580 \pm 0.016$\\
$K^0 \to \pi^- \mu^+ \nu$   &   $0.70 \pm 0.11$     &   $0.77 \pm 0.04$\\
\hline
$K^+ \to \pi^0 e^+ \nu$       &   $0.05 \pm 0.13$     &   $0.105 \pm 0.024$\\
$K^+ \to \pi^0 \mu^+ \nu$   &   $0.08 \pm 0.13$     &   $0.25 \pm 0.05$\\
\hline\hline
\end{tabular}
\end{table}

In the analysis below, we also report values for the correlated ratio $\Vcd/\Vcs$ as well as LFU ratios.
A few additional remarks are necessary concerning the QED uncertainty for these quantities.

Consider first the ratio $\Vcd/\Vcs$.
As the results in \cref{table:delta_EM} show, QED corrections for $K^+$ decays are $\lesssim 0.25\%$, which suggests similarly small corrections for $D^+$ decays.
For the decays of $D^0$, the QED corrections will be dominated by the Coulomb interaction between the charged final-state particles.
The Coulomb shift in the rate is approximately given by $1 + \pi \alpha/\beta$, where $\beta = \sqrt{1 - M_L^2 m_\ell^2 / (p_L \cdot p_\ell)^2}$ is relative velocity between the charged final-state particles~\cite{Ginsberg:1968pz,Atwood:1989em,Cirigliano:2008wn,Cali:2019nwp,deBoer:2018ipi}.
For the decays considered here, the kinematics are such that $\beta\approx 1$.
Therefore, within the uncertainties of our calculation, the Coulomb corrections for $D^0$ decays are essentially constant over the kinematic range of the decays and would cancel in the ratio. 
Overall, we take a conservative $0.5\%$ QED systematic uncertainty for the ratio $\Vcd/\Vcs$.

Similar considerations apply for the LFU ratios.
Again using $K\to\pi\ell\nu$ and \cref{table:delta_EM} for guidance, the correlated~\cite{Seng:2022wcw} differences
$\delta_{\rm EM}(\mu)-\delta_{\rm EM}(e)$ are about 0.3--0.4\%.
As for the ratio of CKM matrix elements, the Coulomb corrections are expected to introduce a factor, $(1+\alpha\pi)$, which cancels, within the precision of our calculation, in the ratio.
For the same reasons as above, we thus take a conservative $0.5\%$ QED systematic uncertainty for the LFU ratios.

\subsection{CKM matrix elements}
\label{ssec:CKM}

Our analysis extracts the CKM matrix elements using two different methods: the joint $z$-expansion method and the binned method, discussed below.


First, the \emph{joint $z$-expansion method} fits experimental data for $d \Gamma/dq^2$ together with synthetic data for our lattice-QCD form factors $f_+(q^2)$ and $f_0(q^2)$.
More precisely, the expected model for the decay rate is given by \cref{eq:dGammadq2} using the four-parameter $z$~expansions for both $f_+(q^2)$ and $f_0(q^2)$ via \cref{eq:z_fplus} and \cref{eq:z_f0}.
The CKM matrix element $\Vcx_{\rm joint}$ is treated as a free parameter in the fit which serves as a floating relative normalization factor between the experimental data for the rate and synthetic data for the form factors.
The synthetic data are computed using our results for $f_+(q^2)$ and $f_0(q^2)$ given in \cref{table:z_results}, evaluated for $q^2$ at $[0.1, 0.3, 0.63, 0.9]\times q^2_{\rm max}$. 
The locations of these points are the same as the synthetic points used in \cref{ssec:z-expansion}.
Since this method works directly with the full expression for the differential decay rate, \cref{eq:dGammadq2}, it makes no assumptions about the relative size of the vector and scalar contributions.

\begin{figure}
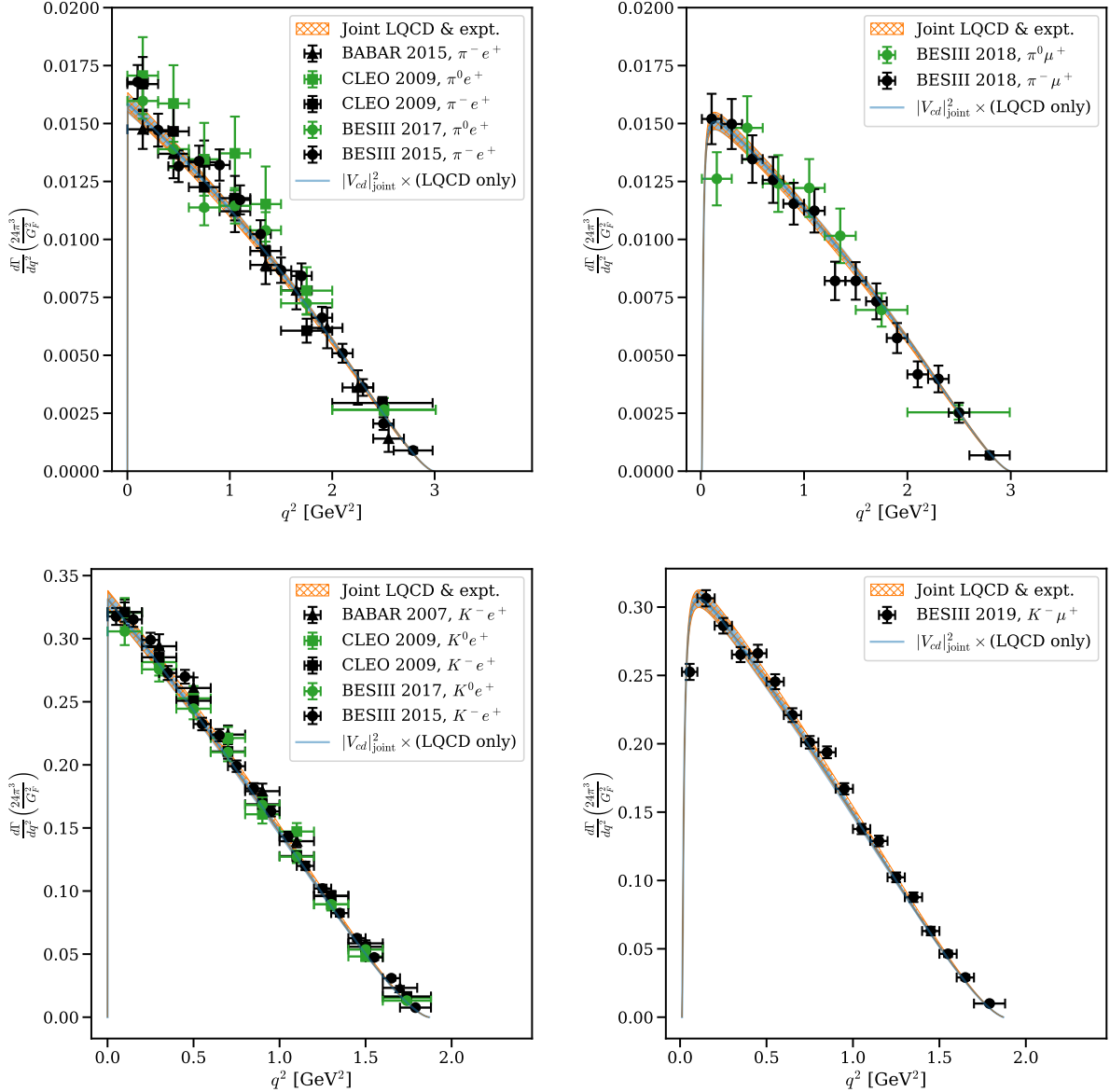

    \centering
    \includegraphics[width=0.49\textwidth]{Figures/Phenomenology/d2pi_rate_with_data_electron.pdf}
    \includegraphics[width=0.49\textwidth]{Figures/Phenomenology/d2pi_rate_with_data_muon.pdf}
    \includegraphics[width=0.49\textwidth]{Figures/Phenomenology/d2k_rate_with_data_electron.pdf}
    \includegraphics[width=0.49\textwidth]{Figures/Phenomenology/d2k_rate_with_data_muon.pdf}
    \caption{
    The differential decay rates for $\Dpi$ (top row) and $\DK$ (bottom row) in the semielectronic (left) and semimuonic (right) channels.
    The blue curves shows the result of evaluating \cref{eq:dGammadq2} using our lattice-QCD form factors, normalized by $\Vcx^2_{\rm joint}$.
    The hatched orange curves show the result of the joint fit of experimental data and synthetic lattice-QCD data to the $z$~expansion.
    The black data points indicate charged-hadron ($K^-/\pi^- \ell^+$) final states, while the green points indicate  experimental measurements for neutral-hadron ($K^0/\pi^0 \ell^+$) final states.
    In the top row, $\Dpi$ results come from BaBar~\cite{BaBar:2014xzf}, CLEO~\cite{CLEO:2009svp}, and BES~III~\cite{BESIII:2015tql,BESIII:2017ylw,BESIII:2018nzb}.
    In the bottom row, results come from BaBar~\cite{BaBar:2007zgf}, CLEO~\cite{CLEO:2009svp}, and BES~III~\cite{BESIII:2015tql,BESIII:2017ylw,BESIII:2018ccy}.
    Results from different experiments have are distinguished by different markers.
    The points have been slightly offset horizontally for readability.
    \label{fig:d2x_rate_with_data}
    }
\end{figure}

\begin{figure}
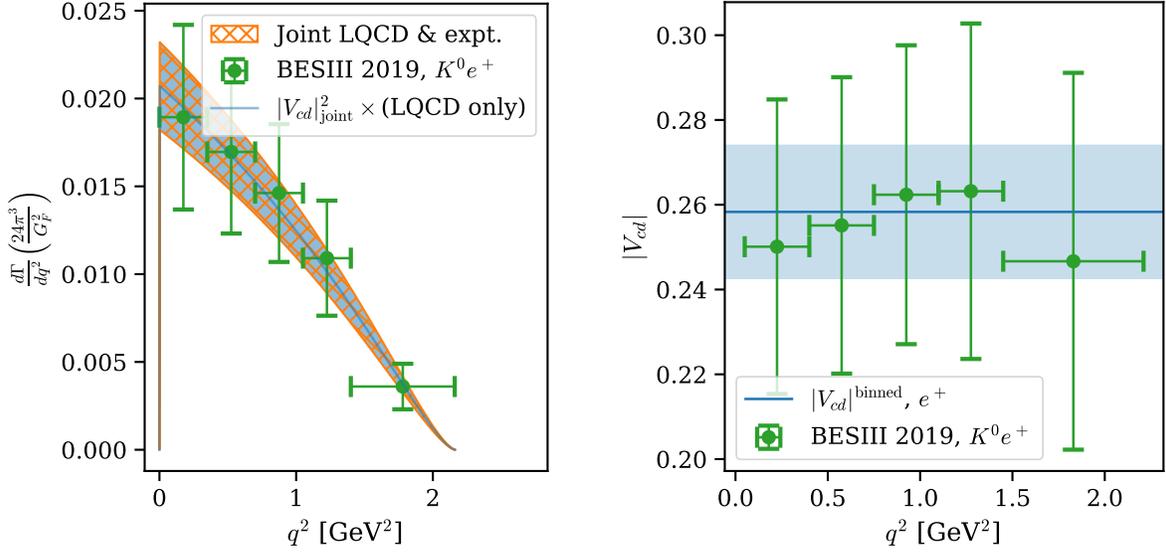

    \centering
    \includegraphics[width=0.49\textwidth]{Figures/Phenomenology/ds2k_rate_with_data_electron.pdf}
    \includegraphics[width=0.49\textwidth]{Figures/Phenomenology/ds2k_binwise_vcd_electron.pdf}
    \caption{
    \textbf{(Left)} The differential decay rate for $D_s\to K^0 e^+\nu$.
    The blue curve shows the result of evaluating \cref{eq:dGammadq2} using our lattice-QCD form factors, normalized by $\Vcd^2_{\rm joint}$.
    The hatched orange curve show the result of the joint fit of experimental data and synthetic lattice-QCD data to the $z$~expansion.
    The data points indicate experimental measurements from BES~III~\cite{BESIII:2018xre}.
    \textbf{(Right)} The binwise estimate of the CKM matrix element
    $\left[|V_{cd}(q_i^2)|\right]_{\rm Binned}$ from the decay $\DsK$.
    \label{fig:ds2k_rate_with_data}
    }
\end{figure}

Joint $z$-expansion fits have been carried out including all experimental data for each decay process. 
The corresponding differential decay rates (orange curves) are shown, together with the experimental data, in \cref{fig:d2x_rate_with_data} ($\Dpi$ and $\DK$) and \cref{fig:ds2k_rate_with_data} ($\DsK$). 
For completeness, the fit posteriors for the $z$-expansion coefficients are given in \cref{sec:joint_fit_results}.
Measurements from CLEO were reported including correlations between different decay channels~\cite{CLEO:2009svp};
these correlations are included in our analysis.
The published results from BES~III do not include correlations between different decays
(e.g., $D^0\to\pi^- e^+\nu$, $D^+\to\pi^0 e^+\nu$, $D^0\to\pi^- \mu^+\nu$, and $D^+\to\pi^0 \mu^+\nu$).
We have experimented with different models for the missing off-diagonal blocks of the full correlation matrix, ranging from zero correlation to 100\% correlation.
Our results for $\Vcd$ and $\Vcs$ are extremely insensitive to the precise treatment of these off-diagonal correlations and give statistically indistinguishable results.
We therefore report values from our preferred analysis, which uses a simple model for the correlations in which the off-diagonal blocks are taken to be constant, with correlation coefficient equal to the mean of the corresponding diagonal blocks.\footnote{%
We thank the BES~III collaboration for providing us with the correlations for the differential rate $d\Gamma/dq^2$ for the decays $D_s^+\to K^0 e^+\nu$, $D^0\to\pi^-\mu^+\nu$, and $D^+\to\pi^0\mu^+\nu$ as well as for information and guidance regarding the treatment of off-diagonal correlations between different decays (Lei Li, private communication, 22 July 2022; Hailong Ma, private communication, 11 Dec 2022)}.
Regarding the measurements of $D^0\to K^-e^+\nu$ coming from BaBar~\cite{BaBar:2007zgf}, our fits drop the largest $q^2$ bin, since it is constructed by a normalization constraint (one minus the sum of the other bins).
Especially when fitting $d\Gamma/dq^2$ for semimuonic channels, including the scalar form factor is essential to achieving a good description of the data at low $q^2$. 
However, the higher parameters $b_1$ and $b_2$ associated with the scalar form factor are  constrained entirely by our precise synthetic data for $f_0$. 
The influence of neglecting $f_0$ is considered below.

The plots in \cref{fig:d2x_rate_with_data,fig:ds2k_rate_with_data} also include comparisons with the shape obtained from our form factors, given by the parameters in \cref{table:z_results} (lattice-QCD-only results), normalized by $\Vcx^2_{\rm joint}$. Those correspond to the blue curves in the plots. 
In addition, fits are also conducted to the experimental data alone, although we do not use these results to extract the CKM matrix elements. All those $z$-expansion fits, as well as the joint fit, enforce the kinematic identity $f_+(0)=f_0(0)$ by imposing $a_0=b_0$ [cf.\ \cref{eq:z_f0} and \cref{eq:z_fplus}].
The best-fist posterior values for those $z$-expansion fits are also given in \cref{sec:joint_fit_results}.

Another visualization of the form factors' shapes, which is independent of the overall normalization, comes from comparing the ratios $r_1\equiv a_1/a_0$ and $r_2\equiv a_2/a_0$ of the $z$-expansion coefficients from \cref{eq:zexpansion_expt} after applying the refitting procedure of Ref.~\cite{Chakraborty:2021qav}. 
These ratios are displayed for $\Dpi$ and $\DK$ in \cref{fig:shape_comparison}.
Our results, given by the black ellipses, show good agreement with the experimental shapes.
For \DK, we also find good agreement with the lattice QCD calculation from HPQCD~\cite{Chakraborty:2021qav}.
For \Dpi, we find $r_1=-2.009(55)$ and $r_2=0.14(36)$, with a correlation of $\rho_{12}=-0.58$.
For \DK, we find $r_1=-2.05(11)$ and $-0.59(52)$, with a correlation of $\rho_{12}=-0.29$.

\begin{figure}
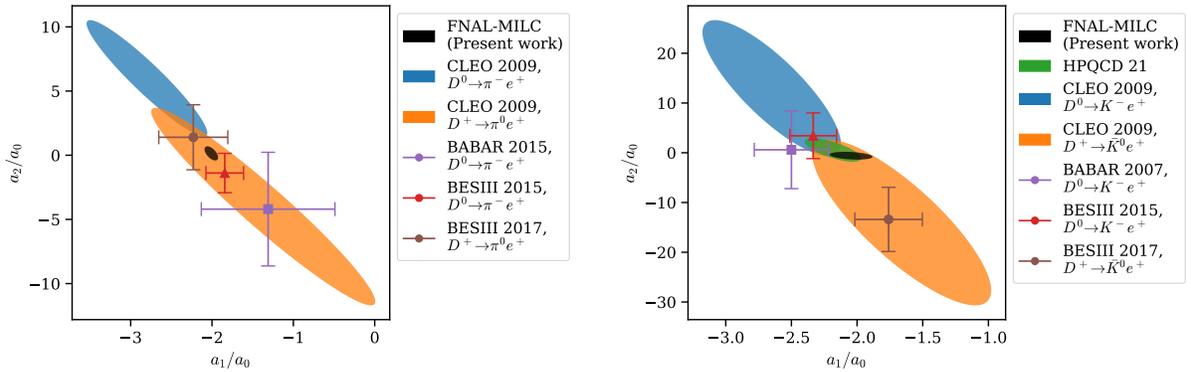

    \centering
    \includegraphics[width=0.49\textwidth]{Figures/ShapeComparison/d2pi_shape_comparison.pdf}
    \includegraphics[width=0.49\textwidth]{Figures/ShapeComparison/d2k_shape_comparison.pdf}
    \caption{
    Comparing the shapes of the vector form factor $f_+$ between lattice QCD and experiment for the decays $\Dpi$ (left) and $\DK$ (right) via ratios of $z$-expansion coefficients from \cref{eq:zexpansion_expt}.
    Where published correlations are available, the ellipses show the $68\%$ confidence intervals.
    Systematic errors from QED and isospin breaking are not included in the lattice QCD results. 
    \label{fig:shape_comparison}
    }
\end{figure}


The second method we use to extract CKM matrix elements, the \emph{binned method}, combines lattice-QCD results with experimental data for the rate $d\Gamma/dq^2$ to give a binwise estimate of the CKM matrix element:
\begin{align}
    \left[|V_{cx}(q_i^2)|\right]_{\rm Binned}
    \equiv 
    \left[
    \left( \frac{d\Gamma}{dq^2} \right)_{\rm Expt}
    \frac{24 \pi^3}{G_F^2 \eta_{\rm EW}^2}
    \frac{1}{\avg{
    (\cdots)_{\rm LQCD}
    }}_{q_i^2}
    \right]^{1/2},
    \label{eq:binwise_Vcx}
\end{align}
where the quantity in the denominator is understood to be the binwise average (i.e., integrated over the bin) of the lattice-QCD form factors together with the appropriate kinematic factors appearing in \cref{eq:dGammadq2},
\begin{align}
(\cdots)_{\rm LQCD} \equiv    
(1 - \epsilon)^2 \left(
    \abs{\bm{p}}^3 (1 + \epsilon/2) f_+(q^2)
    +
    \abs{\bm{p}} M_H^2 \left(1 - \frac{M_L^2}{M_H^2}\right)^2 \frac{3}{8}\epsilon f_0(q^2)
\right).
\end{align}
This expression depends on the lepton mass via $\epsilon\equiv m_\ell^2/q^2$ as well as the experimentally measured hadron masses $M_H$ and $M_L$ for each mode (e.g., $D^0$ and $\pi^-$ or $D^+$ and $\pi^0$ for $\Dpi$).
A weighted, correlated average ({\it i.e.}, a fit to a constant) then gives $|V_{cx}|_{\rm Binwise}$. 
The binned method is entirely general and makes no assumptions about the relative size of the vector and scalar contributions.
Results for $\Vcd$ from $\Dpi$ and $\Vcs$ from $\DK$ for each $q^2$ bin and experiment, as well as the correlated average over bins including only semielectronic (blue lines) or only semimuonic data (red lines), are shown in \cref{fig:binwise_Vcx}.
The semimuonic results lie roughly $1\sigma$ below the semielectronic results for both $\Vcd$ and $\Vcs$, so below we report the values in each channel as well as the combined results. Those combined extractions, including all leptonic channels, lie between the two bands in \cref{fig:binwise_Vcx}, and are statistically consistent with the individual determinations, as shown in \cref{fig:VcdVcs}.
For $\Vcd$ from $\DsK$, results are shown in \cref{fig:ds2k_rate_with_data}. 
As argued above, with present statistical precision, the presence of the scalar form factor is quantitatively important for the differential rate $d\Gamma/dq^2$, especially for semimuonic channels.
\Cref{fig:Vcs_drop_f0} shows the effect of dropping the contribution from $f_0$ for $D\to K\mu\nu$.
Values for $\Vcs^{\rm binned}$ are observed to shift by a few percent and, when considered as a function of $q^2$, become statistically inconsistent with a constant.
Similar few-percent shifts occur for $D\to\pi\mu\nu$.

\begin{figure}
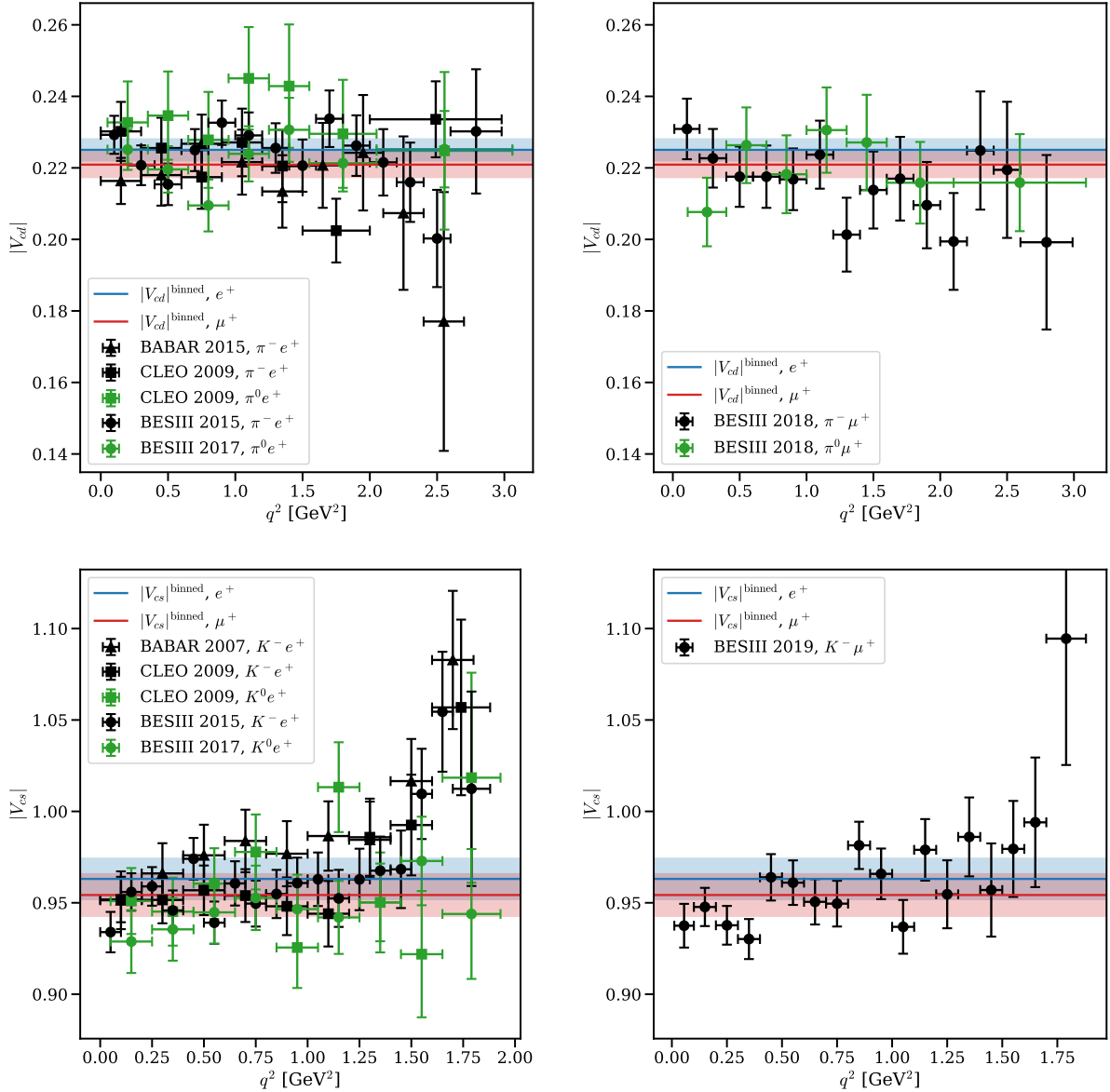

    \centering
    \includegraphics[width=0.49\textwidth]{Figures/Phenomenology/d2pi_binwise_vcd_electron.pdf}
    \includegraphics[width=0.49\textwidth]{Figures/Phenomenology/d2pi_binwise_vcd_muon.pdf}
    \includegraphics[width=0.49\textwidth]{Figures/Phenomenology/d2k_binwise_vcs_electron.pdf}
    \includegraphics[width=0.49\textwidth]{Figures/Phenomenology/d2k_binwise_vcs_muon.pdf}
    \caption{
    The binwise estimate of the CKM matrix element $\left[|V_{cx}(q_i^2)|\right]_{\rm Binned}$ from the decays $\Dpi$ (top rows) and $\DK$ (bottom row) in the semielectronic (left) and semimuonic (right) channels.
    The horizontal bands show the resulting values for $\Vcx^{\rm Binned}$ from correlated fits to a constant in each channel.
    The result for $\Vcd_{\rm Binned}^{\Dpi\mu^+\nu}$ (red) lies slightly below $\Vcd_{\rm Binned}^{\Dpi e^+\nu}$ (blue).
    The combined extraction using both channels lies between the two bands and is statistically consistent with each.
    A comparison of the different extractions of $\Vcx$ is given in \cref{fig:VcdVcs}.
    For $\Dpi$, experimental data are taken from BaBar~\cite{BaBar:2014xzf}, CLEO~\cite{CLEO:2009svp}, and BES~III~\cite{BESIII:2015tql,BESIII:2017ylw,BESIII:2018nzb}.
    For $\DK$, experimental data are taken from BaBar~\cite{BaBar:2007zgf}, CLEO~\cite{CLEO:2009svp}, and BES~III~\cite{BESIII:2015tql,BESIII:2017ylw,BESIII:2018ccy}.
    Although all the correlated fits have good quality ($\chi^2/{\rm DOF}\approx1$, $p\gtrsim 0.05$), the residuals for $\DK$ are visually larger near $q^2_{\rm max}$.
    \label{fig:binwise_Vcx}
    }
\end{figure}

Because the joint-fit and binned methods explicitly account for (potentially) percent-level contributions from the scalar form factor, they constitute our main extractions for $\Vcx$.
For continuity with previous studies, we also consider the \emph{endpoint method}, in which $\Vcx$ is defined according to
\begin{align}
[\Vcx]_{\rm Endpoint} \equiv \frac{[\Vcx \eta_{\rm EW} f_+(0)]_{\rm Expt}}{\eta_{\rm EW}[f_+(0)]_{\rm LQCD}}.
\end{align}
The experimental values are taken from the HFLAV world averages: $\Vcd \eta_{\rm EW} f_+^{\Dpi}(0) = 0.1426(18)$,
$\Vcs \eta_{\rm EW} f_+^{\DK}(0) = 0.7180(33)$~\cite{HFLAV:2022pwe}. 
The resulting values for $[\Vcx]_{\rm Endpoint}$ are shown in \cref{fig:VcdVcs} and given in \cref{table:vcx_summary}.
Although these endpoint results give a statistical precision comparable to our preferred extractions, it's worth emphasizing that our precise values for $f_+(0)$ were made possible by leveraging information about the form factor across the full kinematic range of the decays.
The final errors can potentially be much larger in a simulation that works directly at the endpoint ($q^2=0$).
For example, preliminary work by our collaboration has focused on $q^2\approx 0$ on many of the same ensembles and with comparable statistics~\cite{FermilabLattice:2019ycs}.
Using the preliminary values of $f_+(0)$ from these proceedings gives values for $\Vcd$ and $\Vcs$ with errors that are roughly 2.5 to 3.5 larger than the final errors in the present work. 

\begin{figure}
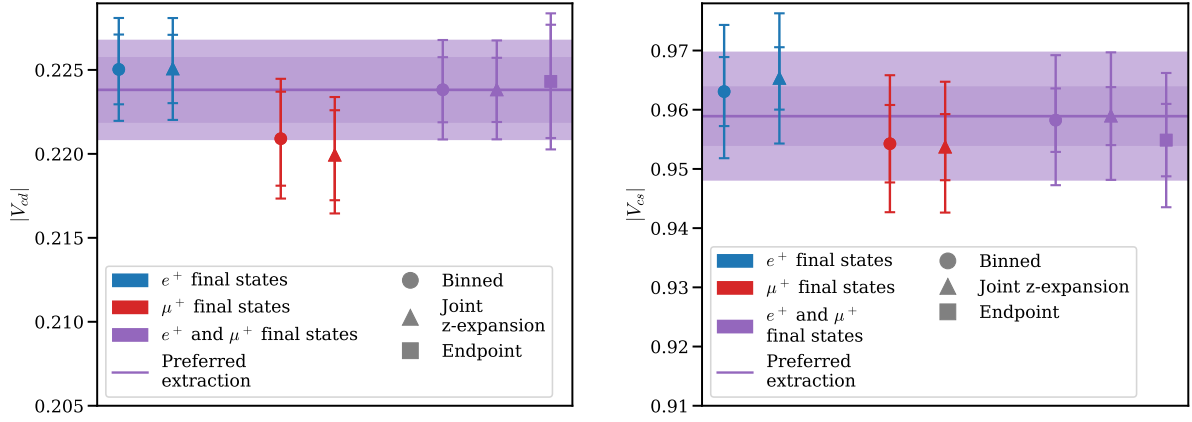

    \centering
    \includegraphics[width=0.49\textwidth]{Figures/Phenomenology/d2pi_vcd_results.pdf}
    \includegraphics[width=0.49\textwidth]{Figures/Phenomenology/d2k_vcs_results.pdf}
    \caption{
    Determinations of $\Vcd$ and $\Vcs$ using experimental measurements of the decays $\Dpi$ and $\DK$.
    The outer and inner error bars and bands show the results with and without QED uncertainties, respectively.
    \label{fig:VcdVcs}
    }
\end{figure}

\begin{figure}
    \centering
    \includegraphics[width=0.49\textwidth]{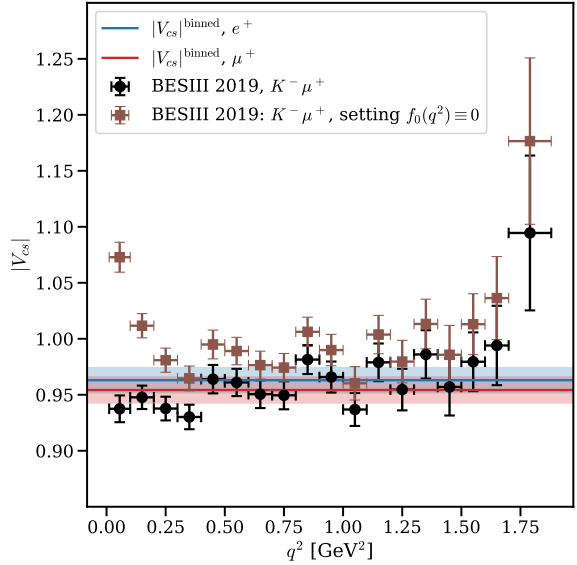}
    \caption{
    The effect of neglecting the scalar form factor (setting $f_0(q^2)\equiv 0$ in \cref{eq:dGammadq2}) when computing $\Vcs^{\rm binned}$ for $D\to K\mu\nu$.
    The red and blue horizontal lines and the black data are reproduced from the bottom-right panel of \cref{fig:binwise_Vcx}.
    Similar few-percent shifts occur for $D\to\pi\mu\nu$.
    }
    \label{fig:Vcs_drop_f0}
\end{figure}

The results for $\Vcd$ and $\Vcs$ from the different methods described above and for different leptons in the final states are summarized in \cref{fig:VcdVcs} and \cref{table:vcx_summary}.
Since, as shown in the plot, the binned and joint-fit extractions give statistically consistent values well within 1$\sigma$, we take the joint-fit extractions to define our preferred results:
\begin{align}
\Vcd^{\Dpi \ell^+ \nu} &= \VcdCombined, \label{eq:Vcdpreferred}\\
\Vcd^{\DsK e^+ \nu} &= \VcdElectronicDsK, \\
\Vcs^{\DK \ell^+\nu} &= \VcsCombined, \label{eq:Vcspreferred}
\end{align}
where the first error comes from the experimental differential decay rate uncertainty,
the second error comes from our form factor calculation (see \cref{table:final_error_budget}), the third error shows the uncertainty in $\eta_{EW}$, and the fourth and fifth from our estimate of SIB and long-distance QED corrections described in \cref{ssec:sib_qed}.
The errors in these expressions combine in quadrature to give the total errors in \cref{table:vcx_summary}. 
Since our preferred extraction of $\Vcs$ includes both $e^+$ and $\mu^+$ final states, the experimental contribution to the error is smaller by roughly a factor of two than in Ref.~\cite{Chakraborty:2021qav}.

We also repeated our analysis separating charged-hadron and neutral-final states (e.g., $\pi^- e^+$ versus $\pi^0 e^+$).
No statistically significant difference was observed within the uncertainties, consistent with what was observed in Ref.~\cite{Chakraborty:2021qav}.

\begin{table}
    \centering
    \caption{Summary of results for $\Vcd$ and $\Vcs$ from different decays and different extraction methods. 
    The final column gives the result when errors from QED are neglected.
    \label{table:vcx_summary}
    }
    \begin{tabular}{lllll}
\hline\hline
       &              Process &      Method &      $\Vcx$ & $\Vcx$ (no QED) \\
\hline
$\Vcd$ &    $D\to\pi e^+ \nu$ & $z$-expansion &  0.2251(30) &      0.2251(20) \\
$\Vcd$ &    $D\to\pi e^+ \nu$ &      binned &  0.2250(31) &      0.2250(21) \\
$\Vcd$ &  $D\to\pi \mu^+ \nu$ & $z$-expansion &  0.2199(35) &      0.2199(27) \\
$\Vcd$ &  $D\to\pi \mu^+ \nu$ &      binned &  0.2209(36) &      0.2209(28) \\
$\Vcd$ & $D\to\pi \ell^+ \nu$ & $z$-expansion &  0.2238(29) &      0.2238(19) \\
$\Vcd$ & $D\to\pi \ell^+ \nu$ &      binned &  0.2238(30) &      0.2238(19) \\
$\Vcd$ & $D\to\pi \ell^+ \nu$ &    endpoint &  0.2243(41) &      0.2243(34) \\
\hline
$\Vcs$ &     $D\to K e^+ \nu$ & $z$-expansion & 0.9653(110) &      0.9653(53) \\
$\Vcs$ &     $D\to K e^+ \nu$ &      binned & 0.9631(113) &      0.9631(58) \\
$\Vcs$ &   $D\to K \mu^+ \nu$ & $z$-expansion & 0.9537(111) &      0.9537(56) \\
$\Vcs$ &   $D\to K \mu^+ \nu$ &      binned & 0.9543(116) &      0.9543(65) \\
$\Vcs$ &  $D\to K \ell^+ \nu$ & $z$-expansion & 0.9589(108) &      0.9589(49) \\
$\Vcs$ &  $D\to K \ell^+ \nu$ &      binned & 0.9582(110) &      0.9582(54) \\
$\Vcs$ &  $D\to K \ell^+ \nu$ &    endpoint & 0.9549(110) &      0.9549(61) \\
\hline
$\Vcd$ &   $D_s\to K e^+ \nu$ & $z$-expansion & 0.2582(155) &     0.2582(153) \\
$\Vcd$ &   $D_s\to K e^+ \nu$ &      binned & 0.2583(157) &     0.2583(155) \\
\hline\hline
\end{tabular}

\end{table}

For the first time, our calculation provides a value of $\Vcd$ from $\Dpi$ for which lattice QCD errors are at the same level as the experimental errors, $\sim0.5\%$ each. This represents an improvement by roughly a factor of six from the existing state of the art~\cite{Lubicz:2017syv,Riggio:2017zwh}.
For $\Vcd^{\DsK}$, experimental errors dominate and are substantially larger than for $\Dpi$.
Since the theoretical uncertainty is actually the smallest for $\DsK$, additional experimental measurements of this channel would be particularly welcome.
On the other hand, theoretical error exceeds the experimental error by roughly a factor of two in the extraction of $\Vcs$ from $\DK$, leaving room for improvements in the theory side.
Experimental errors also dominate the CKM extractions from the semimuonic channels, where we have only included recent results from BES-III.
Another key ingredient for improved semileptonic extractions of $\Vcd$ and $\Vcs$ would be the calculation of long-distance structure-dependent EM corrections or a more robust estimate of their effect on these decays, since our lack of knowledge of these corrections currently dominates the uncertainty of the most precise determinations.

A comparison of our final results for $\Vcd$ and $\Vcs$ with existing results in the literature appears in \cref{fig:VcdVcs_comparison}, including leptonic decays, global fits assuming CKM unitarity fits, and scattering. 
Our determinations of $\Vcd$ and $\Vcs$ agree well, at the level of 1--2 standard deviations, with previous leptonic~\cite{Davies:2010ip,FermilabLattice:2011njy,Na:2012iu,Carrasco:2014poa,Boyle:2017jwu,Bazavov:2017lyh} and semileptonic~\cite{Na:2010uf,Na:2011mc,Lubicz:2017syv,Riggio:2017zwh,Chakraborty:2021qav} determinations reported in FLAG~\cite{Aoki:2021kgd}.

Our correlated results for $\Vcd$ and $\Vcs$ also yield the ratio,
\begin{equation}
    \Vcd/\Vcs =
    0.2334(13)^{\rm Expt}(16)^{\rm QCD}(02)^{\rm SIB}[11]^{\rm QED}
    \label{eq:VcdoverVcs}
\end{equation}
where the correlation coefficient between $\Vcd$ and $\Vcs$, neglecting QED, is 0.18.
As described in \cref{ssec:sib_qed}, we have taken a conservative $0.5\%$ systematic uncertainty for QED effects in the ratio.

Using the latest measurements $f_{D^+} \Vcd$ and $f_{D_s} \Vcs$ reported by HFLAV~\cite{HFLAV:2022pwe} and the ratio of decay constants $f_{D_s}/f_{D^+}$ computed by our collaboration in a similar set of ensembles and with the same action in Ref.~\cite{Bazavov:2017lyh}, one finds $[\Vcd/\Vcs]^{\rm leptonic} = 0.2212(58)$, where the error is dominated by the experimental uncertainty.
Both values are plotted in \cref{fig:vcd_by_vcs} together with previous
leptonic~\cite{Davies:2010ip,FermilabLattice:2011njy,Na:2012iu,Carrasco:2014poa,Boyle:2017jwu,Bazavov:2017lyh}
semileptonic~\cite{Na:2010uf,Na:2011mc,Lubicz:2017syv,Riggio:2017zwh,Chakraborty:2021qav} determinations combined in averages by FLAG~\cite{Aoki:2021kgd}
and the result from the PDG global unitarity fit~\cite{Workman:2022ynf}
(the global-fit methodologies of CKMfitter~\cite{Charles:2004jd} and UTfit~\cite{UTfit:2022hsi} give very similar results).
The leptonic extraction above agrees with our semileptonic result within roughly $2\sigma$, although, as plotted in \cref{fig:vcd_by_vcs}, leptonic determinations tend to give smaller values of the ratio.
The error in our result is more than a factor of two smaller than the leptonic one, with similar uncertainties from lattice QCD and experiment.
Results for $\Vcd/\Vcs$ from the PDG global fit assuming unitarity and from the ratio  $\Vus/\Vud$ (see \cref{ssec:unitarity} below for more details) are also shown in \cref{fig:vcd_by_vcs}.
Our result agrees well with both of them.

\begin{figure}
    \centering
    \includegraphics[width=0.49\textwidth]{Figures/Phenomenology/d2pi_vcd_comparison.pdf}
    \includegraphics[width=0.49\textwidth]{Figures/Phenomenology/d2k_vcs_comparison.pdf}
    \caption{
    Comparison of our preferred determinations of $\Vcd^{\Dpi}$ and $\Vcs^{\DK}$ (blue bands) with existing results in the literature. The outer and inner error bands show our preferred result with and without QED uncertainties, respectively.
    The world's first determination $\Vcd^{\DsK}$ is also given.
    Results from FLAG are taken from Ref~\cite{Aoki:2021kgd}.
    Results from the PDG appear in Ref.~\cite{Workman:2022ynf}.
    We emphasize that FLAG uses slightly different conventions for the semileptonic extraction of $|V_{cd(cs)}|$ as we used here; for instance they do not include short-distance electroweak corrections to $G_F$ or an error from QED.
    For the leptonic results, we combine the latest experimental averages reported in HFLAV~\cite{HFLAV:2022pwe} with the FLAG averages for $f_D$ and $f_{D_s}$~\cite{Aoki:2021kgd}. 
    ``CKM unitarity'' denotes the global fit result reported by the PDG, which includes all available measurements (for all nine matrix elements) imposing three-generation unitarity.     
    \label{fig:VcdVcs_comparison}
    }
\end{figure}

\begin{figure}
    \centering
    \includegraphics[width=0.75\textwidth]{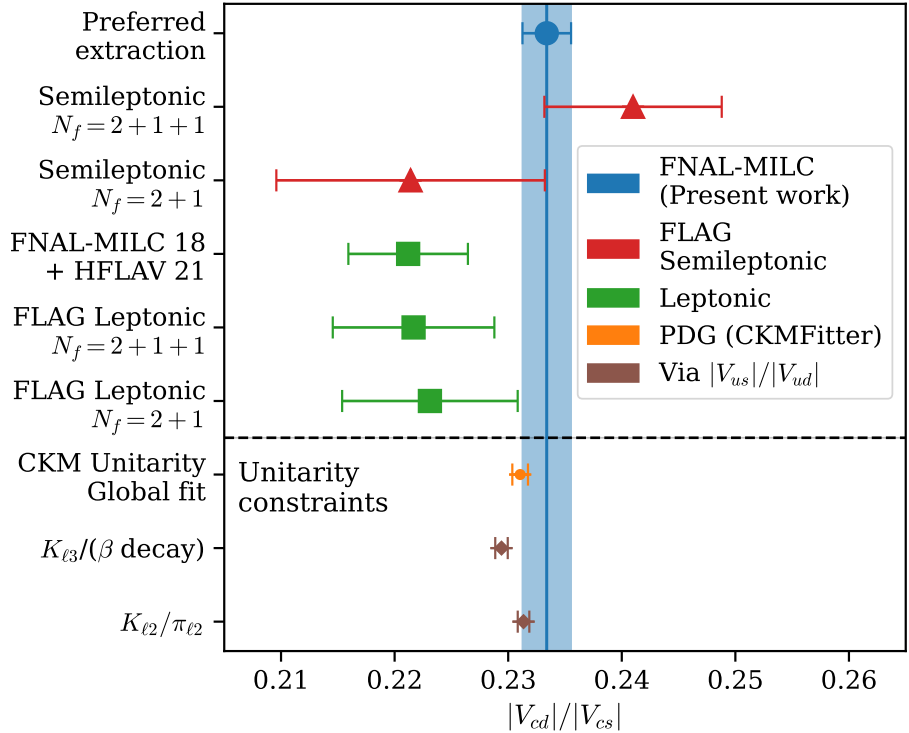}
    \caption{Comparison of different extractions of the ratio $\Vcd/\Vcs$.  
    The blue point and band show the value from the preferred extractions of the present work. 
    Error bands with and without QED error are indistinguishable.
    The red and green points denote semileptonic and leptonic extractions given by FLAG~\cite{Aoki:2021kgd}.
    The points below the dashed line are constraints from unitarity. 
    The orange point is computed using values from CKMFitter's global fit (as reported in the PDG), assuming CKM unitarity.
    The brown points comes from two different extractions of $\Vus/\Vud$ which, as explained in the text, are related to $\Vcd/\Vcs$ by CKM unitarity. 
    \label{fig:vcd_by_vcs} }
\end{figure}

\subsection{Tests of CKM unitarity}
\label{ssec:unitarity}

Our results for $\Vcd$ and $\Vcs$ enable a test of unitary in the second row of the CKM matrix, including theoretical correlations between $\Vcd$ and $\Vcs$.
Using our preferred extractions in \cref{eq:Vcdpreferred} and \cref{eq:Vcspreferred}, and $\Vcb^{\rm incl + excl} = (40.8\pm 1.4)\times10^{-3}$ from a combined average of inclusive and exclusive semileptonic $B$-decays~\cite{Workman:2022ynf}\footnote{
In particular, see the review ``Semileptonic $b$-Hadron Decays, Determination of $V_{cb}$, $V_{ub}$"} yields the following result for the deviation from unitarity in the second row:
\begin{align}
    \Vcd^2 +& \Vcs^2 + \Vcb^2 -1
        =-0.0286(44)^{\rm Expt}(78)^{\rm QCD}[194]^{\rm QED}(28)^{\rm EW}
        = -0.029(22).
\end{align}
Because $\Vcb$ is so small compared to $\Vcd$ and $\Vcs$, numerically indistinguishable results are obtained (within current precision) if inclusive or exclusive values are taken for $\Vcb$.
This result is compatible with three-generation CKM unitary within approximately one standard deviation. 
The precision of this test is roughly $2\%$ and is limited by the systematic uncertainty from QED in our extractions of $\Vcd$ and $\Vcs$. 
We show the constraints on $\Vcd$ and $\Vcs$ from our calculation in \cref{fig:ckm_2nd_row_unitarity}, together with constraints coming from leptonic decays~\cite{Bazavov:2017lyh,HFLAV:2022pwe} and second-row unitarity.
The leptonic inputs used for the green ellipse are summarized in \cref{table:leptonic_inputs}. 
As the figure shows, semileptonic tests of second-row CKM unitarity are now slighty more precise than leptonic tests.
The leptonic and semileptonic results are consistent at the level of roughly 1-2 standard deviations. 

\begin{figure}
    \centering
    \includegraphics[width=0.67\textwidth]{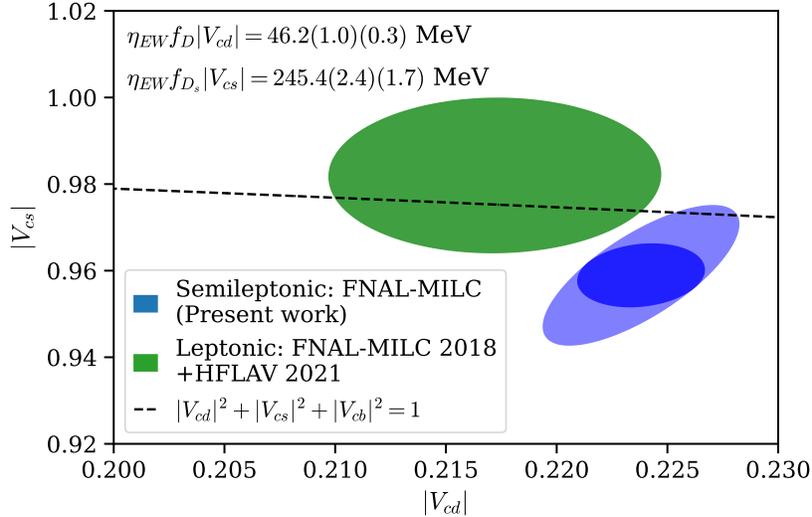}
    \caption{Constraints on $\Vcd$ and $\Vcs$ from our results, $D-$meson leptonic decays, and unitarity. The blue ellipse shows the preferred values of the present work from semileptonic decays in \cref{eq:Vcdpreferred} and \cref{eq:Vcspreferred}.
    The green ellipse is the result of combining the latest results for the products $\eta_{\rm EW}\Vcd f_{D^+}$ and $\eta_{\rm EW} \Vcs f_{D_s}$ with leptonic decay constants from lattice-QCD calculations; the inputs values are summarized in \cref{table:leptonic_inputs}.
    The dotted line comes from assuming unitarity of the second row, taking $\Vcb^{\rm incl + excl} = (40.8\pm 1.4)\times10^{-3}$~\cite{Workman:2022ynf}. 
    In all cases, the ellipses shows the correlated $1\sigma$ (68\%) confidence intervals.
    The inner blue ellipse shows our result without the QED uncertainty.
    }
    \label{fig:ckm_2nd_row_unitarity}
\end{figure}

One can perform further tests of the unitarity of the CKM matrix using the fact that in the Standard Model, $\Vcd=\Vus +\order{A^2\lambda^5}$ and $\Vcs=\Vud + \order{A^2\lambda^4}$. 
Including the dominant corrections~\cite{Buras:1994ec} with the Wolfenstein parameters taken from global unitarity fits by CKMFitter~\cite{Charles:2004jd} (using values from the January 2022 update) gives
\begin{align}
    \Vcs &= 0.97282(32) \text{ from } \Vud^{0^+\to0^+}, \\
    \Vcd &= 0.22317(53) \text{ from } \Vus^{K\ell3},    \\
    \Vcd/\Vcs &= 0.22941(55) \text{ from } \Vus^{K\ell3}/\Vud^{0^+\to0^+},
\end{align}
using $\Vud=0.97367(32)$ from superallowed $0^+\to 0^+$ nuclear $\beta$ decays~\cite{Hardy:2020qwl,Cirigliano:2022yyo} and $\Vus=0.22330(53)$ from $K_{\ell 3}$ decays~\cite{Aoki:2021kgd,Cirigliano:2022yyo}.
Alternatively, the ratio of $K_{\ell2}$ to $\pi_{\ell2}$ decays yields~\cite{Aoki:2021kgd,Cirigliano:2022yyo}
\begin{equation}
    \Vcd/\Vcs = 0.23135(51) \text{ from } |V_{us}/ V_{ud}|^{K_{\ell 2}/\pi_{\ell 2}}.
\end{equation}
As shown in \cref{fig:vcd_by_vcs}, our preferred value in \cref{eq:VcdoverVcs} lies roughly $1\sigma$ above the result coming from  $|V_{us}/V_{ud}|^{K_{\ell2}/\pi_{\ell2}}$ and roughly $2\sigma$ above that from $\Vus^{K_{\ell3}}/\Vud^{0^+\to0^+}$.
Our preferred value for $\Vcd$ in \cref{eq:Vcdpreferred} shows excellent agreement with $\Vcd$ from $\Vus^{K\ell3}$.
Our preferred value for $\Vcs$ in \cref{eq:Vcspreferred} lies somewhat below $\Vcs$ from $\Vud^{0^+\to0^+}$ but is consistent at 1-2 standard deviations.

\begin{table}
\caption{Leptonic inputs used for comparison in \cref{fig:ckm_2nd_row_unitarity}.
HFLAV reports the product $\eta_{\rm EW}\Vcx f_{D_{(s)}}$~\cite{HFLAV:2022pwe}.
Following the prescription of the PDG~\cite{Workman:2022ynf}, we include an EW+QED error of $0.7\%$ for the product $\Vcx f_{D_{(s)}}$.
}
    \centering
    \begin{tabular}{l l }
        \hline\hline
        Value                           & Source \\
        \hline
        $\eta_{\rm EW}\Vcd f_{D^+} = 46.2(1.0)(0.3)^{\rm EW+QED} \MeV$ & HFLAV~\cite{HFLAV:2022pwe}\\
        $\eta_{\rm EW}\Vcs f_{D_s}= 245.4(2.4)(1.7)^{\rm EW+QED} \MeV$  & HFLAV~\cite{HFLAV:2022pwe}\\
        $f_{D^+} = 212.7(0.6) \MeV$     & Fermilab-MILC 2018~\cite{Bazavov:2017lyh}\\
        $f_{D_s} = 249.9(0.4) \MeV$     & Fermilab-MILC 2018~\cite{Bazavov:2017lyh}\\
        $f_{D_s}/f_{D^+} = 1.1749(16)$  & Fermilab-MILC 2018~\cite{Bazavov:2017lyh}\\
        \hline\hline
    \end{tabular}
    \label{table:leptonic_inputs}
\end{table}

\subsection{Lepton flavor universality}
\label{sec:lfu}

For a given semileptonic decay $H\to L\ell\nu$, the lepton flavor universality (LFU) ratio $R_{\mu/e}$ is defined as the ratio of branching fractions into muon versus electron final states 
\begin{align}
R_{\mu/e}^{H\to L}
    \equiv \frac{\mathcal{B}(H\to L\mu\nu)}{\mathcal{B}(H\to Le\nu)}
    = \frac{\Gamma_\mu}{\Gamma_e},
\end{align}
where the total rates to each final state are defined in the usual way,
\begin{align}
    \Gamma_\ell
    \equiv \int_{m_\ell^2}^{q^2_{\rm max}} dq^2\, \left(\frac{d\Gamma}{dq^2}\right).
    \label{eq:total_rate}
\end{align}
In the SM, the LFU ratios are close but not identically equal to unity.
This difference from unity arises from at least three effects.
First, the lower boundary of the integration region in \cref{eq:total_rate} depends on the lepton mass.
Second, the differential decay rate in \cref{eq:dGammadq2} itself depends on the lepton mass, with the scalar form factor contributing more for larger masses. 
Finally, QED corrections depend in principle on both the charges of the final state and the lepton mass. 
The coefficients $\frac{G_F^2}{24\pi^3} \left( \eta_{\rm EW} |V_{cx}| \right)^2$ are independent of $q^2$ and cancel in the ratio, meaning that predictions for $R_{\mu/e}$ are entirely calculable using our lattice-QCD form factors, up to corrections from QED and SIB.
The rates $d\Gamma/dq^2$ for the decay $\Dpi$, using as inputs our form factors $f_0(q^2)$ and $f_+(q^2)$ together with the estimates of systematic uncertainties from QED and SIB (see \cref{ssec:sib_qed}), are shown in \cref{fig:rates} for both semielectronic and semimuonic final states. 
When computing the rates, the meson masses were taken to be the average of the experimentally measured masses for the charged and neutral states (e.g., $D^0$ and $D^+$ or $\pi^0$ and $\pi^+$).
The final results for the SM predictions of the ratios $R_{\mu/e}$ are
\begin{align}
    \RDpi &= \RDpiFinal, \\
    \RDK  &= \RDKFinal , \\
    \RDsK &= \RDsKFinal.
\end{align}
The dominant error is the systematic uncertainty from QED corrections, which we conservatively take to be $0.5\%$, as described in \cref{ssec:sib_qed}.
Our prediction for $\RDK$ is in good agreement with a recent calculation by HPQCD, which found $\RDK = 0.97594(19)^{\rm QCD}[500]^{\rm QED}$
and used the same estimate of the QED uncertainty~\cite{Chakraborty:2021qav}.\footnote{%
The central value we quote here differs slightly from the published value in Ref.~\cite{Chakraborty:2021qav}.
We thank HPQCD for providing the correct central value
(William Parrott, private communication, 16 Dec 2022).}
We also find good agreement with previous lattice QCD results by ETMC and experimental measurements of $\RDpi$ and $\RDK$, as shown in \cref{fig:comparisonLFU}.
The measurement of $\RDpi$ by BES~III for the channel $D^0\to\pi^-$ lies below our result but is consistent at the $2\sigma$ level.
Because the QED error is dominant for the lattice-QCD predictions of the LFU ratios, the insets in \cref{fig:comparisonLFU} compare the lattice-QCD results with the QED uncertainty removed.

\begin{figure}
    \centering
    \includegraphics[width=0.49\textwidth]{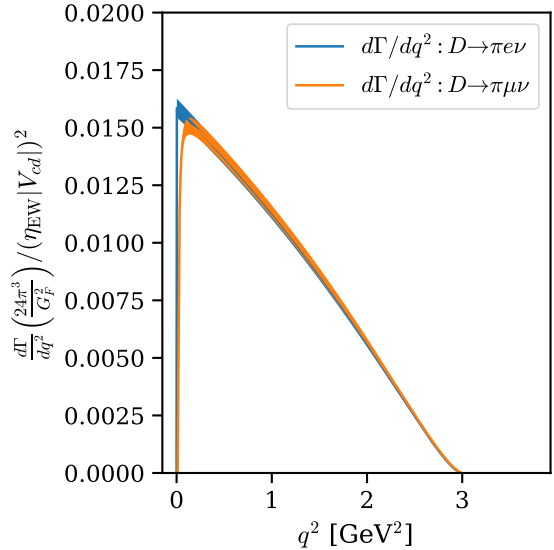}
    \caption{
    Our prediction for the decay rates $d\Gamma/dq^2$ for the decays $\Dpi$.
    The majority of the total rate comes from small $q^2$, where $(d\Gamma/dq^2)_\mu < (d\Gamma/dq^2)_e$.
    The Standard Model therefore predicts 
    $R_{\mu/e} < 1$.
    \label{fig:rates}
    }
\end{figure}

\begin{figure}
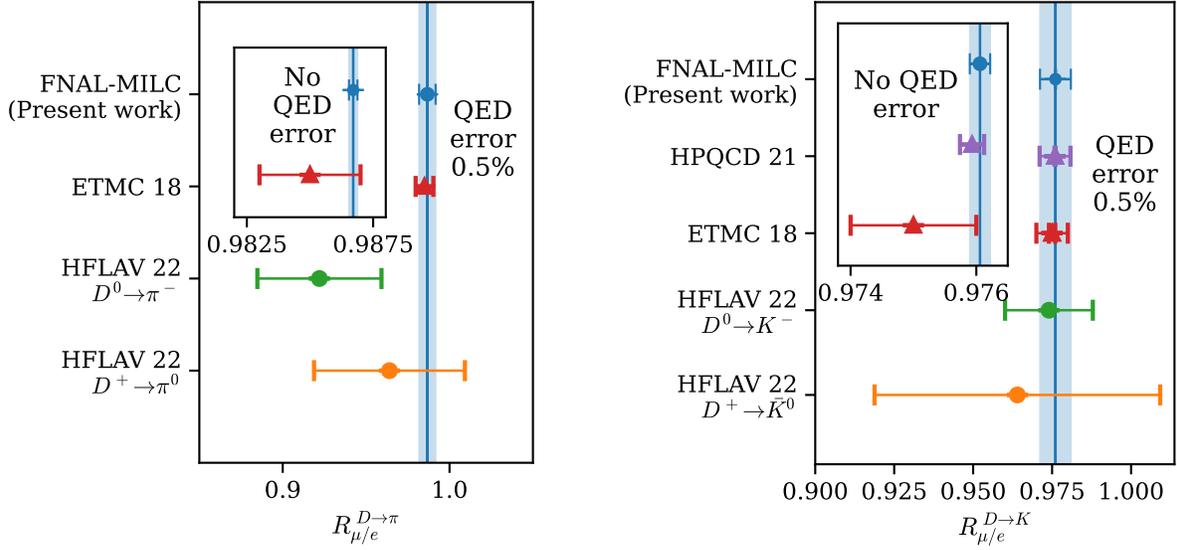

    \centering
    \includegraphics[width=0.49\textwidth]{Figures/Phenomenology/d2pi_lfu_comparison.pdf}
    \includegraphics[width=0.49\textwidth]{Figures/Phenomenology/d2k_lfu_comparison.pdf}
    \caption{Comparison of $R_{\mu/e}^{\Dpi}$ and $R_{\mu/e}^{\DK}$ 
    with experimental HFLAV averages~\cite{HFLAV:2022pwe}, which are dominated by measurements from BES III~\cite{BESIII:2016gbw,BESIII:2018ccy,BESIII:2018nzb}, and other SM predictions from lattice QCD~\cite{Riggio:2017zwh,Chakraborty:2021qav}.
    In the main body of both figures, all lattice QCD results are presented with a QED uncertainty of $0.5\%$. 
    The results from ETMC 18 were reported in the isospin-symmetric limit of QCD, without including QED or SIB uncertainties~\cite{Riggio:2017zwh}, so we have added the QED uncertainty for a like-to-like comparison.
    The insets compare lattice QCD results when QED uncertainty is removed.
    \label{fig:comparisonLFU}}
\end{figure}

\clearpage

\section{Conclusions \label{sec:conclusions}}

We have calculated the hadronic form factors $f_+(q^2)$ and $f_0(q^2)$ relevant for the semileptonic decays $\Dpi\ell\nu$, $\DK\ell\nu$, and $\DsK\ell\nu$ using lattice QCD.
These decays occur at tree level in the SM and are important channels for determining the CKM matrix elements $\Vcd$ and $\Vcs$.
Our calculation uses $N_f = 2+1+1$ flavors of dynamical staggered quarks and includes several ensembles with all quarks near their physical masses.
The use of the HISQ action permits all the quarks to be treated with the same relativistic light-quark action and allows for nonperturbative renormalization using PCVC, \cref{eq:PCVC}.
Our results improve significantly on the previous precision for the form factors for $\Dpi$ and $\DsK$, and have precision comparable to that of recent $N_f=2+1+1$ calculations by HPQCD for $\DK$~\cite{Chakraborty:2021qav,Parrott:2022rgu}.
We agree well with HPQCD's $\DK$ form factors over the entire kinematic range, especially with their latest results in Ref.~\cite{Parrott:2022rgu}, while for both $\Dpi$ and $\DK$, our form factors are significantly larger near $q^2_{\rm max}$ than the $N_f=2+1+1$ results of ETMC~\cite{Lubicz:2017syv}. 
\cref{table:z_results} shows the $z$-expansion parameters from which our final results for the form factors, computed in isospin-symmetric QCD where $m_u=m_d$, can be reconstructed, while a complete error budget, including all statistical and systematic uncertainties, is given in \cref{table:final_error_budget} for the edges of the kinematic range.

Our results suggest a very mild spectator dependence for $\Dpi$ and $\DsK$, with close agreement at $\lesssim 2\%$ level throughout the kinematic range between the respective form factors (cf. \cref{fig:spectator_comparison}).
This picture was also recently confirmed, within experimental uncertainty, by the first measurement of the decay $\DsK$ by BES~III~\cite{BESIII:2018xre}.

When combined with the available experimental data for the corresponding decay rates, summarized in \cref{table:expt_summary}, our form factors enable the extraction of the CKM matrix elements $\Vcd$ and $\Vcs$ with percent-level uncertainties.
These extractions include correlations between all the lattice form factors and between the different experimental channels.\footnote{In the supplementary material, we provide correlated values for all the $z$-expansion coefficients needed to reproduce our final results for all three decays.}
The values obtained from our preferred extractions are
\begin{align*}
\Vcd^{\Dpi \ell^+ \nu} &= \VcdCombined,\\
\Vcd^{\DsK e^+ \nu} &= \VcdElectronicDsK,\\
\Vcs^{\DK \ell^+\nu} &= \VcsCombined. \\
\Vcd/\Vcs & = 0.2329(13)^{\rm Expt}(16)^{\rm QCD}(02)^{\rm SIB}[11]^{\rm QED}
\end{align*}
For $|V_{cd}|$ we obtain the most precise determination to date, with lattice-QCD form factors errors that, for the first time in a semileptonic extraction, are commensurate with experimental uncertainties. The improved determination of $\Dpi$ form factors, together with the fact that we account for theoretical correlations among channels, also allows us to provide the most precise determination of the ratio $\Vcd/\Vcs$, around a factor of two more precise than the leptonic determination.
The rate for $\DsK$ was only recently measured for the first time by BES~III~\cite{BESIII:2018xre}, and our calculation delivers the first extraction of $\Vcd^{\DsK}$. 
Although this determination is not yet competitive with the one from $\Dpi$, the error is dominated by the statistics-limited experimental uncertainty.
Our result for $\Vcd^{\DsK}$ lies roughly $2\sigma$ above $\Vcd^{\Dpi}$, albeit with large uncertainty.
Experimental improvements for this Cabibbo-suppressed decay would immediately give improved precision for $\Vcd^{\DsK}$ and help clarify the situation. 

Our determinations of $\Vcd$ and $\Vcs$, combined with the value of $\Vcb$ from Ref.~\cite{Workman:2022ynf}, give a precise test of second-row CKM unitarity.
We find consistency with unitarity at the level of roughly $2\%$ and one standard deviation, with an uncertainty dominated by the systematic effect of QED.
As shown in \cref{fig:ckm_2nd_row_unitarity}, the precision of the semileptonic constraint is now slightly better than the corresponding leptonic one.

After demonstrating consistency between the form factor shapes from our calculations and those measured in experiments, we computed the SM prediction for the lepton flavor universality ratios $R_{\mu/e}$ with sub-percent precision for all three decays:
\begin{align*}
    \RDpi &= \RDpiFinal,\\
    \RDK  &= \RDKFinal ,\\
    \RDsK &= \RDsKFinal.
\end{align*}
These results agree with previous $N_f=2+1+1$ lattice calculations, considerably improving the precision for $D\to\pi$, and with experimental measurements within $2\sigma$, for $\Dpi$ and $\DK$.

With the total precision for $\Vcd$ and $\Vcs$ approaching the subpercent level, the effects of the scalar form factor in the differential rate, \cref{eq:dGammadq2}, become quantitatively important.
For semielectronic decays, contributions from $f_0$ enter at roughly the $1\%$ level in the lowest $q^2$ bin.
For semimuonic decays, the effect is much larger, a roughly $10\%$ effect in the lowest $q^2$ bin and a few-percent effect throughout the rest of the kinematic range.
\Cref{fig:Vcs_drop_f0} showed that naively neglecting contributions from $f_0$ can shift values for $\Vcs$ by a few percent in the case of $D\to K \mu\nu$ (similar results hold for $D\to\pi\mu\nu$).

Future progress in the precision of $\Vcd$, $\Vcs$, and the LFU ratios will depend crucially on improved understanding of QED corrections to these decays, which are already the dominant source of uncertainty. The one exception is the decay $\Vcs^{\DsK \ell^+\nu}$, for which the experimental error is still large.
One avenue for improvement is through EFT calculations in the spirit of those for $K\to\pi\ell\nu$~\cite{Cirigliano:2008wn,Cirigliano:2011ny,Seng:2021boy,Seng:2021wcf,Seng:2022wcw}, which were used in \cref{ssec:sib_qed} to estimate our systematic uncertainties (cf. \cref{ssec:sib_qed}).
As usual, the intermediate mass of the charm quark (which is simultaneously too heavy for $\chi$PT to apply and too light for reliable application of HQET) may present a challenge for robust treatment with EFT.
Another possibility is carrying out lattice simulations to compute the structure-dependent QED corrections to the semileptonic decay amplitudes.
Such calculations have not yet reached a mature state, but the field is progressing rapidly, particularly for the QED corrections to leptonic decays~\cite{Carrasco:2015xwa, Giusti:2017dwk, DiCarlo:2019thl, Desiderio:2020oej, Frezzotti:2020bfa, Frezzotti:2021slr, Gagliardi:2022szw}.

Regarding the pure QCD calculation, it should be straightforward to improve the precision of our form factor results.
A leading contribution to the error budget is statistics (cf. \cref{table:final_error_budget} and \cref{fig:d2pi_final_error_budget,fig:d2k_final_error_budget,fig:ds2k_final_error_budget}), for which the physical mass ensembles at $a\approx 0.06\fm$ and $0.09\fm$ play the largest role.
As part of our ongoing work toward $B$-meson semileptonic decays, we are simulating on a finer physical-mass ensemble with $a\approx 0.04\fm$.
We expect that new data from this ensemble will reduce the uncertainties both from statistics and from the continuum extrapolation.
Future calculations will also benefit from ongoing work in the community to improve scale-setting measurements (e.g., $w_0$ or the $\Omega$-baryon mass) on the HISQ ensembles used in this work.

\section*{Acknowledgments}
We thank Claude Bernard, Urs Heller, Javad Komijani, and Jack Laiho for collaboration and essential contributions to previous projects, which paved the way for this work.  
We also thank Claude Bernard for helpful advice about scale setting and about the chiral expansion and Javad Komijani for useful correspondence regarding $\alpha_s$ on the HISQ ensembles.
We thank Jake Bennett and Alan Schwartz for answering questions about the Belle data.
We thank the BES~III collaboration, and especially Lei Li and Hailong Ma, for providing us with their data for $\DsK e \nu$ as well as correlation data for $D\to\pi\mu\nu$.
We thank Ryan Mitchell for useful comments about the CLEO and BES~III detectors.
We thank William Parrott for answering questions about HPQCD's evaluation of LFU ratios.

This material is based upon work supported in part 
by the U.S. Department of Energy, Office of Science under grant Contract Numbers
DE-SC0010120 (S.G.),
DE-SC0011090 (W.J.), DE-SC0021006 (W.J.),
DE-SC0015655 (A.X.K., Z.G., A.T.L.), and
DE-SC0010005 (E.T.N.); 
by the U.S. National Science Foundation under Grants No. PHY17-19626 and PHY20-13064 (C.D., A.V.);
by the Simons Foundation under their Simons Fellows in Theoretical Physics program (A.X.K.);
by SRA (Spain) under Grant No.\ PID2019-106087GB-C21 / 10.13039/501100011033 (E.G.);
by the Junta de Andalucía (Spain) under Grants No.\ FQM-101, A-FQM-467-UGR18 (FEDER), and P18-FR-4314 (E.G.);
by AEI (Spain) under Grant No.\ RYC2020-030244-I / AEI / 10.13039/501100011033 (A.V.).
This document was prepared by the Fermilab Lattice and MILC Collaborations using the resources of the Fermi National Accelerator Laboratory (Fermilab), a U.S. Department of Energy, Office of Science, HEP User Facility.
Fermilab is managed by Fermi Research Alliance, LLC (FRA), acting under Contract No.\ DE-AC02-07CH11359.

Computations for this work were carried out in part on facilities of the USQCD Collaboration, which are funded by the Office of Science of the U.S. Department of Energy.
An award of computer time was provided by the Innovative and Novel Computational Impact on Theory and Experiment (INCITE) program. This research used resources of the Argonne Leadership Computing Facility, which is a DOE Office of Science User Facility supported under contract DE-AC02-06CH11357. This research also used resources of the Oak Ridge Leadership Computing Facility, which is a DOE Office of Science User Facility supported under Contract No.\ DE-AC05-00OR22725.
This research used resources of the National Energy Research Scientific Computing Center (NERSC), a U.S. Department of Energy Office of Science User Facility located at Lawrence Berkeley National Laboratory, operated under Contract No.\ DE-AC02-05CH11231.
The authors acknowledge support from the ASCR Leadership Computing Challenge (ALCC) in the form of time on the computers Summit and Theta.
The authors acknowledge the \href{http://www.tacc.utexas.edu}{Texas Advanced Computing Center (TACC)} at The University of Texas at Austin for providing HPC resources that have contributed to the research results reported within this paper.
This research is part of the Frontera computing project at the Texas Advanced Computing Center. Frontera is made possible by National Science Foundation award OAC-1818253~\cite{Frontera}.
This work used the Extreme Science and Engineering Discovery Environment (XSEDE), which is supported by National Science Foundation Grant No.\ ACI-1548562.
This work used XSEDE Ranch through the allocation TG-MCA93S002~\cite{XSEDE}.

\clearpage
\appendix

\section{Analysis of staggered correlation functions}
\label{sec:stagg}

As demonstrated in Ref.~\cite{Bailey:2008wp}, averaging over adjacent time slices can dramatically suppress contributions from oscillating states.
Consider a two-point correlation function $C_2(t)$
Let $E$ denote the ground state energy.
Then the averaged two-point function is $\overline{C}_2(t)$:
\begin{align}
\overline{C}_2(t) &=
	\frac{e^{-E t}}{4}
	\left[
		\frac{C_2(t)}{e^{-E t}}
		+ \frac{2 C_2(t+1)}{e^{-E (t+1)}}
		+ \frac{C_2(t+2)}{e^{-E (t+2)}}
	\right], \label{eq:c2_avg} \\
	&= \frac{\abs{\matrixel{\vacuum}{\mathcal{O}}{E}}^2}{2 E} e^{-E t} + \mathcal{O}({\Delta E^2}), 
\end{align}
where $\mathcal{O}$ is an interpolating operator as given in \cref{table:spin_taste} and $\ket{\vacuum}$ is the QCD vacuum.
Similarly, consider a three-point correlation function $C_3(t,T)$ with ground states $E_L$ and $E_H$ at the source and sink, respectively, and connected by the current $J$.
The averaged three-point function is $\overline{C}_3(t,T)$:
\begin{align}
\overline{C}_3(t, T)
	= &\frac{ e^{-E_L t} e^{-E_H(T-t)} }{8} \times \nonumber \\
	    		& \left[ \frac{C_3(t,T)}{e^{-E_L t} e^{-E_H(T-t)}}
    			+ \frac{2 C_3(t+1,T)}{e^{-E_L (t+1)} e^{-E_H(T-t-1)}}
	    		+ \frac{C_3(t+2,T)}{e^{-E_L (t+2)} e^{-E_H(T-t-2)}} \right.  \label{eq:c3_avg} \\
			& + \left. \frac{C_3(t,T+1)}{e^{-E_L t} e^{-E_H(T+1-t)}}
	    		+ \frac{2 C_3(t+1,T+1)}{e^{-E_L (t+1)} e^{-E_H(T-t)}}
	    		+ \frac{C_3(t+2,T+1)}{e^{-E_L (t+2)} e^{-E_H(T-t-1)}} \right] \nonumber \\
	=&  \frac
	        {\matrixel{\vacuum}{\mathcal{O}_L}{E_L}
	        \matrixel{E_L}{J}{E_H}
	        \matrixel{E_H}{\mathcal{O}_H}{\vacuum}}
	        {4 E_L E_H}
       e^{-E_L t} e^{-E_H(T-t)} + \order{\Delta E_H^2,\Delta E_L^2}.
\end{align}
These averaged two- and three-point functions are used in \cref{eq:ratio_v4,eq:ratio_vi,eq:ratio_s}.

\section{Discretization errors for HISQ}
\label{ssec:HQETErrors}

Several of the results in this appendix were first derived in Ref.~\cite{Monahan:2012dq}.
Our discussion follows closely that of Ref.~\cite{Bazavov:2017lyh}.
Let $am_0$ and $am_1$ denote a quark's bare and rest masses, respectively.
The two quantities are related by the transcendental equation
\begin{equation}
    am_0 = a \widetilde{\mathcal{S}h}(am_1)
         = \sinh(am_1)\left(1 - \frac{1}{6}\mathbb{N}(am_1) \sinh^2(am_1) \right).
\end{equation}
In this expression, $\mathbb{N}(am_1)$ denotes the coefficient of the Naik improvement term appearing in the HISQ action
\begin{align}
    \mathbb{N}(am_1) &= \frac{4 - 2 \sqrt{1 + 3X(am_1)}}{\sinh^2(a m_1)}, \\
    X(am_1) &= \frac{2 am_1}{\sinh(2am_1)}.
\end{align}
When bare masses are not small, $am_0 \centernot\ll 1$, quark bilinears can lose their conventional normalization.
This phenomenon has been discussed in the literature for both Wilson~\cite{El-Khadra:1996wdx,El-Khadra:1997kgw} and staggered fermions.
Consider a quark bilinear containing a heavy quark $h$ and a generic (heavy or light) quark $x$.
Arguments from leading-order HQET~\citep{Bazavov:2017lyh} show that the conventional normalization can be restored, at leading order, by multiplying matrix elements containing the bilinear by the factor $Z^{\rm HQET,\,LO}_{hx}$
\begin{align}
    \widetilde{\mathcal{C}h}(am_1)
    &= \cosh(am_1) \left(1 - \frac{1}{2}\mathbb{N}(am_1) \sinh^2(a m_1) \right), \\
    Z^{\rm HQET,\,LO}_{hx}
    &=\begin{cases}
    \sqrt{\widetilde{\mathcal{C}h}(am_{1,h})\,\widetilde{\mathcal{C}h}(am_{1,x})}, \quad x \text{ nonrelativistic}\\
    \sqrt{\widetilde{\mathcal{C}h}(am_{1,h})},\phantom{\,\widetilde{\mathcal{C}h}(am_{1,x})} \quad x \text{ ultrarelativistic}
    \end{cases}\hspace{-1em}.
\end{align}
Residual discretization effects from next-to-leading HQET appear at order $x_h^4$ and $\alpha_s x_h^2$, where $x_h$ is the parameter linear in the heavy quark mass defined in \cref{eq:xh}.

\section{Shrinkage of covariance and correlation matrices \label{sec:shrinkage}}

Analysis of highly correlated Monte Carlo data encountered in lattice gauge theory presents a formidable statistical challenge.
Many problems are phrased in terms of least-squares minimization of a suitable $\chi^2$ function.
Examples in the present work include the correlator analysis in \cref{sec:correlator_analysis} to extract energies and matrix elements and the chiral-continuum fits of \cref{sec:chiral_ctm}.
The essential difficulty is that covariance matrix appearing in the $\chi^2$ functions can be nearly singular, and the small eigenvalues in the sample covariance matrix are poorly determined.
Shrinkage estimators, which we review here, are a class of tools for improving the sample covariance matrix by ``regulating'' the small eigenvalues.
For motivation, we follow closely the discussion and notation of Ref.~\cite{Ledoit:2004}, beginning with a technical result.

\begin{lemma}[Ledoit and Wolf]
Let $M$ be a real, symmetric matrix.
The eigenvalues are the most dispersed diagonal elements obtainable by rotation.
\end{lemma}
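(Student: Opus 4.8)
The plan is to make precise what ``most dispersed'' means and then reduce the claim to a classical majorization fact. Let $M$ be real symmetric with eigenvalues $\lambda_1 \ge \cdots \ge \lambda_n$, and for an orthogonal matrix $Q$ write $d(Q) = (\,(Q^T M Q)_{11}, \dots, (Q^T M Q)_{nn}\,)$ for the vector of diagonal entries of the rotated matrix. ``Dispersion'' should be read as the spread of the multiset of diagonal entries; I would formalize it by the statement that the eigenvalue vector $\lambda = (\lambda_1,\dots,\lambda_n)$ \emph{majorizes} $d(Q)$ for every orthogonal $Q$, i.e., after sorting both in decreasing order,
\begin{equation}
\sum_{i=1}^k \lambda_i \;\ge\; \sum_{i=1}^k d(Q)_{[i]} \quad (1 \le k \le n), \qquad \sum_{i=1}^n \lambda_i = \sum_{i=1}^n d(Q)_i .
\label{eq:major}
\end{equation}
From \eqref{eq:major}, any Schur-convex functional of the diagonal (variance of the entries, largest gap, etc.) is maximized precisely when $Q$ diagonalizes $M$, which is the content of the lemma. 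So the real work is establishing \eqref{eq:major}.

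The key steps: first, the trace condition in \eqref{eq:major} is immediate since $\operatorname{tr}(Q^T M Q) = \operatorname{tr}(M) = \sum_i \lambda_i$ is rotation-invariant. Second, for the partial-sum inequalities I would invoke the Schur–Horn setup: write $M = U \Lambda U^T$ with $\Lambda = \operatorname{diag}(\lambda_i)$ and $U$ orthogonal, so that for any orthogonal $Q$ the diagonal entries are $d(Q)_j = \sum_i S_{ji}\,\lambda_i$ where $S_{ji} = (Q^T U)_{ji}^2$. The matrix $S = (S_{ji})$ is \emph{doubly stochastic} (rows and columns of an orthogonal matrix are unit vectors in $\ell^2$, hence $\sum_i S_{ji} = \sum_j S_{ji} = 1$), so $d(Q) = S\lambda$ is a doubly-stochastic image of $\lambda$. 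Third, by the classical theorem of Hardy–Littlewood–Pólya (or Birkhoff's theorem plus convexity), $x$ majorizes $Sx$ for every doubly stochastic $S$; applying this with $x = \lambda$ gives exactly \eqref{eq:major}. Finally, equality in the top partial sums forces $S$ to be a permutation matrix, i.e., $Q$ diagonalizes $M$, recovering the ``obtainable'' sharpness.

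The main obstacle is not any single computation but pinning down the intended reading of ``most dispersed,'' since the lemma as quoted is informal; the mathematically robust route is the majorization statement above, which simultaneously covers every reasonable notion of dispersion via Schur-convexity. A secondary point to be careful about is that \eqref{eq:major} only asserts $\lambda$ majorizes $d(Q)$ — the converse direction of Schur–Horn (that every doubly-stochastic image of $\lambda$ is realized as some diagonal) is not needed here and I would not belabor it. Given the paper's purpose (motivating shrinkage by noting that the sample covariance's eigenvalues are ``overspread'' relative to the truth), I would keep the proof to the three-line majorization argument and cite Marshall–Olkin or Bhatia for the HLP/Schur–Horn ingredients rather than reprove them.
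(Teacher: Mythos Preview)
Your argument is correct, but it takes a genuinely different route from the paper's. The paper fixes ``dispersion'' to mean the variance of the diagonal entries, $\frac{1}{p}\sum_i\bigl((R^TMR)_{ii}-\langle\lambda\rangle\bigr)^2$, and observes that the rotation-invariant quantity $\Tr\bigl[(R^TMR-\langle\lambda\rangle I)^2\bigr]$ splits as this dispersion plus the non-negative sum of squared off-diagonal entries $\sum_{i\neq j}(R^TMR)_{ij}^2$; since the latter vanishes exactly when $R$ diagonalizes $M$, the dispersion is maximal there. This is a two-line Frobenius-norm identity requiring no external machinery. Your majorization route via the doubly-stochastic matrix $S_{ji}=(Q^TU)_{ji}^2$ and Hardy--Littlewood--P\'olya is more general: it shows the eigenvalues majorize \emph{every} rotated diagonal, and hence maximize any Schur-convex dispersion functional, not just the variance. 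The trade-off is that you invoke a classical theorem rather than give a self-contained computation; for the paper's application (motivating why sample eigenvalues are overspread), the elementary variance-specific argument is all that is needed, but your version would be the natural one if a different spread measure were in play. One small caveat: your equality statement ``$S$ must be a permutation'' is only literally true when the eigenvalues are distinct; with repeated eigenvalues $S$ need not be a permutation, though the conclusion that the diagonal coincides with the eigenvalue multiset still holds.
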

\begin{proof}
Consider a real, symmetric $p\times p$ matrix $M$.
Let $R \in \text{SO}(p)$ be a rotation, under which $M$ transforms into $R^T M R$.
The average of the eigenvalue spectrum $\avg{\lambda} \equiv (1/p) \Tr[M]$ is clearly invariant under rotations.
Let $\bm{v}_i$ denote the $i^\text{th}$ column of the rotation $R$.
The $i^\text{th}$ diagonal element of $R^T M R$ is $\bm{v}_i^T M \bm{v}_i$, and the dispersion of the diagonal elements around the average of the spectrum is defined via
\begin{equation}
    \frac{1}{p} \sum_{i}^{p} \left( \bm{v}_i^T M \bm{v}_i - \avg{\lambda} \right)^2.
\end{equation}
This expression is not invariant under rotations, but a closely related quantity is:
\begin{equation}
    \Tr[ (R^T M R - \avg{\lambda} I)^2 ]
    = \frac{1}{p} \sum_{i}^{p} \left( \bm{v}_i^T M \bm{v}_i - \avg{\lambda} \right)^2
    + \sum_{i=1}^{p} \sum_{\substack{j=1 \\ j\neq i}}^{p} \left(\bm{v}_i^T M \bm{v}_j\right)^2.
\end{equation}
The second term on the right-hand side is non-negative and vanishes precisely when the rotation $R$ diagonalizes $M$.
In other words, since the left-hand side is constant, the dispersion is maximized when the eigenvalues of $M$ appear on the diagonals of $R^T M R$, which was to be shown.
\end{proof}

This result has important consequence for the near-singular covariance matrices encountered in practical problems.
Let $\diag({\bm{\lambda}}) = U^T \Sigma U$ denote the spectral decomposition of the ``true'' population covariance matrix of a statistical distribution, where $U$ contains the eigenvectors and $\bm{\lambda}$ are the eigenvalues.
The corresponding sample covariance matrix has decomposition $\diag({\bm{\lambda}_n}) = U_n^T S_n U_n$.
As usual, $S_n$ is an unbiased estimator of $\Sigma$.
Therefore, $U^T S_n U$ is also an unbiased estimator of $\diag{\bm{\lambda}}$.
Unfortunately, one does not typically have access to the population eigenvectors of $U$ and is instead obliged to work with the sample estimates of $U_n$.
As the preceding lemma makes clear, the sample eigenvalues $\bm{\lambda}_n$ will be more widely dispersed than those of the population $\bm{\lambda}$.
Indeed, $\bm{\lambda}_n$ is \emph{not} an unbiased estimator of $U^T \Sigma U$ due to correlations between the eigenvectors in $U_n$ and the eigenvalues in $\bm{\lambda}_n$.
The general idea behind shrinkage estimators is to apply some function which decreases the dispersion of the sample eigenvalues $\bm{\lambda}_n$ to better approximate the population $\bm{\lambda}$.

The remainder of this appendix is organized as follows.
\Cref{ssec:linear_shrinakge} describes linear shrinkage, which is used in the chiral-continuum analysis (cf.~\cref{ssec:chiral_ctm_fits}).
\Cref{ssec:nonlinear_shrinkage} describes nonlinear shrinkage, which is used in the correlator fits (cf.~\cref{sec:correlator_analysis}).

\subsection{Linear shrinkage\label{ssec:linear_shrinakge}}

Linear shrinkage was introduced by Ledoit and Wolf in Ref.~\cite{Ledoit:2004}.
There seems to be some knowledge of this technique in the recent lattice-gauge-theory literature~\cite{Rinaldi:2019thf}.
Because lattice data often vary over many orders of magnitude, it is common to invert the correlation matrix instead of the covariance matrix, with shrinkage techniques being applied to them instead.

The linear shrinkage estimator $\hat{C}_n$ is defined as the convex sum of two matrices:
\begin{equation}
    \hat{C}_n = (1 - \lambda) C_n + \lambda C_{\rm target},
\end{equation}
with $\lambda \in [0,1]$.
As the parameter $\lambda$ is varied, the shrinkage estimator smoothly interpolates between the sample correlation matrix $C_n$ and the target matrix $C_{\rm target}$.
Many options are possible for $C_{\rm target}$.
Examples in the literature~\cite{Ledoit:2004,Rinaldi:2019thf} advocate using the identity matrix as the shrinkage target.
The idea is that suppressing the correlations by a small amount (say, $\lambda = 0.05$ or $0.1$) will correct the small eigenvalues while preserving the rest of the correlated structure to the data.\footnote{
As discussed in the main text, the preferred value of $\lambda=0.1$ was chosen to regulate the small eigenvalues (thus giving good fits) with unnecessarily discarding correlations, which can also cause fit quality to degrade.
Ultimately, our results are insensitive to the precise choice of $\lambda$, as shown in \cref{fig:d2pi_stability,fig:d2k_stability,fig:ds2k_stability}.
}
Besides using the identity matrix, our analysis also experimented with block-diagonal matrices (e.g., to retain the full correlations between different momenta at fixed valence mass).
The more complicated choices did not improve fit results compared with the simpler choice of the identity matrix.
The preferred chiral-continuum analysis of \cref{sec:chiral_ctm} therefore uses only the identity matrix.
Once a shrinkage estimator for the correlation matrix has been chosen, the corresponding covariance matrix follows in the usual way,
\begin{equation}
    \hat{S}_n = \diag(\bm{\sigma}) \hat{C}_n \diag(\bm{\sigma}),
\end{equation}
where $\bm{\sigma}$ is a vector containing the standard deviations.
The shrinkage estimator, which enjoys a smaller condition number and approximates the population covariance matrix better than the sample estimate,
is then inverted to give $\hat{S}_n^{-1}$, which is used in our fits.

\subsection{Nonlinear shrinkage\label{ssec:nonlinear_shrinkage}}

Nonlinear shrinkage has been described by Ledoit and Wolf~\cite{Ledoit:2018}, whose notation and presentation we follow closely.
A complete theoretical justification exceeds the scope of the work; the interested reader is invited to consult the original paper for proofs, additional references, and numerical evidence supporting the applicability in realistic finite data.
To keep the present work self-contained, we restrict ourselves to reproducing the required formulae with some discussion.

Suppose the sample covariance matrix $S_n$ is computed from $n$ observations of $p$ total random variables.
Consider the diagonalization of this matrix, $S_n = U_n^T \diag(\bm{\lambda}_n) U_n$.
Individual eigenvalues are denoted $\lambda_{n,i}$, $i\in\{1,\dots,p\}$ and, 
without loss of generality, may be supposed to be sorted in ascending order.
For large $p$ and $n$, suppose the eigenvalues follow some asymptotic cumulative distribution function $F(x)$ with associated spectral density $f(x) = F'(x)$.
Nonlinear shrinkage is a method for adjusting the empirical spectral density locally to improve the spread in eigenvalues for finite $n$.

Nonlinear shrinkage is based on the Hilbert transform, which maps continuous real functions $g(x)$ to $\mathcal{H}_g(x)$ via
\begin{equation}
    \mathcal{H}_g(x) \equiv \frac{1}{\pi} \PV \int_{-\infty}^\infty dx'\,\frac{g(x')}{x' - x}\,,
\end{equation}
where $\PV$ denotes the Cauchy principal value.
Conceptually, and as described at length in Ref.~\cite{Ledoit:2018}, the Hilbert transform acts like a local attractor, pulling eigenvalues towards regions of greater density.
Define the oracle function 
\begin{align}
    d(x) &\equiv \frac{x}{[\pi c x f(x)]^2 + [1 - c - \pi c x \mathcal{H}_f(x)]} \nonumber \\
        &= \frac{x}{1 + c^2 [ \varphi(x)^2 + \mathcal{H}_\varphi(x)^2] -2 c \mathcal{H}_\varphi(x)}\,,
\end{align}
where $c\equiv p/n$ is the concentration ratio, $\varphi(x) = \pi x f(x)$ and $\mathcal{H}_\varphi(x) = 1 + \pi x \mathcal{H}_f(x)$ is its Hilbert transform.
Given a set of sample eigenvalues $\bm{\lambda}_n$, $d(\bm{\lambda}_n)$ provides a shrinkage estimator.
To see this, first observe that as the number of samples becomes large ($c\to 0$), no shrinkage occurs ($d(x) \to x$), in agreement with intuition.
For small but finite concentration, the linear term in the denominator will dominate:
\begin{equation}
    d(x) \approx x \left[1 + 2 c \mathcal{H}_\varphi(x) + \order{c^2} \right].
\end{equation}
Since the Hilbert transform attracts eigenvalues, anomalously large or small eigenvalues will be pulled locally toward regions of higher density, shrinking the spectrum.
The same qualitative behavior is also present for generic $c$, as described in Ref.~\cite{Ledoit:2018}.
For some given finite data set, the underlying distributions $F(x)$ and $f(x)$ are typically unknown.
Moreover, since neither the empirical density nor the empirical CDF are continuous (the former is a sum of $\delta$ functions, one at each eigenvalue), the necessary Hilbert transform does not exist.
Instead, one works with a kernel estimator for $f(x)$, for which the necessary derivatives do exist:
\begin{align}
    \tilde{f}_n(x)
    &= \frac{1}{p} \sum_{i=1}^p \frac{1}{h_{n,i}} k\left(\frac{x-\lambda_{n,i}}{h_{n,i}} \right),\\
    \mathcal{H}_{\tilde{f}_n}(x)
    &=\frac{1}{p} \sum_{i=1}^p \frac{1}{h_{n,i}}\mathcal{H}_k\left(\frac{x-\lambda_{n,i}}{h_{n,i}} \right),
\end{align}
where $h_{n,i} \equiv \lambda_{n,i} h_n$ for some suitable choice of bandwidth $h_n$.
In principle, many possibilities exist for the choice of the kernel function $k$.
In practice, it is advantageous to take a kernel with finite support and an analytically calculable Hilbert transform.
Reference~\cite{Ledoit:2018} advocates choosing the Wigner semicircle distribution,
\begin{align}
    k(x)             &= \frac{\sqrt{[4-x^2]^+}}{2\pi}, \\
    \mathcal{H}_k(x) &= \frac{\sgn(x)\sqrt{[4-x^2]^+} - x}{2\pi},
\end{align}
where $[x]^+ \equiv \max\{0, x\}$ for any $x\in \mathbb{R}$.
With this choice, the kernel estimators $\tilde{f}_n$ and $\mathcal{H}_{\tilde{f}_n}$ take the following form when evaluated at the eigenvalues:
\begin{align}
    \tilde{f}_n(\lambda_{n,i})
    &=\frac{1}{p} \sum_{i=1}^p \frac{\sqrt{4\lambda_{n,j}^2 h_n^2 - (\lambda_{n,i} - \lambda_{n,j})^2}}{2\pi\lambda_{n,j}^2 h_n^2}, \\
    \mathcal{H}_{\tilde{f}_n}(\lambda_{n,i})
    &=\frac{1}{p} \sum_{i=1}^p 
    \frac
        {\sgn(\lambda_{n,i} - \lambda_{n,j})
        \sqrt{[(\lambda_{n,i} - \lambda_{n,j})^2 - 4\lambda_{n,j}^2 h_n^2]^+} - \lambda_{n,i} + \lambda_{n,j}}
        {2\pi \lambda_{n,j}^2 h_n^2}.
\end{align}
Likewise, the sample estimator for the oracle function becomes 
\begin{equation}
    \tilde{d}_{n,i}
        = \frac{\lambda_{n,i}}{\left[\pi c \lambda_{n,i} \tilde{f}_n(\lambda_{n,i})\right]^2 
        + \left[1 - c + \pi c\lambda_{n,i} \mathcal{H}_{\tilde{f}_n}(\lambda_{n,i})\right]^2}.
\end{equation}
Some freedom exists in the choice of bandwidth.
For reasons of statistical convergence, {\it {\it i.e.}}, so that $\tilde{f}_n(x) \to f(x)$ and $\mathcal{H}_{\tilde{f}_n}(x) \to \mathcal{H}_f(x)$ uniformly in probability, Ref.~\cite{Ledoit:2018} argues that the bandwidth should vanish for large $n$ ($\lim_{n\to\infty} h_n = 0$) but not decrease too quickly ($\lim_{n\to\infty} n h_n^{5/2} = 0$).
We follow their recommendation of choosing $h_n \equiv n^{-0.35}$.

After shrinkage is applied, the new ``eigenvalues'' $\tilde{d}_{n,i}$ computed from $\lambda_{n,i}$ are not guaranteed to maintain their ascending order.
For this reason, the penultimate step is to restore ascending order by applying the pool adjacent violators (PAV) algorithm~\cite{Ayer:1955,Ledoit:2018}.
Finally, the shrinkage estimator for the sample covariance matrix is given by
\begin{align}
    \hat{\bm{d}}_n &\equiv \PAV(\tilde{\bm{d}}_{n}), \\
    \hat{S}_n &\equiv U_n \diag(\hat{\bm{d}}_n) U_n^T.
\end{align}
As above, the shrinkage estimator is then inverted, and $\hat{S}_n^{-1}$ is used in our fits.

The PAV algorithm is as follows.
Given an input set of data $\bm{d}$, the algorithm iteratively updates the values, locally pooling adjacent values which violate $d_i \ge d_{i+1}$, and replacing them with their average.
The process is repeated until the monotonicity condition is satisfied everywhere, yielding $\PAV(\bm{d})$.

Included in the supplementary material (in \texttt{shrink.py}) is a python implementation of the nonlinear shrinkage algorithm.

\subsection{Numerical examples of shrinkage}

\begin{figure}
    \centering
    \includegraphics[width=1.0\textwidth]{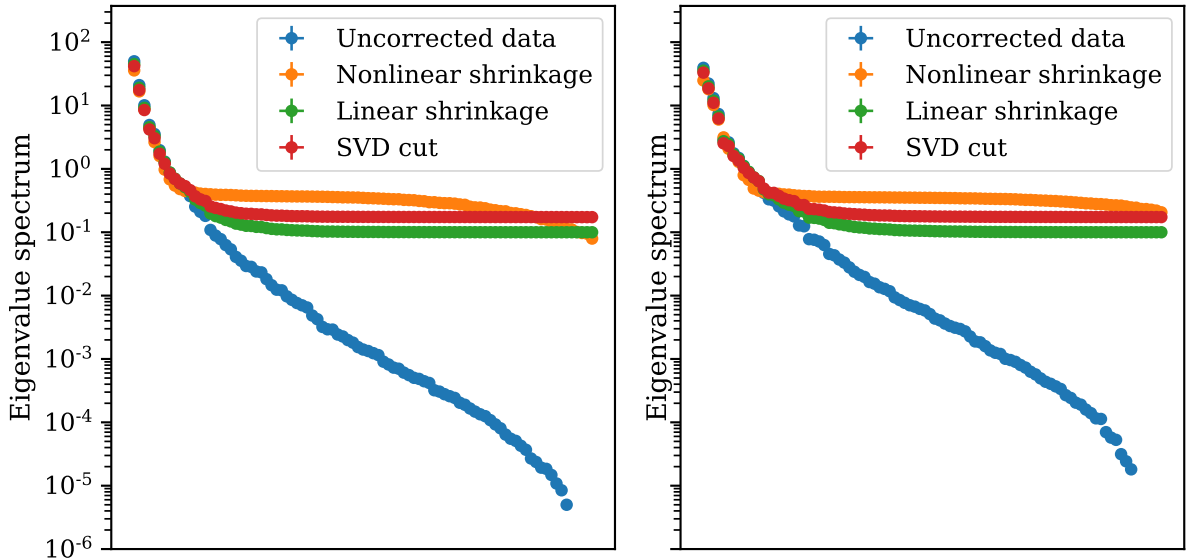}
    \caption{Comparison of eigenvalue spectra resulting
    before and after shrinkage or an SVD cut, for
    the correlation matrices for
    $C_{D_s}^{P}(t)$ (left) and $C_K^P(t, \bm{0})$ (right) on the physical-mass $a\approx 0.06 \fm$ ensemble.
    Linear shrinkage was applied with $\lambda=0.1$.
    An SVD cut of $10^{-3}$ was chosen to have an effect on the spectrum similar to shrinkage.}
    \label{fig:shrinkage_spectra}
\end{figure}

In this section, we present representative examples of correlation matrices appearing in our analysis.
For concreteness, we consider the correlation matrices for the two-point functions $C_{D_s}^{P}(t)$ and $C_K^P(t, \bm{0})$
(cf.\ \cref{eq:c2_heavy_P,eq:c2_light_P}) associated with the $D_s$ and $K$ mesons on the physical-mass $a\approx 0.06\fm$ ensemble.
The eigenvalue spectra associated with the correlation matrices are shown in \cref{fig:shrinkage_spectra},
for raw data, nonlinear shrinkage, linear shrinkage with $\lambda=0.1$, and an SVD cut of $10^{-3}$.
\footnote{
Some freedom exists in the implementation of an SVD cut.
One possibilty is setting to zero all eigenvalues below some threshold.
Instead, the method used for comparison in this appendix compares all the eigenvalues to the largest eigenvalue, $\lambda_{\rm max}$.
All eigenvalues below the threshold  \texttt{svdcut}$\times \lambda_{\rm max}$ are replaced by this value.
Theoretical and practical aspects of this convenction for SVD cuts are described in Ref.~\cite{Dowdall:2019bea}.
}
The raw spectra, shown in blue, span a range of roughly eight orders of magnitude. (In fact, not displayed are the last few eigenvalues, which are consistent with zero at double precision).
For the given parameter choices, linear shrinkage and the SVD cut give similar results.
With nonlinear shrinkage, the shape of the small-eigenvalue region of the spectrum retains some of its original curvature.
In the case of the kaon (left in \cref{fig:shrinkage_spectra}), the small eigenvalues from nonlinear shrinkage vary by approximately an order of magnitude over the region where they are roughly constant for linear shrinkage and SVD cut.

These methods all alter the covariance between pairs of data.
\Cref{fig:shrinkage_kaon_correlation_matrices,fig:shrinkage_ds_correlation_matrices} show heat maps for the corresponding correlation matrices.
As with the eigenvalue spectra in \cref{fig:shrinkage_spectra}, the results for linear shrinkage and SVD cut are qualitatively similar.
Compared with the other methods, nonlinear shrinkage tends to smooth the far off-diagonal correlation coefficients.
All three correction methods suppress the near-diagonal correlations which are nearly unity in the raw data.

Reference~\cite{Dowdall:2019bea} has argued that applying an SVD cut is a statistically conservative analysis choice, amounting to adding uncertainty to the data.
However, care must be given when interpreting the $\chi^2/{\rm DOF}$ when SVD cuts have been applied, since such cuts can result in artificially low values for the $\chi^2/{\rm DOF}$.
As described in Ref.~\cite{Dowdall:2019bea}, the standard diagnostic for this potential problem is to rerun fits with additional noise in the means, checking for the stability of posterior values and for the $\chi^2/{\rm DOF}$ to increase slightly but (at least for good fits) to remain of order unity.
Our analysis has carried out this check, with good stability observed.

\begin{figure}
    \centering
    \includegraphics[width=1.0\textwidth]{Figures/Shrinkage/shrinkage_comparison_ds2k_kaon_p000_0.06fm_physical.pdf}
    \caption{Comparison of correlation matrices resulting from different correction techniques applied to the zero-momentum $K$ two-point function $C_{K}^{P}(t, \bm{0})$ 
    on the physical-mass $a\approx 0.06\fm$ ensemble.
    The associated eigenvalue spectra are shown in \cref{fig:shrinkage_spectra}.
    \label{fig:shrinkage_kaon_correlation_matrices}
    }
\end{figure}

\begin{figure}
    \centering
    \includegraphics[width=1.0\textwidth]{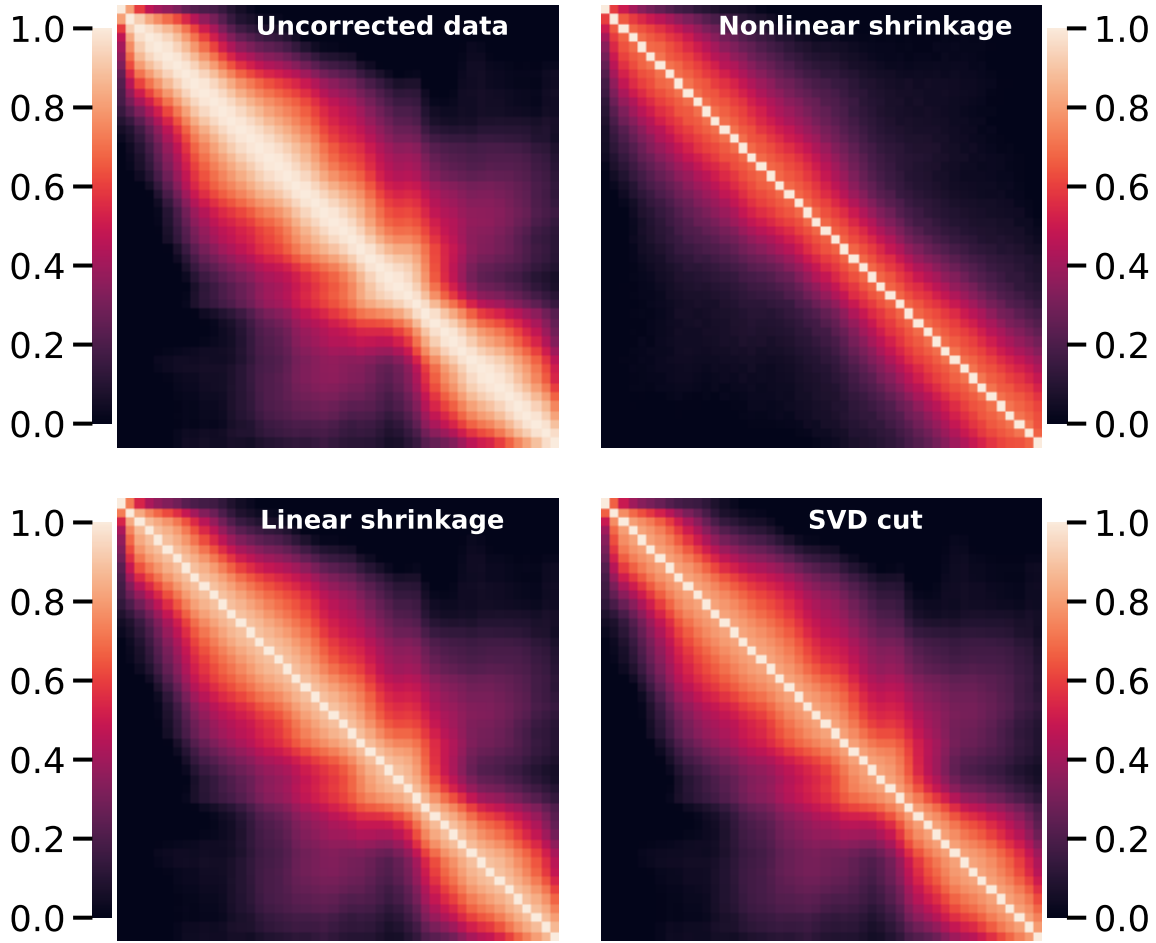}
    \caption{Comparison of correlation matrices resulting from different correction techniques applied to the $D_s$ two-point function $C_{D_s}^{P}(t)$ 
    on the physical-mass $a\approx 0.06\fm$ ensemble.
    The associated eigenvalue spectra are shown in \cref{fig:shrinkage_spectra}.
    \label{fig:shrinkage_ds_correlation_matrices}
    }
\end{figure}

\clearpage

\section{Fits: additional details and figures}
\label{app:extras}

This appendix gives additional details concerning the correlator, chiral-continuum, and $z$ expansion fits described in \cref{sec:correlator_analysis,ssec:chiral_ctm_fits,ssec:z-expansion}, and compiles figures that illustrate the robustness of our chiral-continuum analysis for $\DK$ and $\DsK$ decays. Analogous figures are included in the main text for $\Dpi$.

\subsection{Correlator fits}
\label{app:correlator_fits}

As introduced in \cref{sec:correlator_analysis}, the correlator fits must satisfy checks related to the ratios \cref{eq:ratio_s,eq:ratio_v4,eq:ratio_vi}.
\Cref{fig:d2pi_rbar} shows tests based on the ratio $R_0^{\Dpi}$ for the physical-mass $0.12\fm$ ensemble with the charm-quark mass approximately tuned to its physical value.
Similar figures are shown for the other decays and form factors in 
\cref{fig:d2pi_rbar_vs_tsnk,fig:d2k_rbar_vs_tsnk,fig:ds2k_rbar_vs_tsnk,fig:d2pi_rbar_vs_momentum,fig:d2k_rbar_vs_momentum,fig:ds2k_rbar_vs_momentum}.
The first test concerns the approach of the ratios $R_{0,\parallel,\perp}(t, T, \bm{p})$ to the asymptotic plateau region.
This behavior is examined in the top row of \cref{fig:d2pi_rbar} 
(and in \cref{fig:d2pi_rbar_vs_tsnk,fig:d2k_rbar_vs_tsnk,fig:ds2k_rbar_vs_tsnk})
by considering the ratios at fixed momentum as the source-sink separation is increased.
As $T$ increases, the data tend to flatten out as the ratio approaches the asymptotic limit.
In the right-hand panes, the data show the highest point,
\begin{align}
    \max_t R_{0,\parallel,\perp}(t, T, \bm{p}=2\pi(1,0,0)/N_sa) \label{eq:plateau},
\end{align}
as a convenient proxy for the value of ``plateau.''
As $T$ is increased, theses points gradually approach the form factor's fit posterior value, indicated by the horizontal band in both the left and right panes.
It bears emphasizing that the value of the form factor itself emerges from a fit to the spectral decomposition, \cref{eq:2pt_spectral_decomp_final,eq:2pt_spectral_decomp_initial,eq:3pt_spectral_decomp}, and therefore explicitly includes excited-state effects.

The third visual test checks the momentum dependence and is shown in the bottom row of \cref{fig:d2pi_rbar} (and in \cref{fig:d2pi_rbar_vs_momentum,fig:d2k_rbar_vs_momentum,fig:ds2k_rbar_vs_momentum}).
The left panel shows the ratio $R_0^{\Dpi}$, with each color corresponding to a different momentum.
The horizontal lines with matching colors show the central values of the posteriors for $f_0^{\Dpi}(\bm{p}^2)$.
For visual clarity, data are only shown for fixed $T_{\rm max}$, but all available source-sink separations $T$ were included in the fits.
Moving from top to bottom, the form factors fall monotonically with momentum, and the effects of excited states tend to decrease.
The bottom right panel shows the corresponding posterior values for $f_0^{\Dpi}(\bm{p}^2)$, which exhibit smooth dependence on the momentum.

\begin{figure}
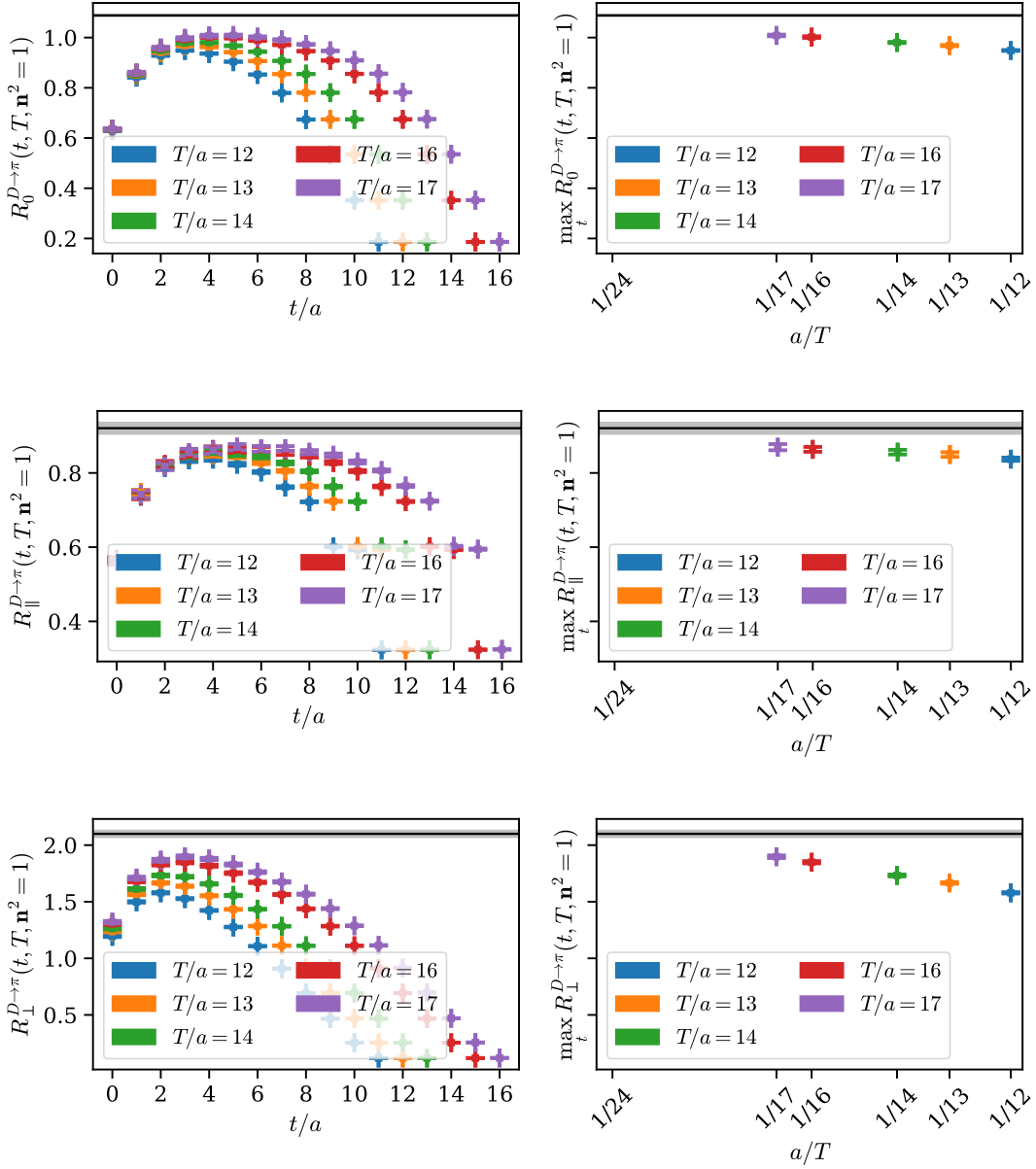

    \centering
    \includegraphics[width=0.9\textwidth]{Figures/Rbar/d2pi_S_rbar_vs_tsnk.pdf}
    \includegraphics[width=0.9\textwidth]{Figures/Rbar/d2pi_V4_rbar_vs_tsnk.pdf}
    \includegraphics[width=0.9\textwidth]{Figures/Rbar/d2pi_Vi_rbar_vs_tsnk.pdf}
    \caption{
    Comparing the ratios $R_{0,\parallel,\perp}^{D\to\pi}$,
    \cref{eq:ratio_s,eq:ratio_v4,eq:ratio_vi},
    with the form factor's fit posterior result at fixed momentum $\bm{p}=(1,0,0)$ on the physical-mass $0.12\fm$ ensemble.
    Left: The data are the ratios $R_{0,\parallel,\perp}^{D\to\pi}(t, T, \hat{p}^2=1)$, with each color corresponding to a different source-sink separation $T$.
    Right: The approach to the asymptotic plateau.
    Each point corresponds to the maximum point in the curves on the left, $\max_t R_{0,\parallel,\perp}^{D\to\pi}(t, T, \hat{p}^2=1)$.
    As the source-sink separation is increased, the data gradually approaches the form factor's posterior value given by the band.
    \label{fig:d2pi_rbar_vs_tsnk}
    }
\end{figure}

\begin{figure}
    \centering
    \includegraphics[width=0.9\textwidth]{Figures/Rbar/d2k_S_rbar_vs_tsnk.pdf}
    \includegraphics[width=0.9\textwidth]{Figures/Rbar/d2k_V4_rbar_vs_tsnk.pdf}
    \includegraphics[width=0.9\textwidth]{Figures/Rbar/d2k_Vi_rbar_vs_tsnk.pdf}
    \caption{
    Comparing the ratios $R_{0,\parallel,\perp}^{D\to K}$,
    \cref{eq:ratio_s,eq:ratio_v4,eq:ratio_vi},
    with the form factor's fit posterior result at fixed momentum $\bm{p}=(1,0,0)$ on the physical-mass $0.12\fm$ ensemble.
    See the caption of \cref{fig:d2pi_rbar_vs_tsnk} for a detailed explanation.
    \label{fig:d2k_rbar_vs_tsnk}
    }
\end{figure}

\begin{figure}
    \centering
    \includegraphics[width=0.9\textwidth]{Figures/Rbar/ds2k_S_rbar_vs_tsnk.pdf}
    \includegraphics[width=0.9\textwidth]{Figures/Rbar/ds2k_V4_rbar_vs_tsnk.pdf}
    \includegraphics[width=0.9\textwidth]{Figures/Rbar/ds2k_Vi_rbar_vs_tsnk.pdf}
    \caption{
    Comparing the ratios $R_{0,\parallel,\perp}^{D_s\to K}$,
    \cref{eq:ratio_s,eq:ratio_v4,eq:ratio_vi},
    with the form factor's fit posterior result at fixed momentum $\bm{p}=(1,0,0)$ on the physical-mass $0.12\fm$ ensemble.
    See the caption of \cref{fig:d2pi_rbar_vs_tsnk} for a detailed explanation.
    \label{fig:ds2k_rbar_vs_tsnk}
    }
\end{figure}

\begin{figure}
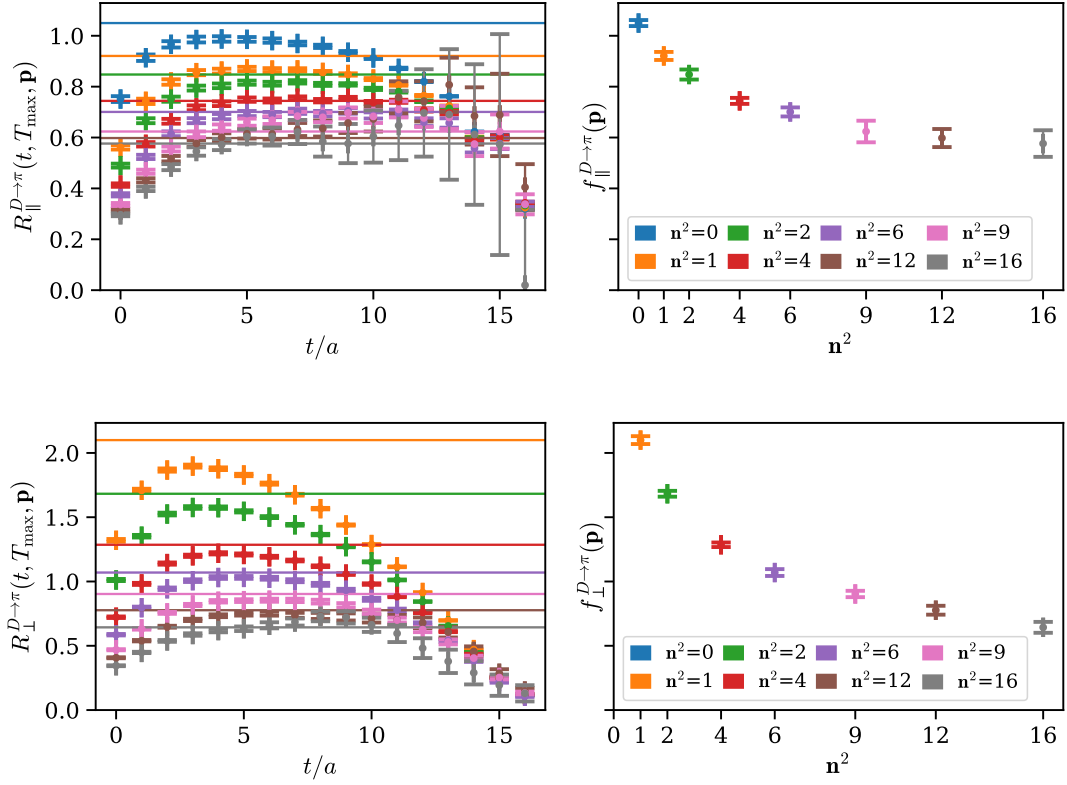

    \centering
    \includegraphics[width=0.9\textwidth]{Figures/Rbar/d2pi_V4_rbar_vs_momentum.pdf}
    \includegraphics[width=0.9\textwidth]{Figures/Rbar/d2pi_Vi_rbar_vs_momentum.pdf}
    \caption{Comparing the ratios $R_{0,\parallel,\perp}^{D\to\pi}$,
    \cref{eq:ratio_s,eq:ratio_v4,eq:ratio_vi},
    with fit results for the form factors coming from the spectral decomposition on the physical-mass $0.12\fm$ ensemble.
    Left: The data are the ratios $R_{0,\parallel,\perp}^{D\to\pi}(t, T_{\rm max}, \bm{p})$, with each color corresponding to a different momentum.
    In each case, only the largest source-sink separation $T_{\rm max}$ is displayed.
    Horizontal lines denote the central values form factor's fit posterior values, coming from fits including all source-sink separations $T$.
    Right: The momentum dependence of the form factor's fit posterior values.
    \label{fig:d2pi_rbar_vs_momentum}
    }
\end{figure}

\begin{figure}
    \centering
    \includegraphics[width=1.0\textwidth]{Figures/Rbar/d2k_S_rbar_vs_momentum.pdf}
    \includegraphics[width=1.0\textwidth]{Figures/Rbar/d2k_V4_rbar_vs_momentum.pdf}
    \includegraphics[width=1.0\textwidth]{Figures/Rbar/d2k_Vi_rbar_vs_momentum.pdf}
    \caption{Comparing the ratios $R_{0,\parallel,\perp}^{D\to K}$,
    \cref{eq:ratio_s,eq:ratio_v4,eq:ratio_vi},
    with fit results for the form factors coming from the spectral decomposition on the physical-mass $0.12\fm$ ensemble.
    See the caption of \cref{fig:d2pi_rbar_vs_momentum} for a detailed explanation.
    \label{fig:d2k_rbar_vs_momentum}
    }
\end{figure}

\begin{figure}
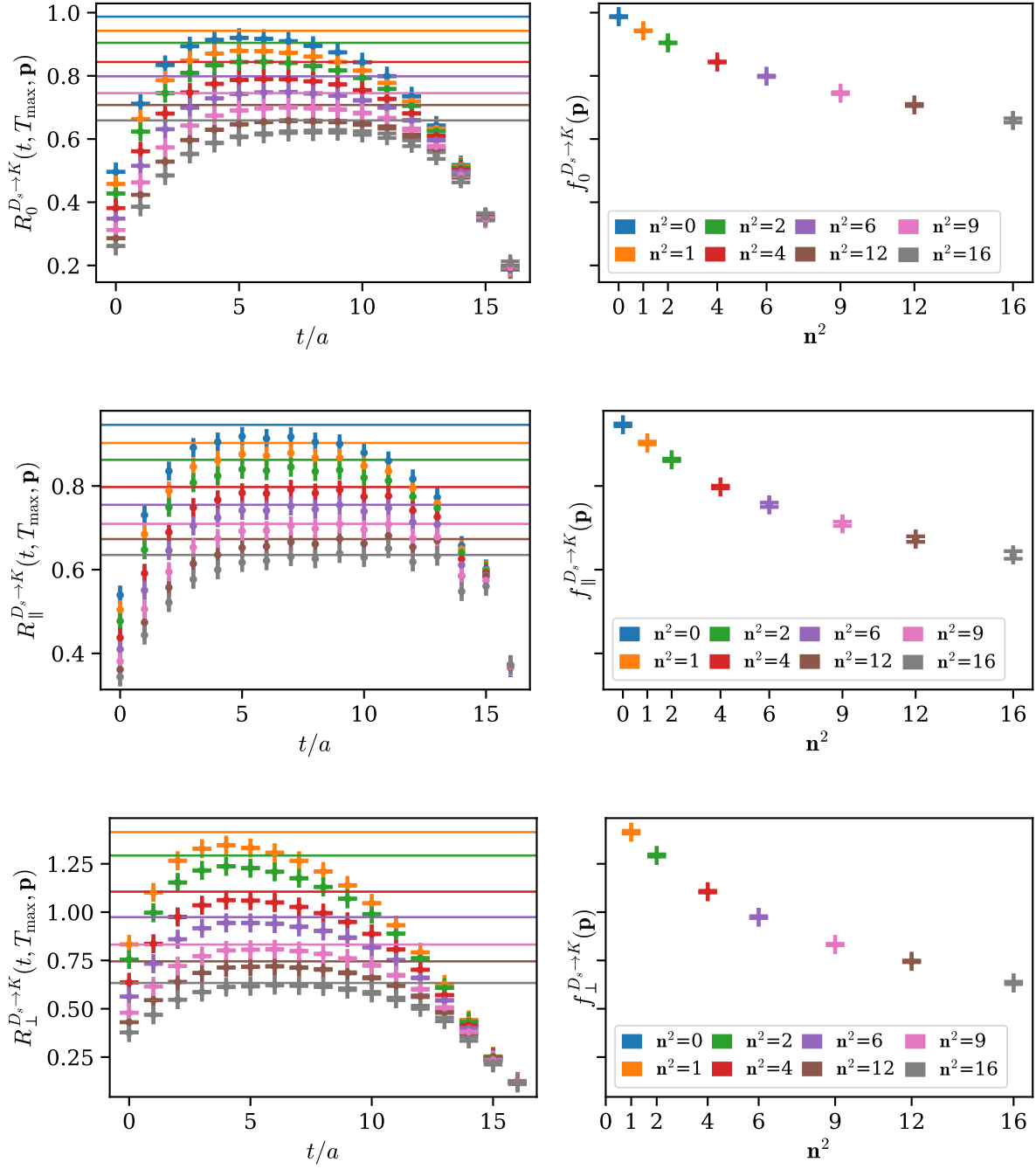

    \centering
    \includegraphics[width=1.0\textwidth]{Figures/Rbar/ds2k_S_rbar_vs_momentum.pdf}
    \includegraphics[width=1.0\textwidth]{Figures/Rbar/ds2k_V4_rbar_vs_momentum.pdf}
    \includegraphics[width=1.0\textwidth]{Figures/Rbar/ds2k_Vi_rbar_vs_momentum.pdf}
    \caption{Comparing the ratios $R_{0,\parallel,\perp}^{D_s \to K}$,
    \cref{eq:ratio_s,eq:ratio_v4,eq:ratio_vi},
    with fit results for the form factors coming from the spectral decomposition on the physical-mass $0.12\fm$ ensemble.
    See the caption of \cref{fig:d2pi_rbar_vs_momentum} for a detailed explanation.
    \label{fig:ds2k_rbar_vs_momentum}
    }
\end{figure}

\clearpage

\subsection{Chiral-continuum fits: Results for \texorpdfstring{$\DK$}{D→K} and \texorpdfstring{$\DsK$}{Ds→K}}
\label{app:chipt_fits}


\begin{figure}[!htb]
    \centering
    \includegraphics[width=1.0\textwidth]{Figures/D2K/d2k_ctm_limit_combo.pdf}
    \caption{
    The result of the chiral-continuum fit for the $\DK$ form factors constructed using \cref{eq:f_perp,eq:f_parallel,eq:f_0} in units of the gradient-flow scale $w_0$.
    For visual clarity, only the physical-mass ensembles with heavy valence masses $m_h/m_c\in\{0.9, 1.0, 1.1\}$ are shown, although all ensembles in \cref{table:ensembles} were included in the fit.
    Points with $m_h/m_c\approx 1.1$ were only simulated on the $a\approx 0.06\fm$ ensemble.
    \label{fig:d2k_data_with_fit}}
\end{figure}


\begin{figure}[!htb]
    \centering
    \includegraphics[width=1.0\textwidth]{Figures/Ds2K/ds2k_ctm_limit_combo.pdf}
    \caption{
    The result of the chiral-continuum fit for the $\DsK$ form factors  constructed using \cref{eq:f_perp,eq:f_parallel,eq:f_0} in units of the gradient-flow scale $w_0$.
    For visual clarity, only the physical-mass ensembles with heavy valence masses $m_h/m_c\in\{0.9, 1.0, 1.1\}$ are shown, although all ensembles in \cref{table:ensembles} were included in the fit.
    Points with $m_h/m_c\approx 1.1$ were only simulated on the $a\approx 0.06\fm$ ensemble.
    \label{fig:ds2k_data_with_fit}}
\end{figure}

\clearpage

\subsection{Chiral-continuum fits: Stability plots for \texorpdfstring{$\DK$}{D→K} and \texorpdfstring{$\DsK$}{Ds→K}}


\begin{figure}[!htb]
    \centering
    \includegraphics[width=0.75\textwidth]{Figures/Stability/d2k_stability.pdf}
\caption{
    Stability of the $\DK$ form factors $f_{\perp/\parallel/0}$ at $q^2=0$
    under variations to the EFT model, the model for discretization effects, 
    to the choice of data included in the fit, and other analysis choices as described in the main body. 
    The central values have been normalized by the central value of preferred fit in green. 
    All variations are statistically consistent with the preferred fit, highlighted by the green band in each panel.
    The statistical significance of the fits is indicated by the marker size, with larger points denoting better fits. 
    \label{fig:d2k_stability}}
\end{figure}


\begin{figure}[!htb]
    \centering
    \includegraphics[width=0.75\textwidth]{Figures/Stability/ds2k_stability.pdf}
\caption{
    Stability of the $\DsK$ form factors $f_{\perp/\parallel/0}$ at $q^2=0$
    under variations to the EFT model, the model for discretization effects, 
    to the choice of data included in the fit, and other analysis choices as described in the main body. 
    The central values have been normalized by the central value of preferred fit in green.
    All variations are statistically consistent with the preferred fit, highlighted by the green band in each panel.
    The statistical significance of the fits is indicated by the marker size, with larger points denoting better fits.
    \label{fig:ds2k_stability}}
\end{figure}

\clearpage
\subsection{Chiral continuum fits: Error breakdowns for \texorpdfstring{$\DK$}{D→K} and \texorpdfstring{$\DsK$}{Ds→K}}


\begin{figure}[!htb]
    \centering
    \includegraphics[width=1.0\textwidth]{Figures/D2K/d2k_final_error_budget.pdf}
\caption{
    Final error budget for the form factors $f^{\DK}_+$ and $f^{\DK}_0$ after the fit to the $z$~expansion.
    Contributions less than $0.01\%$ are not shown.
    \label{fig:d2k_final_error_budget}
    }
\end{figure}


\begin{figure}[!htb]
    \centering
    \includegraphics[width=1.0\textwidth]{Figures/Ds2K/ds2k_final_error_budget.pdf}
\caption{
    Final error budget for the form factors $f^{\DsK}_+$ and $f^{\DsK}_0$ after the fit to the $z$~expansion.
    Contributions less than $0.01\%$ are not shown.
    \label{fig:ds2k_final_error_budget}
    }
\end{figure}

\clearpage

\subsection{\texorpdfstring{$z$}{z}-expansion fits: Joint fits to lattice-QCD form factors and experimental data \label{sec:joint_fit_results}}

\Cref{table:d2pi_zexpansion_comparison,table:d2k_zexpansion_comparison,table:ds2k_zexpansion_comparison} compare the results of the $z$-expansion fits for the decays $\Dpi$, $\DK$, and $\DsK$.
The fits enforce the kinematic identity $f_+(0)=f_0(0)$ by imposing $a_0=b_0$ [cf.\ \cref{eq:z_f0,eq:z_fplus}].
For the scalar form factor, the higher parameters $b_1$, $b_2$, and $b_3$ are unconstrained by the fits including experimental data.
In the joint fit, the lattice QCD form factors include a systematic from SIB, as described in \cref{ssec:sib_qed}.
No uncertainty from QED is included in the fit, since this is applied directly to $\Vcx$ as a final $1\%$ uncertainty.

\begin{table}[!htb]
\caption{
    Comparison of $z$-expansion fit results for the decay $\Dpi$.
    \label{table:d2pi_zexpansion_comparison}
}
    \begin{tabular}{c|c|c|c}
$\Dpi$ & LQCD only & Joint LQCD and Expt & Expt only \\
\hline\hline
$a_0 \equiv b_0$ & $0.6300(51)$ & $0.6306(47)$ & $0.1426(17)$\\
$a_1$ & $-0.610(99)$ & $-0.574(83)$ & $-0.157(45)$\\
$a_2$ & $-0.20(30)$ & $-0.009(393)$ & $-0.15(32)$\\
$a_3$ & $0.30(19)$ & $0.32(94)$ & $0.12(94)$\\
$b_1$ & $0.330(51)$ & $0.379(52)$ & $-$\\
$b_2$ & $-0.31(25)$ & $0.22(36)$ & $-$\\
$b_3$ & $-1.90(39)$ & $-0.54(84)$ & $-$\\
\hline\hline
\end{tabular}

\end{table}

\begin{table}[!htb]
\caption{
    Comparison of $z$-expansion fit results for the decay $\DK$.
    \label{table:d2k_zexpansion_comparison}
}
    \begin{tabular}{c|c|c|c}
$\DK$ & LQCD only & Joint LQCD and Expt & Expt only \\
\hline\hline
$a_0 \equiv b_0$ & $0.7452(31)$ & $0.7450(31)$ & $0.7246(26)$\\
$a_1$ & $-0.948(97)$ & $-1.036(73)$ & $-1.049(89)$\\
$a_2$ & $0.14(40)$ & $0.18(73)$ & $0.10(92)$\\
$a_3$ & $0.07(12)$ & $-0.03(1.00)$ & $-0.03(1.00)$\\
$b_1$ & $0.776(62)$ & $0.772(66)$ & $-$\\
$b_2$ & $0.14(34)$ & $0.08(56)$ & $-$\\
$b_3$ & $0.03(13)$ & $-0.02(99)$ & $-$\\
\hline\hline
\end{tabular}

\end{table}

\begin{table}[!htb]
\caption{
    Comparison of $z$-expansion fit results for the decay $\DsK$.
    \label{table:ds2k_zexpansion_comparison}
}
    \begin{tabular}{c|c|c|c}
$\DsK$ & LQCD only & Joint LQCD and Expt & Expt only \\
\hline\hline
$a_0 \equiv b_0$ & $0.6307(20)$ & $0.6306(20)$ & $0.164(18)$\\
$a_1$ & $-0.562(65)$ & $-0.557(72)$ & $-0.14(29)$\\
$a_2$ & $-0.19(20)$ & $-0.20(42)$ & $-0.03(98)$\\
$a_3$ & $0.33(29)$ & $0.04(98)$ & $0.008(1.000)$\\
$b_1$ & $0.347(27)$ & $0.346(35)$ & $-$\\
$b_2$ & $0.44(18)$ & $0.45(30)$ & $-$\\
$b_3$ & $-0.21(43)$ & $-0.11(96)$ & $-$\\
\hline\hline
\end{tabular}

\end{table}

\clearpage

\bibliography{refs}

\begin{thebibliography}{160}%
\makeatletter
\providecommand \@ifxundefined [1]{%
 \@ifx{#1\undefined}
}%
\providecommand \@ifnum [1]{%
 \ifnum #1\expandafter \@firstoftwo
 \else \expandafter \@secondoftwo
 \fi
}%
\providecommand \@ifx [1]{%
 \ifx #1\expandafter \@firstoftwo
 \else \expandafter \@secondoftwo
 \fi
}%
\providecommand \natexlab [1]{#1}%
\providecommand \enquote  [1]{``#1''}%
\providecommand \bibnamefont  [1]{#1}%
\providecommand \bibfnamefont [1]{#1}%
\providecommand \citenamefont [1]{#1}%
\providecommand \href@noop [0]{\@secondoftwo}%
\providecommand \href [0]{\begingroup \@sanitize@url \@href}%
\providecommand \@href[1]{\@@startlink{#1}\@@href}%
\providecommand \@@href[1]{\endgroup#1\@@endlink}%
\providecommand \@sanitize@url [0]{\catcode `\\12\catcode `\$12\catcode
  `\&12\catcode `\#12\catcode `\^12\catcode `\_12\catcode `\%12\relax}%
\providecommand \@@startlink[1]{}%
\providecommand \@@endlink[0]{}%
\providecommand \url  [0]{\begingroup\@sanitize@url \@url }%
\providecommand \@url [1]{\endgroup\@href {#1}{\urlprefix }}%
\providecommand \urlprefix  [0]{URL }%
\providecommand \Eprint [0]{\href }%
\providecommand \doibase [0]{https://doi.org/}%
\providecommand \selectlanguage [0]{\@gobble}%
\providecommand \bibinfo  [0]{\@secondoftwo}%
\providecommand \bibfield  [0]{\@secondoftwo}%
\providecommand \translation [1]{[#1]}%
\providecommand \BibitemOpen [0]{}%
\providecommand \bibitemStop [0]{}%
\providecommand \bibitemNoStop [0]{.\EOS\space}%
\providecommand \EOS [0]{\spacefactor3000\relax}%
\providecommand \BibitemShut  [1]{\csname bibitem#1\endcsname}%
\let\auto@bib@innerbib\@empty
\bibitem [{\citenamefont {Bjorken}\ and\ \citenamefont
  {Glashow}(1964)}]{Bjorken:1964gz}%
  \BibitemOpen
  \bibfield  {author} {\bibinfo {author} {\bibfnamefont {J.~D.}\ \bibnamefont
  {Bjorken}}\ and\ \bibinfo {author} {\bibfnamefont {S.~L.}\ \bibnamefont
  {Glashow}},\ }\bibfield  {title} {\bibinfo {title} {Elementary particles and
  {SU(4)}},\ }\href {https://doi.org/10.1016/0031-9163(64)90433-0} {\bibfield
  {journal} {\bibinfo  {journal} {Phys. Lett.}\ }\textbf {\bibinfo {volume}
  {11}},\ \bibinfo {pages} {255} (\bibinfo {year} {1964})}\BibitemShut
  {NoStop}%
\bibitem [{\citenamefont {Glashow}\ \emph {et~al.}(1970)\citenamefont
  {Glashow}, \citenamefont {Iliopoulos},\ and\ \citenamefont
  {Maiani}}]{Glashow:1970gm}%
  \BibitemOpen
  \bibfield  {author} {\bibinfo {author} {\bibfnamefont {S.~L.}\ \bibnamefont
  {Glashow}}, \bibinfo {author} {\bibfnamefont {J.}~\bibnamefont
  {Iliopoulos}},\ and\ \bibinfo {author} {\bibfnamefont {L.}~\bibnamefont
  {Maiani}},\ }\bibfield  {title} {\bibinfo {title} {Weak interactions with
  lepton-hadron symmetry},\ }\href {https://doi.org/10.1103/PhysRevD.2.1285}
  {\bibfield  {journal} {\bibinfo  {journal} {Phys. Rev. D}\ }\textbf {\bibinfo
  {volume} {2}},\ \bibinfo {pages} {1285} (\bibinfo {year} {1970})}\BibitemShut
  {NoStop}%
\bibitem [{\citenamefont {Gaillard}\ \emph {et~al.}(1975)\citenamefont
  {Gaillard}, \citenamefont {Lee},\ and\ \citenamefont
  {Rosner}}]{Gaillard:1974mw}%
  \BibitemOpen
  \bibfield  {author} {\bibinfo {author} {\bibfnamefont {M.~K.}\ \bibnamefont
  {Gaillard}}, \bibinfo {author} {\bibfnamefont {B.~W.}\ \bibnamefont {Lee}},\
  and\ \bibinfo {author} {\bibfnamefont {J.~L.}\ \bibnamefont {Rosner}},\
  }\bibfield  {title} {\bibinfo {title} {Search for charm},\ }\href
  {https://doi.org/10.1103/RevModPhys.47.277} {\bibfield  {journal} {\bibinfo
  {journal} {Rev. Mod. Phys.}\ }\textbf {\bibinfo {volume} {47}},\ \bibinfo
  {pages} {277} (\bibinfo {year} {1975})}\BibitemShut {NoStop}%
\bibitem [{\citenamefont {Aubert}\ \emph {et~al.}(1974)\citenamefont {Aubert}
  \emph {et~al.}}]{E598:1974sol}%
  \BibitemOpen
  \bibfield  {author} {\bibinfo {author} {\bibfnamefont {J.~J.}\ \bibnamefont
  {Aubert}} \emph {et~al.} (\bibinfo {collaboration} {E598}),\ }\bibfield
  {title} {\bibinfo {title} {Experimental observation of a heavy particle
  {$J$}},\ }\href {https://doi.org/10.1103/PhysRevLett.33.1404} {\bibfield
  {journal} {\bibinfo  {journal} {Phys. Rev. Lett.}\ }\textbf {\bibinfo
  {volume} {33}},\ \bibinfo {pages} {1404} (\bibinfo {year}
  {1974})}\BibitemShut {NoStop}%
\bibitem [{\citenamefont {Augustin}\ \emph {et~al.}(1974)\citenamefont
  {Augustin} \emph {et~al.}}]{SLAC-SP-017:1974ind}%
  \BibitemOpen
  \bibfield  {author} {\bibinfo {author} {\bibfnamefont {J.~E.}\ \bibnamefont
  {Augustin}} \emph {et~al.} (\bibinfo {collaboration} {SLAC-SP-017}),\
  }\bibfield  {title} {\bibinfo {title} {Discovery of a narrow resonance in
  $e^+ e^-$ annihilation},\ }\href
  {https://doi.org/10.1103/PhysRevLett.33.1406} {\bibfield  {journal} {\bibinfo
   {journal} {Phys. Rev. Lett.}\ }\textbf {\bibinfo {volume} {33}},\ \bibinfo
  {pages} {1406} (\bibinfo {year} {1974})}\BibitemShut {NoStop}%
\bibitem [{\citenamefont {Appelquist}\ and\ \citenamefont
  {Politzer}(1975)}]{Appelquist:1974zd}%
  \BibitemOpen
  \bibfield  {author} {\bibinfo {author} {\bibfnamefont {T.}~\bibnamefont
  {Appelquist}}\ and\ \bibinfo {author} {\bibfnamefont {H.~D.}\ \bibnamefont
  {Politzer}},\ }\bibfield  {title} {\bibinfo {title} {Orthocharmonium and
  $e^+e^-$ annihilation},\ }\href {https://doi.org/10.1103/PhysRevLett.34.43}
  {\bibfield  {journal} {\bibinfo  {journal} {Phys. Rev. Lett.}\ }\textbf
  {\bibinfo {volume} {34}},\ \bibinfo {pages} {43} (\bibinfo {year}
  {1975})}\BibitemShut {NoStop}%
\bibitem [{\citenamefont {De~Rujula}\ and\ \citenamefont
  {Glashow}(1975)}]{DeRujula:1974rkb}%
  \BibitemOpen
  \bibfield  {author} {\bibinfo {author} {\bibfnamefont {A.}~\bibnamefont
  {De~Rujula}}\ and\ \bibinfo {author} {\bibfnamefont {S.~L.}\ \bibnamefont
  {Glashow}},\ }\bibfield  {title} {\bibinfo {title} {Is bound charm found?},\
  }\href {https://doi.org/10.1103/PhysRevLett.34.46} {\bibfield  {journal}
  {\bibinfo  {journal} {Phys. Rev. Lett.}\ }\textbf {\bibinfo {volume} {34}},\
  \bibinfo {pages} {46} (\bibinfo {year} {1975})}\BibitemShut {NoStop}%
\bibitem [{\citenamefont {Appelquist}\ \emph {et~al.}(1975)\citenamefont
  {Appelquist}, \citenamefont {De~Rujula}, \citenamefont {Politzer},\ and\
  \citenamefont {Glashow}}]{Appelquist:1974yr}%
  \BibitemOpen
  \bibfield  {author} {\bibinfo {author} {\bibfnamefont {T.}~\bibnamefont
  {Appelquist}}, \bibinfo {author} {\bibfnamefont {A.}~\bibnamefont
  {De~Rujula}}, \bibinfo {author} {\bibfnamefont {H.~D.}\ \bibnamefont
  {Politzer}},\ and\ \bibinfo {author} {\bibfnamefont {S.~L.}\ \bibnamefont
  {Glashow}},\ }\bibfield  {title} {\bibinfo {title} {Charmonium
  spectroscopy},\ }\href {https://doi.org/10.1103/PhysRevLett.34.365}
  {\bibfield  {journal} {\bibinfo  {journal} {Phys. Rev. Lett.}\ }\textbf
  {\bibinfo {volume} {34}},\ \bibinfo {pages} {365} (\bibinfo {year}
  {1975})}\BibitemShut {NoStop}%
\bibitem [{\citenamefont {Eichten}\ \emph {et~al.}(1975)\citenamefont
  {Eichten}, \citenamefont {Gottfried}, \citenamefont {Kinoshita},
  \citenamefont {Kogut}, \citenamefont {Lane},\ and\ \citenamefont
  {Yan}}]{Eichten:1974af}%
  \BibitemOpen
  \bibfield  {author} {\bibinfo {author} {\bibfnamefont {E.}~\bibnamefont
  {Eichten}}, \bibinfo {author} {\bibfnamefont {K.}~\bibnamefont {Gottfried}},
  \bibinfo {author} {\bibfnamefont {T.}~\bibnamefont {Kinoshita}}, \bibinfo
  {author} {\bibfnamefont {J.~B.}\ \bibnamefont {Kogut}}, \bibinfo {author}
  {\bibfnamefont {K.~D.}\ \bibnamefont {Lane}},\ and\ \bibinfo {author}
  {\bibfnamefont {T.-M.}\ \bibnamefont {Yan}},\ }\bibfield  {title} {\bibinfo
  {title} {Spectrum of charmed quark-antiquark bound states},\ }\href
  {https://doi.org/10.1103/PhysRevLett.34.369} {\bibfield  {journal} {\bibinfo
  {journal} {Phys. Rev. Lett.}\ }\textbf {\bibinfo {volume} {34}},\ \bibinfo
  {pages} {369} (\bibinfo {year} {1975})},\ \bibinfo {note} {[Erratum:
  Phys.Rev.Lett. 36, 1276 (1976)]}\BibitemShut {NoStop}%
\bibitem [{\citenamefont {Albrecht}\ \emph {et~al.}(1987)\citenamefont
  {Albrecht} \emph {et~al.}}]{ARGUS:1987xtv}%
  \BibitemOpen
  \bibfield  {author} {\bibinfo {author} {\bibfnamefont {H.}~\bibnamefont
  {Albrecht}} \emph {et~al.} (\bibinfo {collaboration} {ARGUS}),\ }\bibfield
  {title} {\bibinfo {title} {Observation of {$B^0$-$\bar{B}^0$} mixing},\
  }\href {https://doi.org/10.1016/0370-2693(87)91177-4} {\bibfield  {journal}
  {\bibinfo  {journal} {Phys. Lett. B}\ }\textbf {\bibinfo {volume} {192}},\
  \bibinfo {pages} {245} (\bibinfo {year} {1987})}\BibitemShut {NoStop}%
\bibitem [{\citenamefont {Marciano}(1989)}]{Marciano:1989xd}%
  \BibitemOpen
  \bibfield  {author} {\bibinfo {author} {\bibfnamefont {W.~J.}\ \bibnamefont
  {Marciano}},\ }\bibfield  {title} {\bibinfo {title} {Heavy top quark mass
  predictions},\ }\href {https://doi.org/10.1103/PhysRevLett.62.2793}
  {\bibfield  {journal} {\bibinfo  {journal} {Phys. Rev. Lett.}\ }\textbf
  {\bibinfo {volume} {62}},\ \bibinfo {pages} {2793} (\bibinfo {year}
  {1989})}\BibitemShut {NoStop}%
\bibitem [{\citenamefont {Abe}\ \emph {et~al.}(1994)\citenamefont {Abe} \emph
  {et~al.}}]{CDF:1994vkk}%
  \BibitemOpen
  \bibfield  {author} {\bibinfo {author} {\bibfnamefont {F.}~\bibnamefont
  {Abe}} \emph {et~al.} (\bibinfo {collaboration} {CDF}),\ }\bibfield  {title}
  {\bibinfo {title} {{Evidence for top quark production in $\bar{p}p$
  collisions at $\sqrt{s} = 1.8$~TeV}},\ }\href
  {https://doi.org/10.1103/PhysRevD.50.2966} {\bibfield  {journal} {\bibinfo
  {journal} {Phys. Rev. D}\ }\textbf {\bibinfo {volume} {50}},\ \bibinfo
  {pages} {2966} (\bibinfo {year} {1994})}\BibitemShut {NoStop}%
\bibitem [{\citenamefont {Abe}\ \emph {et~al.}(1995)\citenamefont {Abe} \emph
  {et~al.}}]{CDF:1995wbb}%
  \BibitemOpen
  \bibfield  {author} {\bibinfo {author} {\bibfnamefont {F.}~\bibnamefont
  {Abe}} \emph {et~al.} (\bibinfo {collaboration} {CDF}),\ }\bibfield  {title}
  {\bibinfo {title} {{Observation of top quark production in $\bar{p}p$
  collisions}},\ }\href {https://doi.org/10.1103/PhysRevLett.74.2626}
  {\bibfield  {journal} {\bibinfo  {journal} {Phys. Rev. Lett.}\ }\textbf
  {\bibinfo {volume} {74}},\ \bibinfo {pages} {2626} (\bibinfo {year}
  {1995})},\ \Eprint {https://arxiv.org/abs/hep-ex/9503002}
  {arXiv:hep-ex/9503002} \BibitemShut {NoStop}%
\bibitem [{\citenamefont {Abachi}\ \emph {et~al.}(1995)\citenamefont {Abachi}
  \emph {et~al.}}]{D0:1995jca}%
  \BibitemOpen
  \bibfield  {author} {\bibinfo {author} {\bibfnamefont {S.}~\bibnamefont
  {Abachi}} \emph {et~al.} (\bibinfo {collaboration} {D0}),\ }\bibfield
  {title} {\bibinfo {title} {{Observation of the top quark}},\ }\href
  {https://doi.org/10.1103/PhysRevLett.74.2632} {\bibfield  {journal} {\bibinfo
   {journal} {Phys. Rev. Lett.}\ }\textbf {\bibinfo {volume} {74}},\ \bibinfo
  {pages} {2632} (\bibinfo {year} {1995})},\ \Eprint
  {https://arxiv.org/abs/hep-ex/9503003} {arXiv:hep-ex/9503003} \BibitemShut
  {NoStop}%
\bibitem [{\citenamefont {Artuso}\ \emph {et~al.}(2022)\citenamefont {Artuso},
  \citenamefont {Isidori},\ and\ \citenamefont {Stone}}]{Artuso:2022ijh}%
  \BibitemOpen
  \bibfield  {author} {\bibinfo {author} {\bibfnamefont {M.}~\bibnamefont
  {Artuso}}, \bibinfo {author} {\bibfnamefont {G.}~\bibnamefont {Isidori}},\
  and\ \bibinfo {author} {\bibfnamefont {S.}~\bibnamefont {Stone}},\ }\href
  {https://doi.org/10.1142/12696} {\emph {\bibinfo {title} {{New Physics in $b$
  Decays}}}}\ (\bibinfo  {publisher} {World Scientific},\ \bibinfo {address}
  {Singapore},\ \bibinfo {year} {2022})\BibitemShut {NoStop}%
\bibitem [{\citenamefont {Blanke}(2022)}]{Blanke:2022deg}%
  \BibitemOpen
  \bibfield  {author} {\bibinfo {author} {\bibfnamefont {M.}~\bibnamefont
  {Blanke}},\ }\bibfield  {title} {\bibinfo {title} {Theory perspective on
  heavy flavour physics},\ }in\ \href@noop {} {\emph {\bibinfo {booktitle}
  {{10\textsuperscript{th} Large Hadron Collider Physics Conference}}}}\
  (\bibinfo {year} {2022})\ \Eprint {https://arxiv.org/abs/2207.07354}
  {arXiv:2207.07354 [hep-ph]} \BibitemShut {NoStop}%
\bibitem [{\citenamefont {Hardy}\ and\ \citenamefont
  {Towner}(2020)}]{Hardy:2020qwl}%
  \BibitemOpen
  \bibfield  {author} {\bibinfo {author} {\bibfnamefont {J.~C.}\ \bibnamefont
  {Hardy}}\ and\ \bibinfo {author} {\bibfnamefont {I.~S.}\ \bibnamefont
  {Towner}},\ }\bibfield  {title} {\bibinfo {title} {{Superallowed $0^+ \to
  0^+$ nuclear $\beta$ decays: 2020 critical survey, with implications for
  $V_{ud}$ and CKM unitarity}},\ }\href
  {https://doi.org/10.1103/PhysRevC.102.045501} {\bibfield  {journal} {\bibinfo
   {journal} {Phys. Rev. C}\ }\textbf {\bibinfo {volume} {102}},\ \bibinfo
  {pages} {045501} (\bibinfo {year} {2020})}\BibitemShut {NoStop}%
\bibitem [{\citenamefont {Seng}\ \emph {et~al.}(2018)\citenamefont {Seng},
  \citenamefont {Gorchtein}, \citenamefont {Patel},\ and\ \citenamefont
  {Ramsey-Musolf}}]{Seng:2018yzq}%
  \BibitemOpen
  \bibfield  {author} {\bibinfo {author} {\bibfnamefont {C.-Y.}\ \bibnamefont
  {Seng}}, \bibinfo {author} {\bibfnamefont {M.}~\bibnamefont {Gorchtein}},
  \bibinfo {author} {\bibfnamefont {H.~H.}\ \bibnamefont {Patel}},\ and\
  \bibinfo {author} {\bibfnamefont {M.~J.}\ \bibnamefont {Ramsey-Musolf}},\
  }\bibfield  {title} {\bibinfo {title} {Reduced hadronic uncertainty in the
  determination of {$V_{ud}$}},\ }\href
  {https://doi.org/10.1103/PhysRevLett.121.241804} {\bibfield  {journal}
  {\bibinfo  {journal} {Phys. Rev. Lett.}\ }\textbf {\bibinfo {volume} {121}},\
  \bibinfo {pages} {241804} (\bibinfo {year} {2018})},\ \Eprint
  {https://arxiv.org/abs/1807.10197} {arXiv:1807.10197 [hep-ph]} \BibitemShut
  {NoStop}%
\bibitem [{\citenamefont {Seng}\ \emph {et~al.}(2019)\citenamefont {Seng},
  \citenamefont {Gorchtein},\ and\ \citenamefont
  {Ramsey-Musolf}}]{Seng:2018qru}%
  \BibitemOpen
  \bibfield  {author} {\bibinfo {author} {\bibfnamefont {C.~Y.}\ \bibnamefont
  {Seng}}, \bibinfo {author} {\bibfnamefont {M.}~\bibnamefont {Gorchtein}},\
  and\ \bibinfo {author} {\bibfnamefont {M.~J.}\ \bibnamefont
  {Ramsey-Musolf}},\ }\bibfield  {title} {\bibinfo {title} {{Dispersive
  evaluation of the inner radiative correction in neutron and nuclear $\beta$
  decay}},\ }\href {https://doi.org/10.1103/PhysRevD.100.013001} {\bibfield
  {journal} {\bibinfo  {journal} {Phys. Rev. D}\ }\textbf {\bibinfo {volume}
  {100}},\ \bibinfo {pages} {013001} (\bibinfo {year} {2019})},\ \Eprint
  {https://arxiv.org/abs/1812.03352} {arXiv:1812.03352 [nucl-th]} \BibitemShut
  {NoStop}%
\bibitem [{\citenamefont {Czarnecki}\ \emph {et~al.}(2019)\citenamefont
  {Czarnecki}, \citenamefont {Marciano},\ and\ \citenamefont
  {Sirlin}}]{Czarnecki:2019mwq}%
  \BibitemOpen
  \bibfield  {author} {\bibinfo {author} {\bibfnamefont {A.}~\bibnamefont
  {Czarnecki}}, \bibinfo {author} {\bibfnamefont {W.~J.}\ \bibnamefont
  {Marciano}},\ and\ \bibinfo {author} {\bibfnamefont {A.}~\bibnamefont
  {Sirlin}},\ }\bibfield  {title} {\bibinfo {title} {Radiative corrections to
  neutron and nuclear beta decays revisited},\ }\href
  {https://doi.org/10.1103/PhysRevD.100.073008} {\bibfield  {journal} {\bibinfo
   {journal} {Phys. Rev. D}\ }\textbf {\bibinfo {volume} {100}},\ \bibinfo
  {pages} {073008} (\bibinfo {year} {2019})},\ \Eprint
  {https://arxiv.org/abs/1907.06737} {arXiv:1907.06737 [hep-ph]} \BibitemShut
  {NoStop}%
\bibitem [{\citenamefont {Seng}\ \emph {et~al.}(2020)\citenamefont {Seng},
  \citenamefont {Feng}, \citenamefont {Gorchtein},\ and\ \citenamefont
  {Jin}}]{Seng:2020wjq}%
  \BibitemOpen
  \bibfield  {author} {\bibinfo {author} {\bibfnamefont {C.-Y.}\ \bibnamefont
  {Seng}}, \bibinfo {author} {\bibfnamefont {X.}~\bibnamefont {Feng}}, \bibinfo
  {author} {\bibfnamefont {M.}~\bibnamefont {Gorchtein}},\ and\ \bibinfo
  {author} {\bibfnamefont {L.-C.}\ \bibnamefont {Jin}},\ }\bibfield  {title}
  {\bibinfo {title} {{Joint lattice-QCD--dispersion-theory analysis confirms
  the quark-mixing top-row unitarity deficit}},\ }\href
  {https://doi.org/10.1103/PhysRevD.101.111301} {\bibfield  {journal} {\bibinfo
   {journal} {Phys. Rev. D}\ }\textbf {\bibinfo {volume} {101}},\ \bibinfo
  {pages} {111301} (\bibinfo {year} {2020})},\ \Eprint
  {https://arxiv.org/abs/2003.11264} {arXiv:2003.11264 [hep-ph]} \BibitemShut
  {NoStop}%
\bibitem [{\citenamefont {Shiells}\ \emph {et~al.}(2021)\citenamefont
  {Shiells}, \citenamefont {Blunden},\ and\ \citenamefont
  {Melnitchouk}}]{Shiells:2020fqp}%
  \BibitemOpen
  \bibfield  {author} {\bibinfo {author} {\bibfnamefont {K.}~\bibnamefont
  {Shiells}}, \bibinfo {author} {\bibfnamefont {P.~G.}\ \bibnamefont
  {Blunden}},\ and\ \bibinfo {author} {\bibfnamefont {W.}~\bibnamefont
  {Melnitchouk}},\ }\bibfield  {title} {\bibinfo {title} {{Electroweak axial
  structure functions and improved extraction of the $V_{ud}$ CKM matrix
  element}},\ }\href {https://doi.org/10.1103/PhysRevD.104.033003} {\bibfield
  {journal} {\bibinfo  {journal} {Phys. Rev. D}\ }\textbf {\bibinfo {volume}
  {104}},\ \bibinfo {pages} {033003} (\bibinfo {year} {2021})},\ \Eprint
  {https://arxiv.org/abs/2012.01580} {arXiv:2012.01580 [hep-ph]} \BibitemShut
  {NoStop}%
\bibitem [{\citenamefont {Gorchtein}(2019)}]{Gorchtein:2018fxl}%
  \BibitemOpen
  \bibfield  {author} {\bibinfo {author} {\bibfnamefont {M.}~\bibnamefont
  {Gorchtein}},\ }\bibfield  {title} {\bibinfo {title} {{$\gamma W$} box inside
  out: Nuclear polarizabilities distort the beta decay spectrum},\ }\href
  {https://doi.org/10.1103/PhysRevLett.123.042503} {\bibfield  {journal}
  {\bibinfo  {journal} {Phys. Rev. Lett.}\ }\textbf {\bibinfo {volume} {123}},\
  \bibinfo {pages} {042503} (\bibinfo {year} {2019})},\ \Eprint
  {https://arxiv.org/abs/1812.04229} {arXiv:1812.04229 [nucl-th]} \BibitemShut
  {NoStop}%
\bibitem [{\citenamefont {Marciano}(2004)}]{Marciano:2004uf}%
  \BibitemOpen
  \bibfield  {author} {\bibinfo {author} {\bibfnamefont {W.~J.}\ \bibnamefont
  {Marciano}},\ }\bibfield  {title} {\bibinfo {title} {{Precise determination
  of $|V_{us}|$ from lattice calculations of pseudoscalar decay constants}},\
  }\href {https://doi.org/10.1103/PhysRevLett.93.231803} {\bibfield  {journal}
  {\bibinfo  {journal} {Phys. Rev. Lett.}\ }\textbf {\bibinfo {volume} {93}},\
  \bibinfo {pages} {231803} (\bibinfo {year} {2004})},\ \Eprint
  {https://arxiv.org/abs/hep-ph/0402299} {arXiv:hep-ph/0402299} \BibitemShut
  {NoStop}%
\bibitem [{\citenamefont {Workman}\ \emph {et~al.}(2022)\citenamefont {Workman}
  \emph {et~al.}}]{Workman:2022ynf}%
  \BibitemOpen
  \bibfield  {author} {\bibinfo {author} {\bibfnamefont {R.~L.}\ \bibnamefont
  {Workman}} \emph {et~al.} (\bibinfo {collaboration} {Particle Data Group}),\
  }\bibfield  {title} {\bibinfo {title} {Review of particle physics},\ }\href
  {https://doi.org/10.1093/ptep/ptac097} {\bibfield  {journal} {\bibinfo
  {journal} {PTEP}\ }\textbf {\bibinfo {volume} {2022}},\ \bibinfo {pages}
  {083C01} (\bibinfo {year} {2022})}\BibitemShut {NoStop}%
\bibitem [{\citenamefont {Carrasco}\ \emph {et~al.}(2016)\citenamefont
  {Carrasco}, \citenamefont {Lami}, \citenamefont {Lubicz}, \citenamefont
  {Riggio}, \citenamefont {Simula},\ and\ \citenamefont
  {Tarantino}}]{Carrasco:2016kpy}%
  \BibitemOpen
  \bibfield  {author} {\bibinfo {author} {\bibfnamefont {N.}~\bibnamefont
  {Carrasco}}, \bibinfo {author} {\bibfnamefont {P.}~\bibnamefont {Lami}},
  \bibinfo {author} {\bibfnamefont {V.}~\bibnamefont {Lubicz}}, \bibinfo
  {author} {\bibfnamefont {L.}~\bibnamefont {Riggio}}, \bibinfo {author}
  {\bibfnamefont {S.}~\bibnamefont {Simula}},\ and\ \bibinfo {author}
  {\bibfnamefont {C.}~\bibnamefont {Tarantino}},\ }\bibfield  {title} {\bibinfo
  {title} {{$K \to \pi$ semileptonic form factors with $N_f=2+1+1$ twisted mass
  fermions}},\ }\href {https://doi.org/10.1103/PhysRevD.93.114512} {\bibfield
  {journal} {\bibinfo  {journal} {Phys. Rev. D}\ }\textbf {\bibinfo {volume}
  {93}},\ \bibinfo {pages} {114512} (\bibinfo {year} {2016})},\ \Eprint
  {https://arxiv.org/abs/1602.04113} {arXiv:1602.04113 [hep-lat]} \BibitemShut
  {NoStop}%
\bibitem [{\citenamefont {Bazavov}\ \emph
  {et~al.}(2019{\natexlab{a}})\citenamefont {Bazavov} \emph
  {et~al.}}]{FermilabLattice:2018zqv}%
  \BibitemOpen
  \bibfield  {author} {\bibinfo {author} {\bibfnamefont {A.}~\bibnamefont
  {Bazavov}} \emph {et~al.} (\bibinfo {collaboration} {Fermilab Lattice,
  MILC}),\ }\bibfield  {title} {\bibinfo {title} {{$|V_{us}|$ from $K_{\ell 3}$
  decay and four-flavor lattice QCD}},\ }\href
  {https://doi.org/10.1103/PhysRevD.99.114509} {\bibfield  {journal} {\bibinfo
  {journal} {Phys. Rev. D}\ }\textbf {\bibinfo {volume} {99}},\ \bibinfo
  {pages} {114509} (\bibinfo {year} {2019}{\natexlab{a}})},\ \Eprint
  {https://arxiv.org/abs/1809.02827} {arXiv:1809.02827 [hep-lat]} \BibitemShut
  {NoStop}%
\bibitem [{\citenamefont {Boyle}\ \emph {et~al.}(2015)\citenamefont {Boyle}
  \emph {et~al.}}]{RBCUKQCD:2015joy}%
  \BibitemOpen
  \bibfield  {author} {\bibinfo {author} {\bibfnamefont {P.~A.}\ \bibnamefont
  {Boyle}} \emph {et~al.} (\bibinfo {collaboration} {RBC, UKQCD}),\ }\bibfield
  {title} {\bibinfo {title} {{The kaon semileptonic form factor in $N_{f} = 2 +
  1$ domain wall lattice QCD with physical light quark masses}},\ }\href
  {https://doi.org/10.1007/JHEP06(2015)164} {\bibfield  {journal} {\bibinfo
  {journal} {JHEP}\ }\textbf {\bibinfo {volume} {06}},\ \bibinfo {pages}
  {164}},\ \Eprint {https://arxiv.org/abs/1504.01692} {arXiv:1504.01692
  [hep-lat]} \BibitemShut {NoStop}%
\bibitem [{\citenamefont {Ishikawa}\ \emph {et~al.}(2022)\citenamefont
  {Ishikawa}, \citenamefont {Ishizuka}, \citenamefont {Kuramashi},
  \citenamefont {Namekawa}, \citenamefont {Taniguchi}, \citenamefont {Ukita},
  \citenamefont {Yamazaki},\ and\ \citenamefont {Yoshi\'e}}]{Ishikawa:2022otj}%
  \BibitemOpen
  \bibfield  {author} {\bibinfo {author} {\bibfnamefont {K.-i.}\ \bibnamefont
  {Ishikawa}}, \bibinfo {author} {\bibfnamefont {N.}~\bibnamefont {Ishizuka}},
  \bibinfo {author} {\bibfnamefont {Y.}~\bibnamefont {Kuramashi}}, \bibinfo
  {author} {\bibfnamefont {Y.}~\bibnamefont {Namekawa}}, \bibinfo {author}
  {\bibfnamefont {Y.}~\bibnamefont {Taniguchi}}, \bibinfo {author}
  {\bibfnamefont {N.}~\bibnamefont {Ukita}}, \bibinfo {author} {\bibfnamefont
  {T.}~\bibnamefont {Yamazaki}},\ and\ \bibinfo {author} {\bibfnamefont
  {T.}~\bibnamefont {Yoshi\'e}} (\bibinfo {collaboration} {PACS}),\ }\bibfield
  {title} {\bibinfo {title} {{$K_{\ell3}$ form factors at the physical point:
  Toward the continuum limit}},\ }\href
  {https://doi.org/10.1103/PhysRevD.106.094501} {\bibfield  {journal} {\bibinfo
   {journal} {Phys. Rev. D}\ }\textbf {\bibinfo {volume} {106}},\ \bibinfo
  {pages} {094501} (\bibinfo {year} {2022})},\ \Eprint
  {https://arxiv.org/abs/2206.08654} {arXiv:2206.08654 [hep-lat]} \BibitemShut
  {NoStop}%
\bibitem [{\citenamefont {Dowdall}\ \emph {et~al.}(2013)\citenamefont
  {Dowdall}, \citenamefont {Davies}, \citenamefont {Lepage},\ and\
  \citenamefont {McNeile}}]{Dowdall:2013rya}%
  \BibitemOpen
  \bibfield  {author} {\bibinfo {author} {\bibfnamefont {R.~J.}\ \bibnamefont
  {Dowdall}}, \bibinfo {author} {\bibfnamefont {C.~T.~H.}\ \bibnamefont
  {Davies}}, \bibinfo {author} {\bibfnamefont {G.~P.}\ \bibnamefont {Lepage}},\
  and\ \bibinfo {author} {\bibfnamefont {C.}~\bibnamefont {McNeile}},\
  }\bibfield  {title} {\bibinfo {title} {{$V_{us}$ from $\pi$ and $K$ decay
  constants in full lattice QCD with physical $u$, $d$, $s$ and $c$ quarks}},\
  }\href {https://doi.org/10.1103/PhysRevD.88.074504} {\bibfield  {journal}
  {\bibinfo  {journal} {Phys. Rev. D}\ }\textbf {\bibinfo {volume} {88}},\
  \bibinfo {pages} {074504} (\bibinfo {year} {2013})},\ \Eprint
  {https://arxiv.org/abs/1303.1670} {arXiv:1303.1670 [hep-lat]} \BibitemShut
  {NoStop}%
\bibitem [{\citenamefont {Carrasco}\ \emph
  {et~al.}(2015{\natexlab{a}})\citenamefont {Carrasco} \emph
  {et~al.}}]{Carrasco:2014poa}%
  \BibitemOpen
  \bibfield  {author} {\bibinfo {author} {\bibfnamefont {N.}~\bibnamefont
  {Carrasco}} \emph {et~al.},\ }\bibfield  {title} {\bibinfo {title} {{Leptonic
  decay constants $f_{K}$, $f_{D}$, and $f_{{D}_{s}}$ with $N_{f} = 2+1+1$
  twisted-mass lattice QCD}},\ }\href
  {https://doi.org/10.1103/PhysRevD.91.054507} {\bibfield  {journal} {\bibinfo
  {journal} {Phys. Rev. D}\ }\textbf {\bibinfo {volume} {91}},\ \bibinfo
  {pages} {054507} (\bibinfo {year} {2015}{\natexlab{a}})},\ \Eprint
  {https://arxiv.org/abs/1411.7908} {arXiv:1411.7908 [hep-lat]} \BibitemShut
  {NoStop}%
\bibitem [{\citenamefont {Bazavov}\ \emph {et~al.}(2018)\citenamefont {Bazavov}
  \emph {et~al.}}]{Bazavov:2017lyh}%
  \BibitemOpen
  \bibfield  {author} {\bibinfo {author} {\bibfnamefont {A.}~\bibnamefont
  {Bazavov}} \emph {et~al.},\ }\bibfield  {title} {\bibinfo {title} {{$B$- and
  $D$-meson leptonic decay constants from four-flavor lattice QCD}},\ }\href
  {https://doi.org/10.1103/PhysRevD.98.074512} {\bibfield  {journal} {\bibinfo
  {journal} {Phys. Rev. D}\ }\textbf {\bibinfo {volume} {98}},\ \bibinfo
  {pages} {074512} (\bibinfo {year} {2018})},\ \Eprint
  {https://arxiv.org/abs/1712.09262} {arXiv:1712.09262 [hep-lat]} \BibitemShut
  {NoStop}%
\bibitem [{\citenamefont {Miller}\ \emph {et~al.}(2020)\citenamefont {Miller}
  \emph {et~al.}}]{Miller:2020xhy}%
  \BibitemOpen
  \bibfield  {author} {\bibinfo {author} {\bibfnamefont {N.}~\bibnamefont
  {Miller}} \emph {et~al.},\ }\bibfield  {title} {\bibinfo {title} {{$F_K /
  F_\pi$ from M\"obius domain-wall fermions solved on gradient-flowed HISQ
  ensembles}},\ }\href {https://doi.org/10.1103/PhysRevD.102.034507} {\bibfield
   {journal} {\bibinfo  {journal} {Phys. Rev. D}\ }\textbf {\bibinfo {volume}
  {102}},\ \bibinfo {pages} {034507} (\bibinfo {year} {2020})},\ \Eprint
  {https://arxiv.org/abs/2005.04795} {arXiv:2005.04795 [hep-lat]} \BibitemShut
  {NoStop}%
\bibitem [{\citenamefont {D\"urr}\ \emph {et~al.}(2017)\citenamefont {D\"urr}
  \emph {et~al.}}]{Durr:2016ulb}%
  \BibitemOpen
  \bibfield  {author} {\bibinfo {author} {\bibfnamefont {S.}~\bibnamefont
  {D\"urr}} \emph {et~al.},\ }\bibfield  {title} {\bibinfo {title} {{Leptonic
  decay-constant ratio $f_K/f_\pi$ from lattice QCD using 2+1 clover-improved
  fermion flavors with 2-HEX smearing}},\ }\href
  {https://doi.org/10.1103/PhysRevD.95.054513} {\bibfield  {journal} {\bibinfo
  {journal} {Phys. Rev. D}\ }\textbf {\bibinfo {volume} {95}},\ \bibinfo
  {pages} {054513} (\bibinfo {year} {2017})},\ \Eprint
  {https://arxiv.org/abs/1601.05998} {arXiv:1601.05998 [hep-lat]} \BibitemShut
  {NoStop}%
\bibitem [{\citenamefont {Bornyakov}\ \emph {et~al.}(2017)\citenamefont
  {Bornyakov}, \citenamefont {Horsley}, \citenamefont {Nakamura}, \citenamefont
  {Perlt}, \citenamefont {Pleiter}, \citenamefont {Rakow}, \citenamefont
  {Schierholz}, \citenamefont {Schiller}, \citenamefont {St\"uben},\ and\
  \citenamefont {Zanotti}}]{QCDSF-UKQCD:2016rau}%
  \BibitemOpen
  \bibfield  {author} {\bibinfo {author} {\bibfnamefont {V.~G.}\ \bibnamefont
  {Bornyakov}}, \bibinfo {author} {\bibfnamefont {R.}~\bibnamefont {Horsley}},
  \bibinfo {author} {\bibfnamefont {Y.}~\bibnamefont {Nakamura}}, \bibinfo
  {author} {\bibfnamefont {H.}~\bibnamefont {Perlt}}, \bibinfo {author}
  {\bibfnamefont {D.}~\bibnamefont {Pleiter}}, \bibinfo {author} {\bibfnamefont
  {P.~E.~L.}\ \bibnamefont {Rakow}}, \bibinfo {author} {\bibfnamefont
  {G.}~\bibnamefont {Schierholz}}, \bibinfo {author} {\bibfnamefont
  {A.}~\bibnamefont {Schiller}}, \bibinfo {author} {\bibfnamefont
  {H.}~\bibnamefont {St\"uben}},\ and\ \bibinfo {author} {\bibfnamefont
  {J.~M.}\ \bibnamefont {Zanotti}} (\bibinfo {collaboration} {QCDSF, UKQCD}),\
  }\bibfield  {title} {\bibinfo {title} {{Flavour breaking effects in the
  pseudoscalar meson decay constants}},\ }\href
  {https://doi.org/10.1016/j.physletb.2017.02.018} {\bibfield  {journal}
  {\bibinfo  {journal} {Phys. Lett. B}\ }\textbf {\bibinfo {volume} {767}},\
  \bibinfo {pages} {366} (\bibinfo {year} {2017})},\ \Eprint
  {https://arxiv.org/abs/1612.04798} {arXiv:1612.04798 [hep-lat]} \BibitemShut
  {NoStop}%
\bibitem [{\citenamefont {Aoki}\ \emph {et~al.}(2022)\citenamefont {Aoki} \emph
  {et~al.}}]{Aoki:2021kgd}%
  \BibitemOpen
  \bibfield  {author} {\bibinfo {author} {\bibfnamefont {Y.}~\bibnamefont
  {Aoki}} \emph {et~al.} (\bibinfo {collaboration} {Flavour Lattice Averaging
  Group}),\ }\bibfield  {title} {\bibinfo {title} {{FLAG Review 2021}},\ }\href
  {https://doi.org/10.1140/epjc/s10052-022-10536-1} {\bibfield  {journal}
  {\bibinfo  {journal} {Eur. Phys. J. C}\ }\textbf {\bibinfo {volume} {82}},\
  \bibinfo {pages} {869} (\bibinfo {year} {2022})},\ \Eprint
  {https://arxiv.org/abs/2111.09849} {arXiv:2111.09849 [hep-lat]} \BibitemShut
  {NoStop}%
\bibitem [{\citenamefont {Cirigliano}\ \emph {et~al.}(2022)\citenamefont
  {Cirigliano}, \citenamefont {Crivellin}, \citenamefont {Hoferichter},\ and\
  \citenamefont {Moulson}}]{Cirigliano:2022yyo}%
  \BibitemOpen
  \bibfield  {author} {\bibinfo {author} {\bibfnamefont {V.}~\bibnamefont
  {Cirigliano}}, \bibinfo {author} {\bibfnamefont {A.}~\bibnamefont
  {Crivellin}}, \bibinfo {author} {\bibfnamefont {M.}~\bibnamefont
  {Hoferichter}},\ and\ \bibinfo {author} {\bibfnamefont {M.}~\bibnamefont
  {Moulson}},\ }\bibfield  {title} {\bibinfo {title} {{Scrutinizing CKM
  unitarity with a new measurement of the $K_{\mu 3}/K_{\mu 2}$ branching
  fraction}},\ }\href@noop {} {\  (\bibinfo {year} {2022})},\ \Eprint
  {https://arxiv.org/abs/2208.11707} {arXiv:2208.11707 [hep-ph]} \BibitemShut
  {NoStop}%
\bibitem [{\citenamefont {Moulson}(2017)}]{Moulson:2017ive}%
  \BibitemOpen
  \bibfield  {author} {\bibinfo {author} {\bibfnamefont {M.}~\bibnamefont
  {Moulson}},\ }\bibfield  {title} {\bibinfo {title} {{Experimental
  determination of $V_{us}$ from kaon decays}},\ }\href
  {https://doi.org/10.22323/1.291.0033} {\bibfield  {journal} {\bibinfo
  {journal} {PoS}\ }\textbf {\bibinfo {volume} {CKM2016}},\ \bibinfo {pages}
  {033} (\bibinfo {year} {2017})},\ \Eprint {https://arxiv.org/abs/1704.04104}
  {arXiv:1704.04104 [hep-ex]} \BibitemShut {NoStop}%
\bibitem [{\citenamefont {Giusti}\ \emph {et~al.}(2018)\citenamefont {Giusti},
  \citenamefont {Lubicz}, \citenamefont {Martinelli}, \citenamefont
  {Sachrajda}, \citenamefont {Sanfilippo}, \citenamefont {Simula},
  \citenamefont {Tantalo},\ and\ \citenamefont {Tarantino}}]{Giusti:2017dwk}%
  \BibitemOpen
  \bibfield  {author} {\bibinfo {author} {\bibfnamefont {D.}~\bibnamefont
  {Giusti}}, \bibinfo {author} {\bibfnamefont {V.}~\bibnamefont {Lubicz}},
  \bibinfo {author} {\bibfnamefont {G.}~\bibnamefont {Martinelli}}, \bibinfo
  {author} {\bibfnamefont {C.~T.}\ \bibnamefont {Sachrajda}}, \bibinfo {author}
  {\bibfnamefont {F.}~\bibnamefont {Sanfilippo}}, \bibinfo {author}
  {\bibfnamefont {S.}~\bibnamefont {Simula}}, \bibinfo {author} {\bibfnamefont
  {N.}~\bibnamefont {Tantalo}},\ and\ \bibinfo {author} {\bibfnamefont
  {C.}~\bibnamefont {Tarantino}},\ }\bibfield  {title} {\bibinfo {title}
  {{First lattice calculation of the QED corrections to leptonic decay
  rates}},\ }\href {https://doi.org/10.1103/PhysRevLett.120.072001} {\bibfield
  {journal} {\bibinfo  {journal} {Phys. Rev. Lett.}\ }\textbf {\bibinfo
  {volume} {120}},\ \bibinfo {pages} {072001} (\bibinfo {year} {2018})},\
  \Eprint {https://arxiv.org/abs/1711.06537} {arXiv:1711.06537 [hep-lat]}
  \BibitemShut {NoStop}%
\bibitem [{\citenamefont {Di~Carlo}\ \emph {et~al.}(2019)\citenamefont
  {Di~Carlo}, \citenamefont {Giusti}, \citenamefont {Lubicz}, \citenamefont
  {Martinelli}, \citenamefont {Sachrajda}, \citenamefont {Sanfilippo},
  \citenamefont {Simula},\ and\ \citenamefont {Tantalo}}]{DiCarlo:2019thl}%
  \BibitemOpen
  \bibfield  {author} {\bibinfo {author} {\bibfnamefont {M.}~\bibnamefont
  {Di~Carlo}}, \bibinfo {author} {\bibfnamefont {D.}~\bibnamefont {Giusti}},
  \bibinfo {author} {\bibfnamefont {V.}~\bibnamefont {Lubicz}}, \bibinfo
  {author} {\bibfnamefont {G.}~\bibnamefont {Martinelli}}, \bibinfo {author}
  {\bibfnamefont {C.~T.}\ \bibnamefont {Sachrajda}}, \bibinfo {author}
  {\bibfnamefont {F.}~\bibnamefont {Sanfilippo}}, \bibinfo {author}
  {\bibfnamefont {S.}~\bibnamefont {Simula}},\ and\ \bibinfo {author}
  {\bibfnamefont {N.}~\bibnamefont {Tantalo}},\ }\bibfield  {title} {\bibinfo
  {title} {{Light-meson leptonic decay rates in lattice QCD+QED}},\ }\href
  {https://doi.org/10.1103/PhysRevD.100.034514} {\bibfield  {journal} {\bibinfo
   {journal} {Phys. Rev. D}\ }\textbf {\bibinfo {volume} {100}},\ \bibinfo
  {pages} {034514} (\bibinfo {year} {2019})},\ \Eprint
  {https://arxiv.org/abs/1904.08731} {arXiv:1904.08731 [hep-lat]} \BibitemShut
  {NoStop}%
\bibitem [{\citenamefont {Seng}\ \emph
  {et~al.}(2021{\natexlab{a}})\citenamefont {Seng}, \citenamefont {Galviz},
  \citenamefont {Gorchtein},\ and\ \citenamefont {Mei\ss{}ner}}]{Seng:2021boy}%
  \BibitemOpen
  \bibfield  {author} {\bibinfo {author} {\bibfnamefont {C.-Y.}\ \bibnamefont
  {Seng}}, \bibinfo {author} {\bibfnamefont {D.}~\bibnamefont {Galviz}},
  \bibinfo {author} {\bibfnamefont {M.}~\bibnamefont {Gorchtein}},\ and\
  \bibinfo {author} {\bibfnamefont {U.-G.}\ \bibnamefont {Mei\ss{}ner}},\
  }\bibfield  {title} {\bibinfo {title} {{High-precision determination of the
  $K_{e3}$ radiative corrections}},\ }\href
  {https://doi.org/10.1016/j.physletb.2021.136522} {\bibfield  {journal}
  {\bibinfo  {journal} {Phys. Lett. B}\ }\textbf {\bibinfo {volume} {820}},\
  \bibinfo {pages} {136522} (\bibinfo {year} {2021}{\natexlab{a}})},\ \Eprint
  {https://arxiv.org/abs/2103.00975} {arXiv:2103.00975 [hep-ph]} \BibitemShut
  {NoStop}%
\bibitem [{\citenamefont {Seng}\ \emph
  {et~al.}(2021{\natexlab{b}})\citenamefont {Seng}, \citenamefont {Galviz},
  \citenamefont {Gorchtein},\ and\ \citenamefont {Mei\ss{}ner}}]{Seng:2021wcf}%
  \BibitemOpen
  \bibfield  {author} {\bibinfo {author} {\bibfnamefont {C.-Y.}\ \bibnamefont
  {Seng}}, \bibinfo {author} {\bibfnamefont {D.}~\bibnamefont {Galviz}},
  \bibinfo {author} {\bibfnamefont {M.}~\bibnamefont {Gorchtein}},\ and\
  \bibinfo {author} {\bibfnamefont {U.-G.}\ \bibnamefont {Mei\ss{}ner}},\
  }\bibfield  {title} {\bibinfo {title} {{Improved $K_{e3}$ radiative
  corrections sharpen the $K_{\mu 2}$-$K_{l3}$ discrepancy}},\ }\href
  {https://doi.org/10.1007/JHEP11(2021)172} {\bibfield  {journal} {\bibinfo
  {journal} {JHEP}\ }\textbf {\bibinfo {volume} {11}},\ \bibinfo {pages}
  {172}},\ \Eprint {https://arxiv.org/abs/2103.04843} {arXiv:2103.04843
  [hep-ph]} \BibitemShut {NoStop}%
\bibitem [{\citenamefont {Seng}\ \emph {et~al.}(2022)\citenamefont {Seng},
  \citenamefont {Galviz}, \citenamefont {Gorchtein},\ and\ \citenamefont
  {Mei\ss{}ner}}]{Seng:2022wcw}%
  \BibitemOpen
  \bibfield  {author} {\bibinfo {author} {\bibfnamefont {C.-Y.}\ \bibnamefont
  {Seng}}, \bibinfo {author} {\bibfnamefont {D.}~\bibnamefont {Galviz}},
  \bibinfo {author} {\bibfnamefont {M.}~\bibnamefont {Gorchtein}},\ and\
  \bibinfo {author} {\bibfnamefont {U.-G.}\ \bibnamefont {Mei\ss{}ner}},\
  }\bibfield  {title} {\bibinfo {title} {{Complete theory of radiative
  corrections to K$_{\ell3}$ decays and the $V_{us}$ update}},\ }\href
  {https://doi.org/10.1007/JHEP07(2022)071} {\bibfield  {journal} {\bibinfo
  {journal} {JHEP}\ }\textbf {\bibinfo {volume} {07}},\ \bibinfo {pages}
  {071}},\ \Eprint {https://arxiv.org/abs/2203.05217} {arXiv:2203.05217
  [hep-ph]} \BibitemShut {NoStop}%
\bibitem [{\citenamefont {Amhis}\ \emph {et~al.}(2022)\citenamefont {Amhis}
  \emph {et~al.}}]{HFLAV:2022pwe}%
  \BibitemOpen
  \bibfield  {author} {\bibinfo {author} {\bibfnamefont {Y.}~\bibnamefont
  {Amhis}} \emph {et~al.} (\bibinfo {collaboration} {HFLAV}),\ }\bibfield
  {title} {\bibinfo {title} {{Averages of $b$-hadron, $c$-hadron, and
  $\tau$-lepton properties as of 2021}},\ }\href@noop {} {\  (\bibinfo {year}
  {2022})},\ \Eprint {https://arxiv.org/abs/2206.07501} {arXiv:2206.07501
  [hep-ex]} \BibitemShut {NoStop}%
\bibitem [{\citenamefont {Aubin}\ \emph {et~al.}(2005)\citenamefont {Aubin}
  \emph {et~al.}}]{FermilabLattice:2004ncd}%
  \BibitemOpen
  \bibfield  {author} {\bibinfo {author} {\bibfnamefont {C.}~\bibnamefont
  {Aubin}} \emph {et~al.} (\bibinfo {collaboration} {Fermilab Lattice, MILC,
  HPQCD}),\ }\bibfield  {title} {\bibinfo {title} {{Semileptonic decays of $D$
  mesons in three-flavor lattice QCD}},\ }\href
  {https://doi.org/10.1103/PhysRevLett.94.011601} {\bibfield  {journal}
  {\bibinfo  {journal} {Phys. Rev. Lett.}\ }\textbf {\bibinfo {volume} {94}},\
  \bibinfo {pages} {011601} (\bibinfo {year} {2005})},\ \Eprint
  {https://arxiv.org/abs/hep-ph/0408306} {arXiv:hep-ph/0408306} \BibitemShut
  {NoStop}%
\bibitem [{\citenamefont {Becirevic}\ \emph {et~al.}(2007)\citenamefont
  {Becirevic}, \citenamefont {Haas},\ and\ \citenamefont
  {Mescia}}]{Becirevic:2007cr}%
  \BibitemOpen
  \bibfield  {author} {\bibinfo {author} {\bibfnamefont {D.}~\bibnamefont
  {Becirevic}}, \bibinfo {author} {\bibfnamefont {B.}~\bibnamefont {Haas}},\
  and\ \bibinfo {author} {\bibfnamefont {F.}~\bibnamefont {Mescia}},\
  }\bibfield  {title} {\bibinfo {title} {{Semileptonic $D$ decays and lattice
  QCD}},\ }\href {https://doi.org/10.22323/1.042.0355} {\bibfield  {journal}
  {\bibinfo  {journal} {PoS}\ }\textbf {\bibinfo {volume} {LATTICE2007}},\
  \bibinfo {pages} {355} (\bibinfo {year} {2007})},\ \Eprint
  {https://arxiv.org/abs/0710.1741} {arXiv:0710.1741 [hep-lat]} \BibitemShut
  {NoStop}%
\bibitem [{\citenamefont {Di~Vita}\ \emph {et~al.}(2010)\citenamefont
  {Di~Vita}, \citenamefont {Haas}, \citenamefont {Lubicz}, \citenamefont
  {Mescia}, \citenamefont {Simula},\ and\ \citenamefont
  {Tarantino}}]{DiVita:2010mlb}%
  \BibitemOpen
  \bibfield  {author} {\bibinfo {author} {\bibfnamefont {S.}~\bibnamefont
  {Di~Vita}}, \bibinfo {author} {\bibfnamefont {B.}~\bibnamefont {Haas}},
  \bibinfo {author} {\bibfnamefont {V.}~\bibnamefont {Lubicz}}, \bibinfo
  {author} {\bibfnamefont {F.}~\bibnamefont {Mescia}}, \bibinfo {author}
  {\bibfnamefont {S.}~\bibnamefont {Simula}},\ and\ \bibinfo {author}
  {\bibfnamefont {C.}~\bibnamefont {Tarantino}} (\bibinfo {collaboration}
  {ETM}),\ }\bibfield  {title} {\bibinfo {title} {{Form factors of the $D \to
  \pi$ and $D \to K$ semileptonic decays with $N_f = 2$ twisted mass lattice
  QCD}},\ }\href@noop {} {\bibfield  {journal} {\bibinfo  {journal} {PoS}\
  }\textbf {\bibinfo {volume} {LATTICE2010}},\ \bibinfo {pages} {301} (\bibinfo
  {year} {2010})},\ \Eprint {https://arxiv.org/abs/1104.0869} {arXiv:1104.0869
  [hep-lat]} \BibitemShut {NoStop}%
\bibitem [{\citenamefont {Na}\ \emph {et~al.}(2010)\citenamefont {Na},
  \citenamefont {Davies}, \citenamefont {Follana}, \citenamefont {Lepage},\
  and\ \citenamefont {Shigemitsu}}]{Na:2010uf}%
  \BibitemOpen
  \bibfield  {author} {\bibinfo {author} {\bibfnamefont {H.}~\bibnamefont
  {Na}}, \bibinfo {author} {\bibfnamefont {C.~T.~H.}\ \bibnamefont {Davies}},
  \bibinfo {author} {\bibfnamefont {E.}~\bibnamefont {Follana}}, \bibinfo
  {author} {\bibfnamefont {G.~P.}\ \bibnamefont {Lepage}},\ and\ \bibinfo
  {author} {\bibfnamefont {J.}~\bibnamefont {Shigemitsu}},\ }\bibfield  {title}
  {\bibinfo {title} {{The $D \rightarrow K, l \nu$ semileptonic decay scalar
  form factor and $|V_{cs}|$ from lattice QCD}},\ }\href
  {https://doi.org/10.1103/PhysRevD.82.114506} {\bibfield  {journal} {\bibinfo
  {journal} {Phys. Rev. D}\ }\textbf {\bibinfo {volume} {82}},\ \bibinfo
  {pages} {114506} (\bibinfo {year} {2010})},\ \Eprint
  {https://arxiv.org/abs/1008.4562} {arXiv:1008.4562 [hep-lat]} \BibitemShut
  {NoStop}%
\bibitem [{\citenamefont {Na}\ \emph {et~al.}(2011)\citenamefont {Na},
  \citenamefont {Davies}, \citenamefont {Follana}, \citenamefont {Koponen},
  \citenamefont {Lepage},\ and\ \citenamefont {Shigemitsu}}]{Na:2011mc}%
  \BibitemOpen
  \bibfield  {author} {\bibinfo {author} {\bibfnamefont {H.}~\bibnamefont
  {Na}}, \bibinfo {author} {\bibfnamefont {C.~T.~H.}\ \bibnamefont {Davies}},
  \bibinfo {author} {\bibfnamefont {E.}~\bibnamefont {Follana}}, \bibinfo
  {author} {\bibfnamefont {J.}~\bibnamefont {Koponen}}, \bibinfo {author}
  {\bibfnamefont {G.~P.}\ \bibnamefont {Lepage}},\ and\ \bibinfo {author}
  {\bibfnamefont {J.}~\bibnamefont {Shigemitsu}},\ }\bibfield  {title}
  {\bibinfo {title} {{$D \rightarrow \pi, l \nu$ semileptonic decays,
  $|V_{cd}|$ and 2$^{\rm nd}$ row unitarity from lattice QCD}},\ }\href
  {https://doi.org/10.1103/PhysRevD.84.114505} {\bibfield  {journal} {\bibinfo
  {journal} {Phys. Rev. D}\ }\textbf {\bibinfo {volume} {84}},\ \bibinfo
  {pages} {114505} (\bibinfo {year} {2011})},\ \Eprint
  {https://arxiv.org/abs/1109.1501} {arXiv:1109.1501 [hep-lat]} \BibitemShut
  {NoStop}%
\bibitem [{\citenamefont {Koponen}\ \emph {et~al.}(2011)\citenamefont
  {Koponen}, \citenamefont {Davies}, \citenamefont {Donald}, \citenamefont
  {Follana}, \citenamefont {Lepage}, \citenamefont {Na},\ and\ \citenamefont
  {Shigemitsu}}]{Koponen:2011ev}%
  \BibitemOpen
  \bibfield  {author} {\bibinfo {author} {\bibfnamefont {J.}~\bibnamefont
  {Koponen}}, \bibinfo {author} {\bibfnamefont {C.~T.~H.}\ \bibnamefont
  {Davies}}, \bibinfo {author} {\bibfnamefont {G.}~\bibnamefont {Donald}},
  \bibinfo {author} {\bibfnamefont {E.}~\bibnamefont {Follana}}, \bibinfo
  {author} {\bibfnamefont {G.~P.}\ \bibnamefont {Lepage}}, \bibinfo {author}
  {\bibfnamefont {H.}~\bibnamefont {Na}},\ and\ \bibinfo {author}
  {\bibfnamefont {J.}~\bibnamefont {Shigemitsu}} (\bibinfo {collaboration}
  {HPQCD}),\ }\bibfield  {title} {\bibinfo {title} {{The $D\to K$ and $D\to\pi$
  semileptonic decay form factors from lattice QCD}},\ }\href
  {https://doi.org/10.22323/1.139.0286} {\bibfield  {journal} {\bibinfo
  {journal} {PoS}\ }\textbf {\bibinfo {volume} {LATTICE2011}},\ \bibinfo
  {pages} {286} (\bibinfo {year} {2011})},\ \Eprint
  {https://arxiv.org/abs/1111.0225} {arXiv:1111.0225 [hep-lat]} \BibitemShut
  {NoStop}%
\bibitem [{\citenamefont {Bailey}\ \emph {et~al.}(2012)\citenamefont {Bailey},
  \citenamefont {Du}, \citenamefont {El-Khadra}, \citenamefont {Gottlieb},
  \citenamefont {Jain}, \citenamefont {Kronfeld}, \citenamefont {Van~de
  Water},\ and\ \citenamefont {Zhou}}]{Bailey:2012sa}%
  \BibitemOpen
  \bibfield  {author} {\bibinfo {author} {\bibfnamefont {J.~A.}\ \bibnamefont
  {Bailey}}, \bibinfo {author} {\bibfnamefont {D.}~\bibnamefont {Du}}, \bibinfo
  {author} {\bibfnamefont {A.~X.}\ \bibnamefont {El-Khadra}}, \bibinfo {author}
  {\bibfnamefont {S.}~\bibnamefont {Gottlieb}}, \bibinfo {author}
  {\bibfnamefont {R.~D.}\ \bibnamefont {Jain}}, \bibinfo {author}
  {\bibfnamefont {A.~S.}\ \bibnamefont {Kronfeld}}, \bibinfo {author}
  {\bibfnamefont {R.~S.}\ \bibnamefont {Van~de Water}},\ and\ \bibinfo {author}
  {\bibfnamefont {R.}~\bibnamefont {Zhou}} (\bibinfo {collaboration} {Fermilab
  Lattice, MILC}),\ }\bibfield  {title} {\bibinfo {title} {{Charm semileptonic
  decays and $|V_{cs(d)}|$ from heavy clover quarks and 2+1 flavor asqtad
  staggered ensembles}},\ }\href {https://doi.org/10.22323/1.164.0272}
  {\bibfield  {journal} {\bibinfo  {journal} {PoS}\ }\textbf {\bibinfo {volume}
  {LATTICE2012}},\ \bibinfo {pages} {272} (\bibinfo {year} {2012})},\ \Eprint
  {https://arxiv.org/abs/1211.4964} {arXiv:1211.4964 [hep-lat]} \BibitemShut
  {NoStop}%
\bibitem [{\citenamefont {Koponen}\ \emph {et~al.}(2012)\citenamefont
  {Koponen}, \citenamefont {Davies},\ and\ \citenamefont
  {Donald}}]{Koponen:2012di}%
  \BibitemOpen
  \bibfield  {author} {\bibinfo {author} {\bibfnamefont {J.}~\bibnamefont
  {Koponen}}, \bibinfo {author} {\bibfnamefont {C.~T.~H.}\ \bibnamefont
  {Davies}},\ and\ \bibinfo {author} {\bibfnamefont {G.}~\bibnamefont {Donald}}
  (\bibinfo {collaboration} {HPQCD}),\ }\bibfield  {title} {\bibinfo {title}
  {{$D\to K$ and $D\to\pi$ semileptonic form factors from lattice QCD}},\ }in\
  \href@noop {} {\emph {\bibinfo {booktitle} {{5th International Workshop on
  Charm Physics}}}}\ (\bibinfo {year} {2012})\ \Eprint
  {https://arxiv.org/abs/1208.6242} {arXiv:1208.6242 [hep-lat]} \BibitemShut
  {NoStop}%
\bibitem [{\citenamefont {Koponen}\ \emph {et~al.}(2013)\citenamefont
  {Koponen}, \citenamefont {Davies}, \citenamefont {Donald}, \citenamefont
  {Follana}, \citenamefont {Lepage}, \citenamefont {Na},\ and\ \citenamefont
  {Shigemitsu}}]{Koponen:2013tua}%
  \BibitemOpen
  \bibfield  {author} {\bibinfo {author} {\bibfnamefont {J.}~\bibnamefont
  {Koponen}}, \bibinfo {author} {\bibfnamefont {C.~T.~H.}\ \bibnamefont
  {Davies}}, \bibinfo {author} {\bibfnamefont {G.~C.}\ \bibnamefont {Donald}},
  \bibinfo {author} {\bibfnamefont {E.}~\bibnamefont {Follana}}, \bibinfo
  {author} {\bibfnamefont {G.~P.}\ \bibnamefont {Lepage}}, \bibinfo {author}
  {\bibfnamefont {H.}~\bibnamefont {Na}},\ and\ \bibinfo {author}
  {\bibfnamefont {J.}~\bibnamefont {Shigemitsu}},\ }\bibfield  {title}
  {\bibinfo {title} {{The shape of the $D \to K$ semileptonic form factor from
  full lattice QCD and $V_{cs}$}},\ }\href@noop {} {\  (\bibinfo {year}
  {2013})},\ \Eprint {https://arxiv.org/abs/1305.1462} {arXiv:1305.1462
  [hep-lat]} \BibitemShut {NoStop}%
\bibitem [{\citenamefont {Primer}\ \emph {et~al.}(2016)\citenamefont {Primer},
  \citenamefont {Bernard}, \citenamefont {DeTar}, \citenamefont {El-Khadra},
  \citenamefont {G\'amiz}, \citenamefont {Komijani}, \citenamefont {Kronfeld},
  \citenamefont {Simone}, \citenamefont {Toussaint},\ and\ \citenamefont
  {Van~de Water}}]{LATTICE-FERMILAB:2015wnj}%
  \BibitemOpen
  \bibfield  {author} {\bibinfo {author} {\bibfnamefont {T.}~\bibnamefont
  {Primer}}, \bibinfo {author} {\bibfnamefont {C.}~\bibnamefont {Bernard}},
  \bibinfo {author} {\bibfnamefont {C.}~\bibnamefont {DeTar}}, \bibinfo
  {author} {\bibfnamefont {A.}~\bibnamefont {El-Khadra}}, \bibinfo {author}
  {\bibfnamefont {E.}~\bibnamefont {G\'amiz}}, \bibinfo {author} {\bibfnamefont
  {J.}~\bibnamefont {Komijani}}, \bibinfo {author} {\bibfnamefont
  {A.}~\bibnamefont {Kronfeld}}, \bibinfo {author} {\bibfnamefont
  {J.}~\bibnamefont {Simone}}, \bibinfo {author} {\bibfnamefont
  {D.}~\bibnamefont {Toussaint}},\ and\ \bibinfo {author} {\bibfnamefont
  {R.~S.}\ \bibnamefont {Van~de Water}} (\bibinfo {collaboration} {Fermilab
  Lattice, MILC}),\ }\bibfield  {title} {\bibinfo {title} {{$D$}-meson
  semileptonic form factors at zero momentum transfer in (2+1+1)-flavor lattice
  {QCD}},\ }\href@noop {} {\bibfield  {journal} {\bibinfo  {journal} {PoS}\
  }\textbf {\bibinfo {volume} {LATTICE2015}},\ \bibinfo {pages} {338} (\bibinfo
  {year} {2016})},\ \Eprint {https://arxiv.org/abs/1511.04000}
  {arXiv:1511.04000 [hep-lat]} \BibitemShut {NoStop}%
\bibitem [{\citenamefont {Primer}\ \emph {et~al.}(2017)\citenamefont {Primer}
  \emph {et~al.}}]{FermilabLattice:2017eea}%
  \BibitemOpen
  \bibfield  {author} {\bibinfo {author} {\bibfnamefont {T.}~\bibnamefont
  {Primer}} \emph {et~al.} (\bibinfo {collaboration} {Fermilab Lattice,
  MILC}),\ }\bibfield  {title} {\bibinfo {title} {{$D$ meson semileptonic form
  factors with HISQ valence and sea quarks}},\ }\href
  {https://doi.org/10.22323/1.256.0305} {\bibfield  {journal} {\bibinfo
  {journal} {PoS}\ }\textbf {\bibinfo {volume} {LATTICE2016}},\ \bibinfo
  {pages} {305} (\bibinfo {year} {2017})}\BibitemShut {NoStop}%
\bibitem [{\citenamefont {Kaneko}\ \emph {et~al.}(2018)\citenamefont {Kaneko},
  \citenamefont {Colquhoun}, \citenamefont {Fukaya},\ and\ \citenamefont
  {Hashimoto}}]{Kaneko:2017xgg}%
  \BibitemOpen
  \bibfield  {author} {\bibinfo {author} {\bibfnamefont {T.}~\bibnamefont
  {Kaneko}}, \bibinfo {author} {\bibfnamefont {B.}~\bibnamefont {Colquhoun}},
  \bibinfo {author} {\bibfnamefont {H.}~\bibnamefont {Fukaya}},\ and\ \bibinfo
  {author} {\bibfnamefont {S.}~\bibnamefont {Hashimoto}} (\bibinfo
  {collaboration} {JLQCD}),\ }\bibfield  {title} {\bibinfo {title} {{$D$} meson
  semileptonic form factors in {$N_f=3$ QCD} with {Möbius} domain-wall
  quarks},\ }\href {https://doi.org/10.1051/epjconf/201817513007} {\bibfield
  {journal} {\bibinfo  {journal} {EPJ Web Conf.}\ }\textbf {\bibinfo {volume}
  {175}},\ \bibinfo {pages} {13007} (\bibinfo {year} {2018})},\ \Eprint
  {https://arxiv.org/abs/1711.11235} {arXiv:1711.11235 [hep-lat]} \BibitemShut
  {NoStop}%
\bibitem [{\citenamefont {Lubicz}\ \emph {et~al.}(2017)\citenamefont {Lubicz},
  \citenamefont {Riggio}, \citenamefont {Salerno}, \citenamefont {Simula},\
  and\ \citenamefont {Tarantino}}]{Lubicz:2017syv}%
  \BibitemOpen
  \bibfield  {author} {\bibinfo {author} {\bibfnamefont {V.}~\bibnamefont
  {Lubicz}}, \bibinfo {author} {\bibfnamefont {L.}~\bibnamefont {Riggio}},
  \bibinfo {author} {\bibfnamefont {G.}~\bibnamefont {Salerno}}, \bibinfo
  {author} {\bibfnamefont {S.}~\bibnamefont {Simula}},\ and\ \bibinfo {author}
  {\bibfnamefont {C.}~\bibnamefont {Tarantino}} (\bibinfo {collaboration}
  {ETM}),\ }\bibfield  {title} {\bibinfo {title} {{Scalar and vector form
  factors of $D \to \pi(K) \ell \nu$ decays with $N_f=2+1+1$ twisted
  fermions}},\ }\href {https://doi.org/10.1103/PhysRevD.96.054514} {\bibfield
  {journal} {\bibinfo  {journal} {Phys. Rev. D}\ }\textbf {\bibinfo {volume}
  {96}},\ \bibinfo {pages} {054514} (\bibinfo {year} {2017})},\ \bibinfo {note}
  {[Errata: \href{https://doi.org/10.1103/PhysRevD.99.099902}{Phys. Rev. D
  \textbf{99}, 099902 (2019)} and
  \href{https://doi.org/10.1103/PhysRevD.100.079901}{Phys. Rev. D \textbf{100},
  079901 (2019)}]},\ \Eprint {https://arxiv.org/abs/1706.03017}
  {arXiv:1706.03017 [hep-lat]} \BibitemShut {NoStop}%
\bibitem [{\citenamefont {Lubicz}\ \emph {et~al.}(2018)\citenamefont {Lubicz},
  \citenamefont {Riggio}, \citenamefont {Salerno}, \citenamefont {Simula},\
  and\ \citenamefont {Tarantino}}]{Lubicz:2018rfs}%
  \BibitemOpen
  \bibfield  {author} {\bibinfo {author} {\bibfnamefont {V.}~\bibnamefont
  {Lubicz}}, \bibinfo {author} {\bibfnamefont {L.}~\bibnamefont {Riggio}},
  \bibinfo {author} {\bibfnamefont {G.}~\bibnamefont {Salerno}}, \bibinfo
  {author} {\bibfnamefont {S.}~\bibnamefont {Simula}},\ and\ \bibinfo {author}
  {\bibfnamefont {C.}~\bibnamefont {Tarantino}} (\bibinfo {collaboration}
  {ETM}),\ }\bibfield  {title} {\bibinfo {title} {{Tensor form factor of $D \to
  \pi(K) \ell \nu$ and $D \to \pi(K) \ell \ell$ decays with $N_f=2+1+1$
  twisted-mass fermions}},\ }\href {https://doi.org/10.1103/PhysRevD.98.014516}
  {\bibfield  {journal} {\bibinfo  {journal} {Phys. Rev. D}\ }\textbf {\bibinfo
  {volume} {98}},\ \bibinfo {pages} {014516} (\bibinfo {year} {2018})},\
  \Eprint {https://arxiv.org/abs/1803.04807} {arXiv:1803.04807 [hep-lat]}
  \BibitemShut {NoStop}%
\bibitem [{\citenamefont {Bazavov}\ \emph {et~al.}(2014)\citenamefont {Bazavov}
  \emph {et~al.}}]{FermilabLattice:2014tsy}%
  \BibitemOpen
  \bibfield  {author} {\bibinfo {author} {\bibfnamefont {A.}~\bibnamefont
  {Bazavov}} \emph {et~al.} (\bibinfo {collaboration} {Fermilab Lattice,
  MILC}),\ }\bibfield  {title} {\bibinfo {title} {Charmed and light
  pseudoscalar meson decay constants from four-flavor lattice {QCD} with
  physical light quarks},\ }\href {https://doi.org/10.1103/PhysRevD.90.074509}
  {\bibfield  {journal} {\bibinfo  {journal} {Phys. Rev. D}\ }\textbf {\bibinfo
  {volume} {90}},\ \bibinfo {pages} {074509} (\bibinfo {year} {2014})},\
  \Eprint {https://arxiv.org/abs/1407.3772} {arXiv:1407.3772 [hep-lat]}
  \BibitemShut {NoStop}%
\bibitem [{\citenamefont {Li}\ \emph {et~al.}(2019)\citenamefont {Li} \emph
  {et~al.}}]{FermilabLattice:2019ycs}%
  \BibitemOpen
  \bibfield  {author} {\bibinfo {author} {\bibfnamefont {R.}~\bibnamefont {Li}}
  \emph {et~al.} (\bibinfo {collaboration} {Fermilab Lattice, MILC}),\
  }\bibfield  {title} {\bibinfo {title} {{$D$} meson semileptonic decay form
  factors at $q^2 = 0$},\ }\href {https://doi.org/10.22323/1.334.0269}
  {\bibfield  {journal} {\bibinfo  {journal} {PoS}\ }\textbf {\bibinfo {volume}
  {LATTICE2018}},\ \bibinfo {pages} {269} (\bibinfo {year} {2019})},\ \Eprint
  {https://arxiv.org/abs/1901.08989} {arXiv:1901.08989 [hep-lat]} \BibitemShut
  {NoStop}%
\bibitem [{\citenamefont {Jay}\ \emph {et~al.}(2022)\citenamefont {Jay},
  \citenamefont {Lytle}, \citenamefont {DeTar}, \citenamefont {El-Khadra},
  \citenamefont {Gamiz}, \citenamefont {Gelzer}, \citenamefont {Gottlieb},
  \citenamefont {Kronfeld}, \citenamefont {Simone},\ and\ \citenamefont
  {Vaquero}}]{FermilabLattice:2021bxu}%
  \BibitemOpen
  \bibfield  {author} {\bibinfo {author} {\bibfnamefont {W.~I.}\ \bibnamefont
  {Jay}}, \bibinfo {author} {\bibfnamefont {A.}~\bibnamefont {Lytle}}, \bibinfo
  {author} {\bibfnamefont {C.}~\bibnamefont {DeTar}}, \bibinfo {author}
  {\bibfnamefont {A.~X.}\ \bibnamefont {El-Khadra}}, \bibinfo {author}
  {\bibfnamefont {E.}~\bibnamefont {Gamiz}}, \bibinfo {author} {\bibfnamefont
  {Z.}~\bibnamefont {Gelzer}}, \bibinfo {author} {\bibfnamefont
  {S.}~\bibnamefont {Gottlieb}}, \bibinfo {author} {\bibfnamefont
  {A.}~\bibnamefont {Kronfeld}}, \bibinfo {author} {\bibfnamefont
  {J.}~\bibnamefont {Simone}},\ and\ \bibinfo {author} {\bibfnamefont
  {A.}~\bibnamefont {Vaquero}} (\bibinfo {collaboration} {Fermilab Lattice,
  MILC}),\ }\bibfield  {title} {\bibinfo {title} {{$B$- and $D$-meson
  semileptonic decays with highly improved staggered quarks}},\ }\href
  {https://doi.org/10.22323/1.396.0109} {\bibfield  {journal} {\bibinfo
  {journal} {PoS}\ }\textbf {\bibinfo {volume} {LATTICE2021}},\ \bibinfo
  {pages} {109} (\bibinfo {year} {2022})},\ \Eprint
  {https://arxiv.org/abs/2111.05184} {arXiv:2111.05184 [hep-lat]} \BibitemShut
  {NoStop}%
\bibitem [{\citenamefont {Chakraborty}\ \emph {et~al.}(2021)\citenamefont
  {Chakraborty}, \citenamefont {Parrott}, \citenamefont {Bouchard},
  \citenamefont {Davies}, \citenamefont {Koponen},\ and\ \citenamefont
  {Lepage}}]{Chakraborty:2021qav}%
  \BibitemOpen
  \bibfield  {author} {\bibinfo {author} {\bibfnamefont {B.}~\bibnamefont
  {Chakraborty}}, \bibinfo {author} {\bibfnamefont {W.~G.}\ \bibnamefont
  {Parrott}}, \bibinfo {author} {\bibfnamefont {C.}~\bibnamefont {Bouchard}},
  \bibinfo {author} {\bibfnamefont {C.~T.~H.}\ \bibnamefont {Davies}}, \bibinfo
  {author} {\bibfnamefont {J.}~\bibnamefont {Koponen}},\ and\ \bibinfo {author}
  {\bibfnamefont {G.~P.}\ \bibnamefont {Lepage}} (\bibinfo {collaboration}
  {HPQCD}),\ }\bibfield  {title} {\bibinfo {title} {{Improved $V_{cs}$
  determination using precise lattice QCD form factors for $D\to K\ell\nu$}},\
  }\href {https://doi.org/10.1103/PhysRevD.104.034505} {\bibfield  {journal}
  {\bibinfo  {journal} {Phys. Rev. D}\ }\textbf {\bibinfo {volume} {104}},\
  \bibinfo {pages} {034505} (\bibinfo {year} {2021})},\ \Eprint
  {https://arxiv.org/abs/2104.09883} {arXiv:2104.09883 [hep-lat]} \BibitemShut
  {NoStop}%
\bibitem [{\citenamefont {Parrott}\ \emph {et~al.}(2022)\citenamefont
  {Parrott}, \citenamefont {Bouchard},\ and\ \citenamefont
  {Davies}}]{Parrott:2022rgu}%
  \BibitemOpen
  \bibfield  {author} {\bibinfo {author} {\bibfnamefont {W.~G.}\ \bibnamefont
  {Parrott}}, \bibinfo {author} {\bibfnamefont {C.}~\bibnamefont {Bouchard}},\
  and\ \bibinfo {author} {\bibfnamefont {C.~T.~H.}\ \bibnamefont {Davies}}
  (\bibinfo {collaboration} {HPQCD}),\ }\bibfield  {title} {\bibinfo {title}
  {{$B\to K$ and $D\to K$ form factors from fully relativistic lattice QCD}},\
  }\href@noop {} {\  (\bibinfo {year} {2022})},\ \Eprint
  {https://arxiv.org/abs/2207.12468} {arXiv:2207.12468 [hep-lat]} \BibitemShut
  {NoStop}%
\bibitem [{\citenamefont {Sirlin}(1982)}]{Sirlin:1981ie}%
  \BibitemOpen
  \bibfield  {author} {\bibinfo {author} {\bibfnamefont {A.}~\bibnamefont
  {Sirlin}},\ }\bibfield  {title} {\bibinfo {title} {Large {$m_W$, $m_Z$}
  behavior of the {$O(\alpha)$} corrections to semileptonic processes mediated
  by {$W$}},\ }\href {https://doi.org/10.1016/0550-3213(82)90303-0} {\bibfield
  {journal} {\bibinfo  {journal} {Nucl. Phys. B}\ }\textbf {\bibinfo {volume}
  {196}},\ \bibinfo {pages} {83} (\bibinfo {year} {1982})}\BibitemShut
  {NoStop}%
\bibitem [{\citenamefont {Follana}\ \emph {et~al.}(2007)\citenamefont
  {Follana}, \citenamefont {Mason}, \citenamefont {Davies}, \citenamefont
  {Hornbostel}, \citenamefont {Lepage}, \citenamefont {Shigemitsu},
  \citenamefont {Trottier},\ and\ \citenamefont {Wong}}]{Follana:2006rc}%
  \BibitemOpen
  \bibfield  {author} {\bibinfo {author} {\bibfnamefont {E.}~\bibnamefont
  {Follana}}, \bibinfo {author} {\bibfnamefont {Q.}~\bibnamefont {Mason}},
  \bibinfo {author} {\bibfnamefont {C.}~\bibnamefont {Davies}}, \bibinfo
  {author} {\bibfnamefont {K.}~\bibnamefont {Hornbostel}}, \bibinfo {author}
  {\bibfnamefont {G.~P.}\ \bibnamefont {Lepage}}, \bibinfo {author}
  {\bibfnamefont {J.}~\bibnamefont {Shigemitsu}}, \bibinfo {author}
  {\bibfnamefont {H.}~\bibnamefont {Trottier}},\ and\ \bibinfo {author}
  {\bibfnamefont {K.}~\bibnamefont {Wong}} (\bibinfo {collaboration} {HPQCD}),\
  }\bibfield  {title} {\bibinfo {title} {{Highly improved staggered quarks on
  the lattice, with applications to charm physics}},\ }\href
  {https://doi.org/10.1103/PhysRevD.75.054502} {\bibfield  {journal} {\bibinfo
  {journal} {Phys. Rev. D}\ }\textbf {\bibinfo {volume} {75}},\ \bibinfo
  {pages} {054502} (\bibinfo {year} {2007})},\ \Eprint
  {https://arxiv.org/abs/hep-lat/0610092} {arXiv:hep-lat/0610092} \BibitemShut
  {NoStop}%
\bibitem [{\citenamefont {Karsten}\ and\ \citenamefont
  {Smit}(1981)}]{Karsten:1980wd}%
  \BibitemOpen
  \bibfield  {author} {\bibinfo {author} {\bibfnamefont {L.~H.}\ \bibnamefont
  {Karsten}}\ and\ \bibinfo {author} {\bibfnamefont {J.}~\bibnamefont {Smit}},\
  }\bibfield  {title} {\bibinfo {title} {Lattice fermions: Species doubling,
  chiral invariance, and the triangle anomaly},\ }\href
  {https://doi.org/10.1016/0550-3213(81)90549-6} {\bibfield  {journal}
  {\bibinfo  {journal} {Nucl. Phys. B}\ }\textbf {\bibinfo {volume} {183}},\
  \bibinfo {pages} {103} (\bibinfo {year} {1981})}\BibitemShut {NoStop}%
\bibitem [{\citenamefont {Smit}\ and\ \citenamefont
  {Vink}(1988)}]{Smit:1987zh}%
  \BibitemOpen
  \bibfield  {author} {\bibinfo {author} {\bibfnamefont {J.}~\bibnamefont
  {Smit}}\ and\ \bibinfo {author} {\bibfnamefont {J.~C.}\ \bibnamefont
  {Vink}},\ }\bibfield  {title} {\bibinfo {title} {Renormalized
  {Ward-Takahashi} relations and topological susceptibility with staggered
  fermions},\ }\href {https://doi.org/10.1016/0550-3213(88)90354-9} {\bibfield
  {journal} {\bibinfo  {journal} {Nucl. Phys. B}\ }\textbf {\bibinfo {volume}
  {298}},\ \bibinfo {pages} {557} (\bibinfo {year} {1988})}\BibitemShut
  {NoStop}%
\bibitem [{\citenamefont {Bazavov}\ \emph {et~al.}(2010)\citenamefont {Bazavov}
  \emph {et~al.}}]{MILC:2010pul}%
  \BibitemOpen
  \bibfield  {author} {\bibinfo {author} {\bibfnamefont {A.}~\bibnamefont
  {Bazavov}} \emph {et~al.} (\bibinfo {collaboration} {MILC}),\ }\bibfield
  {title} {\bibinfo {title} {{Scaling studies of QCD with the dynamical HISQ
  action}},\ }\href {https://doi.org/10.1103/PhysRevD.82.074501} {\bibfield
  {journal} {\bibinfo  {journal} {Phys. Rev. D}\ }\textbf {\bibinfo {volume}
  {82}},\ \bibinfo {pages} {074501} (\bibinfo {year} {2010})},\ \Eprint
  {https://arxiv.org/abs/1004.0342} {arXiv:1004.0342 [hep-lat]} \BibitemShut
  {NoStop}%
\bibitem [{\citenamefont {Bazavov}\ \emph {et~al.}(2013)\citenamefont {Bazavov}
  \emph {et~al.}}]{MILC:2012znn}%
  \BibitemOpen
  \bibfield  {author} {\bibinfo {author} {\bibfnamefont {A.}~\bibnamefont
  {Bazavov}} \emph {et~al.} (\bibinfo {collaboration} {MILC}),\ }\bibfield
  {title} {\bibinfo {title} {Lattice {QCD} ensembles with four flavors of
  highly improved staggered quarks},\ }\href
  {https://doi.org/10.1103/PhysRevD.87.054505} {\bibfield  {journal} {\bibinfo
  {journal} {Phys. Rev. D}\ }\textbf {\bibinfo {volume} {87}},\ \bibinfo
  {pages} {054505} (\bibinfo {year} {2013})},\ \Eprint
  {https://arxiv.org/abs/1212.4768} {arXiv:1212.4768 [hep-lat]} \BibitemShut
  {NoStop}%
\bibitem [{\citenamefont {Bazavov}\ \emph
  {et~al.}(2016{\natexlab{a}})\citenamefont {Bazavov} \emph
  {et~al.}}]{MILC:2015tqx}%
  \BibitemOpen
  \bibfield  {author} {\bibinfo {author} {\bibfnamefont {A.}~\bibnamefont
  {Bazavov}} \emph {et~al.} (\bibinfo {collaboration} {MILC}),\ }\bibfield
  {title} {\bibinfo {title} {{Gradient flow and scale setting on MILC HISQ
  ensembles}},\ }\href {https://doi.org/10.1103/PhysRevD.93.094510} {\bibfield
  {journal} {\bibinfo  {journal} {Phys. Rev. D}\ }\textbf {\bibinfo {volume}
  {93}},\ \bibinfo {pages} {094510} (\bibinfo {year} {2016}{\natexlab{a}})},\
  \Eprint {https://arxiv.org/abs/1503.02769} {arXiv:1503.02769 [hep-lat]}
  \BibitemShut {NoStop}%
\bibitem [{\citenamefont {Brown}(2018)}]{Brown:2018jtv}%
  \BibitemOpen
  \bibfield  {author} {\bibinfo {author} {\bibfnamefont {N.~J.}\ \bibnamefont
  {Brown}},\ }\emph {\bibinfo {title} {Lattice Scales from Gradient Flow and
  Chiral Analysis on the {MILC Collaboration's HISQ} Ensembles}},\ \href
  {https://doi.org/10.7936/K7S181ZQ} {Ph.D. thesis},\ \bibinfo  {school}
  {Washington U., St. Louis} (\bibinfo {year} {2018})\BibitemShut {NoStop}%
\bibitem [{\citenamefont {Albanese}\ \emph {et~al.}(1987)\citenamefont
  {Albanese} \emph {et~al.}}]{APE:1987ehd}%
  \BibitemOpen
  \bibfield  {author} {\bibinfo {author} {\bibfnamefont {M.}~\bibnamefont
  {Albanese}} \emph {et~al.} (\bibinfo {collaboration} {APE}),\ }\bibfield
  {title} {\bibinfo {title} {Glueball masses and string tension in lattice
  {QCD}},\ }\href {https://doi.org/10.1016/0370-2693(87)91160-9} {\bibfield
  {journal} {\bibinfo  {journal} {Phys. Lett. B}\ }\textbf {\bibinfo {volume}
  {192}},\ \bibinfo {pages} {163} (\bibinfo {year} {1987})}\BibitemShut
  {NoStop}%
\bibitem [{\citenamefont {Aubin}\ \emph {et~al.}(2004)\citenamefont {Aubin},
  \citenamefont {Bernard}, \citenamefont {DeTar}, \citenamefont {Osborn},
  \citenamefont {Gottlieb}, \citenamefont {Gregory}, \citenamefont {Toussaint},
  \citenamefont {Heller}, \citenamefont {Hetrick},\ and\ \citenamefont
  {Sugar}}]{MILC:2004qnl}%
  \BibitemOpen
  \bibfield  {author} {\bibinfo {author} {\bibfnamefont {C.}~\bibnamefont
  {Aubin}}, \bibinfo {author} {\bibfnamefont {C.}~\bibnamefont {Bernard}},
  \bibinfo {author} {\bibfnamefont {C.~E.}\ \bibnamefont {DeTar}}, \bibinfo
  {author} {\bibfnamefont {J.}~\bibnamefont {Osborn}}, \bibinfo {author}
  {\bibfnamefont {S.}~\bibnamefont {Gottlieb}}, \bibinfo {author}
  {\bibfnamefont {E.~B.}\ \bibnamefont {Gregory}}, \bibinfo {author}
  {\bibfnamefont {D.}~\bibnamefont {Toussaint}}, \bibinfo {author}
  {\bibfnamefont {U.~M.}\ \bibnamefont {Heller}}, \bibinfo {author}
  {\bibfnamefont {J.~E.}\ \bibnamefont {Hetrick}},\ and\ \bibinfo {author}
  {\bibfnamefont {R.}~\bibnamefont {Sugar}} (\bibinfo {collaboration} {MILC}),\
  }\bibfield  {title} {\bibinfo {title} {{Light pseudoscalar decay constants,
  quark masses, and low energy constants from three-flavor lattice QCD}},\
  }\href {https://doi.org/10.1103/PhysRevD.70.114501} {\bibfield  {journal}
  {\bibinfo  {journal} {Phys. Rev. D}\ }\textbf {\bibinfo {volume} {70}},\
  \bibinfo {pages} {114501} (\bibinfo {year} {2004})},\ \Eprint
  {https://arxiv.org/abs/hep-lat/0407028} {arXiv:hep-lat/0407028} \BibitemShut
  {NoStop}%
\bibitem [{\citenamefont {Bali}\ \emph {et~al.}(2010)\citenamefont {Bali},
  \citenamefont {Collins},\ and\ \citenamefont {Schäfer}}]{Bali:2009hu}%
  \BibitemOpen
  \bibfield  {author} {\bibinfo {author} {\bibfnamefont {G.~S.}\ \bibnamefont
  {Bali}}, \bibinfo {author} {\bibfnamefont {S.}~\bibnamefont {Collins}},\ and\
  \bibinfo {author} {\bibfnamefont {A.}~\bibnamefont {Schäfer}},\ }\bibfield
  {title} {\bibinfo {title} {Effective noise reduction techniques for
  disconnected loops in lattice {QCD}},\ }\href
  {https://doi.org/10.1016/j.cpc.2010.05.008} {\bibfield  {journal} {\bibinfo
  {journal} {Comput. Phys. Commun.}\ }\textbf {\bibinfo {volume} {181}},\
  \bibinfo {pages} {1570} (\bibinfo {year} {2010})},\ \Eprint
  {https://arxiv.org/abs/0910.3970} {arXiv:0910.3970 [hep-lat]} \BibitemShut
  {NoStop}%
\bibitem [{\citenamefont {Alexandrou}\ \emph {et~al.}(2012)\citenamefont
  {Alexandrou}, \citenamefont {Hadjiyiannakou}, \citenamefont {Koutsou},
  \citenamefont {O'Cais},\ and\ \citenamefont
  {Strelchenko}}]{Alexandrou:2012zz}%
  \BibitemOpen
  \bibfield  {author} {\bibinfo {author} {\bibfnamefont {C.}~\bibnamefont
  {Alexandrou}}, \bibinfo {author} {\bibfnamefont {K.}~\bibnamefont
  {Hadjiyiannakou}}, \bibinfo {author} {\bibfnamefont {G.}~\bibnamefont
  {Koutsou}}, \bibinfo {author} {\bibfnamefont {A.}~\bibnamefont {O'Cais}},\
  and\ \bibinfo {author} {\bibfnamefont {A.}~\bibnamefont {Strelchenko}},\
  }\bibfield  {title} {\bibinfo {title} {{Evaluation of fermion loops applied
  to the calculation of the $\eta'$ mass and the nucleon scalar and
  electromagnetic form factors}},\ }\href
  {https://doi.org/10.1016/j.cpc.2012.01.023} {\bibfield  {journal} {\bibinfo
  {journal} {Comput. Phys. Commun.}\ }\textbf {\bibinfo {volume} {183}},\
  \bibinfo {pages} {1215} (\bibinfo {year} {2012})},\ \Eprint
  {https://arxiv.org/abs/1108.2473} {arXiv:1108.2473 [hep-lat]} \BibitemShut
  {NoStop}%
\bibitem [{\citenamefont {Lepage}\ \emph {et~al.}(2002)\citenamefont {Lepage},
  \citenamefont {Clark}, \citenamefont {Davies}, \citenamefont {Hornbostel},
  \citenamefont {Mackenzie}, \citenamefont {Morningstar},\ and\ \citenamefont
  {Trottier}}]{Lepage:2001ym}%
  \BibitemOpen
  \bibfield  {author} {\bibinfo {author} {\bibfnamefont {G.~P.}\ \bibnamefont
  {Lepage}}, \bibinfo {author} {\bibfnamefont {B.}~\bibnamefont {Clark}},
  \bibinfo {author} {\bibfnamefont {C.~T.~H.}\ \bibnamefont {Davies}}, \bibinfo
  {author} {\bibfnamefont {K.}~\bibnamefont {Hornbostel}}, \bibinfo {author}
  {\bibfnamefont {P.~B.}\ \bibnamefont {Mackenzie}}, \bibinfo {author}
  {\bibfnamefont {C.}~\bibnamefont {Morningstar}},\ and\ \bibinfo {author}
  {\bibfnamefont {H.}~\bibnamefont {Trottier}},\ }\bibfield  {title} {\bibinfo
  {title} {{Constrained curve fitting}},\ }\href
  {https://doi.org/10.1016/S0920-5632(01)01638-3} {\bibfield  {journal}
  {\bibinfo  {journal} {Nucl. Phys. B Proc. Suppl.}\ }\textbf {\bibinfo
  {volume} {106}},\ \bibinfo {pages} {12} (\bibinfo {year} {2002})},\ \Eprint
  {https://arxiv.org/abs/hep-lat/0110175} {arXiv:hep-lat/0110175} \BibitemShut
  {NoStop}%
\bibitem [{\citenamefont {Morningstar}(2002)}]{Morningstar:2001je}%
  \BibitemOpen
  \bibfield  {author} {\bibinfo {author} {\bibfnamefont {C.}~\bibnamefont
  {Morningstar}},\ }\bibfield  {title} {\bibinfo {title} {{Bayesian curve
  fitting for lattice gauge theorists}},\ }\href
  {https://doi.org/10.1016/S0920-5632(02)01413-5} {\bibfield  {journal}
  {\bibinfo  {journal} {Nucl. Phys. B Proc. Suppl.}\ }\textbf {\bibinfo
  {volume} {109A}},\ \bibinfo {pages} {185} (\bibinfo {year} {2002})},\ \Eprint
  {https://arxiv.org/abs/hep-lat/0112023} {arXiv:hep-lat/0112023} \BibitemShut
  {NoStop}%
\bibitem [{\citenamefont {Jay}\ and\ \citenamefont {Neil}(2021)}]{Jay:2020jkz}%
  \BibitemOpen
  \bibfield  {author} {\bibinfo {author} {\bibfnamefont {W.~I.}\ \bibnamefont
  {Jay}}\ and\ \bibinfo {author} {\bibfnamefont {E.~T.}\ \bibnamefont {Neil}},\
  }\bibfield  {title} {\bibinfo {title} {{Bayesian model averaging for analysis
  of lattice field theory results}},\ }\href
  {https://doi.org/10.1103/PhysRevD.103.114502} {\bibfield  {journal} {\bibinfo
   {journal} {Phys. Rev. D}\ }\textbf {\bibinfo {volume} {103}},\ \bibinfo
  {pages} {114502} (\bibinfo {year} {2021})},\ \Eprint
  {https://arxiv.org/abs/2008.01069} {arXiv:2008.01069 [stat.ME]} \BibitemShut
  {NoStop}%
\bibitem [{\citenamefont {Ledoit}\ and\ \citenamefont
  {Wolf}(2018)}]{Ledoit:2018}%
  \BibitemOpen
  \bibfield  {author} {\bibinfo {author} {\bibfnamefont {O.}~\bibnamefont
  {Ledoit}}\ and\ \bibinfo {author} {\bibfnamefont {M.}~\bibnamefont {Wolf}},\
  }\bibfield  {title} {\bibinfo {title} {Direct nonlinear shrinkage estimation
  of large-dimensional covariance matrices},\ }\bibfield  {journal} {\bibinfo
  {journal} {Working Paper Series, Department of Economics, University of
  Zurich}\ }\href {https://doi.org/http://dx.doi.org/10.5167/uzh-139880}
  {http://dx.doi.org/10.5167/uzh-139880} (\bibinfo {year} {2018})\BibitemShut
  {NoStop}%
\bibitem [{\citenamefont {Toussaint}\ and\ \citenamefont
  {Freeman}(2008)}]{Toussaint:2008ke}%
  \BibitemOpen
  \bibfield  {author} {\bibinfo {author} {\bibfnamefont {D.}~\bibnamefont
  {Toussaint}}\ and\ \bibinfo {author} {\bibfnamefont {W.}~\bibnamefont
  {Freeman}},\ }\bibfield  {title} {\bibinfo {title} {{Sample size effects in
  multivariate fitting of correlated data}},\ }\href@noop {} {\  (\bibinfo
  {year} {2008})},\ \Eprint {https://arxiv.org/abs/0808.2211} {arXiv:0808.2211
  [hep-lat]} \BibitemShut {NoStop}%
\bibitem [{\citenamefont {Bazavov}\ \emph
  {et~al.}(2016{\natexlab{b}})\citenamefont {Bazavov} \emph
  {et~al.}}]{FermilabLattice:2016ipl}%
  \BibitemOpen
  \bibfield  {author} {\bibinfo {author} {\bibfnamefont {A.}~\bibnamefont
  {Bazavov}} \emph {et~al.} (\bibinfo {collaboration} {Fermilab Lattice,
  MILC}),\ }\bibfield  {title} {\bibinfo {title} {{$B^0_{(s)}$-mixing matrix
  elements from lattice QCD for the Standard Model and beyond}},\ }\href
  {https://doi.org/10.1103/PhysRevD.93.113016} {\bibfield  {journal} {\bibinfo
  {journal} {Phys. Rev. D}\ }\textbf {\bibinfo {volume} {93}},\ \bibinfo
  {pages} {113016} (\bibinfo {year} {2016}{\natexlab{b}})},\ \Eprint
  {https://arxiv.org/abs/1602.03560} {arXiv:1602.03560 [hep-lat]} \BibitemShut
  {NoStop}%
\bibitem [{\citenamefont {Aubin}\ and\ \citenamefont
  {Bernard}(2006)}]{Aubin:2005aq}%
  \BibitemOpen
  \bibfield  {author} {\bibinfo {author} {\bibfnamefont {C.}~\bibnamefont
  {Aubin}}\ and\ \bibinfo {author} {\bibfnamefont {C.}~\bibnamefont
  {Bernard}},\ }\bibfield  {title} {\bibinfo {title} {{Staggered chiral
  perturbation theory for heavy-light mesons}},\ }\href
  {https://doi.org/10.1103/PhysRevD.73.014515} {\bibfield  {journal} {\bibinfo
  {journal} {Phys. Rev. D}\ }\textbf {\bibinfo {volume} {73}},\ \bibinfo
  {pages} {014515} (\bibinfo {year} {2006})},\ \Eprint
  {https://arxiv.org/abs/hep-lat/0510088} {arXiv:hep-lat/0510088} \BibitemShut
  {NoStop}%
\bibitem [{\citenamefont {Aubin}\ and\ \citenamefont
  {Bernard}(2007)}]{Aubin:2007mc}%
  \BibitemOpen
  \bibfield  {author} {\bibinfo {author} {\bibfnamefont {C.}~\bibnamefont
  {Aubin}}\ and\ \bibinfo {author} {\bibfnamefont {C.}~\bibnamefont
  {Bernard}},\ }\bibfield  {title} {\bibinfo {title} {{Heavy-light semileptonic
  decays in staggered chiral perturbation theory}},\ }\href
  {https://doi.org/10.1103/PhysRevD.76.014002} {\bibfield  {journal} {\bibinfo
  {journal} {Phys. Rev. D}\ }\textbf {\bibinfo {volume} {76}},\ \bibinfo
  {pages} {014002} (\bibinfo {year} {2007})},\ \Eprint
  {https://arxiv.org/abs/0704.0795} {arXiv:0704.0795 [hep-lat]} \BibitemShut
  {NoStop}%
\bibitem [{\citenamefont {Bailey}\ \emph {et~al.}(2016)\citenamefont {Bailey}
  \emph {et~al.}}]{Bailey:2015dka}%
  \BibitemOpen
  \bibfield  {author} {\bibinfo {author} {\bibfnamefont {J.~A.}\ \bibnamefont
  {Bailey}} \emph {et~al.} (\bibinfo {collaboration} {Fermilab Lattice,
  MILC}),\ }\bibfield  {title} {\bibinfo {title} {{$B\to Kl^+l^-$} decay form
  factors from three-flavor lattice {QCD}},\ }\href
  {https://doi.org/10.1103/PhysRevD.93.025026} {\bibfield  {journal} {\bibinfo
  {journal} {Phys. Rev. D}\ }\textbf {\bibinfo {volume} {93}},\ \bibinfo
  {pages} {025026} (\bibinfo {year} {2016})},\ \Eprint
  {https://arxiv.org/abs/1509.06235} {arXiv:1509.06235 [hep-lat]} \BibitemShut
  {NoStop}%
\bibitem [{\citenamefont {Bazavov}\ \emph
  {et~al.}(2019{\natexlab{b}})\citenamefont {Bazavov} \emph
  {et~al.}}]{FermilabLattice:2019ikx}%
  \BibitemOpen
  \bibfield  {author} {\bibinfo {author} {\bibfnamefont {A.}~\bibnamefont
  {Bazavov}} \emph {et~al.} (\bibinfo {collaboration} {Fermilab Lattice,
  MILC}),\ }\bibfield  {title} {\bibinfo {title} {{$B_s\to K\ell\nu$ decay from
  lattice QCD}},\ }\href {https://doi.org/10.1103/PhysRevD.100.034501}
  {\bibfield  {journal} {\bibinfo  {journal} {Phys. Rev. D}\ }\textbf {\bibinfo
  {volume} {100}},\ \bibinfo {pages} {034501} (\bibinfo {year}
  {2019}{\natexlab{b}})},\ \Eprint {https://arxiv.org/abs/1901.02561}
  {arXiv:1901.02561 [hep-lat]} \BibitemShut {NoStop}%
\bibitem [{\citenamefont {Flynn}\ and\ \citenamefont
  {Sachrajda}(2009)}]{Flynn:2008tg}%
  \BibitemOpen
  \bibfield  {author} {\bibinfo {author} {\bibfnamefont {J.~M.}\ \bibnamefont
  {Flynn}}\ and\ \bibinfo {author} {\bibfnamefont {C.~T.}\ \bibnamefont
  {Sachrajda}} (\bibinfo {collaboration} {RBC, UKQCD}),\ }\bibfield  {title}
  {\bibinfo {title} {{SU(2) chiral perturbation theory for $K_{l3}$ decay
  amplitudes}},\ }\href {https://doi.org/10.1016/j.nuclphysb.2008.12.001}
  {\bibfield  {journal} {\bibinfo  {journal} {Nucl. Phys. B}\ }\textbf
  {\bibinfo {volume} {812}},\ \bibinfo {pages} {64} (\bibinfo {year} {2009})},\
  \Eprint {https://arxiv.org/abs/0809.1229} {arXiv:0809.1229 [hep-ph]}
  \BibitemShut {NoStop}%
\bibitem [{\citenamefont {Bijnens}\ and\ \citenamefont
  {Jemos}(2010)}]{Bijnens:2010ws}%
  \BibitemOpen
  \bibfield  {author} {\bibinfo {author} {\bibfnamefont {J.}~\bibnamefont
  {Bijnens}}\ and\ \bibinfo {author} {\bibfnamefont {I.}~\bibnamefont
  {Jemos}},\ }\bibfield  {title} {\bibinfo {title} {Hard pion chiral
  perturbation theory for {$B\to\pi$ and $D\to\pi$} form factors},\ }\href
  {https://doi.org/10.1016/j.nuclphysb.2010.06.021} {\bibfield  {journal}
  {\bibinfo  {journal} {Nucl. Phys. B}\ }\textbf {\bibinfo {volume} {840}},\
  \bibinfo {pages} {54} (\bibinfo {year} {2010})},\ \bibinfo {note} {[Erratum:
  \href{https://doi.org/10.1016/j.nuclphysb.2010.10.024}{Nucl. Phys. B
  \textbf{844}, 182 (2011)}]},\ \Eprint {https://arxiv.org/abs/1006.1197}
  {arXiv:1006.1197 [hep-ph]} \BibitemShut {NoStop}%
\bibitem [{\citenamefont {Bijnens}\ and\ \citenamefont
  {Jemos}(2011)}]{Bijnens:2010jg}%
  \BibitemOpen
  \bibfield  {author} {\bibinfo {author} {\bibfnamefont {J.}~\bibnamefont
  {Bijnens}}\ and\ \bibinfo {author} {\bibfnamefont {I.}~\bibnamefont
  {Jemos}},\ }\bibfield  {title} {\bibinfo {title} {Vector form factors in hard
  pion chiral perturbation theory},\ }\href
  {https://doi.org/10.1016/j.nuclphysb.2010.12.012} {\bibfield  {journal}
  {\bibinfo  {journal} {Nucl. Phys. B}\ }\textbf {\bibinfo {volume} {846}},\
  \bibinfo {pages} {145} (\bibinfo {year} {2011})},\ \Eprint
  {https://arxiv.org/abs/1011.6531} {arXiv:1011.6531 [hep-ph]} \BibitemShut
  {NoStop}%
\bibitem [{\citenamefont {Burdman}\ \emph {et~al.}(1994)\citenamefont
  {Burdman}, \citenamefont {Ligeti}, \citenamefont {Neubert},\ and\
  \citenamefont {Nir}}]{Burdman:1993es}%
  \BibitemOpen
  \bibfield  {author} {\bibinfo {author} {\bibfnamefont {G.}~\bibnamefont
  {Burdman}}, \bibinfo {author} {\bibfnamefont {Z.}~\bibnamefont {Ligeti}},
  \bibinfo {author} {\bibfnamefont {M.}~\bibnamefont {Neubert}},\ and\ \bibinfo
  {author} {\bibfnamefont {Y.}~\bibnamefont {Nir}},\ }\bibfield  {title}
  {\bibinfo {title} {Decay {$B\to\pi\ell\nu$} in heavy quark effective
  theory},\ }\href {https://doi.org/10.1103/PhysRevD.49.2331} {\bibfield
  {journal} {\bibinfo  {journal} {Phys. Rev. D}\ }\textbf {\bibinfo {volume}
  {49}},\ \bibinfo {pages} {2331} (\bibinfo {year} {1994})},\ \Eprint
  {https://arxiv.org/abs/hep-ph/9309272} {arXiv:hep-ph/9309272} \BibitemShut
  {NoStop}%
\bibitem [{\citenamefont {Bernard}\ and\ \citenamefont
  {Komijani}(2013)}]{Bernard:2013qwa}%
  \BibitemOpen
  \bibfield  {author} {\bibinfo {author} {\bibfnamefont {C.}~\bibnamefont
  {Bernard}}\ and\ \bibinfo {author} {\bibfnamefont {J.}~\bibnamefont
  {Komijani}},\ }\bibfield  {title} {\bibinfo {title} {Chiral perturbation
  theory for all-staggered heavy-light mesons},\ }\href
  {https://doi.org/10.1103/PhysRevD.88.094017} {\bibfield  {journal} {\bibinfo
  {journal} {Phys. Rev. D}\ }\textbf {\bibinfo {volume} {88}},\ \bibinfo
  {pages} {094017} (\bibinfo {year} {2013})},\ \Eprint
  {https://arxiv.org/abs/1309.4533} {arXiv:1309.4533 [hep-lat]} \BibitemShut
  {NoStop}%
\bibitem [{\citenamefont {Becirevic}\ \emph
  {et~al.}(2003{\natexlab{a}})\citenamefont {Becirevic}, \citenamefont
  {Prelovsek},\ and\ \citenamefont {Zupan}}]{Becirevic:2002sc}%
  \BibitemOpen
  \bibfield  {author} {\bibinfo {author} {\bibfnamefont {D.}~\bibnamefont
  {Becirevic}}, \bibinfo {author} {\bibfnamefont {S.}~\bibnamefont
  {Prelovsek}},\ and\ \bibinfo {author} {\bibfnamefont {J.}~\bibnamefont
  {Zupan}},\ }\bibfield  {title} {\bibinfo {title} {{$B\to\pi$ and $B\to K$}
  transitions in standard and quenched chiral perturbation theory},\ }\href
  {https://doi.org/10.1103/PhysRevD.67.054010} {\bibfield  {journal} {\bibinfo
  {journal} {Phys. Rev. D}\ }\textbf {\bibinfo {volume} {67}},\ \bibinfo
  {pages} {054010} (\bibinfo {year} {2003}{\natexlab{a}})},\ \Eprint
  {https://arxiv.org/abs/hep-lat/0210048} {arXiv:hep-lat/0210048} \BibitemShut
  {NoStop}%
\bibitem [{\citenamefont {Becirevic}\ \emph
  {et~al.}(2003{\natexlab{b}})\citenamefont {Becirevic}, \citenamefont
  {Prelovsek},\ and\ \citenamefont {Zupan}}]{Becirevic:2003ad}%
  \BibitemOpen
  \bibfield  {author} {\bibinfo {author} {\bibfnamefont {D.}~\bibnamefont
  {Becirevic}}, \bibinfo {author} {\bibfnamefont {S.}~\bibnamefont
  {Prelovsek}},\ and\ \bibinfo {author} {\bibfnamefont {J.}~\bibnamefont
  {Zupan}},\ }\bibfield  {title} {\bibinfo {title} {{$B\to\pi$ and $B\to K$}
  transitions transitions in partially quenched chiral perturbation theory},\
  }\href {https://doi.org/10.1103/PhysRevD.68.074003} {\bibfield  {journal}
  {\bibinfo  {journal} {Phys. Rev. D}\ }\textbf {\bibinfo {volume} {68}},\
  \bibinfo {pages} {074003} (\bibinfo {year} {2003}{\natexlab{b}})},\ \Eprint
  {https://arxiv.org/abs/hep-lat/0305001} {arXiv:hep-lat/0305001} \BibitemShut
  {NoStop}%
\bibitem [{\citenamefont {Chakraborty}\ \emph {et~al.}(2015)\citenamefont
  {Chakraborty}, \citenamefont {Davies}, \citenamefont {Galloway},
  \citenamefont {Knecht}, \citenamefont {Koponen}, \citenamefont {Donald},
  \citenamefont {Dowdall}, \citenamefont {Lepage},\ and\ \citenamefont
  {McNeile}}]{Chakraborty:2014aca}%
  \BibitemOpen
  \bibfield  {author} {\bibinfo {author} {\bibfnamefont {B.}~\bibnamefont
  {Chakraborty}}, \bibinfo {author} {\bibfnamefont {C.~T.~H.}\ \bibnamefont
  {Davies}}, \bibinfo {author} {\bibfnamefont {B.}~\bibnamefont {Galloway}},
  \bibinfo {author} {\bibfnamefont {P.}~\bibnamefont {Knecht}}, \bibinfo
  {author} {\bibfnamefont {J.}~\bibnamefont {Koponen}}, \bibinfo {author}
  {\bibfnamefont {G.~C.}\ \bibnamefont {Donald}}, \bibinfo {author}
  {\bibfnamefont {R.~J.}\ \bibnamefont {Dowdall}}, \bibinfo {author}
  {\bibfnamefont {G.~P.}\ \bibnamefont {Lepage}},\ and\ \bibinfo {author}
  {\bibfnamefont {C.}~\bibnamefont {McNeile}},\ }\bibfield  {title} {\bibinfo
  {title} {{High-precision quark masses and QCD coupling from $n_f=4$ lattice
  QCD}},\ }\href {https://doi.org/10.1103/PhysRevD.91.054508} {\bibfield
  {journal} {\bibinfo  {journal} {Phys. Rev. D}\ }\textbf {\bibinfo {volume}
  {91}},\ \bibinfo {pages} {054508} (\bibinfo {year} {2015})},\ \Eprint
  {https://arxiv.org/abs/1408.4169} {arXiv:1408.4169 [hep-lat]} \BibitemShut
  {NoStop}%
\bibitem [{\citenamefont {Komijani}(2018)}]{Komijani:2018}%
  \BibitemOpen
  \bibfield  {author} {\bibinfo {author} {\bibfnamefont {J.}~\bibnamefont
  {Komijani}},\ }\href@noop {} {}\bibinfo {howpublished} {private
  communication} (\bibinfo {year} {2018})\BibitemShut {NoStop}%
\bibitem [{\citenamefont {Anastassov}\ \emph {et~al.}(2002)\citenamefont
  {Anastassov} \emph {et~al.}}]{CLEO:2001sxb}%
  \BibitemOpen
  \bibfield  {author} {\bibinfo {author} {\bibfnamefont {A.}~\bibnamefont
  {Anastassov}} \emph {et~al.} (\bibinfo {collaboration} {CLEO}),\ }\bibfield
  {title} {\bibinfo {title} {First measurement of {$\Gamma(D^{*+})$} and
  precision measurement of {$m_{D^{*+}} - m_{D^0}$}},\ }\href
  {https://doi.org/10.1103/PhysRevD.65.032003} {\bibfield  {journal} {\bibinfo
  {journal} {Phys. Rev. D}\ }\textbf {\bibinfo {volume} {65}},\ \bibinfo
  {pages} {032003} (\bibinfo {year} {2002})},\ \Eprint
  {https://arxiv.org/abs/hep-ex/0108043} {arXiv:hep-ex/0108043} \BibitemShut
  {NoStop}%
\bibitem [{\citenamefont {Lees}\ \emph
  {et~al.}(2013{\natexlab{a}})\citenamefont {Lees} \emph
  {et~al.}}]{BaBar:2013zgp}%
  \BibitemOpen
  \bibfield  {author} {\bibinfo {author} {\bibfnamefont {J.~P.}\ \bibnamefont
  {Lees}} \emph {et~al.} (\bibinfo {collaboration} {BaBar}),\ }\bibfield
  {title} {\bibinfo {title} {{Measurement of the $D^*(2010)^+$ natural line
  width and the $D^*(2010)^+ - D^0$ mass difference}},\ }\href
  {https://doi.org/10.1103/PhysRevD.88.052003} {\bibfield  {journal} {\bibinfo
  {journal} {Phys. Rev. D}\ }\textbf {\bibinfo {volume} {88}},\ \bibinfo
  {pages} {052003} (\bibinfo {year} {2013}{\natexlab{a}})},\ \bibinfo {note}
  {[Erratum: \href{http://doi.org/10.1103/PhysRevD.88.079902}{Phys. Rev. D
  \textbf{88}, 079902 (2013)}]},\ \Eprint {https://arxiv.org/abs/1304.5009}
  {arXiv:1304.5009 [hep-ex]} \BibitemShut {NoStop}%
\bibitem [{\citenamefont {Lees}\ \emph
  {et~al.}(2013{\natexlab{b}})\citenamefont {Lees} \emph
  {et~al.}}]{BaBar:2013thi}%
  \BibitemOpen
  \bibfield  {author} {\bibinfo {author} {\bibfnamefont {J.~P.}\ \bibnamefont
  {Lees}} \emph {et~al.} (\bibinfo {collaboration} {BaBar}),\ }\bibfield
  {title} {\bibinfo {title} {{Measurement of the $D^*(2010)^+$ meson width and
  the $D^*(2010)^+$-$D^0$ mass difference}},\ }\href
  {https://doi.org/10.1103/PhysRevLett.111.111801} {\bibfield  {journal}
  {\bibinfo  {journal} {Phys. Rev. Lett.}\ }\textbf {\bibinfo {volume} {111}},\
  \bibinfo {pages} {111801} (\bibinfo {year} {2013}{\natexlab{b}})},\ \Eprint
  {https://arxiv.org/abs/1304.5657} {arXiv:1304.5657 [hep-ex]} \BibitemShut
  {NoStop}%
\bibitem [{\citenamefont {Detmold}\ \emph
  {et~al.}(2012{\natexlab{a}})\citenamefont {Detmold}, \citenamefont {Lin},\
  and\ \citenamefont {Meinel}}]{Detmold:2011bp}%
  \BibitemOpen
  \bibfield  {author} {\bibinfo {author} {\bibfnamefont {W.}~\bibnamefont
  {Detmold}}, \bibinfo {author} {\bibfnamefont {C.~J.~D.}\ \bibnamefont
  {Lin}},\ and\ \bibinfo {author} {\bibfnamefont {S.}~\bibnamefont {Meinel}},\
  }\bibfield  {title} {\bibinfo {title} {{Axial couplings and strong decay
  widths of heavy hadrons}},\ }\href
  {https://doi.org/10.1103/PhysRevLett.108.172003} {\bibfield  {journal}
  {\bibinfo  {journal} {Phys. Rev. Lett.}\ }\textbf {\bibinfo {volume} {108}},\
  \bibinfo {pages} {172003} (\bibinfo {year} {2012}{\natexlab{a}})},\ \Eprint
  {https://arxiv.org/abs/1109.2480} {arXiv:1109.2480 [hep-lat]} \BibitemShut
  {NoStop}%
\bibitem [{\citenamefont {Detmold}\ \emph
  {et~al.}(2012{\natexlab{b}})\citenamefont {Detmold}, \citenamefont {Lin},\
  and\ \citenamefont {Meinel}}]{Detmold:2012ge}%
  \BibitemOpen
  \bibfield  {author} {\bibinfo {author} {\bibfnamefont {W.}~\bibnamefont
  {Detmold}}, \bibinfo {author} {\bibfnamefont {C.~J.~D.}\ \bibnamefont
  {Lin}},\ and\ \bibinfo {author} {\bibfnamefont {S.}~\bibnamefont {Meinel}},\
  }\bibfield  {title} {\bibinfo {title} {{Calculation of the heavy-hadron axial
  couplings $g_1$, $g_2$ and $g_3$ using lattice QCD}},\ }\href
  {https://doi.org/10.1103/PhysRevD.85.114508} {\bibfield  {journal} {\bibinfo
  {journal} {Phys. Rev. D}\ }\textbf {\bibinfo {volume} {85}},\ \bibinfo
  {pages} {114508} (\bibinfo {year} {2012}{\natexlab{b}})},\ \Eprint
  {https://arxiv.org/abs/1203.3378} {arXiv:1203.3378 [hep-lat]} \BibitemShut
  {NoStop}%
\bibitem [{\citenamefont {Can}\ \emph {et~al.}(2013)\citenamefont {Can},
  \citenamefont {Erkol}, \citenamefont {Oka}, \citenamefont {Ozpineci},\ and\
  \citenamefont {Takahashi}}]{Can:2012tx}%
  \BibitemOpen
  \bibfield  {author} {\bibinfo {author} {\bibfnamefont {K.~U.}\ \bibnamefont
  {Can}}, \bibinfo {author} {\bibfnamefont {G.}~\bibnamefont {Erkol}}, \bibinfo
  {author} {\bibfnamefont {M.}~\bibnamefont {Oka}}, \bibinfo {author}
  {\bibfnamefont {A.}~\bibnamefont {Ozpineci}},\ and\ \bibinfo {author}
  {\bibfnamefont {T.~T.}\ \bibnamefont {Takahashi}},\ }\bibfield  {title}
  {\bibinfo {title} {{Vector and axial-vector couplings of $D$ and $D^*$ mesons
  in 2+1 flavor lattice QCD}},\ }\href
  {https://doi.org/10.1016/j.physletb.2012.12.050} {\bibfield  {journal}
  {\bibinfo  {journal} {Phys. Lett. B}\ }\textbf {\bibinfo {volume} {719}},\
  \bibinfo {pages} {103} (\bibinfo {year} {2013})},\ \Eprint
  {https://arxiv.org/abs/1210.0869} {arXiv:1210.0869 [hep-lat]} \BibitemShut
  {NoStop}%
\bibitem [{\citenamefont {Becirevic}\ and\ \citenamefont
  {Sanfilippo}(2013)}]{Becirevic:2012pf}%
  \BibitemOpen
  \bibfield  {author} {\bibinfo {author} {\bibfnamefont {D.}~\bibnamefont
  {Becirevic}}\ and\ \bibinfo {author} {\bibfnamefont {F.}~\bibnamefont
  {Sanfilippo}},\ }\bibfield  {title} {\bibinfo {title} {{Theoretical estimate
  of the $D^* \to D\pi$ decay rate}},\ }\href
  {https://doi.org/10.1016/j.physletb.2013.03.004} {\bibfield  {journal}
  {\bibinfo  {journal} {Phys. Lett. B}\ }\textbf {\bibinfo {volume} {721}},\
  \bibinfo {pages} {94} (\bibinfo {year} {2013})},\ \Eprint
  {https://arxiv.org/abs/1210.5410} {arXiv:1210.5410 [hep-lat]} \BibitemShut
  {NoStop}%
\bibitem [{\citenamefont {Flynn}\ \emph {et~al.}(2016)\citenamefont {Flynn},
  \citenamefont {Fritzsch}, \citenamefont {Kawanai}, \citenamefont {Lehner},
  \citenamefont {Samways}, \citenamefont {Sachrajda}, \citenamefont {Van~de
  Water},\ and\ \citenamefont {Witzel}}]{Flynn:2015xna}%
  \BibitemOpen
  \bibfield  {author} {\bibinfo {author} {\bibfnamefont {J.~M.}\ \bibnamefont
  {Flynn}}, \bibinfo {author} {\bibfnamefont {P.}~\bibnamefont {Fritzsch}},
  \bibinfo {author} {\bibfnamefont {T.}~\bibnamefont {Kawanai}}, \bibinfo
  {author} {\bibfnamefont {C.}~\bibnamefont {Lehner}}, \bibinfo {author}
  {\bibfnamefont {B.}~\bibnamefont {Samways}}, \bibinfo {author} {\bibfnamefont
  {C.~T.}\ \bibnamefont {Sachrajda}}, \bibinfo {author} {\bibfnamefont {R.~S.}\
  \bibnamefont {Van~de Water}},\ and\ \bibinfo {author} {\bibfnamefont
  {O.}~\bibnamefont {Witzel}} (\bibinfo {collaboration} {RBC, UKQCD}),\
  }\bibfield  {title} {\bibinfo {title} {{The $B^*B\pi$} coupling using
  relativistic heavy quarks},\ }\href
  {https://doi.org/10.1103/PhysRevD.93.014510} {\bibfield  {journal} {\bibinfo
  {journal} {Phys. Rev. D}\ }\textbf {\bibinfo {volume} {93}},\ \bibinfo
  {pages} {014510} (\bibinfo {year} {2016})},\ \Eprint
  {https://arxiv.org/abs/1506.06413} {arXiv:1506.06413 [hep-lat]} \BibitemShut
  {NoStop}%
\bibitem [{\citenamefont {Bernardoni}\ \emph {et~al.}(2015)\citenamefont
  {Bernardoni}, \citenamefont {Bulava}, \citenamefont {Donnellan},\ and\
  \citenamefont {Sommer}}]{Bernardoni:2014kla}%
  \BibitemOpen
  \bibfield  {author} {\bibinfo {author} {\bibfnamefont {F.}~\bibnamefont
  {Bernardoni}}, \bibinfo {author} {\bibfnamefont {J.}~\bibnamefont {Bulava}},
  \bibinfo {author} {\bibfnamefont {M.}~\bibnamefont {Donnellan}},\ and\
  \bibinfo {author} {\bibfnamefont {R.}~\bibnamefont {Sommer}} (\bibinfo
  {collaboration} {ALPHA}),\ }\bibfield  {title} {\bibinfo {title} {{Precision
  lattice QCD computation of the $B^*B\pi$ coupling}},\ }\href
  {https://doi.org/10.1016/j.physletb.2014.11.051} {\bibfield  {journal}
  {\bibinfo  {journal} {Phys. Lett. B}\ }\textbf {\bibinfo {volume} {740}},\
  \bibinfo {pages} {278} (\bibinfo {year} {2015})},\ \Eprint
  {https://arxiv.org/abs/1404.6951} {arXiv:1404.6951 [hep-lat]} \BibitemShut
  {NoStop}%
\bibitem [{\citenamefont {Bailey}\ \emph {et~al.}(2015)\citenamefont {Bailey}
  \emph {et~al.}}]{FermilabLattice:2015mwy}%
  \BibitemOpen
  \bibfield  {author} {\bibinfo {author} {\bibfnamefont {J.~A.}\ \bibnamefont
  {Bailey}} \emph {et~al.} (\bibinfo {collaboration} {Fermilab Lattice,
  MILC}),\ }\bibfield  {title} {\bibinfo {title} {{$|V_{ub}|$ from
  $B\to\pi\ell\nu$ decays and (2+1)-flavor lattice QCD}},\ }\href
  {https://doi.org/10.1103/PhysRevD.92.014024} {\bibfield  {journal} {\bibinfo
  {journal} {Phys. Rev. D}\ }\textbf {\bibinfo {volume} {92}},\ \bibinfo
  {pages} {014024} (\bibinfo {year} {2015})},\ \Eprint
  {https://arxiv.org/abs/1503.07839} {arXiv:1503.07839 [hep-lat]} \BibitemShut
  {NoStop}%
\bibitem [{\citenamefont {Dowdall}\ \emph {et~al.}(2019)\citenamefont
  {Dowdall}, \citenamefont {Davies}, \citenamefont {Horgan}, \citenamefont
  {Lepage}, \citenamefont {Monahan}, \citenamefont {Shigemitsu},\ and\
  \citenamefont {Wingate}}]{Dowdall:2019bea}%
  \BibitemOpen
  \bibfield  {author} {\bibinfo {author} {\bibfnamefont {R.~J.}\ \bibnamefont
  {Dowdall}}, \bibinfo {author} {\bibfnamefont {C.~T.~H.}\ \bibnamefont
  {Davies}}, \bibinfo {author} {\bibfnamefont {R.~R.}\ \bibnamefont {Horgan}},
  \bibinfo {author} {\bibfnamefont {G.~P.}\ \bibnamefont {Lepage}}, \bibinfo
  {author} {\bibfnamefont {C.~J.}\ \bibnamefont {Monahan}}, \bibinfo {author}
  {\bibfnamefont {J.}~\bibnamefont {Shigemitsu}},\ and\ \bibinfo {author}
  {\bibfnamefont {M.}~\bibnamefont {Wingate}},\ }\bibfield  {title} {\bibinfo
  {title} {{Neutral $B$-meson mixing from full lattice QCD at the physical
  point}},\ }\href {https://doi.org/10.1103/PhysRevD.100.094508} {\bibfield
  {journal} {\bibinfo  {journal} {Phys. Rev. D}\ }\textbf {\bibinfo {volume}
  {100}},\ \bibinfo {pages} {094508} (\bibinfo {year} {2019})},\ \Eprint
  {https://arxiv.org/abs/1907.01025} {arXiv:1907.01025 [hep-lat]} \BibitemShut
  {NoStop}%
\bibitem [{\citenamefont {Zyla}\ \emph {et~al.}(2020)\citenamefont {Zyla} \emph
  {et~al.}}]{ParticleDataGroup:2020ssz}%
  \BibitemOpen
  \bibfield  {author} {\bibinfo {author} {\bibfnamefont {P.~A.}\ \bibnamefont
  {Zyla}} \emph {et~al.} (\bibinfo {collaboration} {Particle Data Group}),\
  }\bibfield  {title} {\bibinfo {title} {Review of particle physics},\ }\href
  {https://doi.org/10.1093/ptep/ptaa104} {\bibfield  {journal} {\bibinfo
  {journal} {PTEP}\ }\textbf {\bibinfo {volume} {2020}},\ \bibinfo {pages}
  {083C01} (\bibinfo {year} {2020})}\BibitemShut {NoStop}%
\bibitem [{\citenamefont {Boyd}\ \emph {et~al.}(1995)\citenamefont {Boyd},
  \citenamefont {Grinstein},\ and\ \citenamefont {Lebed}}]{Boyd:1994tt}%
  \BibitemOpen
  \bibfield  {author} {\bibinfo {author} {\bibfnamefont {C.~G.}\ \bibnamefont
  {Boyd}}, \bibinfo {author} {\bibfnamefont {B.}~\bibnamefont {Grinstein}},\
  and\ \bibinfo {author} {\bibfnamefont {R.~F.}\ \bibnamefont {Lebed}},\
  }\bibfield  {title} {\bibinfo {title} {{Constraints on form-factors for
  exclusive semileptonic heavy to light meson decays}},\ }\href
  {https://doi.org/10.1103/PhysRevLett.74.4603} {\bibfield  {journal} {\bibinfo
   {journal} {Phys. Rev. Lett.}\ }\textbf {\bibinfo {volume} {74}},\ \bibinfo
  {pages} {4603} (\bibinfo {year} {1995})},\ \Eprint
  {https://arxiv.org/abs/hep-ph/9412324} {arXiv:hep-ph/9412324} \BibitemShut
  {NoStop}%
\bibitem [{\citenamefont {Bourrely}\ \emph {et~al.}(2009)\citenamefont
  {Bourrely}, \citenamefont {Caprini},\ and\ \citenamefont
  {Lellouch}}]{Bourrely:2008za}%
  \BibitemOpen
  \bibfield  {author} {\bibinfo {author} {\bibfnamefont {C.}~\bibnamefont
  {Bourrely}}, \bibinfo {author} {\bibfnamefont {I.}~\bibnamefont {Caprini}},\
  and\ \bibinfo {author} {\bibfnamefont {L.}~\bibnamefont {Lellouch}},\
  }\bibfield  {title} {\bibinfo {title} {{Model-independent description of
  $B\to\pi l\nu$ decays and a determination of $|V_{ub}|$}},\ }\href
  {https://doi.org/10.1103/PhysRevD.79.013008} {\bibfield  {journal} {\bibinfo
  {journal} {Phys. Rev. D}\ }\textbf {\bibinfo {volume} {79}},\ \bibinfo
  {pages} {013008} (\bibinfo {year} {2009})},\ \bibinfo {note} {[Erratum:
  \href{https://doi.org/10.1103/PhysRevD.82.099902}{Phys. Rev. D \textbf{82},
  099902 (2010)}]},\ \Eprint {https://arxiv.org/abs/0807.2722} {arXiv:0807.2722
  [hep-ph]} \BibitemShut {NoStop}%
\bibitem [{\citenamefont {Ablikim}\ \emph
  {et~al.}(2019{\natexlab{a}})\citenamefont {Ablikim} \emph
  {et~al.}}]{BESIII:2018xre}%
  \BibitemOpen
  \bibfield  {author} {\bibinfo {author} {\bibfnamefont {M.}~\bibnamefont
  {Ablikim}} \emph {et~al.} (\bibinfo {collaboration} {BES~III}),\ }\bibfield
  {title} {\bibinfo {title} {First measurement of the form factors in
  {$D^+_{s}\rightarrow K^0 e^+\nu_e$ and $D^+_{s}\rightarrow K^{*0} e^+\nu_e$}
  decays},\ }\href {https://doi.org/10.1103/PhysRevLett.122.061801} {\bibfield
  {journal} {\bibinfo  {journal} {Phys. Rev. Lett.}\ }\textbf {\bibinfo
  {volume} {122}},\ \bibinfo {pages} {061801} (\bibinfo {year}
  {2019}{\natexlab{a}})},\ \Eprint {https://arxiv.org/abs/1811.02911}
  {arXiv:1811.02911 [hep-ex]} \BibitemShut {NoStop}%
\bibitem [{\citenamefont {Parrott}\ \emph {et~al.}(2021)\citenamefont
  {Parrott}, \citenamefont {Bouchard}, \citenamefont {Davies},\ and\
  \citenamefont {Hatton}}]{Parrott:2020vbe}%
  \BibitemOpen
  \bibfield  {author} {\bibinfo {author} {\bibfnamefont {W.~G.}\ \bibnamefont
  {Parrott}}, \bibinfo {author} {\bibfnamefont {C.}~\bibnamefont {Bouchard}},
  \bibinfo {author} {\bibfnamefont {C.~T.~H.}\ \bibnamefont {Davies}},\ and\
  \bibinfo {author} {\bibfnamefont {D.}~\bibnamefont {Hatton}} (\bibinfo
  {collaboration} {HPQCD}),\ }\bibfield  {title} {\bibinfo {title} {{Toward
  accurate form factors for $B$-to-light meson decay from lattice QCD}},\
  }\href {https://doi.org/10.1103/PhysRevD.103.094506} {\bibfield  {journal}
  {\bibinfo  {journal} {Phys. Rev. D}\ }\textbf {\bibinfo {volume} {103}},\
  \bibinfo {pages} {094506} (\bibinfo {year} {2021})},\ \Eprint
  {https://arxiv.org/abs/2010.07980} {arXiv:2010.07980 [hep-lat]} \BibitemShut
  {NoStop}%
\bibitem [{\citenamefont {Lepage}\ \emph {et~al.}(2022)\citenamefont {Lepage},
  \citenamefont {Gohlke},\ and\ \citenamefont {Hackett}}]{gvar:2022}%
  \BibitemOpen
  \bibfield  {author} {\bibinfo {author} {\bibfnamefont {P.}~\bibnamefont
  {Lepage}}, \bibinfo {author} {\bibfnamefont {C.}~\bibnamefont {Gohlke}},\
  and\ \bibinfo {author} {\bibfnamefont {D.}~\bibnamefont {Hackett}},\ }\href
  {https://doi.org/10.5281/zenodo.6350630} {\bibinfo {title}
  {\href{https://doi.org/10.5281/zenodo.6350630}{gplepage/gvar: gvar}}}
  (\bibinfo {year} {2022})\BibitemShut {NoStop}%
\bibitem [{\citenamefont {Bouchard}\ \emph {et~al.}(2014)\citenamefont
  {Bouchard}, \citenamefont {Lepage}, \citenamefont {Monahan}, \citenamefont
  {Na},\ and\ \citenamefont {Shigemitsu}}]{Bouchard:2014ypa}%
  \BibitemOpen
  \bibfield  {author} {\bibinfo {author} {\bibfnamefont {C.~M.}\ \bibnamefont
  {Bouchard}}, \bibinfo {author} {\bibfnamefont {G.~P.}\ \bibnamefont
  {Lepage}}, \bibinfo {author} {\bibfnamefont {C.}~\bibnamefont {Monahan}},
  \bibinfo {author} {\bibfnamefont {H.}~\bibnamefont {Na}},\ and\ \bibinfo
  {author} {\bibfnamefont {J.}~\bibnamefont {Shigemitsu}},\ }\bibfield  {title}
  {\bibinfo {title} {{$B_s \to K \ell \nu$ form factors from lattice QCD}},\
  }\href {https://doi.org/10.1103/PhysRevD.90.054506} {\bibfield  {journal}
  {\bibinfo  {journal} {Phys. Rev. D}\ }\textbf {\bibinfo {volume} {90}},\
  \bibinfo {pages} {054506} (\bibinfo {year} {2014})},\ \Eprint
  {https://arxiv.org/abs/1406.2279} {arXiv:1406.2279 [hep-lat]} \BibitemShut
  {NoStop}%
\bibitem [{\citenamefont {Arndt}\ and\ \citenamefont
  {Lin}(2004)}]{Arndt:2004bg}%
  \BibitemOpen
  \bibfield  {author} {\bibinfo {author} {\bibfnamefont {D.}~\bibnamefont
  {Arndt}}\ and\ \bibinfo {author} {\bibfnamefont {C.~J.~D.}\ \bibnamefont
  {Lin}},\ }\bibfield  {title} {\bibinfo {title} {{Heavy meson chiral
  perturbation theory in finite volume}},\ }\href
  {https://doi.org/10.1103/PhysRevD.70.014503} {\bibfield  {journal} {\bibinfo
  {journal} {Phys. Rev. D}\ }\textbf {\bibinfo {volume} {70}},\ \bibinfo
  {pages} {014503} (\bibinfo {year} {2004})},\ \Eprint
  {https://arxiv.org/abs/hep-lat/0403012} {arXiv:hep-lat/0403012} \BibitemShut
  {NoStop}%
\bibitem [{\citenamefont {Laiho}\ and\ \citenamefont {Van~de
  Water}(2006)}]{Laiho:2005ue}%
  \BibitemOpen
  \bibfield  {author} {\bibinfo {author} {\bibfnamefont {J.}~\bibnamefont
  {Laiho}}\ and\ \bibinfo {author} {\bibfnamefont {R.~S.}\ \bibnamefont {Van~de
  Water}},\ }\bibfield  {title} {\bibinfo {title} {{$B \to D^* \ell \nu$ and $B
  \to D \ell \nu$ form factors in staggered chiral perturbation theory.}},\
  }\href {https://doi.org/10.1103/PhysRevD.73.054501} {\bibfield  {journal}
  {\bibinfo  {journal} {Phys. Rev. D}\ }\textbf {\bibinfo {volume} {73}},\
  \bibinfo {pages} {054501} (\bibinfo {year} {2006})},\ \Eprint
  {https://arxiv.org/abs/hep-lat/0512007} {arXiv:hep-lat/0512007} \BibitemShut
  {NoStop}%
\bibitem [{\citenamefont {Brower}\ \emph {et~al.}(2003)\citenamefont {Brower},
  \citenamefont {Chandrasekharan}, \citenamefont {Negele},\ and\ \citenamefont
  {Wiese}}]{Brower:2003yx}%
  \BibitemOpen
  \bibfield  {author} {\bibinfo {author} {\bibfnamefont {R.}~\bibnamefont
  {Brower}}, \bibinfo {author} {\bibfnamefont {S.}~\bibnamefont
  {Chandrasekharan}}, \bibinfo {author} {\bibfnamefont {J.~W.}\ \bibnamefont
  {Negele}},\ and\ \bibinfo {author} {\bibfnamefont {U.~J.}\ \bibnamefont
  {Wiese}},\ }\bibfield  {title} {\bibinfo {title} {{QCD at fixed topology}},\
  }\href {https://doi.org/10.1016/S0370-2693(03)00369-1} {\bibfield  {journal}
  {\bibinfo  {journal} {Phys. Lett. B}\ }\textbf {\bibinfo {volume} {560}},\
  \bibinfo {pages} {64} (\bibinfo {year} {2003})},\ \Eprint
  {https://arxiv.org/abs/hep-lat/0302005} {arXiv:hep-lat/0302005} \BibitemShut
  {NoStop}%
\bibitem [{\citenamefont {Bernard}\ and\ \citenamefont
  {Toussaint}(2018)}]{Bernard:2017npd}%
  \BibitemOpen
  \bibfield  {author} {\bibinfo {author} {\bibfnamefont {C.}~\bibnamefont
  {Bernard}}\ and\ \bibinfo {author} {\bibfnamefont {D.}~\bibnamefont
  {Toussaint}} (\bibinfo {collaboration} {MILC}),\ }\bibfield  {title}
  {\bibinfo {title} {{Effects of nonequilibrated topological charge
  distributions on pseudoscalar meson masses and decay constants}},\ }\href
  {https://doi.org/10.1103/PhysRevD.97.074502} {\bibfield  {journal} {\bibinfo
  {journal} {Phys. Rev. D}\ }\textbf {\bibinfo {volume} {97}},\ \bibinfo
  {pages} {074502} (\bibinfo {year} {2018})},\ \Eprint
  {https://arxiv.org/abs/1707.05430} {arXiv:1707.05430 [hep-lat]} \BibitemShut
  {NoStop}%
\bibitem [{\citenamefont {Billeter}\ \emph {et~al.}(2004)\citenamefont
  {Billeter}, \citenamefont {DeTar},\ and\ \citenamefont
  {Osborn}}]{Billeter:2004wx}%
  \BibitemOpen
  \bibfield  {author} {\bibinfo {author} {\bibfnamefont {B.}~\bibnamefont
  {Billeter}}, \bibinfo {author} {\bibfnamefont {C.~E.}\ \bibnamefont
  {DeTar}},\ and\ \bibinfo {author} {\bibfnamefont {J.}~\bibnamefont
  {Osborn}},\ }\bibfield  {title} {\bibinfo {title} {{Topological
  susceptibility in staggered fermion chiral perturbation theory}},\ }\href
  {https://doi.org/10.1103/PhysRevD.70.077502} {\bibfield  {journal} {\bibinfo
  {journal} {Phys. Rev. D}\ }\textbf {\bibinfo {volume} {70}},\ \bibinfo
  {pages} {077502} (\bibinfo {year} {2004})},\ \Eprint
  {https://arxiv.org/abs/hep-lat/0406032} {arXiv:hep-lat/0406032} \BibitemShut
  {NoStop}%
\bibitem [{\citenamefont {Leutwyler}\ and\ \citenamefont
  {Smilga}(1992)}]{Leutwyler:1992yt}%
  \BibitemOpen
  \bibfield  {author} {\bibinfo {author} {\bibfnamefont {H.}~\bibnamefont
  {Leutwyler}}\ and\ \bibinfo {author} {\bibfnamefont {A.~V.}\ \bibnamefont
  {Smilga}},\ }\bibfield  {title} {\bibinfo {title} {{Spectrum of Dirac
  operator and role of winding number in QCD}},\ }\href
  {https://doi.org/10.1103/PhysRevD.46.5607} {\bibfield  {journal} {\bibinfo
  {journal} {Phys. Rev. D}\ }\textbf {\bibinfo {volume} {46}},\ \bibinfo
  {pages} {5607} (\bibinfo {year} {1992})}\BibitemShut {NoStop}%
\bibitem [{\citenamefont {Link}\ \emph {et~al.}(2005)\citenamefont {Link} \emph
  {et~al.}}]{FOCUS:2004meh}%
  \BibitemOpen
  \bibfield  {author} {\bibinfo {author} {\bibfnamefont {J.~M.}\ \bibnamefont
  {Link}} \emph {et~al.} (\bibinfo {collaboration} {FOCUS}),\ }\bibfield
  {title} {\bibinfo {title} {{Measurements of the $q^2$ dependence of the
  $D^0\to K^-\mu^+\nu$ and $D^0\to\pi^-\mu^+\nu$ form factors}},\ }\href
  {https://doi.org/10.1016/j.physletb.2004.12.036} {\bibfield  {journal}
  {\bibinfo  {journal} {Phys. Lett. B}\ }\textbf {\bibinfo {volume} {607}},\
  \bibinfo {pages} {233} (\bibinfo {year} {2005})},\ \Eprint
  {https://arxiv.org/abs/hep-ex/0410037} {arXiv:hep-ex/0410037} \BibitemShut
  {NoStop}%
\bibitem [{\citenamefont {Widhalm}\ \emph {et~al.}(2006)\citenamefont {Widhalm}
  \emph {et~al.}}]{Belle:2006idb}%
  \BibitemOpen
  \bibfield  {author} {\bibinfo {author} {\bibfnamefont {L.}~\bibnamefont
  {Widhalm}} \emph {et~al.} (\bibinfo {collaboration} {Belle}),\ }\bibfield
  {title} {\bibinfo {title} {Measurement of {$D^0\to\pi l\nu (Kl\nu)$} form
  factors and absolute branching fractions},\ }\href
  {https://doi.org/10.1103/PhysRevLett.97.061804} {\bibfield  {journal}
  {\bibinfo  {journal} {Phys. Rev. Lett.}\ }\textbf {\bibinfo {volume} {97}},\
  \bibinfo {pages} {061804} (\bibinfo {year} {2006})},\ \Eprint
  {https://arxiv.org/abs/hep-ex/0604049} {arXiv:hep-ex/0604049} \BibitemShut
  {NoStop}%
\bibitem [{\citenamefont {Aubert}\ \emph {et~al.}(2007)\citenamefont {Aubert}
  \emph {et~al.}}]{BaBar:2007zgf}%
  \BibitemOpen
  \bibfield  {author} {\bibinfo {author} {\bibfnamefont {B.}~\bibnamefont
  {Aubert}} \emph {et~al.} (\bibinfo {collaboration} {BaBar}),\ }\bibfield
  {title} {\bibinfo {title} {Measurement of the hadronic form-factor in
  {$D^0\to K^-e^+\nu_e$}},\ }\href {https://doi.org/10.1103/PhysRevD.76.052005}
  {\bibfield  {journal} {\bibinfo  {journal} {Phys. Rev. D}\ }\textbf {\bibinfo
  {volume} {76}},\ \bibinfo {pages} {052005} (\bibinfo {year} {2007})},\
  \Eprint {https://arxiv.org/abs/0704.0020} {arXiv:0704.0020 [hep-ex]}
  \BibitemShut {NoStop}%
\bibitem [{\citenamefont {Lees}\ \emph {et~al.}(2015)\citenamefont {Lees} \emph
  {et~al.}}]{BaBar:2014xzf}%
  \BibitemOpen
  \bibfield  {author} {\bibinfo {author} {\bibfnamefont {J.~P.}\ \bibnamefont
  {Lees}} \emph {et~al.} (\bibinfo {collaboration} {BaBar}),\ }\bibfield
  {title} {\bibinfo {title} {{Measurement of the $D^0 \to \pi^- e^+ \nu_e$
  differential decay branching fraction as a function of $q^2$ and study of
  form factor parameterizations}},\ }\href
  {https://doi.org/10.1103/PhysRevD.91.052022} {\bibfield  {journal} {\bibinfo
  {journal} {Phys. Rev. D}\ }\textbf {\bibinfo {volume} {91}},\ \bibinfo
  {pages} {052022} (\bibinfo {year} {2015})},\ \Eprint
  {https://arxiv.org/abs/1412.5502} {arXiv:1412.5502 [hep-ex]} \BibitemShut
  {NoStop}%
\bibitem [{\citenamefont {Besson}\ \emph {et~al.}(2009)\citenamefont {Besson}
  \emph {et~al.}}]{CLEO:2009svp}%
  \BibitemOpen
  \bibfield  {author} {\bibinfo {author} {\bibfnamefont {D.}~\bibnamefont
  {Besson}} \emph {et~al.} (\bibinfo {collaboration} {CLEO}),\ }\bibfield
  {title} {\bibinfo {title} {{Improved measurements of $D$ meson semileptonic
  decays to $\pi$ and $K$ mesons}},\ }\href
  {https://doi.org/10.1103/PhysRevD.80.032005} {\bibfield  {journal} {\bibinfo
  {journal} {Phys. Rev. D}\ }\textbf {\bibinfo {volume} {80}},\ \bibinfo
  {pages} {032005} (\bibinfo {year} {2009})},\ \Eprint
  {https://arxiv.org/abs/0906.2983} {arXiv:0906.2983 [hep-ex]} \BibitemShut
  {NoStop}%
\bibitem [{\citenamefont {Ablikim}\ \emph {et~al.}(2015)\citenamefont {Ablikim}
  \emph {et~al.}}]{BESIII:2015tql}%
  \BibitemOpen
  \bibfield  {author} {\bibinfo {author} {\bibfnamefont {M.}~\bibnamefont
  {Ablikim}} \emph {et~al.} (\bibinfo {collaboration} {BES~III}),\ }\bibfield
  {title} {\bibinfo {title} {Study of dynamics of {$D^0 \to K^- e^+ \nu_{e}$
  and $D^0\to\pi^- e^+ \nu_{e}$} decays},\ }\href
  {https://doi.org/10.1103/PhysRevD.92.072012} {\bibfield  {journal} {\bibinfo
  {journal} {Phys. Rev. D}\ }\textbf {\bibinfo {volume} {92}},\ \bibinfo
  {pages} {072012} (\bibinfo {year} {2015})},\ \Eprint
  {https://arxiv.org/abs/1508.07560} {arXiv:1508.07560 [hep-ex]} \BibitemShut
  {NoStop}%
\bibitem [{\citenamefont {Ablikim}\ \emph {et~al.}(2016)\citenamefont {Ablikim}
  \emph {et~al.}}]{BESIII:2016gbw}%
  \BibitemOpen
  \bibfield  {author} {\bibinfo {author} {\bibfnamefont {M.}~\bibnamefont
  {Ablikim}} \emph {et~al.} (\bibinfo {collaboration} {BES~III}),\ }\bibfield
  {title} {\bibinfo {title} {{Improved measurement of the absolute branching
  fraction of $D^{+}\rightarrow \bar{K}^0 \mu ^{+}\nu _{\mu }$}},\ }\href
  {https://doi.org/10.1140/epjc/s10052-016-4198-2} {\bibfield  {journal}
  {\bibinfo  {journal} {Eur. Phys. J. C}\ }\textbf {\bibinfo {volume} {76}},\
  \bibinfo {pages} {369} (\bibinfo {year} {2016})},\ \Eprint
  {https://arxiv.org/abs/1605.00068} {arXiv:1605.00068 [hep-ex]} \BibitemShut
  {NoStop}%
\bibitem [{\citenamefont {Ablikim}\ \emph {et~al.}(2017)\citenamefont {Ablikim}
  \emph {et~al.}}]{BESIII:2017ylw}%
  \BibitemOpen
  \bibfield  {author} {\bibinfo {author} {\bibfnamefont {M.}~\bibnamefont
  {Ablikim}} \emph {et~al.} (\bibinfo {collaboration} {BES~III}),\ }\bibfield
  {title} {\bibinfo {title} {{Analysis of $D^+\to\bar K^0e^+\nu_e$ and
  $D^+\to\pi^0e^+\nu_e$ semileptonic decays}},\ }\href
  {https://doi.org/10.1103/PhysRevD.96.012002} {\bibfield  {journal} {\bibinfo
  {journal} {Phys. Rev. D}\ }\textbf {\bibinfo {volume} {96}},\ \bibinfo
  {pages} {012002} (\bibinfo {year} {2017})},\ \Eprint
  {https://arxiv.org/abs/1703.09084} {arXiv:1703.09084 [hep-ex]} \BibitemShut
  {NoStop}%
\bibitem [{\citenamefont {Ablikim}\ \emph {et~al.}(2018)\citenamefont {Ablikim}
  \emph {et~al.}}]{BESIII:2018nzb}%
  \BibitemOpen
  \bibfield  {author} {\bibinfo {author} {\bibfnamefont {M.}~\bibnamefont
  {Ablikim}} \emph {et~al.} (\bibinfo {collaboration} {BES~III}),\ }\bibfield
  {title} {\bibinfo {title} {{Measurement of the branching fraction for the
  semi-leptonic decay $D^{0(+)}\to \pi^{-(0)}\mu^+\nu_\mu$ and test of lepton
  universality}},\ }\href {https://doi.org/10.1103/PhysRevLett.121.171803}
  {\bibfield  {journal} {\bibinfo  {journal} {Phys. Rev. Lett.}\ }\textbf
  {\bibinfo {volume} {121}},\ \bibinfo {pages} {171803} (\bibinfo {year}
  {2018})},\ \Eprint {https://arxiv.org/abs/1802.05492} {arXiv:1802.05492
  [hep-ex]} \BibitemShut {NoStop}%
\bibitem [{\citenamefont {Ablikim}\ \emph
  {et~al.}(2019{\natexlab{b}})\citenamefont {Ablikim} \emph
  {et~al.}}]{BESIII:2018ccy}%
  \BibitemOpen
  \bibfield  {author} {\bibinfo {author} {\bibfnamefont {M.}~\bibnamefont
  {Ablikim}} \emph {et~al.} (\bibinfo {collaboration} {BES~III}),\ }\bibfield
  {title} {\bibinfo {title} {{Study of the $D^0\to K^-\mu^+\nu_\mu$ dynamics
  and test of lepton flavor universality with $D^0\to K^-\ell^+\nu_\ell$
  decays}},\ }\href {https://doi.org/10.1103/PhysRevLett.122.011804} {\bibfield
   {journal} {\bibinfo  {journal} {Phys. Rev. Lett.}\ }\textbf {\bibinfo
  {volume} {122}},\ \bibinfo {pages} {011804} (\bibinfo {year}
  {2019}{\natexlab{b}})},\ \Eprint {https://arxiv.org/abs/1810.03127}
  {arXiv:1810.03127 [hep-ex]} \BibitemShut {NoStop}%
\bibitem [{\citenamefont {Rong}\ \emph {et~al.}(2015)\citenamefont {Rong},
  \citenamefont {Fang}, \citenamefont {Ma},\ and\ \citenamefont
  {Zhao}}]{Rong:2014hea}%
  \BibitemOpen
  \bibfield  {author} {\bibinfo {author} {\bibfnamefont {G.}~\bibnamefont
  {Rong}}, \bibinfo {author} {\bibfnamefont {Y.}~\bibnamefont {Fang}}, \bibinfo
  {author} {\bibfnamefont {H.~L.}\ \bibnamefont {Ma}},\ and\ \bibinfo {author}
  {\bibfnamefont {J.~Y.}\ \bibnamefont {Zhao}},\ }\bibfield  {title} {\bibinfo
  {title} {Determination of $f_+^\pi(0)$ or extraction of {$|V_{cd}|$} from
  semileptonic {$D$} decays},\ }\href
  {https://doi.org/10.1016/j.physletb.2015.02.049} {\bibfield  {journal}
  {\bibinfo  {journal} {Phys. Lett. B}\ }\textbf {\bibinfo {volume} {743}},\
  \bibinfo {pages} {315} (\bibinfo {year} {2015})},\ \Eprint
  {https://arxiv.org/abs/1410.3232} {arXiv:1410.3232 [hep-ex]} \BibitemShut
  {NoStop}%
\bibitem [{\citenamefont {Fang}\ \emph {et~al.}(2015)\citenamefont {Fang},
  \citenamefont {Rong}, \citenamefont {Ma},\ and\ \citenamefont
  {Zhao}}]{Fang:2014sqa}%
  \BibitemOpen
  \bibfield  {author} {\bibinfo {author} {\bibfnamefont {Y.}~\bibnamefont
  {Fang}}, \bibinfo {author} {\bibfnamefont {G.}~\bibnamefont {Rong}}, \bibinfo
  {author} {\bibfnamefont {H.~L.}\ \bibnamefont {Ma}},\ and\ \bibinfo {author}
  {\bibfnamefont {J.~Y.}\ \bibnamefont {Zhao}},\ }\bibfield  {title} {\bibinfo
  {title} {{Determination of $f_+^K(0)$ and extraction of $|V_{cs}|$ from
  semileptonic $D$ decays}},\ }\href
  {https://doi.org/10.1140/epjc/s10052-014-3226-3} {\bibfield  {journal}
  {\bibinfo  {journal} {Eur. Phys. J. C}\ }\textbf {\bibinfo {volume} {75}},\
  \bibinfo {pages} {10} (\bibinfo {year} {2015})},\ \Eprint
  {https://arxiv.org/abs/1409.8049} {arXiv:1409.8049 [hep-ex]} \BibitemShut
  {NoStop}%
\bibitem [{\citenamefont {Golonka}\ and\ \citenamefont {{W\c
  as}}(2006)}]{Golonka:2005pn}%
  \BibitemOpen
  \bibfield  {author} {\bibinfo {author} {\bibfnamefont {P.}~\bibnamefont
  {Golonka}}\ and\ \bibinfo {author} {\bibfnamefont {Z.}~\bibnamefont {{W\c
  as}}},\ }\bibfield  {title} {\bibinfo {title} {{PHOTOS Monte Carlo: A
  precision tool for QED corrections in $Z$ and $W$ decays}},\ }\href
  {https://doi.org/10.1140/epjc/s2005-02396-4} {\bibfield  {journal} {\bibinfo
  {journal} {Eur. Phys. J. C}\ }\textbf {\bibinfo {volume} {45}},\ \bibinfo
  {pages} {97} (\bibinfo {year} {2006})},\ \Eprint
  {https://arxiv.org/abs/hep-ph/0506026} {arXiv:hep-ph/0506026} \BibitemShut
  {NoStop}%
\bibitem [{\citenamefont {Barberio}\ and\ \citenamefont {{W\c
  as}}(1994)}]{Barberio:1993qi}%
  \BibitemOpen
  \bibfield  {author} {\bibinfo {author} {\bibfnamefont {E.}~\bibnamefont
  {Barberio}}\ and\ \bibinfo {author} {\bibfnamefont {Z.}~\bibnamefont {{W\c
  as}}},\ }\bibfield  {title} {\bibinfo {title} {{PHOTOS: A universal Monte
  Carlo for QED radiative corrections: Version 2.0}},\ }\href
  {https://doi.org/10.1016/0010-4655(94)90074-4} {\bibfield  {journal}
  {\bibinfo  {journal} {Comput. Phys. Commun.}\ }\textbf {\bibinfo {volume}
  {79}},\ \bibinfo {pages} {291} (\bibinfo {year} {1994})}\BibitemShut
  {NoStop}%
\bibitem [{\citenamefont {Cirigliano}\ \emph {et~al.}(2008)\citenamefont
  {Cirigliano}, \citenamefont {Giannotti},\ and\ \citenamefont
  {Neufeld}}]{Cirigliano:2008wn}%
  \BibitemOpen
  \bibfield  {author} {\bibinfo {author} {\bibfnamefont {V.}~\bibnamefont
  {Cirigliano}}, \bibinfo {author} {\bibfnamefont {M.}~\bibnamefont
  {Giannotti}},\ and\ \bibinfo {author} {\bibfnamefont {H.}~\bibnamefont
  {Neufeld}},\ }\bibfield  {title} {\bibinfo {title} {{Electromagnetic effects
  in $K_{l3}$ decays}},\ }\href {https://doi.org/10.1088/1126-6708/2008/11/006}
  {\bibfield  {journal} {\bibinfo  {journal} {JHEP}\ }\textbf {\bibinfo
  {volume} {11}},\ \bibinfo {pages} {006}},\ \Eprint
  {https://arxiv.org/abs/0807.4507} {arXiv:0807.4507 [hep-ph]} \BibitemShut
  {NoStop}%
\bibitem [{\citenamefont {Cirigliano}\ \emph {et~al.}(2012)\citenamefont
  {Cirigliano}, \citenamefont {Ecker}, \citenamefont {Neufeld}, \citenamefont
  {Pich},\ and\ \citenamefont {Portoles}}]{Cirigliano:2011ny}%
  \BibitemOpen
  \bibfield  {author} {\bibinfo {author} {\bibfnamefont {V.}~\bibnamefont
  {Cirigliano}}, \bibinfo {author} {\bibfnamefont {G.}~\bibnamefont {Ecker}},
  \bibinfo {author} {\bibfnamefont {H.}~\bibnamefont {Neufeld}}, \bibinfo
  {author} {\bibfnamefont {A.}~\bibnamefont {Pich}},\ and\ \bibinfo {author}
  {\bibfnamefont {J.}~\bibnamefont {Portoles}},\ }\bibfield  {title} {\bibinfo
  {title} {Kaon decays in the standard model},\ }\href
  {https://doi.org/10.1103/RevModPhys.84.399} {\bibfield  {journal} {\bibinfo
  {journal} {Rev. Mod. Phys.}\ }\textbf {\bibinfo {volume} {84}},\ \bibinfo
  {pages} {399} (\bibinfo {year} {2012})},\ \Eprint
  {https://arxiv.org/abs/1107.6001} {arXiv:1107.6001 [hep-ph]} \BibitemShut
  {NoStop}%
\bibitem [{\citenamefont {Ginsberg}(1968)}]{Ginsberg:1968pz}%
  \BibitemOpen
  \bibfield  {author} {\bibinfo {author} {\bibfnamefont {E.~S.}\ \bibnamefont
  {Ginsberg}},\ }\bibfield  {title} {\bibinfo {title} {Radiative corrections to
  {$K_{e3}^0$} decays and the {$\Delta I=1/2$} rule},\ }\href
  {https://doi.org/10.1103/PhysRev.171.1675} {\bibfield  {journal} {\bibinfo
  {journal} {Phys. Rev.}\ }\textbf {\bibinfo {volume} {171}},\ \bibinfo {pages}
  {1675} (\bibinfo {year} {1968})},\ \bibinfo {note} {[Erratum:
  \href{https://doi.org/10.1103/PhysRev.174.2169}{Phys.\ Rev.\ \textbf{174},
  2169 (1968)}]}\BibitemShut {NoStop}%
\bibitem [{\citenamefont {Atwood}\ and\ \citenamefont
  {Marciano}(1990)}]{Atwood:1989em}%
  \BibitemOpen
  \bibfield  {author} {\bibinfo {author} {\bibfnamefont {D.}~\bibnamefont
  {Atwood}}\ and\ \bibinfo {author} {\bibfnamefont {W.~J.}\ \bibnamefont
  {Marciano}},\ }\bibfield  {title} {\bibinfo {title} {Radiative corrections
  and semileptonic {$B$} decays},\ }\href
  {https://doi.org/10.1103/PhysRevD.41.1736} {\bibfield  {journal} {\bibinfo
  {journal} {Phys. Rev. D}\ }\textbf {\bibinfo {volume} {41}},\ \bibinfo
  {pages} {1736} (\bibinfo {year} {1990})}\BibitemShut {NoStop}%
\bibitem [{\citenamefont {Cal\'\i{}}\ \emph {et~al.}(2019)\citenamefont
  {Cal\'\i{}}, \citenamefont {Klaver}, \citenamefont {Rotondo},\ and\
  \citenamefont {Sciascia}}]{Cali:2019nwp}%
  \BibitemOpen
  \bibfield  {author} {\bibinfo {author} {\bibfnamefont {S.}~\bibnamefont
  {Cal\'\i{}}}, \bibinfo {author} {\bibfnamefont {S.}~\bibnamefont {Klaver}},
  \bibinfo {author} {\bibfnamefont {M.}~\bibnamefont {Rotondo}},\ and\ \bibinfo
  {author} {\bibfnamefont {B.}~\bibnamefont {Sciascia}},\ }\bibfield  {title}
  {\bibinfo {title} {{Impacts of radiative corrections on measurements of
  lepton flavour universality in $B \to D \ell \nu_{\ell}$ decays}},\ }\href
  {https://doi.org/10.1140/epjc/s10052-019-7254-x} {\bibfield  {journal}
  {\bibinfo  {journal} {Eur. Phys. J. C}\ }\textbf {\bibinfo {volume} {79}},\
  \bibinfo {pages} {744} (\bibinfo {year} {2019})},\ \Eprint
  {https://arxiv.org/abs/1905.02702} {arXiv:1905.02702 [hep-ph]} \BibitemShut
  {NoStop}%
\bibitem [{\citenamefont {de~Boer}\ \emph {et~al.}(2018)\citenamefont
  {de~Boer}, \citenamefont {Kitahara},\ and\ \citenamefont
  {Nisandzic}}]{deBoer:2018ipi}%
  \BibitemOpen
  \bibfield  {author} {\bibinfo {author} {\bibfnamefont {S.}~\bibnamefont
  {de~Boer}}, \bibinfo {author} {\bibfnamefont {T.}~\bibnamefont {Kitahara}},\
  and\ \bibinfo {author} {\bibfnamefont {I.}~\bibnamefont {Nisandzic}},\
  }\bibfield  {title} {\bibinfo {title} {Soft-photon corrections to {$\bar{B}
  \to D \tau^{-} \bar{\nu}_{\tau}$} relative to {$\bar{B} \to D \mu^{-}
  \bar{\nu}_{\mu}$}},\ }\href {https://doi.org/10.1103/PhysRevLett.120.261804}
  {\bibfield  {journal} {\bibinfo  {journal} {Phys. Rev. Lett.}\ }\textbf
  {\bibinfo {volume} {120}},\ \bibinfo {pages} {261804} (\bibinfo {year}
  {2018})},\ \Eprint {https://arxiv.org/abs/1803.05881} {arXiv:1803.05881
  [hep-ph]} \BibitemShut {NoStop}%
\bibitem [{\citenamefont {Riggio}\ \emph {et~al.}(2018)\citenamefont {Riggio},
  \citenamefont {Salerno},\ and\ \citenamefont {Simula}}]{Riggio:2017zwh}%
  \BibitemOpen
  \bibfield  {author} {\bibinfo {author} {\bibfnamefont {L.}~\bibnamefont
  {Riggio}}, \bibinfo {author} {\bibfnamefont {G.}~\bibnamefont {Salerno}},\
  and\ \bibinfo {author} {\bibfnamefont {S.}~\bibnamefont {Simula}},\
  }\bibfield  {title} {\bibinfo {title} {{Extraction of $|V_{cd}|$ and
  $|V_{cs}|$ from experimental decay rates using lattice QCD $D \to \pi(K) \ell
  \nu$ form factors}},\ }\href {https://doi.org/10.1140/epjc/s10052-018-5943-5}
  {\bibfield  {journal} {\bibinfo  {journal} {Eur. Phys. J. C}\ }\textbf
  {\bibinfo {volume} {78}},\ \bibinfo {pages} {501} (\bibinfo {year} {2018})},\
  \Eprint {https://arxiv.org/abs/1706.03657} {arXiv:1706.03657 [hep-lat]}
  \BibitemShut {NoStop}%
\bibitem [{\citenamefont {Davies}\ \emph {et~al.}(2010)\citenamefont {Davies},
  \citenamefont {McNeile}, \citenamefont {Follana}, \citenamefont {Lepage},
  \citenamefont {Na},\ and\ \citenamefont {Shigemitsu}}]{Davies:2010ip}%
  \BibitemOpen
  \bibfield  {author} {\bibinfo {author} {\bibfnamefont {C.~T.~H.}\
  \bibnamefont {Davies}}, \bibinfo {author} {\bibfnamefont {C.}~\bibnamefont
  {McNeile}}, \bibinfo {author} {\bibfnamefont {E.}~\bibnamefont {Follana}},
  \bibinfo {author} {\bibfnamefont {G.~P.}\ \bibnamefont {Lepage}}, \bibinfo
  {author} {\bibfnamefont {H.}~\bibnamefont {Na}},\ and\ \bibinfo {author}
  {\bibfnamefont {J.}~\bibnamefont {Shigemitsu}},\ }\bibfield  {title}
  {\bibinfo {title} {{Update: Precision $D_s$ decay constant from full lattice
  QCD using very fine lattices}},\ }\href
  {https://doi.org/10.1103/PhysRevD.82.114504} {\bibfield  {journal} {\bibinfo
  {journal} {Phys. Rev. D}\ }\textbf {\bibinfo {volume} {82}},\ \bibinfo
  {pages} {114504} (\bibinfo {year} {2010})},\ \Eprint
  {https://arxiv.org/abs/1008.4018} {arXiv:1008.4018 [hep-lat]} \BibitemShut
  {NoStop}%
\bibitem [{\citenamefont {Bazavov}\ \emph {et~al.}(2012)\citenamefont {Bazavov}
  \emph {et~al.}}]{FermilabLattice:2011njy}%
  \BibitemOpen
  \bibfield  {author} {\bibinfo {author} {\bibfnamefont {A.}~\bibnamefont
  {Bazavov}} \emph {et~al.} (\bibinfo {collaboration} {Fermilab Lattice,
  MILC}),\ }\bibfield  {title} {\bibinfo {title} {{$B$- and $D$-meson decay
  constants from three-flavor lattice QCD}},\ }\href
  {https://doi.org/10.1103/PhysRevD.85.114506} {\bibfield  {journal} {\bibinfo
  {journal} {Phys. Rev. D}\ }\textbf {\bibinfo {volume} {85}},\ \bibinfo
  {pages} {114506} (\bibinfo {year} {2012})},\ \Eprint
  {https://arxiv.org/abs/1112.3051} {arXiv:1112.3051 [hep-lat]} \BibitemShut
  {NoStop}%
\bibitem [{\citenamefont {Na}\ \emph {et~al.}(2012)\citenamefont {Na},
  \citenamefont {Davies}, \citenamefont {Follana}, \citenamefont {Lepage},\
  and\ \citenamefont {Shigemitsu}}]{Na:2012iu}%
  \BibitemOpen
  \bibfield  {author} {\bibinfo {author} {\bibfnamefont {H.}~\bibnamefont
  {Na}}, \bibinfo {author} {\bibfnamefont {C.~T.~H.}\ \bibnamefont {Davies}},
  \bibinfo {author} {\bibfnamefont {E.}~\bibnamefont {Follana}}, \bibinfo
  {author} {\bibfnamefont {G.~P.}\ \bibnamefont {Lepage}},\ and\ \bibinfo
  {author} {\bibfnamefont {J.}~\bibnamefont {Shigemitsu}},\ }\bibfield  {title}
  {\bibinfo {title} {{$|V_{cd}|$ from $D$-meson leptonic decays}},\ }\href
  {https://doi.org/10.1103/PhysRevD.86.054510} {\bibfield  {journal} {\bibinfo
  {journal} {Phys. Rev. D}\ }\textbf {\bibinfo {volume} {86}},\ \bibinfo
  {pages} {054510} (\bibinfo {year} {2012})},\ \Eprint
  {https://arxiv.org/abs/1206.4936} {arXiv:1206.4936 [hep-lat]} \BibitemShut
  {NoStop}%
\bibitem [{\citenamefont {Boyle}\ \emph {et~al.}(2017)\citenamefont {Boyle},
  \citenamefont {Del~Debbio}, \citenamefont {J\"uttner}, \citenamefont
  {Khamseh}, \citenamefont {Sanfilippo},\ and\ \citenamefont
  {Tsang}}]{Boyle:2017jwu}%
  \BibitemOpen
  \bibfield  {author} {\bibinfo {author} {\bibfnamefont {P.~A.}\ \bibnamefont
  {Boyle}}, \bibinfo {author} {\bibfnamefont {L.}~\bibnamefont {Del~Debbio}},
  \bibinfo {author} {\bibfnamefont {A.}~\bibnamefont {J\"uttner}}, \bibinfo
  {author} {\bibfnamefont {A.}~\bibnamefont {Khamseh}}, \bibinfo {author}
  {\bibfnamefont {F.}~\bibnamefont {Sanfilippo}},\ and\ \bibinfo {author}
  {\bibfnamefont {J.~T.}\ \bibnamefont {Tsang}},\ }\bibfield  {title} {\bibinfo
  {title} {{The decay constants $f_D$ and $f_{D_{s}}$ in the continuum limit of
  $N_f=2+1$ domain wall lattice QCD}},\ }\href
  {https://doi.org/10.1007/JHEP12(2017)008} {\bibfield  {journal} {\bibinfo
  {journal} {JHEP}\ }\textbf {\bibinfo {volume} {12}},\ \bibinfo {pages}
  {008}},\ \Eprint {https://arxiv.org/abs/1701.02644} {arXiv:1701.02644
  [hep-lat]} \BibitemShut {NoStop}%
\bibitem [{\citenamefont {Charles}\ \emph {et~al.}(2005)\citenamefont
  {Charles}, \citenamefont {Höcker}, \citenamefont {Lacker}, \citenamefont
  {Laplace}, \citenamefont {Le~Diberder}, \citenamefont {Malcles},
  \citenamefont {Ocariz}, \citenamefont {Pivk},\ and\ \citenamefont
  {Roos}}]{Charles:2004jd}%
  \BibitemOpen
  \bibfield  {author} {\bibinfo {author} {\bibfnamefont {J.}~\bibnamefont
  {Charles}}, \bibinfo {author} {\bibfnamefont {A.}~\bibnamefont {Höcker}},
  \bibinfo {author} {\bibfnamefont {H.}~\bibnamefont {Lacker}}, \bibinfo
  {author} {\bibfnamefont {S.}~\bibnamefont {Laplace}}, \bibinfo {author}
  {\bibfnamefont {F.~R.}\ \bibnamefont {Le~Diberder}}, \bibinfo {author}
  {\bibfnamefont {J.}~\bibnamefont {Malcles}}, \bibinfo {author} {\bibfnamefont
  {J.}~\bibnamefont {Ocariz}}, \bibinfo {author} {\bibfnamefont
  {M.}~\bibnamefont {Pivk}},\ and\ \bibinfo {author} {\bibfnamefont
  {L.}~\bibnamefont {Roos}} (\bibinfo {collaboration} {CKMfitter Group}),\
  }\bibfield  {title} {\bibinfo {title} {{$CP$ violation and the CKM matrix:
  Assessing the impact of the asymmetric $B$ factories}},\ }\href
  {https://doi.org/10.1140/epjc/s2005-02169-1} {\bibfield  {journal} {\bibinfo
  {journal} {Eur. Phys. J. C}\ }\textbf {\bibinfo {volume} {41}},\ \bibinfo
  {pages} {1} (\bibinfo {year} {2005})},\ \bibinfo {note} {updated results and
  plots available at:
  \href{http://ckmfitter.in2p3.fr/}{http://ckmfitter.in2p3.fr/}},\ \Eprint
  {https://arxiv.org/abs/hep-ph/0406184} {arXiv:hep-ph/0406184} \BibitemShut
  {NoStop}%
\bibitem [{\citenamefont {Bona}\ \emph {et~al.}(2022)\citenamefont {Bona} \emph
  {et~al.}}]{UTfit:2022hsi}%
  \BibitemOpen
  \bibfield  {author} {\bibinfo {author} {\bibfnamefont {M.}~\bibnamefont
  {Bona}} \emph {et~al.} (\bibinfo {collaboration} {UTfit}),\ }\bibfield
  {title} {\bibinfo {title} {New {UTfit} analysis of the unitarity triangle in
  the {Cabibbo-Kobayashi-Maskawa} scheme},\ }\href@noop {} {\  (\bibinfo {year}
  {2022})},\ \Eprint {https://arxiv.org/abs/2212.03894} {arXiv:2212.03894
  [hep-ph]} \BibitemShut {NoStop}%
\bibitem [{\citenamefont {Buras}\ \emph {et~al.}(1994)\citenamefont {Buras},
  \citenamefont {Lautenbacher},\ and\ \citenamefont
  {Ostermaier}}]{Buras:1994ec}%
  \BibitemOpen
  \bibfield  {author} {\bibinfo {author} {\bibfnamefont {A.~J.}\ \bibnamefont
  {Buras}}, \bibinfo {author} {\bibfnamefont {M.~E.}\ \bibnamefont
  {Lautenbacher}},\ and\ \bibinfo {author} {\bibfnamefont {G.}~\bibnamefont
  {Ostermaier}},\ }\bibfield  {title} {\bibinfo {title} {{Waiting for the top
  quark mass, $K^+\to\pi^+\nu\bar{\nu}$, $B_s^0$-$\bar{B}_s^0$ mixing, and $CP$
  asymmetries in $B$ decays}},\ }\href
  {https://doi.org/10.1103/PhysRevD.50.3433} {\bibfield  {journal} {\bibinfo
  {journal} {Phys. Rev. D}\ }\textbf {\bibinfo {volume} {50}},\ \bibinfo
  {pages} {3433} (\bibinfo {year} {1994})},\ \Eprint
  {https://arxiv.org/abs/hep-ph/9403384} {arXiv:hep-ph/9403384} \BibitemShut
  {NoStop}%
\bibitem [{\citenamefont {Carrasco}\ \emph
  {et~al.}(2015{\natexlab{b}})\citenamefont {Carrasco}, \citenamefont {Lubicz},
  \citenamefont {Martinelli}, \citenamefont {Sachrajda}, \citenamefont
  {Tantalo}, \citenamefont {Tarantino},\ and\ \citenamefont
  {Testa}}]{Carrasco:2015xwa}%
  \BibitemOpen
  \bibfield  {author} {\bibinfo {author} {\bibfnamefont {N.}~\bibnamefont
  {Carrasco}}, \bibinfo {author} {\bibfnamefont {V.}~\bibnamefont {Lubicz}},
  \bibinfo {author} {\bibfnamefont {G.}~\bibnamefont {Martinelli}}, \bibinfo
  {author} {\bibfnamefont {C.~T.}\ \bibnamefont {Sachrajda}}, \bibinfo {author}
  {\bibfnamefont {N.}~\bibnamefont {Tantalo}}, \bibinfo {author} {\bibfnamefont
  {C.}~\bibnamefont {Tarantino}},\ and\ \bibinfo {author} {\bibfnamefont
  {M.}~\bibnamefont {Testa}},\ }\bibfield  {title} {\bibinfo {title} {{QED}
  corrections to hadronic processes in lattice {QCD}},\ }\href
  {https://doi.org/10.1103/PhysRevD.91.074506} {\bibfield  {journal} {\bibinfo
  {journal} {Phys. Rev. D}\ }\textbf {\bibinfo {volume} {91}},\ \bibinfo
  {pages} {074506} (\bibinfo {year} {2015}{\natexlab{b}})},\ \Eprint
  {https://arxiv.org/abs/1502.00257} {arXiv:1502.00257 [hep-lat]} \BibitemShut
  {NoStop}%
\bibitem [{\citenamefont {Desiderio}\ \emph {et~al.}(2021)\citenamefont
  {Desiderio} \emph {et~al.}}]{Desiderio:2020oej}%
  \BibitemOpen
  \bibfield  {author} {\bibinfo {author} {\bibfnamefont {A.}~\bibnamefont
  {Desiderio}} \emph {et~al.},\ }\bibfield  {title} {\bibinfo {title} {{First
  lattice calculation of radiative leptonic decay rates of pseudoscalar
  mesons}},\ }\href {https://doi.org/10.1103/PhysRevD.103.014502} {\bibfield
  {journal} {\bibinfo  {journal} {Phys. Rev. D}\ }\textbf {\bibinfo {volume}
  {103}},\ \bibinfo {pages} {014502} (\bibinfo {year} {2021})},\ \Eprint
  {https://arxiv.org/abs/2006.05358} {arXiv:2006.05358 [hep-lat]} \BibitemShut
  {NoStop}%
\bibitem [{\citenamefont {Frezzotti}\ \emph
  {et~al.}(2021{\natexlab{a}})\citenamefont {Frezzotti}, \citenamefont
  {Garofalo}, \citenamefont {Lubicz}, \citenamefont {Martinelli}, \citenamefont
  {Sachrajda}, \citenamefont {Sanfilippo}, \citenamefont {Simula},\ and\
  \citenamefont {Tantalo}}]{Frezzotti:2020bfa}%
  \BibitemOpen
  \bibfield  {author} {\bibinfo {author} {\bibfnamefont {R.}~\bibnamefont
  {Frezzotti}}, \bibinfo {author} {\bibfnamefont {M.}~\bibnamefont {Garofalo}},
  \bibinfo {author} {\bibfnamefont {V.}~\bibnamefont {Lubicz}}, \bibinfo
  {author} {\bibfnamefont {G.}~\bibnamefont {Martinelli}}, \bibinfo {author}
  {\bibfnamefont {C.~T.}\ \bibnamefont {Sachrajda}}, \bibinfo {author}
  {\bibfnamefont {F.}~\bibnamefont {Sanfilippo}}, \bibinfo {author}
  {\bibfnamefont {S.}~\bibnamefont {Simula}},\ and\ \bibinfo {author}
  {\bibfnamefont {N.}~\bibnamefont {Tantalo}},\ }\bibfield  {title} {\bibinfo
  {title} {{Comparison of lattice QCD+QED predictions for radiative leptonic
  decays of light mesons with experimental data}},\ }\href
  {https://doi.org/10.1103/PhysRevD.103.053005} {\bibfield  {journal} {\bibinfo
   {journal} {Phys. Rev. D}\ }\textbf {\bibinfo {volume} {103}},\ \bibinfo
  {pages} {053005} (\bibinfo {year} {2021}{\natexlab{a}})},\ \Eprint
  {https://arxiv.org/abs/2012.02120} {arXiv:2012.02120 [hep-ph]} \BibitemShut
  {NoStop}%
\bibitem [{\citenamefont {Frezzotti}\ \emph
  {et~al.}(2021{\natexlab{b}})\citenamefont {Frezzotti}, \citenamefont
  {Gagliardi}, \citenamefont {Lubicz}, \citenamefont {Sanfilippo},\ and\
  \citenamefont {Simula}}]{Frezzotti:2021slr}%
  \BibitemOpen
  \bibfield  {author} {\bibinfo {author} {\bibfnamefont {R.}~\bibnamefont
  {Frezzotti}}, \bibinfo {author} {\bibfnamefont {G.}~\bibnamefont
  {Gagliardi}}, \bibinfo {author} {\bibfnamefont {V.}~\bibnamefont {Lubicz}},
  \bibinfo {author} {\bibfnamefont {F.}~\bibnamefont {Sanfilippo}},\ and\
  \bibinfo {author} {\bibfnamefont {S.}~\bibnamefont {Simula}},\ }\bibfield
  {title} {\bibinfo {title} {{Rotated twisted-mass: a convenient regularization
  scheme for isospin breaking QCD and QED lattice calculations}},\ }\href
  {https://doi.org/10.1140/epja/s10050-021-00579-5} {\bibfield  {journal}
  {\bibinfo  {journal} {Eur. Phys. J. A}\ }\textbf {\bibinfo {volume} {57}},\
  \bibinfo {pages} {282} (\bibinfo {year} {2021}{\natexlab{b}})},\ \Eprint
  {https://arxiv.org/abs/2106.07107} {arXiv:2106.07107 [hep-lat]} \BibitemShut
  {NoStop}%
\bibitem [{\citenamefont {Gagliardi}\ \emph {et~al.}(2022)\citenamefont
  {Gagliardi}, \citenamefont {Sanfilippo}, \citenamefont {Simula},
  \citenamefont {Lubicz}, \citenamefont {Mazzetti}, \citenamefont {Martinelli},
  \citenamefont {Sachrajda},\ and\ \citenamefont
  {Tantalo}}]{Gagliardi:2022szw}%
  \BibitemOpen
  \bibfield  {author} {\bibinfo {author} {\bibfnamefont {G.}~\bibnamefont
  {Gagliardi}}, \bibinfo {author} {\bibfnamefont {F.}~\bibnamefont
  {Sanfilippo}}, \bibinfo {author} {\bibfnamefont {S.}~\bibnamefont {Simula}},
  \bibinfo {author} {\bibfnamefont {V.}~\bibnamefont {Lubicz}}, \bibinfo
  {author} {\bibfnamefont {F.}~\bibnamefont {Mazzetti}}, \bibinfo {author}
  {\bibfnamefont {G.}~\bibnamefont {Martinelli}}, \bibinfo {author}
  {\bibfnamefont {C.~T.}\ \bibnamefont {Sachrajda}},\ and\ \bibinfo {author}
  {\bibfnamefont {N.}~\bibnamefont {Tantalo}},\ }\bibfield  {title} {\bibinfo
  {title} {{Virtual photon emission in leptonic decays of charged pseudoscalar
  mesons}},\ }\href {https://doi.org/10.1103/PhysRevD.105.114507} {\bibfield
  {journal} {\bibinfo  {journal} {Phys. Rev. D}\ }\textbf {\bibinfo {volume}
  {105}},\ \bibinfo {pages} {114507} (\bibinfo {year} {2022})},\ \Eprint
  {https://arxiv.org/abs/2202.03833} {arXiv:2202.03833 [hep-lat]} \BibitemShut
  {NoStop}%
\bibitem [{\citenamefont {Stanzione}\ \emph {et~al.}(2020)\citenamefont
  {Stanzione}, \citenamefont {West}, \citenamefont {Evans}, \citenamefont
  {Minard}, \citenamefont {Ghattas},\ and\ \citenamefont {Panda}}]{Frontera}%
  \BibitemOpen
  \bibfield  {author} {\bibinfo {author} {\bibfnamefont {D.}~\bibnamefont
  {Stanzione}}, \bibinfo {author} {\bibfnamefont {J.}~\bibnamefont {West}},
  \bibinfo {author} {\bibfnamefont {R.~T.}\ \bibnamefont {Evans}}, \bibinfo
  {author} {\bibfnamefont {T.}~\bibnamefont {Minard}}, \bibinfo {author}
  {\bibfnamefont {O.}~\bibnamefont {Ghattas}},\ and\ \bibinfo {author}
  {\bibfnamefont {D.~K.}\ \bibnamefont {Panda}},\ }\bibfield  {title} {\bibinfo
  {title} {Frontera: The evolution of leadership computing at the national
  science foundation},\ }in\ \href
  {https://doi.org/doi:10.1145/3311790.3396656} {\emph {\bibinfo {booktitle}
  {Practice and Experience in Advanced Research Computing 2020}}}\ (\bibinfo
  {year} {July 26–30, 2020})\BibitemShut {NoStop}%
\bibitem [{\citenamefont {Towns}\ \emph {et~al.}(2014)\citenamefont {Towns},
  \citenamefont {Cockerill}, \citenamefont {Dahan}, \citenamefont {Foster},
  \citenamefont {Gaither}, \citenamefont {Grimshaw}, \citenamefont {Hazlewood},
  \citenamefont {Lathrop}, \citenamefont {Lifka}, \citenamefont {Peterson},
  \citenamefont {Roskies}, \citenamefont {Scott},\ and\ \citenamefont
  {N.}}]{XSEDE}%
  \BibitemOpen
  \bibfield  {author} {\bibinfo {author} {\bibfnamefont {J.}~\bibnamefont
  {Towns}}, \bibinfo {author} {\bibfnamefont {T.}~\bibnamefont {Cockerill}},
  \bibinfo {author} {\bibfnamefont {M.}~\bibnamefont {Dahan}}, \bibinfo
  {author} {\bibfnamefont {I.}~\bibnamefont {Foster}}, \bibinfo {author}
  {\bibfnamefont {K.}~\bibnamefont {Gaither}}, \bibinfo {author} {\bibfnamefont
  {A.}~\bibnamefont {Grimshaw}}, \bibinfo {author} {\bibfnamefont
  {V.}~\bibnamefont {Hazlewood}}, \bibinfo {author} {\bibfnamefont
  {S.}~\bibnamefont {Lathrop}}, \bibinfo {author} {\bibfnamefont
  {D.}~\bibnamefont {Lifka}}, \bibinfo {author} {\bibfnamefont {G.~D.}\
  \bibnamefont {Peterson}}, \bibinfo {author} {\bibfnamefont {R.}~\bibnamefont
  {Roskies}}, \bibinfo {author} {\bibfnamefont {J.~R.}\ \bibnamefont {Scott}},\
  and\ \bibinfo {author} {\bibfnamefont {W.-D.}\ \bibnamefont {N.}},\
  }\bibfield  {title} {\bibinfo {title} {{XSEDE}: Accelerating scientific
  discovery},\ }\href {https://doi.org/doi:10.1109/MCSE.2014.80} {\bibfield
  {journal} {\bibinfo  {journal} {Comput. Sci. Eng.}\ }\textbf {\bibinfo
  {volume} {16}},\ \bibinfo {pages} {62} (\bibinfo {year} {2014})}\BibitemShut
  {NoStop}%
\bibitem [{\citenamefont {Bailey}\ \emph {et~al.}(2009)\citenamefont {Bailey}
  \emph {et~al.}}]{Bailey:2008wp}%
  \BibitemOpen
  \bibfield  {author} {\bibinfo {author} {\bibfnamefont {J.~A.}\ \bibnamefont
  {Bailey}} \emph {et~al.},\ }\bibfield  {title} {\bibinfo {title} {{The $B \to
  \pi \ell \nu$ semileptonic form factor from three-flavor lattice QCD: A
  model-independent determination of $|V_{ub}|$}},\ }\href
  {https://doi.org/10.1103/PhysRevD.79.054507} {\bibfield  {journal} {\bibinfo
  {journal} {Phys. Rev. D}\ }\textbf {\bibinfo {volume} {79}},\ \bibinfo
  {pages} {054507} (\bibinfo {year} {2009})},\ \Eprint
  {https://arxiv.org/abs/0811.3640} {arXiv:0811.3640 [hep-lat]} \BibitemShut
  {NoStop}%
\bibitem [{\citenamefont {Monahan}\ \emph {et~al.}(2013)\citenamefont
  {Monahan}, \citenamefont {Shigemitsu},\ and\ \citenamefont
  {Horgan}}]{Monahan:2012dq}%
  \BibitemOpen
  \bibfield  {author} {\bibinfo {author} {\bibfnamefont {C.}~\bibnamefont
  {Monahan}}, \bibinfo {author} {\bibfnamefont {J.}~\bibnamefont
  {Shigemitsu}},\ and\ \bibinfo {author} {\bibfnamefont {R.}~\bibnamefont
  {Horgan}},\ }\bibfield  {title} {\bibinfo {title} {{Matching lattice and
  continuum axial-vector and vector currents with nonrelativistic QCD and
  highly improved staggered quarks}},\ }\href
  {https://doi.org/10.1103/PhysRevD.87.034017} {\bibfield  {journal} {\bibinfo
  {journal} {Phys. Rev. D}\ }\textbf {\bibinfo {volume} {87}},\ \bibinfo
  {pages} {034017} (\bibinfo {year} {2013})},\ \Eprint
  {https://arxiv.org/abs/1211.6966} {arXiv:1211.6966 [hep-lat]} \BibitemShut
  {NoStop}%
\bibitem [{\citenamefont {El-Khadra}\ \emph {et~al.}(1997)\citenamefont
  {El-Khadra}, \citenamefont {Kronfeld},\ and\ \citenamefont
  {Mackenzie}}]{El-Khadra:1996wdx}%
  \BibitemOpen
  \bibfield  {author} {\bibinfo {author} {\bibfnamefont {A.~X.}\ \bibnamefont
  {El-Khadra}}, \bibinfo {author} {\bibfnamefont {A.~S.}\ \bibnamefont
  {Kronfeld}},\ and\ \bibinfo {author} {\bibfnamefont {P.~B.}\ \bibnamefont
  {Mackenzie}},\ }\bibfield  {title} {\bibinfo {title} {{Massive fermions in
  lattice gauge theory}},\ }\href {https://doi.org/10.1103/PhysRevD.55.3933}
  {\bibfield  {journal} {\bibinfo  {journal} {Phys. Rev. D}\ }\textbf {\bibinfo
  {volume} {55}},\ \bibinfo {pages} {3933} (\bibinfo {year} {1997})},\ \Eprint
  {https://arxiv.org/abs/hep-lat/9604004} {arXiv:hep-lat/9604004} \BibitemShut
  {NoStop}%
\bibitem [{\citenamefont {El-Khadra}\ \emph {et~al.}(1998)\citenamefont
  {El-Khadra}, \citenamefont {Kronfeld}, \citenamefont {Mackenzie},
  \citenamefont {Ryan},\ and\ \citenamefont {Simone}}]{El-Khadra:1997kgw}%
  \BibitemOpen
  \bibfield  {author} {\bibinfo {author} {\bibfnamefont {A.~X.}\ \bibnamefont
  {El-Khadra}}, \bibinfo {author} {\bibfnamefont {A.~S.}\ \bibnamefont
  {Kronfeld}}, \bibinfo {author} {\bibfnamefont {P.~B.}\ \bibnamefont
  {Mackenzie}}, \bibinfo {author} {\bibfnamefont {S.~M.}\ \bibnamefont
  {Ryan}},\ and\ \bibinfo {author} {\bibfnamefont {J.~N.}\ \bibnamefont
  {Simone}},\ }\bibfield  {title} {\bibinfo {title} {{$B$ and $D$ meson decay
  constants in lattice QCD}},\ }\href
  {https://doi.org/10.1103/PhysRevD.58.014506} {\bibfield  {journal} {\bibinfo
  {journal} {Phys. Rev. D}\ }\textbf {\bibinfo {volume} {58}},\ \bibinfo
  {pages} {014506} (\bibinfo {year} {1998})},\ \Eprint
  {https://arxiv.org/abs/hep-ph/9711426} {arXiv:hep-ph/9711426} \BibitemShut
  {NoStop}%
\bibitem [{\citenamefont {Ledoit}\ and\ \citenamefont
  {Wolf}(2004)}]{Ledoit:2004}%
  \BibitemOpen
  \bibfield  {author} {\bibinfo {author} {\bibfnamefont {O.}~\bibnamefont
  {Ledoit}}\ and\ \bibinfo {author} {\bibfnamefont {M.}~\bibnamefont {Wolf}},\
  }\bibfield  {title} {\bibinfo {title} {A well-conditioned estimator for
  large-dimensional covariance matrices},\ }\href
  {https://doi.org/https://doi.org/10.1016/S0047-259X(03)00096-4} {\bibfield
  {journal} {\bibinfo  {journal} {J. Multivariate Anal.}\ }\textbf {\bibinfo
  {volume} {88}},\ \bibinfo {pages} {365} (\bibinfo {year} {2004})}\BibitemShut
  {NoStop}%
\bibitem [{\citenamefont {Rinaldi}\ \emph {et~al.}(2019)\citenamefont
  {Rinaldi}, \citenamefont {Syritsyn}, \citenamefont {Wagman}, \citenamefont
  {Buchoff}, \citenamefont {Schroeder},\ and\ \citenamefont
  {Wasem}}]{Rinaldi:2019thf}%
  \BibitemOpen
  \bibfield  {author} {\bibinfo {author} {\bibfnamefont {E.}~\bibnamefont
  {Rinaldi}}, \bibinfo {author} {\bibfnamefont {S.}~\bibnamefont {Syritsyn}},
  \bibinfo {author} {\bibfnamefont {M.~L.}\ \bibnamefont {Wagman}}, \bibinfo
  {author} {\bibfnamefont {M.~I.}\ \bibnamefont {Buchoff}}, \bibinfo {author}
  {\bibfnamefont {C.}~\bibnamefont {Schroeder}},\ and\ \bibinfo {author}
  {\bibfnamefont {J.}~\bibnamefont {Wasem}},\ }\bibfield  {title} {\bibinfo
  {title} {{Lattice QCD determination of neutron-antineutron matrix elements
  with physical quark masses}},\ }\href
  {https://doi.org/10.1103/PhysRevD.99.074510} {\bibfield  {journal} {\bibinfo
  {journal} {Phys. Rev. D}\ }\textbf {\bibinfo {volume} {99}},\ \bibinfo
  {pages} {074510} (\bibinfo {year} {2019})},\ \Eprint
  {https://arxiv.org/abs/1901.07519} {arXiv:1901.07519 [hep-lat]} \BibitemShut
  {NoStop}%
\bibitem [{\citenamefont {Ayer}\ \emph {et~al.}(1955)\citenamefont {Ayer},
  \citenamefont {Brunk}, \citenamefont {Ewing}, \citenamefont {Reid},\ and\
  \citenamefont {Silverman}}]{Ayer:1955}%
  \BibitemOpen
  \bibfield  {author} {\bibinfo {author} {\bibfnamefont {M.}~\bibnamefont
  {Ayer}}, \bibinfo {author} {\bibfnamefont {H.~D.}\ \bibnamefont {Brunk}},
  \bibinfo {author} {\bibfnamefont {G.~M.}\ \bibnamefont {Ewing}}, \bibinfo
  {author} {\bibfnamefont {W.~T.}\ \bibnamefont {Reid}},\ and\ \bibinfo
  {author} {\bibfnamefont {E.}~\bibnamefont {Silverman}},\ }\bibfield  {title}
  {\bibinfo {title} {An empirical distribution function for sampling with
  incomplete information},\ }\href {https://doi.org/10.1214/aoms/1177728423}
  {\bibfield  {journal} {\bibinfo  {journal} {Ann. Math. Statist.}\ }\textbf
  {\bibinfo {volume} {26}},\ \bibinfo {pages} {641 } (\bibinfo {year}
  {1955})}\BibitemShut {NoStop}%
\end{thebibliography}%


%

\end{document}